\title{Tight Approximation Algorithms for Two-dimensional Guillotine Strip Packing}
\titlerunning{Guillotine Strip Packing} %TODO optional, please use if title is longer than one line
\author{Arindam Khan}{Department of Computer Science and Automation, Indian Institute of Science, Bangalore, India \and \url{https://www.csa.iisc.ac.in/~arindamkhan/} }{arindamkhan@iisc.ac.in}{https://orcid.org/0000-0001-7505-1687}{Arindam Khan was supported in part by Pratiksha Trust Young Investigator
Award, Google CSExplore Award, and Google India Research Award.}
\author{Aditya Lonkar}{Department of Computer Science and Automation, Indian Institute of Science, Bangalore, India \and \url{http://www.myhomepage.edu} }{adityaabhay@iisc.ac.in}{}{}
\author{Arnab Maiti}{Indian Institute of Technology, Kharagpur, India \and \url{https://sites.google.com/view/arnab-maiti/home} }{arnabmaiti@iitkgp.ac.in}{}{}
\author{Amatya Sharma}{Indian Institute of Technology, Kharagpur, India \and \url{https://aaysharma.github.io}}{amatya65555@iitkgp.ac.in}{}{}
\author{Andreas Wiese}{Technical University of Munich, Germany \and \url{https://discrete.ma.tum.de/people/professors/andreas-wiese.html} }{andreas.wiese@tum.de}{https://orcid.org/0000-0003-3705-016X}{Andreas Wiese was partially supported by the Fondecyt Regular grant 1200173.}
\authorrunning{A. Khan, A. Lonkar, A. Maiti, A. Sharma and A. Wiese} %TODO mandatory. First: Use abbreviated first/middle names. Second (only in severe cases): Use first author plus 'et al.'
\keywords{Approximation Algorithms, Two-Dimensional Packing, Rectangle Packing, Guillotine Cuts, Computational Geometry} %TODO mandatory; please add comma-separated list of keywords
\newcommand{\eps}{\varepsilon}
\newcommand{\epsau}{\varepsilon_{ra}}
\newcommand{\opt}{OPT}
\newcommand{\Rsm}{I_{small}}
\newcommand{\Rho}{I_{hor}}
\newcommand{\Rve}{I_{ver}}
\newcommand{\Rhard}{I_{hard}}
\newcommand{\R}{I}
\newcommand{\bottomc}{bottom}
\newcommand{\topc}{top}
\newcommand{\leftc}{left}
\newcommand{\rightc}{right}
\newcommand{\height}{h}
\newcommand{\width}{w}
\newcommand{\profit}{p}
\newcommand{\area}{a}
\newcommand{\Vc}{\boldsymbol{\mathsf{V}}}
\newcommand{\Hc}{\boldsymbol{\mathsf{H}}}
\newcommand{\Lc}{\boldsymbol{\mathsf{L}}}
\newcommand{\ts}{SP}
\newcommand{\tsg}{GSP}
\newcommand{\meu}{\mu}
\newcommand{\dlta}{\delta}
\newcommand{\W}{W}
\newcommand{\Rbg}{I_{large}}
\newcommand{\Rme}{I_{medium}}
\global\long\def\B{\mathcal{B}}%
\global\long\def\N{\mathbb{N}}%
\global\long\def\opt{\mathrm{OPT}}%
\global\long\def\height{h}%
\global\long\def\H{h_{\max}}%
\global\long\def\L{\mathcal{L}}%
\newcommand{\Rit}{I_{tall}}
\newcommand{\Ip}{{I}^{\prime}}
\newcommand{\Mi}{\textit{Mirror }}
\def\DEBUG{true} % set this to be false to remove comments
  \def\rem#1{{\marginpar{\raggedright\scriptsize #1}}}
  \newcommand{\arir}[1]{\rem{\textcolor{Red}{$\bullet$ #1}}}
  \newcommand{\amar}[1]{\rem{\textcolor{Green}{$\bullet$ #1}}}
  \newcommand{\adir}[1]{\rem{\textcolor{Maroon}{$\bullet$ #1}}}
  \newcommand{\arnr}[1]{\rem{\textcolor{Violet}{$\bullet$ #1}}}
  \newcommand{\awr}[1]{\rem{\textcolor{NavyBlue}{$\bullet$ #1}}}
  \newcommand{\arir}[1]{}
  \newcommand{\adir}[1]{}
  \newcommand{\arnr}[1]{}
  \newcommand{\amar}[1]{}
  \newcommand{\awr}[1]{}
\begin{document}	
	\maketitle

\begin{abstract}
In the \textsc{Strip Packing} problem (\ts), we are given a vertical
half-strip $[0,W]\times[0,\infty)$ and a set of $n$ axis-aligned
%(open) 
rectangles of width at most $W$. The goal is to find a non-overlapping
packing of all rectangles into the strip such that the height of the
packing is minimized. A well-studied and frequently used practical
constraint is to allow only those packings that are guillotine separable,
i.e., every rectangle in the packing can be obtained by recursively
applying a sequence of edge-to-edge axis-parallel cuts (guillotine
cuts) that do not intersect any item of the solution. In this paper,
we study approximation algorithms for the \textsc{Guillotine Strip Packing} problem
(\tsg), i.e., the \textsc{Strip Packing} problem where we require additionally that the packing
needs to be guillotine separable. This problem generalizes the classical
\textsc{Bin Packing} problem and also makespan minimization on identical
machines, and thus it is already strongly $\mathsf{NP}$-hard. Moreover,
due to a reduction from the \textsc{Partition} problem, it is $\mathsf{NP}$-hard
to obtain a polynomial-time $(3/2-\eps)$-approximation algorithm
for GSP  for any $\eps>0$ (exactly as \textsc{Strip Packing}). We provide a matching polynomial time
$(3/2+\eps)$-approximation algorithm for \tsg. Furthermore, we present
a pseudo-polynomial time $(1+\eps)$-approximation algorithm for \tsg.
This is surprising as it is $\mathsf{NP}$-hard to obtain a $(5/4-\eps)$-approximation
algorithm for (general) \textsc{Strip Packing} in pseudo-polynomial
time. Thus, our results essentially settle the approximability of
%\textsc{Guillotine Strip Packing}
GSP
 for both the polynomial and the
pseudo-polynomial settings. 
\end{abstract}

\section{Introduction}

Two-dimensional packing problems form a fundamental research area
in combinatorial optimization, computational geometry, and approximation
algorithms. They find numerous practical applications in logistics
\cite{CKPT17}, cutting stock \cite{gilmore1965multistage}, VLSI
design \cite{Hochbaum1985}, smart-grids \cite{GGIK16}, etc. The
\textsc{Strip Packing} problem (\ts), a generalization of the classical
\textsc{Bin Packing} problem and also the makespan minimization problem
on identical machines, is one of the central problems in this
area. We are given an axis-aligned vertical half-strip $[0,W]\times[0,\infty)$
and a set of $n$ axis-aligned rectangles (also called {\em items})
$I:=\{1,2,\dots,n\}$, where for each rectangle $i$ we are given an integral width $\width_{i}\le W$,
and an integral height $\height_{i}$; we assume the rectangles
to be open sets. The goal is to pack all items such that the maximum
height of the top edge of a packed item is minimized. The packing needs to be
{\em non-overlapping}, i.e., such a packing into a strip of height
$H$ maps each rectangle $i\in I$ to a new translated open rectangle
$R(i):=(\leftc(i),\rightc(i))\times(\bottomc(i),\topc(i))$ where
$\rightc(i)=\leftc(i)+\width_{i}$, $\topc(i)=\bottomc(i)+\height_{i}$, $\leftc(i)\ge0$, $\bottomc(i)\ge0$, $\rightc(i)\le W$, $\topc(i)\le H$
and for any $i,j\in I$, we must have $R(i)\cap R(j)=\emptyset$.
We assume that items are not allowed to be rotated.

The best known polynomial time approximation algorithm for SP has
an approximation ratio of ($5/3+\eps$) for any constant $\eps>0$ \cite{HarrenJPS14}
and a straight-forward reduction from \textsc{Partition }shows that
it is $\mathsf{NP}$-hard to approximate the problem with a ratio
of $(3/2-\eps)$ for any $\eps>0$. Maybe surprisingly, one can
approximate SP better in pseudo-polynomial time: there is a pseudo-polynomial
time $(5/4+\eps)$-approximation algorithm \cite{JansenR19} and it
is $\mathsf{NP}$-hard to obtain a $(5/4-\eps)$-approximation algorithm
with this running time \cite{henning2020complexity}. Hence, it remains open
to close the gap between $(5/3+\eps)$ and $(3/2-\eps)$ for polynomial
time algorithms, and even in pseudo-polynomial time, there can be no
$(1+\eps)$-approximation for the problem for arbitrarily small $\eps>0$.

\ts~is particularly motivated from applications in which we want
to cut out rectangular pieces of a sheet or stock unit of raw material,
i.e., metal, glass, wood, or, cloth, and we want to minimize the amount
of wasted material. For cutting out these pieces in practice, axis-parallel
end-to-end cuts, called {\em guillotine cuts}, are popular due
to their simplicity of operation~\cite{sweeney1992cutting}. In this context, we look for solutions to cut out the
individual objects by a recursive application of guillotine cuts that
do not intersect any item of the solution. 
Applications of guillotine cutting are found in crepe-rubber mills \cite{schneider1988trim}, glass industry \cite{puchinger2004solving},  paper cutting \cite{mchale1999cutting}, etc. 
In particular, this motivates studying geometric packing problems
with the additional constraint that the placed objects need to be
separable by a sequence of guillotine cuts (see Figure~\ref{fig_guillo}). %in Appendix~\ref{sec_omitted_figures}). 
Starting from the classical work by Christofides et al.~\cite{christofides1977algorithm} in 1970s, settings with such guillotine
cuts are widely studied in the literature \cite{dolatabadi2012exact,wei2015bidirectional,borgulya2019eda,lodi2017partial,di2013algorithms,clautiaux2018combining,furini2016modeling,clautiaux2019pattern}.
In fact, many heuristics for guillotine packing have been developed to efficiently solve benchmark instances, based on tree-search, branch-and-bound, dynamic optimization, tabu search, genetic algorithms, etc. Khan et al. \cite{khan2021guillotine} mentions ``a staggering number of recent experimental papers” on guillotine packing and lists several such recent experimental papers.

A related notion is $k$-stage packing, originally introduced by Gilmore and Gomory~\cite{gilmore1965multistage}.
Here, each stage consists of either vertical or horizontal guillotine cuts (but not both). In
each stage, each of the pieces obtained in the previous stage is considered separately
and can be cut again by using either horizontal or vertical guillotine cuts. In $k$-stage packing,
the number of cuts to obtain each rectangle from the initial packing is at most $k$, plus an
additional cut to trim (i.e., separate the rectangles itself from a waste area). Intuitively, this
means that in the cutting process we change the orientation of the cuts $k -1$ times. 
%The case where k = 2, usually referred to as shelf packing, has been studied extensively

\begin{figure}[h]
\centering \captionsetup[subfigure]{justification=centering}
\begin{subfigure}[b]{0.3\linewidth} \centering \resizebox{3cm}{3cm}{ \begin{tikzpicture}
% 1st lvl
\draw [very thick] (0,0) rectangle (8,10);
\draw [ultra thick, blue] (3,0) -- (3,10);
\draw [ultra thick, blue] (6.5,0) -- (6.5,10);
% 2nd lvl
\draw [ultra thick, red]  (0,5.5) -- (3,5.5);
\draw [ultra thick, red]  (3,6.8) -- (6.5,6.8);
\draw [ultra thick, red]  (3,2.9)-- (6.5,2.9);
\draw [ultra thick, red]  (6.5,8) -- (8,8);
\draw[ultra thick, red] (6.5,6.7)--(8,6.7);
%3rd lvl

\draw [ultra thick, green] (0.8, 5.5) -- (0.8,10) ;
\draw [ultra thick, green] (1.45,5.5) -- (1.45,10) ;
\draw [ultra thick, green] (2.15,5.5) -- (2.15,10) ;
\draw [ultra thick, green] (2,0) -- (2,5.5) ;
\draw [ultra thick, green] (4.25,6.8) -- (4.25,10);
\draw [ultra thick, green] (5.4,2.9) -- (5.4,6.8);
\draw [ultra thick, green] (4.7,2.9) -- (4.7,6.8);
\draw [ultra thick, green] (3.9,2.9) -- (3.9,6.8);
\draw [ultra thick, green] (4.5,0) -- (4.5,2.9);
\draw [ultra thick, green] (7.3, 8) -- (7.3, 10);
 
% 4th level
\draw [ultra thick, yellow] (3,1.3)-- (4.5,1.3);
\draw [ultra thick, yellow] (5.4,4)-- (6.5,4);
\draw [ultra thick, yellow] (4.25,8.8)-- (6.5,8.8);
\draw[ultra thick, yellow] (0,4.5)--(2,4.5);
\draw[ultra thick, yellow] (0,3.4)--(2,3.4);
\draw[ultra thick, yellow] (0,2)--(2,2);

%5th level
\draw [ultra thick, brown] (5.1,6.8)-- (5.1,8.8);
\draw [ultra thick, brown] (5.8,6.8) -- (5.8,8.8);

% rectangles
\draw [fill=gray!50, very thick]   (7,0) rectangle (7.9,6.6);
\draw [fill=gray!50, very thick]   (6.75,6.8) rectangle (8,7.8);
\draw [fill=gray!50, very thick] (6.6,8) rectangle (7.2,9.8);
\draw [fill=gray!50, very thick] (7.5,9) rectangle (8,9.9);
\draw [fill=gray!50, very thick] (4.6,0) rectangle (6.3,2.7);
\draw [fill=gray!50, very thick] (3.05,0) rectangle (4.4,1.2);
\draw [fill=gray!50, very thick] (3.2,1.4) rectangle (4.2,2.85);
\draw [fill=gray!50, very thick] (0,5.55) rectangle (0.75,9.5) ;
\draw [fill=gray!50, very thick] (0.85,5.55) rectangle (1.4,9.3) ;
\draw [fill=gray!50, very thick] (1.5,5.55) rectangle (2.1,9);
\draw [fill=gray!50, very thick] (2.2,5.55) rectangle (2.95,8.5) ;
\draw [fill=gray!50, very thick]  (2.1,0) rectangle (2.85,5);

\draw [fill=gray!50, very thick]  (0,4.55) rectangle (1.9,5.4);
\draw [fill=gray!50, very thick]  (0,3.45) rectangle (1.8,4.45);
\draw [fill=gray!50, very thick]  (0,2.1) rectangle (1.65,3.35);
\draw [fill=gray!50, very thick]  (0,0.2) rectangle (1.5,1.9);
\draw [fill=gray!50, very thick]  (3.1,6.85) rectangle (4.2,9.9);
\draw [fill=gray!50, very thick]  (4.4,8.9) rectangle (6.4,9.9);
\draw [fill=gray!50, very thick]  (4.3,6.85) rectangle (5.05,8.7);
\draw [fill=gray!50, very thick]  (5.15,6.85) rectangle (5.75,8.5);
\draw [fill=gray!50, very thick]  (5.85,6.85) rectangle (6.45,8.2);
\draw [fill=gray!50, very thick]  (5.5,2.95) rectangle (6.45,3.9);
\draw [fill=gray!50, very thick]  (5.7,4.05) rectangle (6.4,6);
\draw [fill=gray!50, very thick]  (3.05,2.95) rectangle (3.85,6.6);
\draw [fill=gray!50, very thick]  (3.95,2.95) rectangle (4.65,6.3);
\draw [fill=gray!50, very thick]  (4.75,2.95) rectangle (5.35,6);
\end{tikzpicture}} \caption{}
\end{subfigure} \begin{subfigure}[b]{0.3\linewidth} \centering
\resizebox{3cm}{3cm}{ \begin{tikzpicture}
\draw [very thick] (0,0) rectangle (10,10);
\draw [fill=gray!50, very thick] (0,0) rectangle (10,0.6);
\draw [fill=black!30, very thick] (0,0.6) rectangle (0.6,10);
\draw [fill=gray!90, very thick] (0.6,0.6) rectangle (10,1.2);
\draw [fill=black!65, very thick] (0.6,1.2) rectangle (1.2,10);
\draw [fill=gray!50, very thick] (1.2,1.2) rectangle (10,1.8);
\draw [fill=black!30, very thick] (1.2,1.8) rectangle (1.8,10);
\draw [fill=gray!90, very thick] (1.8,1.8) rectangle (10,2.4);
\draw [fill=black!65, very thick] (1.8,2.4) rectangle (2.4,10);
\draw [fill=gray!50, very thick] (2.4,2.4) rectangle (10,3);
\draw [fill=black!30, very thick] (2.4,3) rectangle (3,10);
\draw [fill=gray!90, very thick] (3,3) rectangle (10,3.6);
\draw [fill=black!65, very thick] (3,3.6) rectangle (3.6,10);
\draw [fill=gray!50, very thick] (3.6,3.6) rectangle (10,4.2);
\draw [fill=black!30, very thick] (3.6,4.2) rectangle (4.2,10);
\draw [fill=gray!90, very thick] (4.2,4.2) rectangle (10,4.8);
\end{tikzpicture}} \caption{}
\label{subfig_L_nice} \end{subfigure} \begin{subfigure}[b]{0.3\linewidth}
\centering \resizebox{3cm}{3cm}{ \begin{tikzpicture}
\draw [very thick] (0,0) rectangle (8,10);

\draw [ultra thick, red, dashed]  (0,6.7) -- (8,6.7);

\draw [fill=black!50, very thick, fill opacity=0.7]  (0,3.4) rectangle (2.6,9.9);
\draw [fill=gray!50, very thick]  (0,0) rectangle (5.3,3.3);
\draw [fill=gray!50, very thick]  (5.4,0) rectangle (7.95,6.55);
\draw [fill=gray!50, very thick]  (2.7,6.85) rectangle (7.95,10);

\end{tikzpicture}} \caption{}
\end{subfigure} \caption{Packing (a) is a $5$-stage guillotine separable packing, packing
(b) is a $(n-1)$-stage guillotine separable packing, packing (c) is
not guillotine separable as any end-to-end cut in the strip intersects
a rectangle.}
\label{fig_guillo} 
\end{figure}
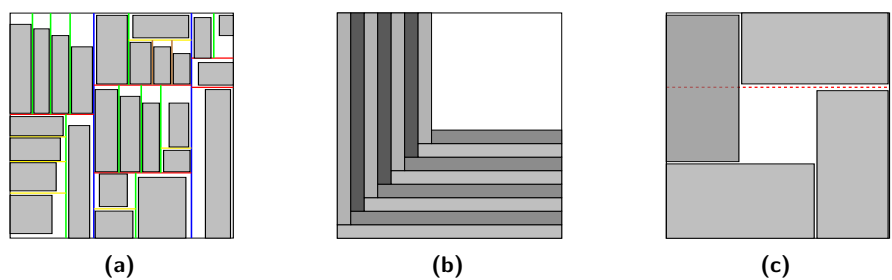

Therefore, in this paper, we study the \textsc{Guillotine Strip Packing}
problem (\tsg). The input is the same as for \ts, but we require
additionally that the items in the solution can be separated by a
sequence of guillotine cuts, and we say then that  they are \emph{guillotine
separable}. Like general \ts~without requiring the items to be guillotine
separable, \tsg~generalizes \textsc{Bin Packing }(when all items
have the same height) and makespan minimization on identical machines
(when all items have the same width). Thus, it is strongly $\mathsf{NP}$-hard,
and the same reduction from \textsc{Partition} mentioned above yields
a lower bound of $(3/2-\eps)$ for polynomial time algorithms (see Appendix \ref{section_hardness} for more details). 
For asymptotic approximation, \tsg~is well understood. 
Kenyon and R{é}mila \cite{kenyon2000near} gave an asymptotic polynomial time approximation scheme (APTAS)
for (general) \ts. Their algorithm produces a 5-stage packing (hence, guillotine
separable), and thus yields an APTAS for \tsg~as well.
Later, Seiden et al.~\cite{seiden2005two} settled the asymptotic approximation status of \tsg~under $k$-stage packing.
They gave an APTAS for \tsg~using 4-stage guillotine cuts, and showed 
$k = 2$ stages cannot guarantee any bounded asymptotic performance ratio, and $k = 3$ stages lead to
asymptotic performance ratios close to $1.691$.
However, in the non-asymptotic setting,  approximation ratio of \tsg~is not yet settled. 
Steinberg's algorithm \cite{steinberg1997strip} yields a 2-approximation algorithm
for \tsg~and this is the best known polynomial time approximation algorithm for the problem.
% The best
% known algorithm for \tsg~is a polynomial time 2-approximation algorithm
% due to Steinberg \cite{steinberg1997strip}. 

In this paper we present approximation algorithms for \tsg~which
have strictly better approximation ratios than the best known algorithms
for \ts, and in the setting of pseudo-polynomial time algorithms
we even beat the lower bound that holds for \ts. Moreover, we show
that all our approximation ratios are essentially the best possible.

\subsection{Our Contribution}

We present a polynomial time $(3/2+\eps)$-approximation algorithm
for \tsg. Due to the mentioned lower bound of $(3/2-\eps)$, our approximation
ratio is essentially tight. Also, we present a pseudo-polynomial time
$(1+\eps)$-approximation algorithm, which is also essentially
tight since GSP is strongly $\mathsf{NP}$-hard.

For the pseudo-polynomial time
$(1+\eps)$-approximation, we first prove that there exists a structured solution with height
at most $(1+\eps)\opt$ ($\opt$ denotes the height of the optimal solution) in which the strip is divided into $O(1)$
rectangular boxes inside which the items are {\em nicely packed},
e.g., horizontal items are stacked on top of each other, vertical
items are placed side by side, and small items are packed greedily
with the Next-Fit-Decreasing-Height algorithm \cite{coffman1980performance}
% (see Figure \ref{fig_skew_nice} in Appendix~\ref{sec_omitted_figures}). 
(see Figure~\ref{fig:pptas_poly}(\subref{fig:pptas_poly_1}) and also Figure~\ref{fig_skew_nice}. Also, refer to Section~\ref{sec:PPTAS} for item classification). 
This result starkly contrasts \ts~(i.e.,
where we do not require the items to be guillotine separable): for
that problem, it is already unlikely that we can prove that there
always exists such a packing with a height of less than $5/4\cdot\opt$.
If we could prove this, we could approximate the problem in pseudo-polynomial
time with a better ratio than $5/4$, which is $\mathsf{NP}$-hard~\cite{henning2020complexity}.

To construct our structured packing, we start with an optimal packing
and use the techniques in \cite{khan2021guillotine} to obtain
a packing in which each item is {\em nicely packed} in one of a
constant number of boxes and L-shaped {\em compartments}. We increase
the height of our packing by $\eps\opt$ in order to round the heights
of the packed items and get some leeway within the packing. Then,
we rearrange the items placed inside the L-shaped compartments.
Here, we crucially exploit that the items in the initial packing  are guillotine
separable. In particular, this property allows us to identify certain
sets of items that we can swap, e.g., items on the left and the
right of a vertical guillotine cut to simplify the packing, and reduce
the number of boxes to $O(1)$. Then, using standard techniques, we
compute a solution with this structure in pseudo-polynomial time and
hence with a packing height of at most $(1+\eps)\opt$ (see Figure~\ref{fig:pptas_poly} (\subref{fig:pptas_poly_1})).

\begin{figure}[tb]

%Fig_3.1
\hspace*{-7em}
\captionsetup[subfigure]{justification=centering,aboveskip=1em}
\centering
%\captionsetup[subfigure]{}
\begin{subfigure}[b]{0.28\linewidth}
\centering
\resizebox{4.4cm}{4.21cm}{
\begin{tikzpicture}
\draw [ultra thick](0,0) rectangle (8,10);
%\draw [dashed] (0,5) -- (8,5);
%\draw [dashed,red] (0,6.6) -- (8,6.6);
%\draw [line width=1 mm,blue] (0,6.6) -- (1.85,6.6);
%

\draw [ultra thick](0,0) rectangle (1.8,6.5);
\draw [ultra thick](1.85,0) rectangle (2.6,5.5);
\draw [ultra thick](2.65,0) rectangle (3.6,3.2);
\draw [ultra thick](3.65,0) rectangle (5.15,3.3);
%\draw [ultra thick,fill=gray!20](2.65,1.5) rectangle (3.5,3.2);
%\draw [ultra thick,fill=gray!20](3.65,0) rectangle (5.15,3.3);
%\draw [ultra thick,fill=gray!20](4.1,3.4) rectangle (5.15,5.1);
%\draw [ultra thick,fill=gray!20](4.1,5.1) rectangle (5,6.6);
%\draw [ultra thick,fill=gray!20](4.7,5.1) rectangle (5.15,6.6);
\draw [ultra thick](2.65,3.3) rectangle (3.6,5.6);
%\draw [ultra thick,fill=gray!20](3.1,3.3) rectangle (3.6,5.6);
\draw [ultra thick](3.65,3.3) rectangle (5.15,6.8);
\draw [ultra thick](5.2,0) rectangle (6.1,7.1);
\draw [ultra thick](7.15,0) rectangle (8,8);
\draw [ultra thick](6.1,0) rectangle (7.1,4);
\draw [ultra thick](6.1,4) rectangle (7.05,8.1);
%\draw [ultra thick,fill=gray!20](6.5,4) rectangle (7.1,6.9);

%
\draw [ultra thick](0.1,8.5) rectangle (5,9.9);
%\draw [ultra thick,fill=gray!20](0.1,9.05) rectangle (5.9,9.9);
\draw [ultra thick](6.1,8.2) rectangle (8,9.8);
\draw [ultra thick](0.1,6.6) rectangle (1.7,8.4);
\draw [ultra thick](1.85,5.65) rectangle (3.6,8.3);
%\draw [ultra thick,fill=gray!20](1.85,6.3) rectangle (3.2,8.1);
%\draw [ultra thick,fill=gray!20](3.2,6.3) rectangle (3.6,7.9);
\draw [ultra thick](5.25,7.1) rectangle (6.05,9.9);
%\draw [ultra thick,fill=gray!20](6.15,6) rectangle (6.5,8.1);
%\draw [ultra thick,fill=gray!20](6.55,6.9) rectangle (7.1,8.1);
\draw [ultra thick](3.65,6.85) rectangle (5.1,8.4);
%\draw [ultra thick,fill=gray!20](3.8,7.55) rectangle (5,8.15);

%rectangles
\draw [ultra thick, fill=gray!60] (0,0) rectangle (0.85,6.45);
\draw [ultra thick, fill=gray!60] (0.9,0) rectangle (1.3,6.1);
\draw [ultra thick, fill=gray!60] (1.35,0) rectangle (1.75,5.9);
\draw [ultra thick, fill=gray!60] (1.85,0) rectangle (2.15,5.45);
\draw [ultra thick, fill=gray!60] (2.2,0) rectangle (2.55,5.1);
\draw [ultra thick, fill=gray!60] (5.2,0) rectangle (5.45,7.05);
\draw [ultra thick, fill=gray!60] (5.5,0) rectangle (6,6);
\draw [ultra thick, fill=gray!60] (7.15,0) rectangle (7.6,7.95);
\draw [ultra thick, fill=gray!60] (7.65,0) rectangle (7.95,7);

\draw [ultra thick, fill=cyan] (0.1,9.85) rectangle (4.9,9.55);
\draw [ultra thick, fill=cyan] (0.1,9.5) rectangle (4.65,9.05);
\draw [ultra thick, fill=cyan] (0.1,9) rectangle (4.4,8.6);

\draw [ultra thick, fill=orange] (3.7,0) rectangle (4,3.2);
\draw [ultra thick, fill=orange] (4.05,0) rectangle (4.6,3.05);
\draw [ultra thick, fill=orange] (4.65,0) rectangle (5.1,2.9);

\draw [ultra thick, fill=orange] (3.65,3.35) rectangle (4.4,6.7);
\draw [ultra thick, fill=orange] (4.45,3.35) rectangle (5,6.5);

\draw [ultra thick, fill=orange] (5.3,7.15) rectangle (5.5,9.6);
\draw [ultra thick, fill=orange] (5.55,7.15) rectangle (6,9.2);

\draw [ultra thick, fill=orange] (6.15,0) rectangle (6.35,3.8);
\draw [ultra thick, fill=orange] (6.4,0) rectangle (6.7,3.5);
\draw [ultra thick, fill=orange] (6.75,0) rectangle (7.05,3.2);

\draw [ultra thick, fill=orange] (6.15,4.05) rectangle (6.5,8);
\draw [ultra thick, fill=orange] (6.5,4.05) rectangle (6.8,7.5);

\draw [ultra thick, fill=orange] (2.7,0) rectangle (3.2,3.15);
\draw [ultra thick, fill=orange] (3.25,0) rectangle (3.55,2.9);

\draw [ultra thick, fill=orange] (2.7,3.35) rectangle (2.95,5.5);
\draw [ultra thick, fill=orange] (3,3.35) rectangle (3.25,5.2);
\draw [ultra thick, fill=orange] (3.3,3.35) rectangle (3.6,4.9);

\draw [ultra thick, fill=brown] (0.15,6.65) rectangle (1.65,8.35);
\draw [ultra thick, fill=brown] (3.7,6.9) rectangle (5.05,8.35);

\draw [ultra thick, fill=magenta] (1.9,5.7) rectangle (2.2,6.7);
\draw [ultra thick, fill=magenta] (2.25,5.7) rectangle (2.9,6.6);
\draw [ultra thick, fill=magenta] (2.95,5.7) rectangle (3.4,6.5);

\draw [ultra thick, fill=magenta] (1.9,6.8) rectangle (2.45,7.55);
\draw [ultra thick, fill=magenta] (2.5,6.8) rectangle (2.7,7.45);
\draw [ultra thick, fill=magenta] (2.75,6.8) rectangle (3.2,7.35);

\draw [ultra thick, fill=magenta] (1.9,7.65) rectangle (2.35,8.15);
\draw [ultra thick, fill=magenta] (2.4,7.65) rectangle (2.6,8.05);
\draw [ultra thick, fill=magenta] (2.65,7.65) rectangle (2.9,7.9);
\draw [ultra thick, fill=magenta] (2.95,7.65) rectangle (3.5,7.8);

\draw [ultra thick, fill=magenta] (6.15,8.2) rectangle (6.4,8.95);
\draw [ultra thick, fill=magenta] (6.45,8.2) rectangle (7,8.85);
\draw [ultra thick, fill=magenta] (7.05,8.2) rectangle (7.8,8.85);

\draw [ultra thick, fill=magenta] (6.15,9.1) rectangle (6.5,9.7);
\draw [ultra thick, fill=magenta] (6.55,9.1) rectangle (7.2,9.65);
\draw [ultra thick, fill=magenta] (7.25,9.1) rectangle (7.9,9.6);

%nodes
\draw (-1,5) node {\LARGE $\boldsymbol{\frac{1}{2}\textbf{OPT}}$};
\draw (-1,10) node {\LARGE $\boldsymbol{\textbf{OPT}}$};
%\draw (-0.75,6.5) node {\Large $\boldsymbol{h^{*}}$};
%\draw (-0.75,6.66) node {\Large \textbf{$\frac{2}{3}OPT$}};
%\draw (7.2,12.5) node {\Large \textbf{$S_1$}};

\draw (-1.75,11) node[inner sep=0pt] {\LARGE $\boldsymbol{(1+\eps)\textbf{OPT}}$};
\filldraw [ultra thick,color=red!60,pattern=north east lines,pattern color =GreenYellow] (0,10) rectangle (8,11) ;
\draw (4,10.5) node {\LARGE $\boldsymbol{S}$};
%%%
%\draw[ultra thick,  red] (0,13.7) -- (2,13.7) -- (2,12.9)  -- (3.7,12.9) --(3.7,12.6)-- (4,12.6) -- (4,12.7) -- (4.5,12.7) -- (4.5,14.6) -- (6.4,14.6) -- (6.4,15.2) -- (8,15.2)  ;
%rectangles
\draw [dashed] (0,5) -- (8,5);
%\draw [ultra thick, fill=gray!60] (,) rectangle (,); 
\end{tikzpicture}}
%\vspace*{2em}
\caption{\hspace*{-20mm}}
\label{fig:pptas_poly_1}
%\vspace*{-2em}
\end{subfigure}
\hspace*{1.3in}
\begin{subfigure}[b]{0.3\linewidth}
\resizebox{4.7cm}{6.1cm}{
\begin{tikzpicture}
 \draw [ultra thick](0,0) rectangle (8,15);
%\draw [dashed] (0,5) -- (8,5);
%\draw [dashed] (0,10) -- (8,10);
%\draw [dashed,red] (0,6.55) -- (8,6.55);
%\draw [red] (0,11.6) -- (8,11.6);
%\draw [dashed,blue] (1.15,-0.5) -- (1.15,11.55);
%

%\draw [ultra thick](0,0) rectangle (1.8,6.5);
%\draw [ultra thick](1.85,0) rectangle (2.6,5.5);
\draw [ultra thick](4.45,0) rectangle (5.4,3.2);
%\draw [ultra thick,fill=gray!20](4.45,1.5) rectangle (5.3,3.2);
\draw [ultra thick](5.45,0) rectangle (6.95,3.3);
%\draw [ultra thick,fill=gray!20](5.9,3.4) rectangle (6.95,5.1);
%\draw [ultra thick,fill=gray!20](5.9,6.65) rectangle (6.5,7.55);
%\draw [ultra thick,fill=gray!20](6.5,6.65) rectangle (6.95,8.15);
\draw [ultra thick](4.45,3.3) rectangle (5.4,5.6);
%\draw [ultra thick,fill=gray!20](4.9,3.3) rectangle (5.4,5.6);
\draw [ultra thick](5.45,6.55) rectangle (6.95,10.05);
%\draw [ultra thick](5.2,0) rectangle (6.1,7.1);
%\draw [ultra thick](7.15,0) rectangle (8,8);

\draw [ultra thick](7,0) rectangle (8,4);
\draw [ultra thick](7,6.55) rectangle (8,10.65);
%\draw [ultra thick,fill=gray!20](7,4) rectangle (7.4,6);
%\draw [ultra thick,fill=gray!20](7.4,6.55) rectangle (8,9.45);

\draw [ultra thick](0.1,13.5) rectangle (5,14.9);
%\draw [ultra thick,fill=gray!20](0.1,9.05) rectangle (5.9,9.9);
\draw [ultra thick](6.1,13.2) rectangle (8,14.8);
\draw [ultra thick](0.1,11.6) rectangle (1.7,13.4);
%\draw [ultra thick,fill=gray!20](5.25,7.1) rectangle (6.05,9.9);
%\draw [ultra thick](1.85,.65) rectangle (3.6,13.3);

%
%\draw [ultra thick,fill=gray!20](0.1,13.2) rectangle (5,14);
%\draw [ultra thick,fill=gray!20](0.1,14.05) rectangle (5.9,14.9);
%\draw [ultra thick,fill=gray!20](6.1,13.3) rectangle (8,14.8);
%\draw [ultra thick,fill=gray!20](0.1,11.65) rectangle (1.7,13.1);
\draw [ultra thick](3.65,7.2) rectangle (5.4,9.85);
%\draw [ultra thick,fill=gray!20](3.65,7.85) rectangle (5,9.65);
%\draw [ultra thick,fill=gray!20](5,7.85) rectangle (5.4,9.45);
\draw [ultra thick](5.25,12.1) rectangle (6.05,14.9);
%\draw [ultra thick,fill=gray!20](7.05,7.55) rectangle (7.4,9.65);
%\draw [ultra thick,fill=gray!20](6.55,11.9) rectangle (7.1,13.1);
\draw [ultra thick](3.65,11.85) rectangle (5.1,13.4);
%\draw [ultra thick,fill=gray!20](3.8,12.55) rectangle (5,13.15);

%rectangles
%\draw [ultra thick, fill=gray!60] (0,0) rectangle (0.85,6.45);
%\draw [ultra thick, fill=gray!60] (0.9,0) rectangle (1.3,6.1);
%\draw [ultra thick, fill=gray!60] (1.35,0) rectangle (1.75,5.9);
%\draw [ultra thick, fill=gray!60] (,) rectangle (,);
%\draw [ultra thick, fill=gray!60] (,) rectangle (,);
%\draw [ultra thick, fill=gray!60] (1.85,0) rectangle (2.15,5.45);
%\draw [ultra thick, fill=gray!60] (2.2,0) rectangle (2.55,5.1);
%\draw [ultra thick, fill=gray!60] (5.2,0) rectangle (5.45,7.05);
%\draw [ultra thick, fill=gray!60] (5.5,0) rectangle (6,6);
%\draw [ultra thick, fill=gray!60] (7.15,0) rectangle (7.6,7.95);
\draw [ultra thick, fill=gray!60] (0,0) rectangle (0.45,7.95);
\draw [ultra thick, fill=gray!60] (0.5,0) rectangle (0.75,7.05);
\draw [ultra thick, fill=gray!60] (0.8,0) rectangle (1.1,7);
\draw [ultra thick, fill=gray!60] (1.15,0) rectangle (2,6.45);
\draw [ultra thick, fill=gray!60] (2.05,0) rectangle (2.45,6.1);
\draw [ultra thick, fill=gray!60] (2.5,0) rectangle (3,6);
\draw [ultra thick, fill=gray!60] (3.05,0) rectangle (3.45,5.9);
\draw [ultra thick, fill=gray!60] (3.5,0) rectangle (3.8,5.45);
\draw [ultra thick, fill=gray!60] (3.85,0) rectangle (4.2,5.1);

\draw [ultra thick, fill=cyan] (0.1,14.85) rectangle (4.9,14.55);
\draw [ultra thick, fill=cyan] (0.1,14.5) rectangle (4.65,14.05);
\draw [ultra thick, fill=cyan] (0.1,14) rectangle (4.4,13.6);

%\draw [ultra thick, fill=orange] (3.7,0) rectangle (4,3.2);
%\draw [ultra thick, fill=orange] (4.05,0) rectangle (4.6,3.05);
%\draw [ultra thick, fill=orange] (4.65,0) rectangle (5.1,2.9);

\draw [ultra thick, fill=brown] (0.15,11.65) rectangle (1.65,13.35);
\draw [ultra thick, fill=brown] (3.7,11.9) rectangle (5.05,13.35);

%1.8,1.55
\draw [ultra thick, fill=magenta] (3.7,7.25) rectangle (4,8.25);
\draw [ultra thick, fill=magenta] (4.05,7.25) rectangle (4.7,8.15);
\draw [ultra thick, fill=magenta] (4.75,7.25) rectangle (5.2,8.05);

\draw [ultra thick, fill=magenta] (3.7,8.35) rectangle (4.25,9.1);
\draw [ultra thick, fill=magenta] (4.3,8.35) rectangle (4.5,9);
\draw [ultra thick, fill=magenta] (4.55,8.35) rectangle (5,8.9);

\draw [ultra thick, fill=magenta] (3.7,9.2) rectangle (4.15,9.7);
\draw [ultra thick, fill=magenta] (4.2,9.2) rectangle (4.4,9.6);
\draw [ultra thick, fill=magenta] (4.45,9.2) rectangle (4.7,9.45);
\draw [ultra thick, fill=magenta] (4.75,9.2) rectangle (5.3,9.35);

\draw [ultra thick, fill=magenta] (6.15,13.2) rectangle (6.4,13.95);
\draw [ultra thick, fill=magenta] (6.45,13.2) rectangle (7,13.85);
\draw [ultra thick, fill=magenta] (7.05,13.2) rectangle (7.8,13.85);

\draw [ultra thick, fill=magenta] (6.15,14.1) rectangle (6.5,14.7);
\draw [ultra thick, fill=magenta] (6.55,14.1) rectangle (7.2,14.65);
\draw [ultra thick, fill=magenta] (7.25,14.1) rectangle (7.9,14.6);

\draw [ultra thick, fill=orange] (5.5,0) rectangle (5.8,3.2);
\draw [ultra thick, fill=orange] (5.85,0) rectangle (6.4,3.05);
\draw [ultra thick, fill=orange] (6.45,0) rectangle (6.9,2.9);

\draw [ultra thick, fill=orange] (5.3,12.15) rectangle (5.5,14.6);
\draw [ultra thick, fill=orange] (5.55,12.15) rectangle (6,14.2);

\draw [ultra thick, fill=orange] (7.05,0) rectangle (7.25,3.8);
\draw [ultra thick, fill=orange] (7.3,0) rectangle (7.6,3.5);
\draw [ultra thick, fill=orange] (7.65,0) rectangle (7.95,3.2);

\draw [ultra thick, fill=orange] (7.05,6.55) rectangle (7.4,10.5);
\draw [ultra thick, fill=orange] (7.4,6.55) rectangle (7.7,10);

\draw [ultra thick, fill=orange] (4.5,0) rectangle (5,3.15);
\draw [ultra thick, fill=orange] (5.05,0) rectangle (5.35,2.9);

\draw [ultra thick, fill=orange] (4.5,3.35) rectangle (4.75,5.5);
\draw [ultra thick, fill=orange] (4.8,3.35) rectangle (5.05,5.2);
\draw [ultra thick, fill=orange] (5.1,3.35) rectangle (5.4,4.9);

\draw [ultra thick, fill=orange] (5.5,6.6) rectangle (6.2,9.95);
\draw [ultra thick, fill=orange] (6.25,6.6) rectangle (6.8,9.75);

%S_1 and S_2
\filldraw [ultra thick,color=red!60,pattern=north west lines,pattern color =green] (2,6.5) rectangle (3.4,11.5) ;
\filldraw [ultra thick,color=red!60,pattern=north east lines,pattern color =GreenYellow] (0,15) rectangle (8,16) ;

%nodes
\draw (-1,5) node {\LARGE $\boldsymbol{\frac{1}{2}\textbf{OPT}}$};
\draw (-1,10) node {\LARGE $\boldsymbol{\textbf{OPT}}$};
%\draw (-0.75,6.5) node {\Large $\boldsymbol{h^{*}}$};
\draw (-1,15) node {\LARGE $\boldsymbol{\frac{3}{2}\textbf{OPT}}$};
%\draw (-1.1,11.5) node {\Large $\boldsymbol{h^{*}+\frac{1}{2}\text{OPT}}$};
%\draw (-1.1,11.5) node {\Large \textbf{$h+\frac{1}{2}OPT$}};
%\draw (4,15.5) node {\Large \textbf{$S_2$}};
\draw (2.7,9) node {\LARGE $\boldsymbol{B^{*}}$};
\draw (4,15.5) node {\LARGE $\boldsymbol{S}$};
%\draw (1.15,-0.75) node {\Large $\boldsymbol{x_1}$};
%\draw (7.2,12.5) node {\Large \textbf{$S_1$}};
%\draw (4,17.75) node {\Large \textbf{$S_2$}};
\draw (-1.75,16) node[inner sep=0pt] {\LARGE $\boldsymbol{(\frac{3}{2}+\eps)\textbf{OPT}}$};
\draw [ultra thick,decorate,
    decoration = { brace,amplitude=10pt}] (1.9,6.55) --  (1.9,11.5) node[pos=0.5,left=6pt,black]{\Large$\boldsymbol{\frac{1}{2}\textbf{OPT}}$};

%%%
%\draw[ultra thick,  red] (0,13.7) -- (2,13.7) -- (2,12.9)  -- (3.7,12.9) --(3.7,12.6)-- (4,12.6) -- (4,12.7) -- (4.5,12.7) -- (4.5,14.6) -- (6.4,14.6) -- (6.4,15.2) -- (8,15.2)  ;
%rectangles
\draw [dashed] (0,5) -- (8,5);
\draw [dashed] (0,10) -- (8,10);
%\draw [ultra thick, fill=gray!60] (6.05,0) rectangle (6.7,6.8);
\end{tikzpicture}}
%\vspace{-2.2em}
\caption{\hspace*{-23mm}}
\label{fig:pptas_poly_2}
\end{subfigure}
\caption{(a) A guillotine separable structured packing (for the pseudo-polynomial time approximation scheme) where all the items are packed nicely in containers. The  tall items (dark-gray) are stacked next to each other just like the vertical items (orange); the horizontal items (blue) are stacked on top of each other, the small items (pink) are packed according to NFDH, and the large containers contain single large items (brown). \;(b) A guillotine separable structured packing for the polynomial time $(3/2+\eps)$-approximation, where the packing from $(a)$ is rearranged such that the tall items are \textit{bottom-left-flushed} and there is an extra empty box $B^*$ to accommodate some of the vertical items which we are unable to pack in polynomial time in the rest of the guessed boxes. This arises from the $\mathsf{NP}$-hardness of the \textsc{Partition} problem.  The yellow rectangular strip $S$ on top of both the packings is used for packing the medium and leftover horizontal and small items.\vspace{-3.5mm}}
\label{fig:pptas_poly}

\end{figure}
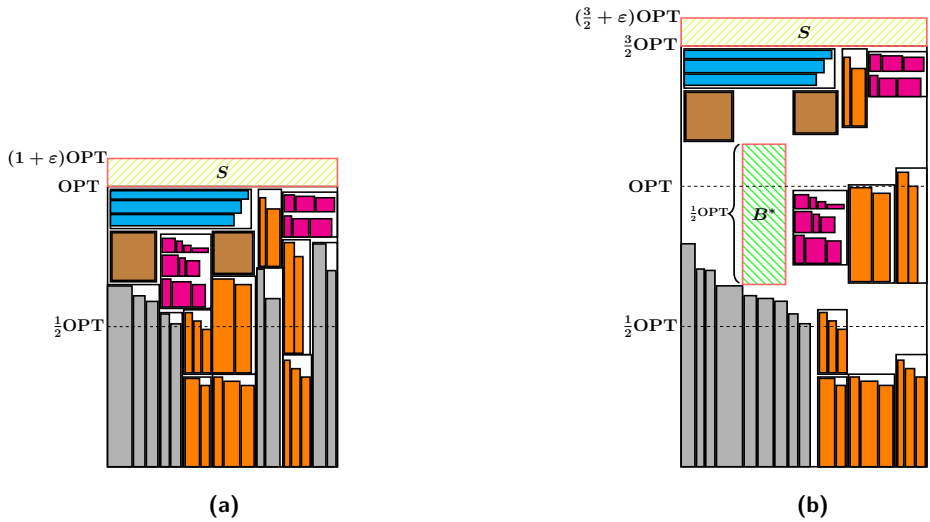

Note that we do not obtain a $(1+\eps)$-approximation algorithm
in polynomial time in this way. The reason is that when we pack the
items into the rectangular boxes, we need to solve a generalization
of \textsc{partition:} there can be several boxes in which vertical
items are placed side by side, and we need that the widths of the
items in each box sum up to at most the width of the box. If there
is only a single item that we cannot place, then we would need to
place it on top of the packing, which can increase our packing height
by up to $\text{OPT}$. % $\text{OPT}/2$. 

For our polynomial time $(3/2+\eps)$-approximation algorithm, we, therefore,
need to be particularly careful with the items whose height is larger
than $\opt/2$, which we call the \emph{tall items}. We prove a different
structural result which is the main technical contribution of this
paper: we show that there is always a $(3/2+\eps)$-approximate packing
in which the tall items are packed together in a {\em bottom-left-flushed}
way, i.e., they are ordered non-increasingly by height and stacked
next to each other with their bottom edges touching the base of the
strip. All remaining items are nicely packed into $O_{\eps}(1)$ boxes,
and there is also an empty strip of height $\opt/2$ and width $\Omega_{\eps}(W)$,
see Figure~\ref{fig:pptas_poly}~(\subref{fig:pptas_poly_2}). Thus, it is very easy to pack the tall items correctly
according to this packing. We pack the remaining items with standard
techniques into the boxes. In particular, the mentioned empty strip
allows us to make slight mistakes while we pack the vertical items
that are not tall; without this, we would still need to solve a generalization
of \textsc{partition}. 

In order to obtain our structural packing for our polynomial time
$(3/2+\eps)$-approximation algorithm, we build on the idea of the packing for
the pseudo-polynomial time $(1+\eps)$-approximation. Using
that it is guillotine separable, we rearrange its items further. First,
we move the items such that all tall items are at the bottom. To achieve
this, we again argue that we can swap certain sets of items,
guided by the guillotine cuts. Then, we shift certain items up by
$\opt/2$, which leaves empty space between the shifted and the not-shifted
items,
 see Figure~\ref{fig:pptas_poly}~(\subref{fig:pptas_poly_2}).
Inside this empty space, we place the empty box of height $\opt/2$.
Also, we use this empty space in order to be able to reorder the tall
items on the bottom by their respective heights. During these
changes, we ensure carefully that the resulting packing stays guillotine
separable.

It is possible that also for (general) SP there always exists a structured
packing of height at most $(3/2+\eps)\opt$, similar to our packing. This would
yield an essentially tight polynomial time $(3/2+\eps)$-approximation
 for SP and thus solve the long-standing open problem to
find the best possible polynomial time approximation ratio for \ts.
We leave this as an open question.
\subsection{Other related work}
In the 1980s, Baker et al.~\cite{baker1980orthogonal} initiated
the study of approximation algorithms for strip packing, by giving
a 3-approximation algorithm. After a sequence of improved approximations
 \cite{coffman1980performance,sleator19802}, Steinberg
\cite{steinberg1997strip} and Schiermeyer \cite{schiermeyer1994reverse}
independently gave 2-approximation algorithms. 
For asymptotic approximation,
Kenyon and R{é}mila \cite{kenyon2000near} settled \ts~by
providing an APTAS.
%asymptotic polynomial time approximation scheme (APTAS)
%for \ts. In fact, their algorithm produces a packing that is guillotine
%separable, and thus yields an APTAS for GSP as well.

\ts~has rich connections with  important geometric
packing problems \cite{CKPT17,khan2015approximation} such as 2D bin
packing (2BP) \cite{bansal2014binpacking,KS21}, 2D geometric knapsack (2GK) \cite{GalvezGIHKW21,Jansen2004},
dynamic storage allocation \cite{BuchsbaumKKRT04}, maximum independent
set of rectangles (MISR) \cite{Galvez22,AHW19}, sliced packing \cite{DeppertJ0RT21,Galvez0AK21},
etc.

In 2BP, we are given a set of rectangles and square bins, and the goal is to find an axis-aligned
non-overlapping packing of all items into a minimum number of bins.
The problem admits no APTAS \cite{bansal2006bin}, and
the present best approximation ratio is $1.406$ \cite{bansal2014binpacking}.
In 2GK, we are given a set of
rectangular items and a square knapsack. Each item has an associated
profit, and the goal is to pack a subset of items in the knapsack
such that the profit is maximized. The present best polynomial time
approximation ratio is $1.89$ \cite{GalvezGIHKW21}. There is a pseudo-polynomial
time $(4/3+\eps)$-approximation \cite{Galvez00RW21} for
2GK.
% Recently, the problem
% has received attention in \cite{AbedCCKPSW15,KMR20} since, if true,
% this would imply a simple dynamic-programming based $O(1)$-approximation
% for the maximum independent set of rectangles (MISR) problem. 
In MISR, we are given a set of (possibly overlapping) rectangles we need to find the
maximum cardinality non-overlapping set of rectangles. Recently, Mitchell~\cite{mitchellFocs}
gave the first constant approximation algorithm for the problem. Then
G{á}lvez et al. \cite{Galvez22} obtained a $(2+\eps)$-approximation
algorithm for MISR. Their algorithms are based on a recursive geometric
decomposition of the plane, which can be viewed as a generalization
of guillotine cuts, more precisely, to cuts with $O(1)$ bends. Pach and Tardos \cite{pach2000cutting} even conjectured that for any
set of $n$ non-overlapping axis-parallel rectangles, there is a guillotine
cutting sequence separating $\Omega(n)$ of them.

% 
% The idea of {\em $k$-stage packing} was initiated by Gilmore and
% Gomory \cite{gilmore1965multistage}. Here each stage consists of
% either horizontal or vertical guillotine cuts (but not both). On each
% following stage, each of the sub-regions obtained on the previous
% stage is considered separately and can be cut again by using vertical
% or horizontal guillotine cuts. In $k$-stage packing, the number of
% cuts to obtain each rectangle from the initial packing is at most
% $k$, plus an additional cut to trim (i.e., separate the rectangles
% themselves from a waste area). Intuitively, this means that in the
% cutting process we change the orientation of the cuts $k-1$ times.
% The case where $k=2$, usually referred to as {\em shelf packing},
% has been studied extensively. Christofides et al.~\cite{christofides1977algorithm}
% studied the guillotine packing problem in 1970s. Since then many heuristics
% have been developed to efficiently solve benchmark instances, based
% on tree-search \cite{viswanathan1988exact}, branch-and-bound \cite{hadjiconstantinou1995exact},
% dynamic optimization \cite{beasley1985exact}, tabu search \cite{alvarez2002tabu},
% genetic algorithms \cite{parada1995hybrid}, etc.

2BP and 2GK are also well-studied in the guillotine setting \cite{pietrobuoni2015two}.
Caprara et al.~\cite{caprara2005fast} gave an APTAS for 2-stage
\ts~and 2-stage BP. 
% Seiden et al.~\cite{seiden2005two} gave an
% APTAS for \tsg.
Later, Bansal et al.~\cite{BansalLS05} showed an
APTAS for guillotine 2BP. Bansal et al.~\cite{bansal2014binpacking}
conjectured that the worst-case ratio between the best guillotine
2BP and the best general 2BP is $4/3$. If true, this would imply
a $(\frac{4}{3}+\eps)$-approximation algorithm for 2BP. %
For guillotine 2GK, Khan et al.~\cite{khan2021guillotine} recently
gave a pseudo-polynomial time approximation scheme. 
\section{\label{sec:PPTAS}Pseudo-polynomial time approximation scheme}

In this section, we present our pseudo-polynomial time approximation scheme (PPTAS) for~\tsg. 
% The algorithm will be described later, in Section~\ref{subsec:algo-PPTAS}. 

 %The main obstacle
%compared to Section~\ref{sec:PPTAS} is that the algorithm due to
%Lemma~\ref{lem:DP} runs in pseudo-polynomial time, rather than polynomial
%time, and if we could improve its running time to polynomial time
%we could solve \textsc{Partition} in polynomial time (see Appendix~\ref{section_hardness}), which is $\mathsf{NP}$-hard \cite{karp1972reducibility}).
%Also, note that if there is even only one single tall item~$i$ that
%we cannot pack, then we need to place $i$ on top of our packing,
%which can increase the height of the packing by up to $\opt$.
Let $\eps>0$ and assume w.l.o.g.~that $1/\eps\in\N$. We
denote by $\opt$ the height of the optimal solution. We classify
the input items into a few groups according to their heights
and widths similar to the classification in \cite{khan2021guillotine}. For two constants $1\geq\dlta>\meu>0$ to be defined later,
we classify each item $i\in\R$ as: 
\begin{itemize}
\item \textit{tall} if $h_{i}>{\opt}/{2}$;
\item {\em large} if $\width_{i}>\dlta\W$ and ${\opt}/{2}\geq\height_{i}>\dlta\opt$; 
\item {\em horizontal} if $\width_{i}>\dlta\W$ and $\height_{i}\leq\meu\opt$; 
\item {\em vertical} $\width_{i}\leq\delta\W$ and ${\opt}/{2}\geq\height_{i}>\dlta\opt$; 
\item {\em medium if} 
\begin{itemize}
\item either $\dlta\opt\geq\height_{i}>\meu\opt$; 
\item or $\dlta\W\geq\width_{i}>\meu\W$ and $\height_{i}\leq\meu\opt$;
\end{itemize}
\item {\em small } if $\width_{i}\leq\meu\W$ and $\height_{i}\leq\meu\opt$; 
\end{itemize}

\begin{figure}[!thb]
		\centering
		\resizebox{!}{4.5cm}{
		\begin{tikzpicture}		
			\draw (0,0) rectangle (6,6);
			\draw [thick, pattern=north west lines, pattern color=black] (4,2) rectangle (6,3);			
			\draw [thick, pattern=north east lines, pattern color=red] (0,2) rectangle (4,3);			
			\draw [thick, pattern=north west lines] (0,0) rectangle (2,1);			
			\draw [thick, pattern=north west lines, pattern color=Brown] (4,0) rectangle (6,1);				
			\draw [thick, pattern=north west lines, pattern color=yellow] (0,1)--(0,2)--(6,2)--(6,1)-
			-(4,1)--(4,0)--(2,0)--(2,1)--(0,1);			
			\draw [thick, pattern=north west lines, pattern color=OliveGreen] (0,3) rectangle (6,6);
			
			% axis
			\draw[ultra thick] (6,0)--(0,0)--(0,6);
			\draw[ultra thick] (-.15,1)--(.15,1);
			\draw[ultra thick] (-.15,2)--(.15,2);
                                \draw[ultra thick] (-.15,3)--(.15,3);
			
			\draw[ultra thick] (2,-.15)--(2,.15);
			\draw[ultra thick] (4, -.15)--(4,.15);			
			
			\draw (5,2.5) node {\large \textbf{$Large$}};				
			\draw (1,0.5) node {\large \textbf{$Small$}};				
			\draw (2,2.5) node {\large \textbf{$Vertical$}};				
			\draw (5,0.5) node {\large \textbf{$Horizontal$}};				
			\draw (3,1.5) node {\large \textbf{$Medium$}};
                                \draw (3,4.5) node {\large \textbf{$Tall$}};	
			
			\draw (-.75,1) node {\Large $\boldsymbol{\meu\textbf{OPT}}$};				
			\draw (-.75,2) node {\Large $\boldsymbol{\dlta\textbf{OPT}}$};
 			\draw (-.75,3) node {\Large $\boldsymbol{\frac{1}{2}\textbf{OPT}}$};	
			\draw (2,-.3) node {\Large $\boldsymbol{\meu W}$};				
			\draw (4,-.3) node {\Large $\boldsymbol{\dlta W}$};				
					
			\draw (6,-.3) node {\Large $\boldsymbol{W}$};				
			\draw (-.75,6) node {\Large $\boldsymbol{\textbf{OPT}}$};				
			
		\end{tikzpicture}}
		\caption{Item Classification: x-axis represents width %scaled to 1 (by normalizing w.r.t. \W)
and  y-axis represents height.} % scaled to 1 (by normalizing by OPT).}
		\label{fig:classif}
	\end{figure}
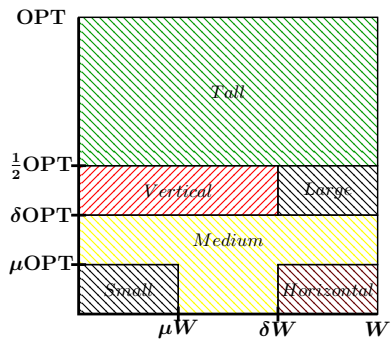

See Figure~\ref{fig:classif}  for a picture of item classification. Let $\Rit,\Rbg, \Rho,\Rve,\\\Rme,\Rsm$ be the set of tall, large, horizontal,
medium, and small rectangles in $\R$, respectively. 

Using the following lemma, one can appropriately choose $\mu,\delta$ such that the medium items
occupy a  marginal area. This effectively allows us to ignore them
in our main argumentation. 
\begin{lemma}[\cite{nadiradze2016approximating}]
\label{class_1} Let $\eps>0$
and $f(.)$ be any positive increasing function such that $f(x)<x$
for all $x\in(0,1]$. Then we can efficiently find $\dlta,\meu\in\Omega_{\eps}(1)$,
with $\eps\ge f(\eps)\ge\dlta\ge f(\dlta)\ge\meu$ so that the total
area of medium rectangles is at most $\eps(\opt\cdot\W)$. 
\end{lemma}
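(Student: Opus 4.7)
The plan is to use a standard shifting argument: build a geometric sequence of candidate parameter pairs obtained by iterating $f$, and then pigeonhole over them to find one pair whose induced medium class has small total area.

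First, I would define the strictly decreasing positive sequence $\eps_1 := \eps$, $\eps_{j+1} := f(\eps_j)$, which is well-defined since $f(x) < x$ on $(0,1]$. Setting $k := \lceil 2/\eps \rceil$, I would then consider $k$ candidate pairs $(\dlta_i, \meu_i) := (\eps_{2i}, \eps_{2i+1})$ for $i = 1, \dots, k$. A direct check gives the required chain $\eps \geq f(\eps) \geq \dlta_i \geq f(\dlta_i) = \meu_i$, since $f(\eps) = \eps_2 \geq \eps_{2i} = \dlta_i$ and $f(\dlta_i) = f(\eps_{2i}) = \eps_{2i+1} = \meu_i$. Each candidate pair thus induces a legitimate set $M_i \subseteq \R$ of medium rectangles according to the classification in the statement.

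Second, I would establish the key combinatorial observation: every item $r \in \R$ belongs to at most two of the sets $M_i$. Indeed, $r \in M_i$ happens in one of two ways, either its height lies in the interval $(\eps_{2i+1}\opt,\eps_{2i}\opt]$, or its width lies in $(\eps_{2i+1}W,\eps_{2i}W]$ together with $h_r \leq \eps_{2i+1}\opt$. Since the sequence $(\eps_j)$ is strictly decreasing, the height intervals are pairwise disjoint across $i$, and so are the width intervals. Hence the height condition is satisfied for at most one index, and the width condition for at most one index, giving the bound of two.

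Third, I would apply pigeonhole. Since the total area of the input is trivially bounded by $\opt \cdot W$ (this is the area of a strip of height $\opt$), double-counting gives
\[
\sum_{i=1}^{k} \mathrm{area}(M_i) \;\leq\; 2 \sum_{r \in \R} w_r h_r \;\leq\; 2\,\opt \cdot W.
\]
Therefore some index $i^*$ satisfies $\mathrm{area}(M_{i^*}) \leq (2/k)\,\opt \cdot W \leq \eps\,\opt \cdot W$, and we set $\dlta := \dlta_{i^*}$, $\meu := \meu_{i^*}$. Because $i^* \leq k = O(1/\eps)$ and each $\eps_j$ depends only on $\eps$ and $f$, we get $\dlta, \meu \in \Omega_\eps(1)$. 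Computing $\mathrm{area}(M_i)$ for each of the $O(1/\eps)$ candidates and taking the best is clearly polynomial.

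The main conceptual obstacle is the two-overlap claim in the second step: one might be tempted to pick thresholds so that the $M_i$ are fully disjoint, but the width condition of class $i$ can a priori overlap the height condition of a later class $i'$ through items of small height. Recognizing that this overlap is bounded by a universal constant (namely two) is precisely what allows the pigeonhole to go through with only $O(1/\eps)$ candidates and yields the desired constants $\dlta, \meu \in \Omega_\eps(1)$.
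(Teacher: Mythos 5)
Your proof is correct and is exactly the standard shifting/pigeonhole argument behind this lemma, which the paper itself does not reprove but imports by citation from Nadiradze--Wiese: iterate $f$ to get a decreasing sequence of thresholds, note each item is medium for at most two candidate pairs (once via its height interval, once via its width interval), and average over $O(1/\eps)$ candidates. The chain $\eps\ge f(\eps)\ge\dlta\ge f(\dlta)\ge\meu$, the $\Omega_\eps(1)$ bound, and the efficiency claim are all handled correctly, so there is nothing to add.
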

We will specify how we choose the function $f(x)$ later.
% For description of the function $f(x)$, we refer the reader to Appendix \ref{section_constants}. 
In our PPTAS, we will use a packing, which is defined solely via boxes. %Now, we go over some standard definitions.

\begin{definition}
A \emph{box }$B$ is an axis-aligned open rectangle that satisfies
$B\subseteq[0,\W]\times[0,\infty)$. We denote by $h(B)$ and $w(B)$
the height and the width of $B$, respectively.
\end{definition}
Inside each box $B$, we will place the items \emph{nicely, }meaning
that they are either stacked horizontally or vertically,
or $B$ contains a single large item, or only small items, or only medium items. This is useful since in the first
two cases, it is trivial to place a given set of items into $B$, and
in the last two cases, it will turn out that it suffices to pack the items
greedily using the Next Fit Decreasing Height (NFDH) algorithm \cite{coffman1980performance}
and Steinberg's algorithm \cite{steinberg1997strip}, respectively.
There will be one box with height at most $2\eps\opt$
that contains all medium items.

\begin{definition}
[Nice packing] \label{def:structured-boxes} Let $B$ be a box and
let $\R_{B}\subseteq\R$ be a set of items that are placed non-overlappingly
inside $B$. We say that the packing of $\R_{B}$ in $B$ is \emph{nice}
if the items in $\R_{B}$ are guillotine separable and additionally 

\begin{itemize}
\item $\R_{B}$ contains only one item, or 
\item $\R_{B}\subseteq\Rho$ and the items in $\R_{B}$ are stacked on top
of each other inside $B$, or 
\item $\R_{B}\subseteq\Rit\cup\Rve$ and the items in $\R_{B}$ are placed
side by side inside $B$, or 
\item $\R_{B}\subseteq\Rme$, or 
\item $\R_{B}\subseteq\Rsm$ and for each item $i\in\R_{B}$ it holds that
$w_{i}\le\eps\cdot w(B)$ and $h_{i}\le\eps\cdot h(B)$.
\end{itemize}
% Such a box $B$ having a nice packing is called a container. 
\end{definition}
We will use the term \emph{container} to refer to a box $B$ that contains a nice packing of some  set of items $I_B$.
See Figure~\ref{fig_skew_nice} for nice packings in different types of containers.
We say that a set of boxes $\B$ is \emph{guillotine separable} if
there exists a sequence of guillotine cuts that separates them and
that does not intersect any box in $\B$. 

%Towards that, we prove the previously stated structural lemma (Lemma ~\ref{lem:structural})
%in Section \ref{subsec_proof_of_struc_lem}.
We now state the structural lemma for the PPTAS. Intuitively, it states
that there exists a $(1+\eps)$-approximate solution in which
the input items are placed into $O_{\eps}(1)$ boxes
such that within each box the packing is nice.
 We remark that we will crucially use that in the optimal packing the
items in $\R$ are guillotine separable. In fact, if one could prove
that there exists such a packing with $O_{\eps}(1)$ boxes and
a height of $\alpha\opt$ for some $\alpha<\frac{5}{4}$ also in the
non-guillotine case (where neither the optimal solution nor the
computed solution needs to  be guillotine separable), then one would
obtain a pseudo--polynomial time $(\alpha+\eps)$-approximation
algorithm also in this case, by using 
straightforward adaptations of the algorithms in, e.g.,
% for example the algorithm in
\cite{nadiradze2016approximating, GGIK16, JansenR19} %
or our algorithm in section \ref{subsec:algo-PPTAS}. 
However,
this is not possible for $\alpha<\frac{5}{4}$, unless~$\mathsf{P=NP}$
\cite{henning2020complexity}.

\begin{lemma}[Structural lemma $1$]
\label{lem:structural}
Assume that $\mu$ is sufficiently small compared to $\delta$. Then
there exists a set $\B$ of $O_{\eps}(1)$
pairwise non-overlapping and guillotine separable boxes all
placed inside $[0,\W]\times[0,(1+16\eps)\opt)$ and a partition
$\R=\bigcup_{B\in\B}\R_{B}$ such that for each $B\in\B$ the items
in $\R_{B}$ can be placed nicely into $B$.
% (possibly except for two boxes $B_{small}$ and $B_{hor}$ which we pack on top to account for unpacked small and horizontal items, respectively which may not have a nice packing and have a total height of at most $8\eps\opt$).
\end{lemma}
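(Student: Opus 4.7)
\medskip

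\textbf{Proof proposal.} The plan is to start from an optimal guillotine packing of height $\opt$ and gradually transform it into one with the claimed structure, paying small additive losses in height at each step. First, I would remove all medium rectangles from the packing: by Lemma~\ref{class_1}, their total area is at most $\eps\cdot\opt\cdot W$, so by Steinberg's algorithm they fit into a single box of width $W$ and height at most $2\eps\opt$, which can be appended to the top of the packing as one dedicated box (nicely packed in the ``$\R_B\subseteq\Rme$'' sense). Next, I would apply the box/L-compartment decomposition from~\cite{khan2021guillotine}: after removing the medium items, the optimal guillotine packing can be transformed into a packing of height $(1+O(\eps))\opt$ in which the strip is partitioned into $O_\eps(1)$ pairwise disjoint, guillotine-separable rectangular boxes and $O_\eps(1)$ \emph{L-shaped compartments}, such that each tall/large/horizontal/vertical item lies nicely inside one box, each L-shape contains only skewed items placed along its two legs, and the small items are repacked greedily with NFDH into the leftover space inside holes of constant measure.

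The main obstacle, and the technical heart of the proof, is the second step: converting each L-shaped compartment into a constant number of rectangular boxes with nice packings, without destroying guillotine separability and without increasing the height by more than $O(\eps)\opt$. Here I would crucially exploit that the initial packing \emph{is} guillotine separable. Fix an L-shape $L$ with a ``horizontal leg'' (holding vertical items side by side) and a ``vertical leg'' (holding horizontal items stacked). First, I would slightly inflate the heights/widths of the items in $L$ by rounding using a linear-grouping-style argument to create $O_\eps(1)$ size classes and buy an additive slack of $\eps\opt$ to reorganize them. Then I would apply the guillotine cut sequence of the original packing restricted to $L$: the first cut inside $L$ either separates the horizontal leg from the vertical leg (in which case $L$ decomposes directly into two boxes) or cuts across one leg. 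In the latter case, the cut partitions the items on that leg into two groups, and I would swap the two groups across the cut (using that both groups are guillotine separable as subsolutions) so that the leg becomes aligned with the other leg's base, allowing me to peel off a rectangular box containing all items of one class and recurse on a strictly smaller L-shape. Iterating over the $O_\eps(1)$ classes turns each L-shape into $O_\eps(1)$ rectangular boxes.

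Once all L-shapes have been replaced by boxes, the packing consists of $O_\eps(1)$ guillotine-separable rectangular boxes, each containing either a single large item, a stack of horizontal items, a side-by-side arrangement of tall/vertical items, or small items. Small items are then redistributed by NFDH within their dedicated boxes, using the condition $w_i\le\eps w(B),\ h_i\le\eps h(B)$ from Definition~\ref{def:structured-boxes}; the small items that do not fit can be charged against the slack (since their total area is small and $\mu$ is chosen small enough compared to $\delta$ via the function $f$ in Lemma~\ref{class_1}), and the residual can be absorbed into a thin box of height $O(\eps)\opt$ placed on top. Accounting: the initial box/L-decomposition and rounding cost at most $12\eps\opt$ in height; the medium-items box costs $2\eps\opt$; the small-items overflow box costs $2\eps\opt$. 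Summing these losses gives a total height of at most $(1+16\eps)\opt$, with the whole collection of $O_\eps(1)$ boxes remaining guillotine separable because every reorganization step either followed an existing guillotine cut or was performed inside a box that can itself be separated from the others by the inherited cut sequence. This yields the desired structural packing.
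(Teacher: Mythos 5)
Your high-level plan matches the paper's: start from the box/$\Lc$-compartment decomposition of \cite{khan2021guillotine}, pack the medium items separately via Steinberg into a box of height $2\eps\opt$, and finish the small items with NFDH plus an overflow box. However, there are two genuine gaps in the middle of your argument. First, your method for dismantling an $\Lc$-compartment (follow the first guillotine cut, swap the two pieces, peel off ``a rectangular box containing all items of one class'', and recurse) is not justified: items of the same rounded size class can be interleaved arbitrarily along a leg, and the boundary staircase between the two legs can have $\Omega(n)$ bends, so nothing bounds the number of peeling steps by the $O_{\eps}(1)$ number of classes. The paper instead buys an additive $\eps\,h(L)$ by shifting the vertical leg upward and then traces a ray-shooting boundary curve; the shift forces every vertical run of that curve to have length at least $\eps\,h(L)$, which caps the number of bends at $O(1/\eps)$ and hence the number of resulting $\Hc$/$\Vc$-boxes at $O_{\eps}(1)$ (Lemma~\ref{lem:rearrange-2}). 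Some quantitative argument of this type is indispensable; size-class rounding alone does not control the complexity of the interface between the two legs.

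Second, and more importantly, after the $\Lc$-compartments are replaced by boxes you assert that each box already contains ``a stack of horizontal items'' or ``a side-by-side arrangement of tall/vertical items''. That is false in general: a box containing only vertical items may have them arranged in several rows stacked on top of one another with a complicated internal guillotine structure, and naively splitting it into nicely packed containers can produce $\Omega(n)$ pieces. This is exactly where the paper does its main extra work (Lemmas~\ref{lem:hor-boxes} and~\ref{lem:hor-boxes-1}): heights of items in $\Rit\cup\Rbg\cup\Rve$ are rounded to multiples of $\delta^{2}\opt$, so horizontal guillotine cuts inside such a box can be assumed to lie on a grid of $O(1/\delta^{2})$ levels; each horizontal stage then has one of at most $2^{1/\delta^{2}}$ cut configurations, pieces with equal configurations are reordered to be adjacent and merged, and since there are only $O(1/\delta)$ stages this yields $O_{\eps}(1)$ containers in which the items are genuinely placed side by side; boxes of horizontal items are handled separately by 1D resource augmentation (with the leftovers repacked on top). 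Your proposal contains neither the stage-wise configuration-counting/reordering argument nor the resource-augmentation step, so the claimed bound of $O_{\eps}(1)$ nicely packed boxes does not follow from what you wrote, even though the remaining bookkeeping (medium box, small-item overflow, total height $(1+16\eps)\opt$, and preservation of guillotine separability under swaps along existing cuts) is in the right spirit.
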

We choose our function $f$ due to Lemma~\ref{class_1} such that $\mu$ is sufficiently small compared to $\delta$, as required by Lemma~\ref{lem:structural}.
We will prove Lemma~\ref{lem:structural} in the next subsection. In its packing, let $\B_{hor}, \B_{ver}, \B_{tall}, \B_{large}, \B_{small}$ and $\B_{med}$ denote the set of boxes for the horizontal, vertical, tall, large, small and medium items, respectively. Let $\B_{tall+ver}:=\B_{tall}\cup\B_{ver}$. 
% We define 

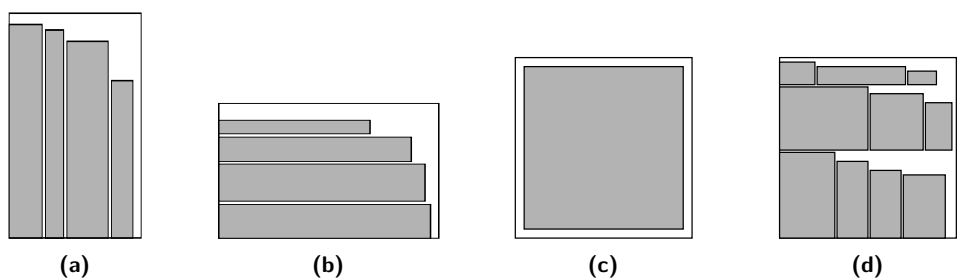
\begin{figure}[t]
\centering
\captionsetup[subfigure]{justification=centering}
\begin{subfigure}[b]{0.2\linewidth}
\centering
\resizebox{1.8cm}{3cm}{
\begin{tikzpicture}
\draw [ultra thick ] (0,0) rectangle (8,8);
\draw [ultra thick,fill=gray!60] (0,0) rectangle (2,7.6);
\draw [ultra thick, fill=gray!60] (2.2,0) rectangle (3.3,7.4);
\draw [ultra thick, fill=gray!60] (2.2,0) rectangle (3.3,7.4);
\draw [ultra thick, fill=gray!60] (3.5,0) rectangle (6,7);
\draw [ultra thick, fill=gray!60] (6.2,0) rectangle (7.5,5.6);
\end{tikzpicture}
}
\caption{}
\label{subfig_vert_nice}
\end{subfigure}
\begin{subfigure}[b]{0.26\linewidth}
\centering
\resizebox{3cm}{1.8cm}{
\begin{tikzpicture}
\draw [ultra thick ] (0,0) rectangle (8,8);
\draw [ultra thick,fill=gray!60] (0,0) rectangle (7.7,2);
\draw [ultra thick, fill=gray!60] (0,2.2) rectangle (7.5,4.4);
\draw [ultra thick, fill=gray!60] (0,4.55) rectangle (7,6);
\draw [ultra thick, fill=gray!60] (0,6.2) rectangle (5.5,7);
\end{tikzpicture}
}
\caption{}
\label{subfig_hor_nice}
\end{subfigure}
\begin{subfigure}[b]{0.24\linewidth}
\centering
\resizebox{2.41cm}{2.41cm}{
\begin{tikzpicture}
\draw [ultra thick ] (0,0) rectangle (8,8);
\draw [ultra thick,fill=gray!60] (0.4,0.4) rectangle (7.6,7.6);
\end{tikzpicture}
}
\caption{}
\label{subfig_large_nice}
\end{subfigure}
\begin{subfigure}[b]{0.24\linewidth}
\centering
\resizebox{2.41cm}{2.41cm}{
\begin{tikzpicture}
\draw [ultra thick ] (0,0) rectangle (8,8);
\draw [ultra thick,fill=gray!60] (0,0) rectangle (2.5,3.8);
\draw [ultra thick, fill=gray!60] (2.6,0) rectangle (4,3.4);
\draw [ultra thick, fill=gray!60] (4.1,0) rectangle (5.5,3);
\draw [ultra thick, fill=gray!60] (5.6,0) rectangle (7.5,2.8);
\draw [ultra thick, fill=gray!60] (0,3.9) rectangle (4,6.7);
\draw [ultra thick, fill=gray!60] (4.1,3.9) rectangle (6.5,6.4);
\draw [ultra thick, fill=gray!60] (6.6,3.9) rectangle (7.8,6);
\draw [ultra thick, fill=gray!60] (0,6.8) rectangle (1.6,7.8);
\draw [ultra thick, fill=gray!60] (1.7,6.8) rectangle (5.7,7.6);
\draw [ultra thick, fill=gray!60] (5.8,6.8) rectangle (7.1,7.4);
\end{tikzpicture}
}
\caption{}
\label{subfig_vert_nice}
\end{subfigure}
\caption{Nice packing of vertical, horizontal, large and small items in their respective containers }
\label{fig_skew_nice}
\end{figure}

\subsection{Proof of  Structural Lemma 1}
\label{subsec_proof_of_struc_lem}

In this section we prove Lemma~\ref{lem:structural}. We have omitted a few proofs due to space constraints which can be found in Appendix \ref{sec:algo_new}. Our strategy
is to start with a structural lemma from \cite{khan2021guillotine} that guarantees the
existence of a structured packing of all items in $\Rhard:=\Rit\cup\Rbg\cup\Rho\cup\Rve$. 
This packing uses boxes and $\Lc$-compartments. Note that, for now we ignore the items $\Rsm$. We will show how to pack them later.
\begin{definition}[$\Lc$-compartment]
\label{def:Lc} An $\Lc$-compartment $L$
is an open sub-region of $[0,\W]\times[0,\infty)$ bounded by a simple
rectilinear polygon with six edges $e_{0},e_{1},\dots,e_{5}$ such
that for each pair of horizontal (resp. vertical) edges $e_{i},e_{6-i}$
with $i\in\{1,2\}$ there exists a vertical (resp. horizontal) line
segment $\ell_{i}$ of length less than $\dlta\frac{\opt}{2}$ (resp.
$\dlta\frac{\W}{2}$) such that both $e_{i}$ and $e_{6-i}$ intersect
$\ell_{i}$ but no other edges intersect $\ell_{i}$. 
\end{definition}

Note that for an $\Lc$-compartment, no item $i\in \Rho$ can be packed in its vertical arm and similarly, no item $i\in\Rve\cup I_{tall}$ can be packed in its horizontal arm.
%Like for the boxes, we define a notion of a \emph{nice} packing inside
%an $\Lc$-compartment. Intuitively, in such a packing the horizontal
%items are stacked on top of each other and the vertical items are
%placed side by side, similarly as in nice packings of boxes, see Figure \ref{fig_guillo} (\subref{subfig_L_nice}). 
%\begin{definition}
%[Nice Packing in $\Lc$-compartments] Let $L$ be an $\Lc$-compartment
%and let $\R_{L}\subseteq\R$ be a set of items that are placed non-overlappingly
%inside $L$. We say that the packing of $\R_{L}$ is \emph{nice }if 
%\begin{itemize}
%\item $\R_{L}\subseteq\Rit\cup\Rbg\cup\Rho\cup\Rve$, and 
%\item the items in $\R_{L}\cap(\Rit\cup\Rbg\cup\Rve)$ are stacked side by side inside $L$, and
%\item the items in $\R_{L}\cap\Rho$ are stacked on top
%of each other inside $L$ 
 
%\end{itemize}
%\end{definition}

The next lemma follows immediately from a structural insight in \cite{khan2021guillotine}
for the guillotine two-dimensional knapsack problem. It partitions
the region $[0,\W]\times[0,\opt]$ into non-overlapping boxes and
$\Lc$-compartments that admit a \emph{pseudo-guillotine cutting sequence}.
This is a sequence of cuts in which each cut is either a (normal)
guillotine cut, or a special cut that cuts out an $\Lc$-compartment
$L$ from the current rectangular piece $R$ in the cutting sequence,
such that $R\setminus L$ is a rectangle, see %\Cref{fig:pseudocut}, and
 Figure~\ref{fig:pseudonice}.% in Appendix~\ref{sec_omitted_figures}. 
 So intuitively
$L$ lies at the boundary of $R$.

\begin{figure}[h]
\centering
\captionsetup[subfigure]{justification=centering}
%fig 1
\begin{subfigure}[b]{0.3\linewidth}
\resizebox{2.975cm}{3.315cm}{
\begin{tikzpicture}
\draw [very thick] (0,0) rectangle (18,18);
\draw [very thick,pattern=north west lines, pattern color=blue] (0,18) rectangle (18,20);
\draw [fill=gray!50, very thick] (0,0) rectangle (18,0.6);
\draw [fill=black!30, very thick] (0,0.6) rectangle (0.6,18);
\draw [fill=gray!90, very thick] (0.6,0.6) rectangle (18,1.2);
\draw [fill=black!65, very thick] (0.6,1.2) rectangle (1.2,18);
\draw [fill=gray!50, very thick] (1.2,1.2) rectangle (18,1.8);
\draw [fill=black!30, very thick] (1.2,1.8) rectangle (1.8,18);
\draw [fill=gray!90, very thick] (1.8,1.8) rectangle (18,2.4);
\draw [fill=black!65, very thick] (1.8,2.4) rectangle (2.4,18);
\draw [fill=gray!50, very thick] (2.4,2.4) rectangle (18,3);
\draw [fill=black!30, very thick] (2.4,3) rectangle (3,18);
\draw [fill=gray!90, very thick] (3,3) rectangle (18,3.6);
\draw [fill=black!65, very thick] (3,3.6) rectangle (3.6,18);
\draw [fill=gray!50, very thick] (3.6,3.6) rectangle (18,4.2);
\draw [fill=black!30, very thick] (3.6,4.2) rectangle (4.2,18);
\draw [fill=gray!90, very thick] (4.2,4.2) rectangle (18,4.8);
\foreach \x in {1,...,4} {\draw [fill=black!65, very thick](3+\x*1.2,3.6+\x *1.2 ) rectangle (3.6+\x *1.2,18);
\draw [fill=gray!50, very thick](3.6+\x*1.2,3.6+\x*1.2 ) rectangle (18,4.2+\x *1.2);
\draw [fill=black!30, very thick](3.6+\x*1.2,4.2+\x*1.2 ) rectangle (4.2+\x *1.2,18);
\draw [fill=gray!90, very thick](4.2+\x*1.2,4.2+\x*1.2 ) rectangle (18,4.8+\x *1.2);
%node
\draw (10,19) node {\Huge \textbf{$\varepsilon OPT$}} ;
}
\draw[ultra thick,->] (10,19.5)--(10,20);
\draw[ultra thick,->] (10,18.5)--(10,18);
\end{tikzpicture}}
\caption{\hspace*{2.5em}\label{fig_Lcomp}}
\end{subfigure}
%fig 2
\begin{subfigure}[b]{0.3\linewidth}
\resizebox{2.975cm}{3.315cm}{
\begin{tikzpicture}
\draw [very thick] (0,0) rectangle (18,20);
\draw[ultra thick,dashed] (0,0)--(0,2.6)--(2.4,2.6)--(2.4,4.4)--(4.2,4.4)--(4.2,6.2)--(6,6.2)--(6,8)--(7.8,8)--(7.8,9.8)--(9,9.8)--(9,11);
\draw [fill=gray!50, very thick] (0,0) rectangle (18,0.6);
\draw [fill=black!30, very thick] (0,2.6) rectangle (0.6,20);
\draw [fill=gray!90, very thick] (0.6,0.6) rectangle (18,1.2);
\draw [fill=black!65, very thick] (0.6,3.2) rectangle (1.2,20);
\draw [fill=gray!50, very thick] (1.2,1.2) rectangle (18,1.8);
\draw [fill=black!30, very thick] (1.2,3.8) rectangle (1.8,20);
\draw [fill=gray!90, very thick] (1.8,1.8) rectangle (18,2.4);
\draw [fill=black!65, very thick] (1.8,4.4) rectangle (2.4,20);
\draw [fill=gray!50, very thick] (2.4,2.4) rectangle (18,3);
\draw [fill=black!30, very thick] (2.4,5) rectangle (3,20);
\draw [fill=gray!90, very thick] (3,3) rectangle (18,3.6);
\draw [fill=black!65, very thick] (3,5.6) rectangle (3.6,20);
\draw [fill=gray!50, very thick] (3.6,3.6) rectangle (18,4.2);
\draw [fill=black!30, very thick] (3.6,6.2) rectangle (4.2,20);
\draw [fill=gray!90, very thick] (4.2,4.2) rectangle (18,4.8);
\foreach \x in {1,...,4} {\draw [fill=black!65, very thick](3+\x*1.2,5.6+\x *1.2 ) rectangle (3.6+\x *1.2,20);
\draw [fill=gray!50, very thick](3.6+\x*1.2,3.6+\x*1.2 ) rectangle (18,4.2+\x *1.2);
\draw [fill=black!30, very thick](3.6+\x*1.2,6.2+\x*1.2 ) rectangle (4.2+\x *1.2,20);
\draw [fill=gray!90, very thick](4.2+\x*1.2,4.2+\x*1.2 ) rectangle (18,4.8+\x *1.2);
%\draw[decorate,decoration={brace}] (-0.4.0.6) -- (-0.4,2.6);
}
\end{tikzpicture}}
\caption{\hspace*{2.5em}}
\end{subfigure}
%fig 3
\begin{subfigure}[b]{0.3\linewidth}
\resizebox{2.975cm}{3.315cm}{
\begin{tikzpicture}
\draw [very thick] (0,0) rectangle (18,20);
%vertical compartments
\draw[line width=0.1 cm, red](0,2.6)  rectangle (2.4,20);
\draw[line width=0.1 cm, red](2.4,4.4)  rectangle (4.2,20);
\draw[line width=0.1 cm, red](4.2,6.2)  rectangle (6,20);
\draw[line width=0.1 cm, red](6,8)  rectangle (7.8,20);
\draw[line width=0.1 cm, red](7.8,9.8)  rectangle (9,20);
%horizontal compartments
\draw[line width=0.1 cm, blue](0,0)  rectangle (18,2.4);
\draw[line width=0.1 cm, blue](2.4,2.4)  rectangle (18,4.2);
\draw[line width=0.1 cm, blue](4.2,4.2)  rectangle (18,6);
\draw[line width=0.1 cm, blue](6,6)  rectangle (18,7.8);
\draw[line width=0.1 cm, blue](7.8,7.8)  rectangle (18,9.6);

\draw [fill=gray!50, very thin,opacity=0.3] (0,0) rectangle (18,0.6);
\draw [fill=black!30, very thin,opacity=0.3] (0,2.6) rectangle (0.6,20);
\draw [fill=gray!90, very thin,opacity=0.3] (0.6,0.6) rectangle (18,1.2);
\draw [fill=black!65, very thin,opacity=0.3] (0.6,3.2) rectangle (1.2,20);
\draw [fill=gray!50, very thin,opacity=0.3] (1.2,1.2) rectangle (18,1.8);
\draw [fill=black!30, very thin,opacity=0.3] (1.2,3.8) rectangle (1.8,20);
\draw [fill=gray!90, very thin,opacity=0.3] (1.8,1.8) rectangle (18,2.4);
\draw [fill=black!65, very thin,opacity=0.3] (1.8,4.4) rectangle (2.4,20);
\draw [fill=gray!50, very thin,opacity=0.3] (2.4,2.4) rectangle (18,3);
\draw [fill=black!30, very thin,opacity=0.3] (2.4,5) rectangle (3,20);
\draw [fill=gray!90, very thin,opacity=0.3] (3,3) rectangle (18,3.6);
\draw [fill=black!65, very thin,opacity=0.3] (3,5.6) rectangle (3.6,20);
\draw [fill=gray!50, very thin,opacity=0.3] (3.6,3.6) rectangle (18,4.2);
\draw [fill=black!30, very thin,opacity=0.3] (3.6,6.2) rectangle (4.2,20);
\draw [fill=gray!90, very thin,opacity=0.3] (4.2,4.2) rectangle (18,4.8);
\foreach \x in {1,...,4} {\draw [fill=black!65, very thin,opacity=0.3](3+\x*1.2,5.6+\x *1.2 ) rectangle (3.6+\x *1.2,20);
\draw [fill=gray!50, very thin,opacity=0.3](3.6+\x*1.2,3.6+\x*1.2 ) rectangle (18,4.2+\x *1.2);
\draw [fill=black!30, very thin,opacity=0.3](3.6+\x*1.2,6.2+\x*1.2 ) rectangle (4.2+\x *1.2,20);
\draw [fill=gray!90, very thin,opacity=0.3](4.2+\x*1.2,4.2+\x*1.2 ) rectangle (18,4.8+\x *1.2);
}
%vertical
\draw (1.2,11.3) node {\Huge \textbf{$V_1$}};
\draw (3.3,12.2) node {\Huge \textbf{$V_2$}};
\draw (5.1,13.1) node {\Huge \textbf{$V_3$}};
\draw (6.9,14) node {\Huge \textbf{$V_4$}};
\draw (8.4,14.9) node {\Huge \textbf{$V_5$}};
%horizontal
\draw (9,1.2) node {\Huge \textbf{$H_1$}};
\draw (10.2,3.3) node {\Huge \textbf{$H_2$}};
\draw (11.1,5.1) node {\Huge \textbf{$H_3$}};
\draw (12,6.9) node {\Huge \textbf{$H_4$}};
\draw (12.9,8.7) node {\Huge \textbf{$H_5$}};
\end{tikzpicture}}
\caption{\hspace*{2.5em}}
\end{subfigure}
\caption{Using an extra $\eps \text{OPT}$ height, we convert a packing of items $I$ in an $\Lc$-compartment into another packing such that the items in $I$ are packed in boxes $\mathcal{C}^{\prime}={\Vc}\cup {\Hc}$, which are guillotine separable and $|\mathcal{C}^{\prime}|=O_{\eps}(1)$, where ${\Vc}=\cup_{i=1}^{i=5} V_i$ and ${\Hc}=\cup_{i=1}^{i=5} H_i$}
\label{fig_rayshoot}
\end{figure}

\begin{lemma}
[\cite{khan2021guillotine}] \label{lem:rearrange}There exists
a partition of $[0,\W]\times[0,\opt]$ into a set $\B_1$ of $O_{\eps}(1)$
boxes and a set $\L$ of $O_{\eps}(1)$ $\Lc$-compartments such
that 
\begin{itemize}
\item the boxes and $\Lc$-compartments in $\B_1\cup\L$ are pairwise non-overlapping,
\item $\B_1\cup\L$ admits a pseudo-guillotine cutting sequence,
\item the items in $\Rhard$ can be packed into $\B_1\cup\L$ such that for
each $B\in\B_1$ it either contains only items $i\in\Rit\cup\Rbg\cup\Rve$ or it contains only items  $i\in \Rho$.
\end{itemize}
\end{lemma}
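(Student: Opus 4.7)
The plan is to prove this by adapting the structural decomposition of Khan et al.~\cite{khan2021guillotine}, which established an analogous result for guillotine 2D knapsack, directly to the optimal guillotine packing of $\Rhard$ inside the region $[0,\W]\times[0,\opt]$. Since that packing is guillotine separable by assumption, it comes equipped with a binary guillotine cut tree $T$ whose leaves correspond to individual items of $\R$, and we will process $T$ top-down to build the decomposition $\B_1\cup\L$.

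The first step is to classify each internal cut of $T$. I would call a cut \emph{balanced} if both of the resulting subregions contain items of $\Rhard$ whose extent along the axis perpendicular to the cut is at least a $\delta$-fraction of the parent subregion; otherwise the cut is \emph{unbalanced} and effectively shaves off a narrow strip along one side. Balanced cuts are kept as honest guillotine cuts in our final decomposition; since each item of $\Rhard$ has one dimension of size at least $\delta\W$ or $\delta\opt/2$, a simple area/dimension argument shows that only $O_\eps(1)$ subregions are produced by iterating balanced cuts. For each maximal chain of consecutive unbalanced cuts shaving strips off the same corner of a rectangular subregion, I would replace the chain by a single $\Lc$-compartment containing exactly those items packed in the shaved strips; by construction both arms of such an $\Lc$-compartment have thickness less than $\delta\W/2$ and $\delta\opt/2$, which matches Definition~\ref{def:Lc}. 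The resulting cutting sequence is pseudo-guillotine: we perform the balanced cuts as ordinary guillotine cuts, and whenever we encounter an $\Lc$-compartment we carve it off as a single special cut.

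Finally, within each rectangular box produced by the balanced cuts, we may still have horizontal items (width $>\delta\W$) mixed with tall/large/vertical items (height $>\delta\opt/2$). Using a further $O_\eps(1)$ honest guillotine cuts, we separate these two classes: since horizontal and tall/large/vertical items differ significantly in aspect ratio, the original guillotine sequence must at some point separate them, and these cuts can be promoted to top-level cuts of the box. This gives the homogeneous boxes required by the statement, where each $B\in\B_1$ either contains only items from $\Rit\cup\Rbg\cup\Rve$ or only items from $\Rho$.

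The main obstacle is calibrating the balanced/unbalanced threshold so that both arms of every resulting $\Lc$-compartment respect the $\delta/2$ bounds of Definition~\ref{def:Lc} \emph{and} the total count $|\B_1|+|\L|$ stays at $O_\eps(1)$. Because our enclosing rectangle $[0,\W]\times[0,\opt]$ plays the role of the knapsack in \cite{khan2021guillotine} and all the relevant item classes have their extent bounded below by $\Omega(\delta)$ times a strip dimension, the calibration and counting arguments of that paper apply essentially verbatim.
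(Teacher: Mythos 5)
The paper does not actually prove this lemma: it is imported wholesale from the structural theorem of \cite{khan2021guillotine} (``follows immediately from a structural insight in \cite{khan2021guillotine}''), so a self-contained proof would have to reproduce that theorem's argument. Your sketch has the right general shape (walk down the guillotine cut tree, keep ``substantial'' cuts, absorb thin shavings into $\Lc$-compartments), but it hides the step that carries essentially all of the difficulty: a maximal chain of unbalanced cuts need \emph{not} shave strips off the same side or corner of the current piece. Successive thin cuts can wind around the piece --- left, then top, then right, then bottom, and so on --- so what such a chain carves out is a ring or spiral of thin strips surrounding a central rectangle, not an L-shaped region. Converting these rings into $O_{\eps}(1)$ $\Lc$-compartments while (i) preserving a pseudo-guillotine cutting sequence, (ii) meeting the arm-thickness bounds of Definition~\ref{def:Lc}, and (iii) keeping items of $\Rho$ and of $\Rit\cup\Rbg\cup\Rve$ in separate arms is precisely the technical core of the cited theorem; ``replace the chain by a single $\Lc$-compartment containing exactly those items'' does not address it, and the claim that both arms are thinner than $\dlta\W/2$ and $\dlta\opt/2$ ``by construction'' does not follow from your balanced/unbalanced threshold (a long chain of shavings from one side can accumulate thickness far beyond $\dlta\W/2$ unless it is split, which in turn threatens the $O_{\eps}(1)$ count).

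Two further steps are asserted rather than proved. The counting claim (``a simple area/dimension argument shows that only $O_{\eps}(1)$ subregions are produced by balanced cuts'') needs the thresholds to be absolute fractions of $\W$ and $\opt$, matching the definitions of horizontal and vertical items; with thresholds relative to the parent piece, as you define them, the recursion can continue to non-constant depth and the bound does not follow in one line. And the final step --- separating $\Rho$ from $\Rit\cup\Rbg\cup\Rve$ inside each box by ``promoting'' $O_{\eps}(1)$ cuts of the original sequence --- is not available for free: inside a single box the two orientations can interleave across many stages of the cut tree, and the present paper achieves exactly this kind of separation only later (Lemma~\ref{lem:rearrange-2}) at the cost of an extra $\eps\opt$ of height and a ray-shooting/shifting argument, which is not permitted here since the lemma must hold within height $\opt$. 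In \cite{khan2021guillotine} the homogeneity of the box compartments is built into the compartment construction itself, not obtained by an after-the-fact promotion of cuts. As written, your proposal is an outline of the cited result rather than a proof of it.
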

Our strategy is to take the packing due to Lemma~\ref{lem:rearrange}
and transform it step by step until we obtain a packing that corresponds
to Lemma~\ref{lem:structural}. First, we round the heights of the
tall, large, and vertical items such that they are integral multiples
of $\delta^{2}\opt$. Formally, for each item $i\in\Rit\cup\Rbg\cup\Rve$
we  round its height to $\height'_{i}:=\left\lceil \frac{\height_{i}}{\delta^{2}\opt}\right\rceil \delta^{2}\opt$.
Let $\Rhard$ denote the resulting set of items.
% $\Rhard'$
%contains the same  items as in $\Rit\cup\Rbg\cup\Rve\cup\Rho$ and  for each $i\in\Rit\cup\Rbg\cup\Rve$,
%the set $\Rhard'$ correspondingly contains an item $i$ with height $\height'_{i}$
%and width $\width_{i}$.
 By a shifting argument, we will show that we can still pack $\Rhard$
into $O_{\eps}(1)$ guillotine separable boxes and $\Lc$-compartments
if we can increase the height of the packing by a factor $1+\eps$ which also does not violate guillotine separability.
Then, we increase the height of the packing by another factor $1+\eps$.
Using this additional space, we shift the items inside each $\Lc$-compartment
$L$ such that we can separate the vertical items from the horizontal items
(see Figure \ref{fig_rayshoot}). Due to this separation, we can partition $L$ into
$O_{\eps}(1)$ boxes such that each box contains only horizontal
or only vertical and tall items. Note however, that they might not be packed
nicely inside these boxes. 
\begin{lemma}
\label{lem:rearrange-2}There exists a partition of $[0,\W]\times[0,(1+2\eps)\opt]$
into a set $\B_2$ of $O_{\eps}(1)$ boxes such that 
\begin{itemize}
\item the boxes in $\B_2$ are pairwise non-overlapping and admit a guillotine
cutting sequence,
\item the items in $\Rhard$ can be packed into $\B_2$ such that they are
guillotine separable and each box $B\in\B_2$ either  contains only items from $\Rit\cup\Rbg\cup\Rve$, or contains only items from $\Rho$.
\item Any item $i\in \Rit\cup\Rbg\cup\Rve$ has height $h_i'=k_i \delta^2 \opt$ for integer $k_i$, $k_i\leq 1/\delta^2+1$. 
\end{itemize}
\end{lemma}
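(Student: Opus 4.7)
The plan is to begin from the packing guaranteed by Lemma~\ref{lem:rearrange}, which places $\Rhard$ inside $[0,W]\times[0,\opt]$ using boxes $\B_1$ and $\Lc$-compartments $\L$ admitting a pseudo-guillotine cutting sequence. From this starting point, I perform two height-increasing transformations, each costing at most $\eps\opt$: the first realizes the height rounding and the second converts every $\Lc$-compartment into a constant number of ordinary, guillotine-separable boxes.

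\textbf{Step 1 (rounding via shifting).} For every $i\in\Rit\cup\Rbg\cup\Rve$, define $h'_i:=\lceil h_i/(\delta^2\opt)\rceil\delta^2\opt$, so the integer $k_i=h'_i/(\delta^2\opt)$ is at most $1/\delta^2+1$, as required by the third bullet. Each item in $\Rit\cup\Rbg\cup\Rve$ satisfies $h_i>\delta\opt$ (tall items even satisfy $h_i>\opt/2$), so at most $1/\delta$ such items are stacked along any vertical line. Consequently, if we uniformly expand every horizontal strip of the pseudo-guillotine decomposition by the largest amount needed to absorb the rounding of items straddling it, the total upward displacement at any $x$-coordinate is at most $(1/\delta)\cdot\delta^2\opt=\delta\opt\leq\eps\opt$ (for $\delta\leq\eps$). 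This shifting moves entire subtrees of the pseudo-guillotine sequence upward along horizontal cuts, so neither the cut structure, the box--$\Lc$ partition, nor guillotine separability is disturbed; we obtain a packing of height at most $(1+\eps)\opt$ respecting the rounded heights.

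\textbf{Step 2 ($\Lc$-compartments become boxes).} We now admit another $\eps\opt$ of height. Fix any $L\in\L$ with vertical arm $L_v$ (width ${<\delta W/2}$, containing only items of $\Rit\cup\Rbg\cup\Rve$) and horizontal arm $L_h$ (height ${<\delta\opt/2}$, containing only items of $\Rho$). Distribute the new $\eps\opt$ uniformly among the $O_\eps(1)$ $\Lc$-compartments and use it to push the vertical/tall items inside each $L_v$ toward the top of the arm, separating them cleanly from the horizontal items of $L_h$. As illustrated in Figure~\ref{fig_rayshoot}, shoot a horizontal ray from the top of each column of vertical/tall items in $L_v$ and a vertical ray from the right end of each row of horizontal items in $L_h$; these rays carve $L$ into a staircase of rectangular sub-boxes $V_1,\dots,V_k$ (holding only items of $\Rit\cup\Rbg\cup\Rve$) and $H_1,\dots,H_m$ (holding only items of $\Rho$). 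Because after Step~1 the heights of items in $L_v$ are integer multiples of $\delta^2\opt$ taking only $O(1/\delta^2)$ distinct values, only $O_\eps(1)$ column-top levels arise, so $k=O_\eps(1)$; an analogous argument, using that horizontal items have width $>\delta W$, bounds $m=O_\eps(1)$. Peeling the outermost step at each stage exhibits these boxes as guillotine separable.

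\textbf{Assembly and main obstacle.} Taking $\B_2$ as the union of the boxes from $\B_1$ (unchanged) with the $O_\eps(1)$ boxes produced per $\Lc$-compartment gives $|\B_2|=O_\eps(1)$ inside $[0,W]\times[0,(1+2\eps)\opt]$; each box is monochromatic in type ($\Rho$ versus $\Rit\cup\Rbg\cup\Rve$) by construction, and replacing every pseudo-guillotine $\Lc$-extraction with the corresponding staircase peel upgrades the outer cutting sequence to a genuine guillotine one, completing all three bullets. The delicate point, and the main obstacle, is reconciling the local ray-shooting inside each $L$ with the global pseudo-guillotine cutting sequence while simultaneously keeping the number of resulting boxes at $O_\eps(1)$ and the total added height at $O(\eps\opt)$; this is exactly why Step~1 is performed first, as discretizing the item heights to $O(1/\delta^2)$ values caps the staircase complexity inside each $\Lc$-compartment.
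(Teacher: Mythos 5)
Your overall skeleton is the same as the paper's (round heights of $\Rit\cup\Rbg\cup\Rve$ to multiples of $\delta^{2}\opt$ at a cost of $\delta\opt\le\eps\opt$, then spend another $\eps\opt$ to replace each $\Lc$-compartment by $O_{\eps}(1)$ type-homogeneous boxes via the staircase/ray-shooting picture of Figure~\ref{fig_rayshoot}), but the quantitative heart of Step~2 is wrong. You bound the number of staircase steps by claiming that, after rounding, the items in the vertical arm have $O(1/\delta^{2})$ distinct heights and hence "only $O_{\eps}(1)$ column-top levels arise." This does not follow: inside an $\Lc$-compartment the columns of tall/vertical items do not share a common baseline — near the corner they rest on top of horizontal items (whose heights are at most $\mu\opt$ and are \emph{not} rounded), so the bottoms, and therefore the tops, of these columns can take arbitrarily many distinct values even after the height rounding. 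The same objection applies to your "analogous argument" for the horizontal boxes: horizontal items having width $>\delta\W$ bounds how many sit side by side, not how many rows (and hence staircase bends) occur. In the paper the $O_{\eps}(1)$ bound comes from a different mechanism: the items of the vertical arm are shifted up by $\eps\,h(L)$, which forces every \emph{vertical} segment of the ray-shot boundary curve to have length at least $\eps\,h(L)$, so the curve has $O(1/\eps)$ bends, and both the vertical and the horizontal sub-boxes are obtained from projections of these $O(1/\eps)$ bends. Your rounding-based count cannot replace this, so as written the number of boxes per compartment is not bounded.

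A second, related omission: you assert that the vertical projections from the staircase bends carve the compartment into boxes and that "peeling the outermost step" gives guillotine separability, but a projection from a bend may pass \emph{through} a tall/vertical item, in which case it is not a legal cut. The paper handles exactly this case by snapping such a projection to the nearest first-stage guillotine cuts of the vertical arm (on either side of the projection), which exist because the original packing inside the compartment is guillotine separable; this is where the guillotine property of the input packing is crucially used, and it also contributes to the final box count ($O(1/\eps)$ extra boxes, e.g.\ the $24/\eps$ bound in the paper). Without both ingredients — the shift-based bound on the bends and the snapping of cuts to existing first-stage cuts — your Step~2 neither bounds $|\B_2|$ by $O_{\eps}(1)$ nor guarantees that the produced boxes and the packing inside them are guillotine separable.
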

Let $\B_{2}$ be the set of boxes due to Lemma~\ref{lem:rearrange-2}.
Consider a box $B\in \B_{2}$ and let $\Rhard(B)$ denote the items from
$\Rhard$ that are placed inside $B$ in the packing due to Lemma~\ref{lem:rearrange-2}.
Our goal is to partition $B$ into $O_{\eps}(1)$ smaller containers, i.e.,
 the items in $\Rhard(B)$ are packed nicely into these
smaller boxes. If $B$ contains horizontal items, then this can be
done using standard techniques, e.g., by 1D resource augmentation (only in height) in {\cite{khan2021guillotine}}. This resource augmentation procedure maintains guillotine separability (see Appendix~\ref{subsec_resource_aug}).
\begin{lemma}[\cite{khan2021guillotine}]
\label{lem:hor-boxes}Given a box $B\in\B_2$ such that $B$ contains
a set of items $\Rhard(B)\subseteq\Rho$. There exists a partition
of $B$ into $O_{\eps'}(1)$ containers $\B'$ and one additional box
$B'$ of height at most $\eps' h(B)$ and width $w(B)$ such
that
%\begin{itemize}
 the containers $\B'$ are guillotine separable
 and the containers $\B' \cup \{B'\}$ 
  contain $\Rhard(B)$.
%\item the items in $\Rhard'(B)$ can be placed nicely inside $\B'\cup\{B'\}$.
%\end{itemize}
\end{lemma}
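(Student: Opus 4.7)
The plan is to apply a linear-grouping argument in the style of Kenyon--Rémila's APTAS for strip packing, adapted to maintain guillotine separability. Sort the items of $\Rhard(B)$ in non-increasing order of width and conceptually stack them on top of each other as a virtual tower of total height $H := \sum_{i \in \Rhard(B)} h_i$. Slice this tower into $k := \Theta(1/\eps')$ horizontal slabs $G_1, G_2, \ldots, G_k$ (from top/widest to bottom/narrowest) of roughly equal total item-height. Since every horizontal item has width larger than $\delta W$ and the items fit inside $B$, the area bound $\sum_i w_i h_i \leq w(B) h(B)$ combined with $w_i > \delta W$ gives $H \leq h(B) w(B) / (\delta W) \leq h(B)/\delta$, so for $k$ sufficiently large (of the order $1/(\eps'\delta)$) each slab has total item-height at most $\eps' h(B)$. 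Assign the widest slab $G_1$ to the extra box $B'$, a horizontal strip of width $w(B)$ and height $\eps' h(B)$ at the top of $B$, by stacking its items on top of each other; this fits by the above height bound.

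For each $j \geq 2$, round the width of every item in $G_j$ up to $w_j := \max_{i \in G_{j-1}} w_i$, the largest width in the adjacent wider slab. Because items were sorted by width in non-increasing order, $w_j$ is a valid upper bound on all widths in $G_j$, so after rounding every item in $G_j$ has the common width $w_j$. Form a container $C_j$ of width $w_j$ and height $(1-\eps') h(B)$, and inside $C_j$ stack the items of $G_j$ on top of each other; by Definition~\ref{def:structured-boxes} this is a nice packing, and there are $k-1 = O_{\eps'}(1)$ such containers. Place the containers $C_2, \ldots, C_k$ side by side at the bottom of $B$ in decreasing order of width (padding the remaining width, if any, by slightly enlarging $C_k$). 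A standard linear-grouping accounting bounds $\sum_{j=2}^{k} w_j \cdot \sum_{i \in G_j} h_i$ by the total area of items in $G_1 \cup \cdots \cup G_{k-1}$, which is at most $w(B) h(B)$; combined with the near-balanced heights of the slabs this yields $\sum_{j=2}^{k} w_j \leq w(B)$, so the containers fit within $B$ in total width.

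Guillotine separability of $\B' \cup \{B'\}$ follows directly from the geometric arrangement: a single horizontal cut at height $(1-\eps') h(B)$ separates $B'$ from the lower portion of $B$, after which vertical cuts between adjacent $C_j$ and $C_{j+1}$ separate the remaining containers (they share a common bottom edge, stand side by side, and have distinct decreasing widths). The main obstacle is making the area accounting of the second paragraph fully precise so that the rounded widths sum to at most $w(B)$; if the slab heights $\sum_{i \in G_j} h_i$ are not perfectly balanced the bound can fail by a small amount, but any such discrepancy is absorbed either by slightly enlarging $B'$ in height (which remains $O(\eps') h(B)$) or by moving a constant number of leftover items into $B'$ together with $G_1$.
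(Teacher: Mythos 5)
There is a genuine gap in the repacking step of your second paragraph, and it is not the small bookkeeping discrepancy you describe at the end. By your own construction each slab $G_j$ has total item-height only about $\eps' h(B)$ (this is precisely what you need so that $G_1$ fits into $B'$), so the linear-grouping inequality $\sum_{j\ge 2} w_j\cdot\sum_{i\in G_j}h_i \le w(B)\,h(B)$ only yields $\sum_{j\ge 2} w_j \lesssim w(B)/\eps'$, a factor $1/\eps'$ too large for placing the containers $C_2,\dots,C_k$ side by side inside $B$. Concretely, let $\Rhard(B)$ consist of many items of width exactly $w(B)$ (legal: horizontal items only need width greater than $\delta\W$ and height at most $\mu\opt$) whose heights sum to $h(B)$. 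They fit in $B$ as a single stack, but in your scheme every rounded width equals $w(B)$, so the side-by-side placement needs width roughly $(k-1)\,w(B)\approx w(B)/(\eps'\delta)$. The overflow cannot be absorbed as you suggest: the items that do not fit carry total area $\Theta\bigl(w(B)h(B)\bigr)$, whereas $B'$ (even slightly enlarged) has area only $O(\eps')\,w(B)h(B)$, and the leftover is not a constant number of items.

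The missing idea is that after grouping/rounding the widths to $O_{\eps'}(1)$ values you must repack \emph{guided by the existing packing} of $\Rhard(B)$ inside $B$, in the spirit of Kenyon--R\'emila's fractional/configuration repacking carried out within the original box: a single width class is in general split over several containers located at different positions of $B$, and the extra strip of height $\eps' h(B)$ pays for rounding container heights and for the $O_{\eps'}(1)$ items cut by this repacking. This is exactly the 1D resource augmentation packing lemma, and it is also what the paper does: it does not reprove the statement but invokes Lemma~\ref{lemma1_old} together with Lemma~\ref{lemma_res_aug_guillotine_separability}, the latter supplying the guillotine separability of the resulting containers. In your layout guillotine separability was indeed the easy part, but only because the layout itself is infeasible; once the containers are produced by the resource-augmentation argument, preserving guillotine cuts is the nontrivial property that the cited lemma provides.
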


\begin{figure}[t]
	\captionsetup[subfigure]{justification=centering}
	\hspace{-10pt}
\centering
	\begin{subfigure}[b]{.47\textwidth}
		\centering
		\scalebox{-1}[1]{\resizebox{!}{4.5cm}{
		\begin{tikzpicture}			
			\draw[thick] (0,0) rectangle (6,6);	

			\draw[ultra thick] (3,-0.2)--(3,6.05);
			\draw (3,-.5) node {\scalebox{-1}[1]{\large \textbf{$\ell_1$}}};

			\draw[ultra thick] (-.2,3.2)--(3,3.2);
			\draw (-.4, 3.2) node {\scalebox{-1}[1]{\large \textbf{$\ell_6$}}};

			\draw[ultra thick] (0, .5)--(2.5,.5)--(2.5,3.2);
			\draw (2.3,.7) node {\scalebox{-1}[1]{\large \textbf{$\ell_7$}}};

			\draw[ultra thick] (3,2)--(6.2,2);
			\draw (6.5,2) node {\scalebox{-1}[1]{\large \textbf{$\ell_2$}}};

			\draw[ultra thick] (3.7, 6)--(3.7,2.7)--(6,2.7);

			\draw (3.9,2.9) node {\scalebox{-1}[1]{\large \textbf{$\ell_3$}}};

			\draw[ultra thick] (3.7,4.5)--(6.2, 4.5);
			\draw (6.5,4.5) node {\scalebox{-1}[1]{\large \textbf{$\ell_4$}}};

%			\draw[ultra thick] (5.5, 4.5)--(5.5, 5.5)--(3.7, 5.5);
%			\draw (5.3,5.3) node {\large \textbf{$\ell_5$}};
	\end{tikzpicture}}}
	\label{fig:pseudocut_1}
\caption{}
	\end{subfigure}
	\begin{subfigure}[b]{.47\textwidth}
		\centering
		\scalebox{-1}[1]{\resizebox{!}{7.5cm}{
		\begin{tikzpicture}
			% l0	
			\draw[thick] (-3, -3) rectangle (3,3);	
			\draw[dashed] (0,-3)--(0,3);
			\draw (0,0) node {\scalebox{-1}[1]{\Huge \textbf{$\ell_1$}}};
			\draw[-{Latex[length=7mm, width=6mm]}](0,-3.2)--(-6, -4.9);
			\draw[-{Latex[length=7mm, width=6mm]}](0,-3.2)--(6, -4.9);

			%l1
			\draw[thick] (-7.5,-5) rectangle (-4.5, -11);
			\draw[dashed] (-7.5,-7.8)--(-4.5,-7.8);
			\draw (-6,-7.8) node {\scalebox{-1}[1]{\Huge \textbf{$\ell_6$}}};
			\draw[-{Latex[length=7mm, width=6mm]}](-6,-11.2)--(-4.5, -12.9);
			\draw[-{Latex[length=7mm, width=6mm]}](-6,-11.2)--(-8.5, -12.9);

			\draw[thick] (4.5, -5) rectangle (7.5, -11);
			\draw[dashed] (7.5,-9)--(4.5,-9);
			\draw (6,-9) node {\scalebox{-1}[1]{\Huge \textbf{$\ell_2$}}};
			\draw[-{Latex[length=7mm, width=6mm]}](6,-11.2)--(8.5, -12.9);
			\draw[-{Latex[length=7mm, width=6mm]}](6,-11.2)--(5, -12.9);

			%l2
			\draw[thick] (-6,-13) rectangle (-3, -16.2);
			\draw[-{Latex[length=7mm, width=6mm]}](-4.5,-16.4)--(-6.25, -18.9);
			\draw[-{Latex[length=7mm, width=6mm]}](-4.5,-16.4)--(-3, -18.9);
			\draw[dashed](-6,-16.2)--(-6, -15.7)--(-3.5, -15.7)--(-3.5, -13);
			\draw (-3.7,-15.5) node {\scalebox{-1}[1]{\Huge \textbf{$\ell_7$}}};

			\draw[thick] (-10,-13) rectangle (-7, -15.8);

			\draw[thick] (6,-13) rectangle (3, -15);

			\draw[thick] (10,-13) rectangle (7, -17);
			\draw[-{Latex[length=7mm, width=6mm]}](8.5,-17.2)--(5.5, -18.9);
			\draw[-{Latex[length=7mm, width=6mm]}](8.5,-17.2)--(11.15, -18.9);
			\draw[dashed](7,-13)--(7.7, -13)--(7.7, -16.3)--(10, -16.3);
			\draw (7,-12.5) node {\scalebox{-1}[1]{\Huge \textbf{$\ell_3$}}};

			%l3
			\draw[thick] (-7.5, -19)rectangle(-5,-21.5);
			\draw[thick] (-.5, -19)--(-.5,-22.2)--(-3.5,-22.2)--(-3.5,-21.7)--(-1, -21.7)--(-1,-19)--(-.5,-19);

			\draw[thick] (4, -19)--(4, -23)--(7, -23)--(7, -22.3)--(4.7, -22.3)--(4.7,-19)--(4,-19);

			\draw[thick] (10, -19) rectangle (12.3, -22.3);
%			\draw[-{Latex[length=7mm, width=6mm]}](11.15,-22.5)--(10.15, -23.9);
%			\draw[-{Latex[length=7mm, width=6mm]}](11.15,-22.5)--(13.15, -23.9);
			\draw[dashed](10,-20.5)--(12.3, -20.5);
			\draw (11.15,-20.5) node {\scalebox{-1}[1]{\Huge \textbf{$\ell_4$}}};
			%l4

%			\draw[thick] (9, -24) rectangle (11.3, -25.8);

%			\draw[thick] (12, -24) rectangle (14.3, -25.5);
%			\draw[-{Latex[length=7mm, width=6mm]}](13.15,-25.7)--(10.9, -26.9);
%			\draw[-{Latex[length=7mm, width=6mm]}](13.15,-25.7)--(15.15, -26.9);
%			\draw[dashed](12,-24)--(14.3, -24)--(14.3, -25.5)--(13.8, -25.5)--(13.8,-24.5)--(12, -24.5);
%			\draw (13.75,-24.55) node {\Huge \textbf{$\ell_5$}};

			%l5 
%			\draw[thick] (10, -27.5) rectangle (11.8, -28.5);
%			\draw[thick] (14, -27)--(16.3, -27)--(16.3, -28.5)--(15.8,-28.5)--(15.8, -27.5)--(14,-27.5)--(14,-27);
	\end{tikzpicture}}}
	\label{fig:tree}
	\caption{}
	\end{subfigure}
	\caption{(a) A pseudo-guillotine cutting sequence. The first cut is $l_1$, and then the resulting left piece is further subdivided by $\ell_2$, $\ell_3$ and $\ell_4$. Similarly, $\ell_6$, $\ell_7$ subdivide the right piece. Note that $\ell_3$ and $\ell_7$ are not guillotine cuts, but they cut out the corresponding $\Lc$-compartments. (b) Step by step pseudo-guillotine cutting sequence corresponding to Figure (a). Dashed line at each level indicates a partition of a rectangle into two regions (two boxes, or one box and one $\Lc$).}
	\label{fig:pseudonice}
\end{figure}
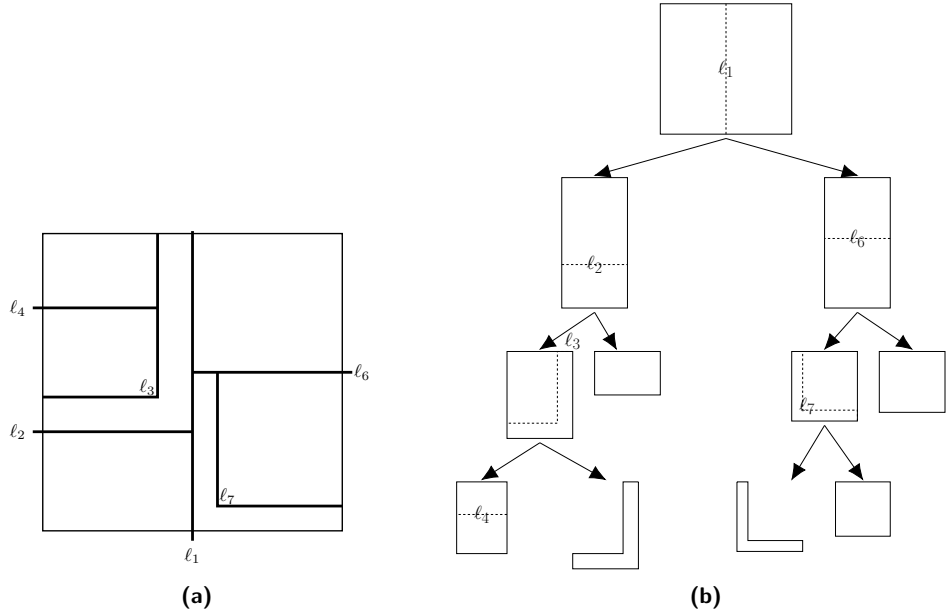

We apply Lemma~\ref{lem:hor-boxes} to each box $B\in\B_2$ that contains
a horizontal item. Consider the items which are contained in their respective boxes $B'$. In order to avoid any confusions between constants of our algorithm and resource augmentation, we denote the constant used for resource augmentation as $\eps'$. We choose $\eps'=\eps$ and then their
% Their 
total area is at most $\eps\opt\cdot W$ and therefore,  all such items can be packed in a box  of height at most $2\eps\opt$ and width $W$ using Steinberg's algorithm~\cite{steinberg1997strip}. But since this will possibly not result in a nice packing we apply resource augmentation (only along height) again to ensure that we get a nice packing of such horizontal items in $O_{\eps}(1)$ containers which can all be packed in a box of height at most $3\eps\text{OPT}$ and width $W$ (see Section~\ref{subsec_resource_aug}).

Consider now a box $B\in\B_2$ that contains at least one item from
$\Rit\cup\Rbg\cup\Rve$. Let $\Rhard(B)\subseteq\Rit\cup\Rbg\cup\Rve$
denote the items packed inside $B$. We argue that we can rearrange
the items in $\Rhard(B)$ such that they are nicely placed inside
$O_{\eps}(1)$ containers. In this step we crucially use that the items
in $\Rhard(B)$ are guillotine separable. 

Consider the guillotine cutting sequence for $\Rhard(B)$. It is
useful to think of these cuts as being organized in \emph{stages:}
in the first stage we do vertical cuts (possibly zero cuts). In the
following stage, we take each resulting piece and apply horizontal cuts.
In the next stage, we again take each resulting piece and apply vertical
cuts, and so on.  Since the heights of the items in $\Rhard(B)$
are rounded to multiples of $\delta^2\opt$ we can assume w.l.o.g.~that
the $y$-coordinates of the horizontal cuts are all integral multiples
of $\delta^{2}\opt$ (possibly moving the items a little bit). Assume here for the sake of simplicity that $t=1/\delta^2$ is an integer. Because of the rounding of heights of the items in $I_{hard}'(B)$, there are at most $(1/\delta^2 -1)$  $y$-coordinates for making a horizontal cut. For a horizontal stage of cuts, for a rectangular piece we define a \emph{configuration vector} $(x_1,...,x_{t-1})$: For each $i\in[t-1]$ if there is a horizontal cut in the piece at $y=t\cdot i$, then $x_i=1$, otherwise $x_i=0$. Consider $y=0$ to be the bottom of the rectangular piece.
Therefore, in each horizontal stage, for each piece there are at most $K:=(2^{(1/\delta^{2})})$%^{\frac{1}{\delta}}$$ 
 ~possible configurations. Consider
the first stage (which has vertical cuts). If there are more than
$K$ vertical cuts then in two of the resulting pieces, in the second
stage the same configuration of horizontal cuts is applied (see Figure \ref{fig_vert_cont}).

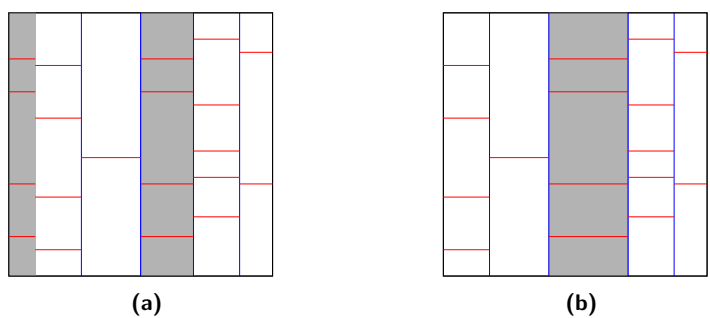
\begin{figure}[t]
% \begin{figure}[!tbh]
\centering
\captionsetup[subfigure]{justification=centering}
%\hspace*{-1em}
%\hspace*{-3.5em}
\begin{subfigure}[t]{0.4\linewidth}
\centering
\resizebox{3.5cm}{3.5cm}{
\begin{tikzpicture}
\draw [ultra thick] (0,0) rectangle (20,20);
\draw [ultra thick,blue] (2,0) -- (2,20);
\draw [ultra thick,blue] (5.5,0) -- (5.5,20);
\draw [ultra thick,blue] (10,0) -- (10,20);
\draw [ultra thick,blue] (14,0) -- (14,20);
\draw [ultra thick,blue] (17.5,0) -- (17.5,20);
\fill [gray!60] (0,0) rectangle (2,20);
\fill [gray!60] (10,0) rectangle (14,20);
\draw [ultra thick,blue] (10,0) -- (10,20);
\draw [ultra thick,blue] (14,0) -- (14,20);
\draw [ultra thick] (0,0) rectangle (20,20);

%1
\draw [ultra thick,red] (0,3) -- (2,3);
\draw [ultra thick,red] (0,7) -- (2,7);
\draw [ultra thick,red] (0,14) -- (2,14);
\draw [ultra thick,red] (0,16.5) -- (2,16.5);
%2
\draw [ultra thick,red] (2,2) -- (5.5,2);
\draw [ultra thick,red] (2,6) -- (5.5,6);
\draw [ultra thick,red] (2,12) -- (5.5,12);
\draw [ultra thick,red] (2,16) -- (5.5,16);
%3
\draw [ultra thick,red] (5.5,9) -- (10,9);

%4
\draw [ultra thick,red] (10,3) -- (14,3);
\draw [ultra thick,red] (10,7) -- (14,7);
\draw [ultra thick,red] (10,14) -- (14,14);
\draw [ultra thick,red] (10,16.5) -- (14,16.5);
%5
\draw [ultra thick,red] (14,4.5) -- (17.5,4.5);
\draw [ultra thick,red] (14,7.5) -- (17.5,7.5);
\draw [ultra thick,red] (14,9.5) -- (17.5,9.5);
\draw [ultra thick,red] (14,13) -- (17.5,13);
\draw [ultra thick,red] (14,18) -- (17.5,18);
%6
\draw [ultra thick,red] (17.5,7) -- (20,7);
\draw [ultra thick,red] (17.5,17) -- (20,17);
\end{tikzpicture}}
\caption{\hspace*{-1em}}
\label{fig_vert_cont_1}
\end{subfigure}
\begin{subfigure}[t]{0.4\linewidth}
\centering
\resizebox{3.5cm}{3.5cm}{
\begin{tikzpicture}
\draw [ultra thick] (0,0) rectangle (20,20);
%\draw [ultra thick,blue] (2,0) -- (0,20);
\draw [ultra thick,blue] (3.5,0) -- (3.5,20);
\draw [ultra thick,blue] (8,0) -- (8,20);
\draw [ultra thick,blue] (14,0) -- (14,20);
\draw [ultra thick,blue] (17.5,0) -- (17.5,20);
\fill [gray!60] (8,0) rectangle (14,20);
\draw [ultra thick,blue] (8,0) -- (8,20);
\draw [ultra thick,blue] (14,0) -- (14,20);
\draw [ultra thick] (0,0) rectangle (20,20);

%1
%\draw [ultra thick,red] (0,3) -- (2,3);
%\draw [ultra thick,red] (0,7) -- (2,7);
%\draw [ultra thick,red] (0,14) -- (2,14);
%\draw [ultra thick,red] (0,16.5) -- (2,16.5);
%2
\draw [ultra thick,red] (0,2) -- (3.5,2);
\draw [ultra thick,red] (0,6) -- (3.5,6);
\draw [ultra thick,red] (0,12) -- (3.5,12);
\draw [ultra thick,red] (0,16) -- (3.5,16);
%3
\draw [ultra thick,red] (3.5,9) -- (8,9);

%4
\draw [ultra thick,red] (8,3) -- (14,3);
\draw [ultra thick,red] (8,7) -- (14,7);
\draw [ultra thick,red] (8,14) -- (14,14);
\draw [ultra thick,red] (8,16.5) -- (14,16.5);
%5
\draw [ultra thick,red] (14,4.5) -- (17.5,4.5);
\draw [ultra thick,red] (14,7.5) -- (17.5,7.5);
\draw [ultra thick,red] (14,9.5) -- (17.5,9.5);
\draw [ultra thick,red] (14,13) -- (17.5,13);
\draw [ultra thick,red] (14,18) -- (17.5,18);
%6
\draw [ultra thick,red] (17.5,7) -- (20,7);
\draw [ultra thick,red] (17.5,17) -- (20,17);
\end{tikzpicture}}
\caption{\hspace*{-1em}}
\label{fig_vert_cont_2}
\end{subfigure}
\caption{(a) $2$ stages of guillotine cuts for a box containing vertical rectangles.\: (b) Since rounded heights of vertical rectangles are integral multiples of $\delta^2$, merge configurations with same set of horizontal cuts to get $O_\delta(1)$ configurations.}
\label{fig_vert_cont}
%\caption{(a) $2$ stages of guillotine cuts for a guillotine compartment containing vertical rectangles.\: (b) Since rounded heights of vertical rectangles are integral multiples of $\delta^2$, merge configurations with same set of horizontal cuts to get $O_\delta(1)$ configurations .}
%\label{fig_vert_cont}
\end{figure}

We reorder the resulting pieces and their items such that pieces with
the same configuration of horizontal cuts are placed consecutively.
Therefore, in the first stage we need only $K$ vertical cuts and we can have at most $(\frac{1}{\delta}2^{(1/\delta^{2})})$ resulting pieces. We
apply the same transformation to each stage with vertical cuts. Now
observe that there can be at most $O(1/\delta)$ stages since
there are at most $1/\delta$ possible tall, vertical or large items stacked on top of the other and thus at most $1/\delta$ stages with
horizontal cuts. Therefore, after our transformations, we apply only
$(\frac{1}{\delta}2^{(1/\delta^{2})})^{\frac{1}{\delta}}$ cuts in total, in all stages in all resulting pieces.
Thus, we obtain $O_{\eps}(1)$ boxes at the end, in which the
items are nicely packed. This leads to the following lemma.
\begin{lemma}
\label{lem:hor-boxes-1}Given a box $B\in\B_2$ such that $B$ contains
a set of items $\Rhard(B)\subseteq\Rit\cup\Rbg\cup\Rve$. There exists
a partition of $B$ into $O_{\eps}(1)$ containers $\B'$ such that
%\begin{itemize}
 the containers $\B'$ are guillotine separable and contain the items $\Rhard(B)$.
%\item the items in $\Rhard'(B)$ can be placed nicely inside $\B'$.
%\end{itemize}
\end{lemma}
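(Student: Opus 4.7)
The plan is to formalize the sketch in the paragraph preceding the lemma. Starting from the guillotine cutting sequence separating $\Rhard(B)$ guaranteed by Lemma~\ref{lem:rearrange-2}, I re-organize the cuts into alternating \emph{vertical} and \emph{horizontal} stages, where each stage applies only cuts of one orientation to each currently live rectangular piece. The surviving pieces at the end of the staged recursion are the desired container set $\B'$.

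The key combinatorial bound rests on the height-rounding. Because every $i\in\Rhard(B)$ has $h_{i}'$ equal to an integer multiple of $\dlta^{2}\opt$, I may assume after an infinitesimal shift that every horizontal cut line in the sequence has $y$-coordinate of the form $y_{0}+j\dlta^{2}\opt$ with $j\in\{1,\dots,1/\dlta^{2}-1\}$, where $y_{0}$ is the bottom of the piece being cut. Hence each piece entering a horizontal stage is fully described, as far as that stage is concerned, by its \emph{horizontal configuration}, i.e., the subset of $y$-levels at which cuts are applied, and there are at most $K:=2^{1/\dlta^{2}-1}$ such configurations. A symmetric argument gives at most $K$ \emph{vertical configurations} per piece in a vertical stage. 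Further, since every item in $\Rhard(B)$ has height at least $\dlta\opt$ and $h(B)\le(1+2\eps)\opt$, at most $O(1/\dlta)$ items can be stacked vertically in any piece, and analogously horizontally, so the staged recursion has depth $O(1/\dlta)$.

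The compression step runs top-down. In a vertical stage that subdivides a piece $P$ into subpieces $P_{1},\dots,P_{m}$ from left to right, I permute the $P_{k}$ horizontally so that subpieces carrying identical horizontal configurations (for the next stage) become consecutive. This is a legal translation of disjoint vertical slices of $P$ that preserves each slice's internal packing, so the overall arrangement remains non-overlapping and guillotine separable. Each maximal run of equally-configured subpieces is then \emph{merged} into one wider subpiece whose common horizontal cuts are applied once globally; this bounds the number of distinct subpieces produced by the stage by $K$. The symmetric argument applies in horizontal stages. Hence $|\B'|\le K^{O(1/\dlta)}=O_{\eps}(1)$, and the transformed sequence is still a valid guillotine cutting sequence for $\Rhard(B)$ inside $B$.

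Finally, I verify that each leaf container $B'\in\B'$ is nice in the sense of Definition~\ref{def:structured-boxes}. A leaf admits no further separating guillotine cut, so either $B'$ contains a single item (matching the first bullet of the definition) or every item in it spans the full height of $B'$ and the items are placed side by side, matching the third bullet for items of $\Rit\cup\Rve$. A large item has width $>\dlta W$ and height $>\dlta\opt$, so it cannot coexist with any other item in a leaf without admitting a separating cut; hence any leaf containing a large item is a singleton. The main obstacle in this plan is ensuring that the permute-and-merge operations at one stage do not invalidate cuts at deeper levels; this is handled by processing stages top-down and operating only within the current piece, which preserves the relative order and heights of items inside every subpiece produced.
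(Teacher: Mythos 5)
Your overall construction coincides with the paper's: stage the cutting sequence into alternating vertical/horizontal stages, use the $\delta^{2}\opt$-rounding of heights to define per-piece configurations of horizontal cuts, reorder and merge equally-configured slices, and bound the depth by $O(1/\delta)$. However, two of your justifications do not hold as written. First, the claimed symmetry is false: only the heights are rounded, so horizontal cuts live on the $O(1/\delta^{2})$ levels $j\delta^{2}\opt$, but the $x$-coordinates of vertical cuts are not discretized at all (vertical items have width at most $\delta\W$ with no lower bound), so there is no bounded family of ``vertical configurations'' and ``the symmetric argument applies in horizontal stages'' fails. Fortunately it is also unnecessary: a horizontal stage splits a piece into at most $O(1/\delta^{2})$ subpieces simply because its cuts lie on the rounded levels and the box has height at most $(1+2\eps)\opt$; the reorder-and-merge trick is needed (and, in the paper, used) only in vertical stages, where the branching is a priori unbounded. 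With that repair your count $|\B'|\le \bigl(K\cdot O(1/\delta^{2})\bigr)^{O(1/\delta)}=O_{\eps}(1)$ goes through. (Relatedly, the aside that at most $O(1/\delta)$ items fit ``analogously horizontally'' is also false for vertical items, but only the vertical stacking bound is needed for the depth.)

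Second, your verification that the leaves are legitimate containers starts from the premise that ``a leaf admits no further separating guillotine cut,'' which contradicts your own construction: merging deliberately drops the vertical cuts between equally-configured slices, so a leaf typically contains several items that the original sequence would still separate. Consequently the claims that in a multi-item leaf every item spans the full height of the leaf, and that a large item can never share a leaf, are unsupported (the former is false in general, since side-by-side items of different rounded heights need not be split by any retained horizontal cut). The argument that actually works, and is implicit in the paper, is this: the transformation retains every horizontal cut of the original sequence (slices are merged only when their horizontal-cut configurations for the next stage coincide, and the common cuts are then applied across the merged slice), so within a leaf the original sequence separates its items by vertical cuts alone; those can be collapsed into a single vertical stage, hence the items of each leaf are placed side by side, which is what is needed for the containers of $\Rit\cup\Rbg\cup\Rve$ (with singleton leaves handled trivially).
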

We apply Lemma~\ref{lem:hor-boxes-1} to each box $B\in\B_2$ that
contains an item from $\Rit\cup\Rbg\cup\Rve$. Thus, we obtain a packing
of $\Rhard$ into a set of $O_{\eps}(1)$ guillotine separable
containers in which these items are nicely placed; we denote these containers
by $\B_{hard}$. This yields directly a packing for the (original)
items $\Rhard$ (without rounding). Finally, we partition the empty space of the resulting
packing into more boxes, and one additional box that we place on top
of the current packing. We pack the items in $\Rsm$ inside all these boxes.
We might not be able to use some parts of the empty space, e.g., if
two boxes are closer than $\mu W$ to each other horizontally; however, if 
$\mu$ is sufficiently small compared to the number of boxes, this space is small and compensated by the additional box.

\begin{lemma}
\label{lem_small_rearrange}
Assume that $\mu$ is sufficiently small compared to $\delta$.
There exists a set of $O_{\eps}(1)$ boxes $\B_{small}$, all
contained in $[0,\W]\times[0,(1+14\eps)\opt]$, such that the
boxes in $\B_{hard}\cup\B_{small}$ are non-overlapping and guillotine
separable and the items in $\Rsm$ can be placed nicely into the boxes
$\B_{small}$.
% (possibly except for two boxes $B_{small}$ and $B_{hor}$ which we pack on top to account for unpacked small and horizontal items, respectively which may not have a nice packing).
\end{lemma}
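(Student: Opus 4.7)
The plan is to carve the empty space of the guillotine packing of $\B_{hard}$ into $O_\eps(1)$ rectangular candidate containers using additional guillotine cuts, discard those too thin to admit a nice packing of small items, and absorb the remaining small items in a fresh strip of height $O(\eps)\opt$ stacked on top of $\B_{hard}$.

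First, I would start from a minimal guillotine cutting sequence separating $\B_{hard}$; its binary tree has $O_\eps(1)$ leaves. At every leaf that strictly contains a box $B\in\B_{hard}$, add up to four trimming cuts so that in the refined tree each leaf is either exactly a box of $\B_{hard}$ or a maximal empty rectangle. Let $\mathcal{E}$ be the family of empty leaf rectangles so produced, of cardinality $O_\eps(1)$. I also reserve a strip $S^\star$ of height $9\eps\opt$ above $\B_{hard}$, separable from the rest by a single horizontal guillotine cut. Call $R\in\mathcal{E}$ \emph{usable} if $w(R)\ge\mu W/\eps$ and $h(R)\ge\mu\opt/\eps$, and \emph{unusable} otherwise. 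Each unusable rectangle has one side bounded by $\mu/\eps$ times the corresponding strip dimension, so its area is at most $O(\mu/\eps)\,W\opt$; summing over the $O_\eps(1)$ unusable rectangles, the total discarded area is $O_\eps(\mu/\eps)\,W\opt$, which is at most $\eps W\opt$ once $\mu$ is chosen sufficiently small compared with $\eps$ and $\delta$ (this is where we fix the function $f$ in Lemma~\ref{class_1}).

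Inside every usable rectangle $B$, small items satisfy $w_i\le\mu W\le\eps w(B)$ and $h_i\le\mu\opt\le\eps h(B)$, so NFDH~\cite{coffman1980performance} produces a nice packing of the items it accepts. The standard NFDH analysis, applied to each such $B$, bounds the area left unused inside $B$ by $O(\eps)\cdot w(B)h(B)$, so the total area of small items that fail to be placed in any usable rectangle is at most $\eps W\opt+O(\eps)W\opt=O(\eps)W\opt$. Because these residual items still have $w_i\le\mu W$ and $h_i\le\mu\opt$, Steinberg's algorithm~\cite{steinberg1997strip} packs them into $S^\star$ for $\mu$ sufficiently small; a final application of the resource-augmentation procedure of Lemma~\ref{lem:hor-boxes} (cf.\ Appendix~\ref{subsec_resource_aug}) converts this Steinberg packing into a nice packing inside $O_\eps(1)$ sub-containers of $S^\star$. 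Taking $\B_{small}$ to be the usable rectangles together with these sub-containers yields $O_\eps(1)$ boxes inside $[0,W]\times[0,(1+14\eps)\opt]$, guillotine separable jointly with $\B_{hard}$, and containing a nice packing of $\Rsm$.

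The main obstacle is the area bookkeeping in the second and third steps: one must show that, after discarding the unusable rectangles and tolerating NFDH's waste inside each usable one, the residual small items have total area at most the $9\eps W\opt$ of slack available in $S^\star$. This in turn determines how small $\mu$ must be chosen relative to $\eps$, $\delta$, and $|\B_{hard}|=O_\eps(1)$, which is precisely the hypothesis ``$\mu$ sufficiently small compared to $\delta$''.
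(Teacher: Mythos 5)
There is a genuine gap, and it is exactly the point you flag as ``the main obstacle'' and then wave away: nothing in your argument guarantees that the empty space left by $\B_{hard}$ has total area at least $a(\Rsm)-O(\eps)\opt\cdot\W$. The containers produced for $\Rit\cup\Rbg\cup\Rve$ (Lemma~\ref{lem:hor-boxes-1}) are only ``nice'', not tight: a vertical/tall container has the height of its tallest item, while the items inside may be much shorter, so the area of $\B_{hard}$ can exceed the area of the items it contains by $\Omega(\opt\cdot\W)$. In that case your usable empty rectangles simply do not have enough room, and after NFDH fills each of them to a $(1-O(\eps))$ fraction (Lemma~\ref{nfdh_alg2} only gives $\min\{a(\cdot),(1-2\eps)wh\}$), the residual small items can still have area $\Omega(\opt\cdot\W)$ — far too much for a strip of height $9\eps\opt$, and no choice of $\mu$ relative to $\eps,\delta$ repairs this, since the looseness of the skew containers is independent of $\mu$. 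Your bound ``total area of small items that fail to be placed is at most $\eps W\opt+O(\eps)W\opt$'' silently assumes the missing inequality $\sum_{\text{usable}}w(B)h(B)\ge a(\Rsm)-O(\eps)\opt\W$, which is the whole content of the lemma's area bookkeeping.

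The paper closes this hole with a step your proposal omits: before touching the free space, it rearranges the items inside each vertical (and horizontal) container side by side in non-increasing height order, partitions them into $O(1/\delta^2)$ height classes (heights in $(\delta\opt+(i-1)\delta^2\opt,\ \delta\opt+i\delta^2\opt]$), and replaces the container by the minimal bounding box of each class, so every resulting container is at least a $(1-\delta)$ fraction full. Only then does it form the non-uniform grid of empty cells by extending container edges (guillotine separability is inherited because the cuts coincide with container edges), discards cells thinner than $(\mu/\eps)\W$ or $(\mu/\eps)\opt$ — whose area is made $O(\delta^2\opt\W)$ by choosing $\mu$ small, the role your ``usable/unusable'' split plays — and runs NFDH, with the $O(\eps\opt\W)$ of leftovers going into the extra box of height $9\eps\opt$. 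So your carving of the complement, your thinness threshold, and your top strip all mirror the paper, but without the tightening/regrouping of the skew containers the area argument does not go through, and with it your ``maximal empty rectangles of the leaf structure'' would anyway have to be recomputed from the shrunken containers, which is precisely the paper's grid construction.
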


Finally, we show the following lemma by using the fact that the medium items have area at most $\eps(\opt\cdot\W)$ and by applying Theorem~\ref{steinberg_alg}. This completes the proof of Lemma~\ref{lem:structural}.
\begin{lemma}
\label{lem:pack-med}In time $n^{O(1)}$ we can find a nice placement
of all items in $\Rme$ inside one container $B_{med}$ of height $2\eps\opt$
and width $\W$.
\end{lemma}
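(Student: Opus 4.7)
\textbf{Proof proposal for Lemma~\ref{lem:pack-med}.} The plan is a direct application of Steinberg's algorithm~\cite{steinberg1997strip} (referenced as \texttt{steinberg\_alg}) to the set $\Rme$ inside a dedicated box $B_{med}$ of dimensions $W \times 2\eps\opt$. I will use three ingredients already available in the paper: the area bound on $\Rme$ from Lemma~\ref{class_1}, the fact that every item in $\Rme$ has both width and height bounded by $\delta$ times the strip's corresponding dimension (by the classification in Section~\ref{sec:PPTAS}), and the choice $\mu\le\delta\le\eps$ dictated by Lemma~\ref{class_1}.

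First I would bound the total area: by Lemma~\ref{class_1} with the function $f$ selected at the start of Section~\ref{subsec_proof_of_struc_lem}, we have $\sum_{i\in\Rme} w_i h_i \le \eps\,\opt\,W$. Next I would verify the Steinberg preconditions for packing $\Rme$ into a bin of width $u=W$ and height $v=2\eps\opt$. Every medium item satisfies $w_i\le\delta W \le \eps W \le u$ and $h_i\le\delta\opt \le \eps\opt \le v$, so letting $a=\max_{i\in\Rme} w_i$ and $b=\max_{i\in\Rme} h_i$ we have $a\le\delta W$ and $b\le\delta\opt$. Since $2\delta\le 2\eps$, both of the quantities $(2a-u)_+$ and $(2b-v)_+$ equal zero (because $2a\le 2\delta W \le W = u$ and $2b\le 2\delta\opt\le 2\eps\opt=v$). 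Consequently, Steinberg's inequality
\[
2\sum_{i\in\Rme} w_i h_i \;\le\; uv - (2a-u)_+(2b-v)_+
\]
reduces to $2\eps\,\opt\,W \le 2\eps\,\opt\,W$, which is satisfied.

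Applying Steinberg's algorithm therefore yields, in polynomial time, a non-overlapping packing of all items in $\Rme$ inside $B_{med}$. Steinberg's algorithm produces its packing by a recursive sequence of horizontal and vertical bisections of the current subregion, so the resulting placement is guillotine separable. Since the only structural requirement for a nice packing of medium items in Definition~\ref{def:structured-boxes} is that $\R_B\subseteq\Rme$ and the items be guillotine separable, the placement we obtain is nice. The main point to be careful about is simply ensuring that $\delta\le\eps$ so that the dimension corrections in Steinberg's inequality vanish; this is already enforced by the chain $\eps\ge f(\eps)\ge \delta\ge f(\delta)\ge \mu$ from Lemma~\ref{class_1}. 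This completes the proof.
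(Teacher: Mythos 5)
Your proposal is correct and follows essentially the same route as the paper: bound the total area of $\Rme$ by $\eps\,\opt\cdot\W$ via Lemma~\ref{class_1}, then invoke Steinberg's theorem (Lemma~\ref{steinberg_alg}) on a box of width $\W$ and height $2\eps\opt$, where the correction term vanishes because $2h_{\max}\le 2\delta\opt\le 2\eps\opt$. One small inaccuracy: medium items of the first kind ($\mu\opt< h_i\le\delta\opt$) carry no width restriction, so your claim that $w_{\max}\le\delta \W$ (and hence $(2w_{\max}-\W)_+=0$) does not follow from the classification; this is harmless, since the height factor $(2h_{\max}-2\eps\opt)_+=0$ alone annihilates the product, which is exactly the observation the paper uses. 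Your appeal to guillotine separability of Steinberg's packing is also what the paper relies on (it cites \cite{khan2021guillotine} for this fact elsewhere), though your description of the algorithm as a sequence of bisections is a slight simplification.
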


\subsection{Algorithm}

We describe now our algorithm that computes a packing of height at
most $(1+O(\eps))\opt$. First, we guess $\opt$ and observe that
there are at most $n\cdot\H$ possibilities, where $\H:=\max_{i\in\R}\height_{i}$.
Then, we guess the set of containers $\B$ due to Lemma~\ref{lem:structural}
and their placement inside $[0,\W]\times[0,(1+O(\eps))\opt)$. For
each container $B\in\B$ we guess which case of Definition~\ref{def:structured-boxes}
 applies to $B$, i.e., whether $\R_{B}$ contains only one item, $\R_{B}\subseteq\Rho$,
$\R_{B}\subseteq\Rit\cup\Rve$, $\R_{B}\subseteq\Rme$, or $\R_{B}\subseteq\Rsm$.
For each box $B\in\B$ for which $\R_{B}$ contains only one item
$i\in\R$, we guess $i$. Observe that for the remaining containers this
yields independent subproblems for the sets $\Rho$, $\Rit\cup\Rve$,
$\Rme$, and $\Rsm$. 
We solve these subproblems via similar routines as in \cite{nadiradze2016approximating, GGIK16, JansenR19}.

We pack all medium items in $\Rme$ into one single
container $B_{med}$ of height $2\eps\opt$ by Lemma~\ref{lem:pack-med} (see \Cref{subsec:pptasmed}).
Then, for the sets $\Rho$ and $\Rit\cup\Rve$ we pack their respective
items into their containers using a standard pseudo-polynomial time dynamic
program; we denote these containers by $\B_{hor}$ and $\B_{tall+ver}$,
respectively. We crucially use that $\left|\B_{hor}\right|\le O_{\eps}(1)$
and $\left|\B_{tall+ver}\right|\le O_{\eps}(1)$. See  \Cref{subsec:pptasskew} for details of packing of items in $\Rho$ and $\Rit\cup\Rve$. 

Finally, we pack small items (see \Cref{subsec:pptassmall} for details). 
From the proof of Lemma~\ref{lem_small_rearrange}, apart from some items $I_{small}'\subset I_{small}$  which have area at most $\eps\text{OPT}\cdot W$, the other items can be packed nicely in the containers in $\B_{small}\setminus B_{small}$, where $B_{small}$ has height $9\eps\text{OPT}$ and width $W$. Thus, we use NFDH for packing the remaining small items. It can be shown that the small items which remain unpacked  can be packed nicely in $B_{small}$, which is placed on the top of our packing.

\begin{theorem}\label{thm:pptas}
There is a $(1+\eps)$-approximation algorithm for the guillotine
strip packing problem with a running time of $(n\W)^{O_{\eps}(1)}$.
\end{theorem}
See Section~\ref{subsec:algo-PPTAS} for the rest of the details

%Here, $\eps_1=O_{\eps}(1)$ (see Appendix \ref{section_constants}).

\section{Polynomial time $(\boldsymbol{\frac{3}{2}+\eps})$-approximation}	
In this section, we first present the structural lemma for our polynomial time $({3}/{2}+\eps)$-approximation
 algorithm for guillotine strip packing. Then we describe our algorithm.% (in Section~\ref{subsec_alg_poly}).
~We have omitted a few proofs due to space constraints which can be found in Appendix \ref{sec:omitted_proofs2}.

%\subsection{Structural lemma}
To derive our structural lemma, we start with the
% Our polynomial time $({3}/{2}+\eps)$-approximation algorithm %as well as the pseudo-polynomial time approximation scheme (PPTAS) presented in  
% is based on the 
packing due to Lemma~\ref{lem:structural}. The problem is that with a polynomial time algorithm (rather than a pseudo-polynomial time algorithm)
we might not be able to pack all tall items in their respective boxes. 
If there is even one single tall item~$i$ that
we cannot pack, then we need to place $i$ on top of our packing,
which can increase the height of the packing by up to $\opt$.

% Now, if we could pack the items in the boxes that we get from structural lemma $1$ in polynomial time we could solve \textsc{Partition} in polynomial time (see Appendix~\ref{section_hardness}), which is $\mathsf{NP}$-hard \cite{karp1972reducibility}.
% Also, note that 

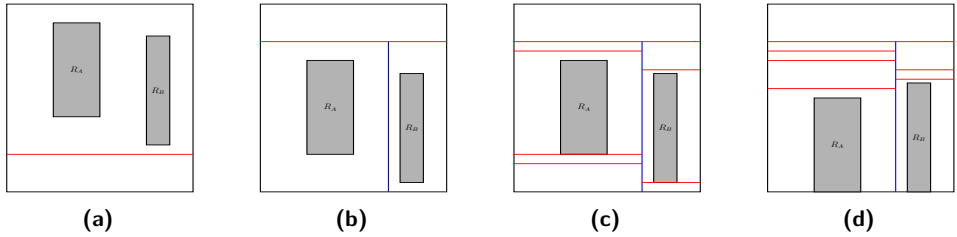
\begin{figure}[!htb]
\centering
\captionsetup[subfigure]{justification=centering}
\hspace{-0.85cm}
\begin{subfigure}[t]{0.23\linewidth}
\centering
\resizebox{2.5cm}{2.5cm}{
\begin{tikzpicture}
\draw [very thick] (0,0) rectangle (8,10);
\draw [fill=gray!60] (2,4) rectangle (4,9);
\draw [thick, red] (0,2) -- (8,2);
\draw [fill=gray!60] (6,2.5) rectangle (7,8.3);
\draw (3,6.5) node {\Large \textbf{$R_A$}};
\draw (6.5,5.4) node {\Large \textbf{$R_B$}};
\end{tikzpicture}}
\caption{}
\label{}
\end{subfigure}
\begin{subfigure}[t]{0.23\linewidth}
\centering
\resizebox{2.5cm}{2.5cm}{
\begin{tikzpicture}
\draw [very thick] (0,0) rectangle (8,10);
\draw [fill=gray!60] (2,2) rectangle (4,7);
\draw [thick, red] (0,8) -- (8,8);
\draw[thick, blue] (5.5,0) -- (5.5,8);
\draw [fill=gray!60] (6,0.5) rectangle (7,6.3);
\draw (3,4.5) node {\Large \textbf{$R_A$}};
\draw (6.5,3.4) node {\Large \textbf{$R_B$}};
\end{tikzpicture}}
\caption{}
\label{}
\end{subfigure}
\begin{subfigure}[t]{0.23\linewidth}
\centering
\resizebox{2.5cm}{2.5cm}{
\begin{tikzpicture}
\draw [very thick] (0,0) rectangle (8,10);
\draw [fill=gray!60] (2,2) rectangle (4,7);
\draw [thick, red] (0,8) -- (8,8);
\draw[thick, blue] (5.5,0) -- (5.5,8);
\draw[thick, red] (0,7.5)--(5.5,7.5); 
\draw[thick, red] (0,1.5)--(5.5,1.5);
\draw[thick, red] (0,2)--(5.5,2);
\draw [fill=gray!60] (6,0.5) rectangle (7,6.3);
\draw[thick, red] (5.5,6.5) --(8,6.5);
\draw[thick, red] (5.5,0.5) --(8,0.5);
\draw (3,4.5) node {\Large \textbf{$R_A$}};
\draw (6.5,3.4) node {\Large \textbf{$R_B$}};
\end{tikzpicture}}
\caption{}
\label{}
\end{subfigure}
\begin{subfigure}[t]{0.23\linewidth}
\centering
\resizebox{2.5cm}{2.5cm}{
\begin{tikzpicture}
\draw [very thick] (0,0) rectangle (8,10);
\draw [fill=gray!60] (2,0) rectangle (4,5);
\draw [thick, red] (0,8) -- (8,8);
\draw[thick, blue] (5.5,0) -- (5.5,8);
\draw[thick, red] (0,7.5)--(5.5,7.5);
\draw[thick, red] (0,7)--(5.5,7);
\draw[thick, red] (0,5.5)--(5.5,5.5);
\draw [fill=gray!60] (6,0) rectangle (7,5.8);
\draw[thick, red] (5.5,6.5) --(8,6.5);
\draw[thick, red] (5.5,6) --(8,6);
\draw (3,2.5) node {\Large \textbf{$R_A$}};
\draw (6.5,2.9) node {\Large \textbf{$R_B$}};
\end{tikzpicture}}
\caption{}
\label{}
\end{subfigure}
\caption{$R_A$ and $R_B$ are tall containers and by swapping the respective boxes (forming as a results of guillotine cuts) that contain them, they can be packed such that the bottoms of both containers intersect the bottom of the strip}
\label{fig_mirror_1}
\end{figure}

Therefore, we make our packing more robust to small errors when we pack the items into their
boxes.
% \aw{We will ensure that it is very easy to pack the tall items, and that there is an empty box of height $\opt/2$ and width $\Omega(W)$
% with which we can compensate small errors when we pack the vertical items.
% }
In our changed packing, the
tall items are \emph{bottom-left-flushed
}(see Figure~\ref{fig_3/2_approx}(\subref{fig_3/2_approx_f})), the remaining items are packed into $O_{\eps}(1)$
boxes, and there is one extra box $B^{*}$ of height $\opt/2$ and
width $\Omega_{\eps}(W)$ which is empty. 
We will use the extra box $B^{*}$ in order to compensate small errors when we pack the vertical items.

Formally, we say that in a packing, a set of items $\R'$ is \emph{bottom-left-flushed} if they are ordered non-increasingly by height and stacked next to
each other in this order within the strip $[0,\W]\times[0,\infty)$
starting at the left edge of the strip, such that the bottom edge
of each item $i\in\R'$ touches the line segment $[0,\W]\times\{0\}$.
% Formally, our first structural
%lemma states the following.
 We now state the modified structural lemma for our polynomial time $({3}/{2}+\eps)$-approximation algorithm formally.

\begin{lemma}[Structural lemma $2$]
\label{lem:structural-3/2}There exists a packing of the items $\R$
within $[0,\W]\times[0,(3/2+O(\eps))\opt)$ such that
\begin{itemize}
\item The items $\Rit$ are bottom-left-flushed,
\item There is a set $\B$ of $O_{\eps}(1)$ containers that are pairwise non-overlapping
and do not intersect the items in $\Rit$,
\item There is a partition of $\R\setminus\Rit=\bigcup_{B\in\B}\R_{B}$
such that for each $B\in\B$ the items in $\R_{B}$ can be placed
nicely into $B$, 
\item There is a container $B^{*}\in\B$ of height $\opt/2$ and width $\eps_1\W$ such that $\R_{B^{*}}=\emptyset$,
\item The items $\Rit$ and the containers $\B$ together are guillotine separable.
\end{itemize}
\end{lemma}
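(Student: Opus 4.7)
The plan is to start from the packing of Lemma~\ref{lem:structural}, which already places all items nicely in $O_{\eps}(1)$ guillotine separable containers within height $(1+O(\eps))\opt$, and to apply three guillotine-preserving transformations on top of it: (a) push every tall container to the base of the strip, (b) sort the tall items along the base by non-increasing height, and (c) carve out the empty box $B^{*}$ of dimensions $\eps_{1}\W \times \opt/2$. Together these cost $\opt/2$ of extra vertical space and yield the claimed height bound $(3/2+O(\eps))\opt$.

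Step (a) is where guillotine separability is used crucially. For each tall container $R \in \B_{tall}$, the cutting sequence of the Lemma~\ref{lem:structural} packing contains, at some level, a horizontal guillotine cut that splits a rectangle $P$ into $P_t$ (top) and $P_b$ (bottom) in such a way that $R$ ends up inside one of the two pieces after further cuts. If $R$ is not already flush with $y=0$, I would mirror across this cut, swapping the contents of $P_t$ and $P_b$ as illustrated in Figure~\ref{fig_mirror_1}. The cut itself is unchanged and the cutting sub-sequences inside $P_t$ and $P_b$ merely get relocated, so guillotine separability is preserved. Iterating this mirroring over the tall containers, processed bottom-up in the guillotine tree, brings every tall container to rest against $y=0$.

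For steps (b) and (c), I would insert a free horizontal band of height $\opt/2$ just above the tall region, shifting every non-tall container upward by $\opt/2$. The horizontal line at height $\max_{i\in\Rit}h_i \le \opt$ remains a valid guillotine cut separating the tall region below from the rest above. Inside the new band each tall container has at least $\opt/2$ of vertical clearance, which is enough room to lift a tall container, translate it horizontally, and lower it back onto the base. Hence I can reorder the tall containers left to right by non-increasing height. Finally, since there are only $O_{\eps}(1)$ non-tall containers above the band, choosing $\eps_{1}$ sufficiently small lets me find a vertical strip of width $\eps_{1}\W$ intersecting the band that is disjoint from every non-tall container; this strip becomes $B^{*}$ and is bounded on both sides by vertical guillotine cuts.

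The main obstacle is assembling the three transformations into a single coherent guillotine cutting sequence for the final packing. Each step is local and easy in isolation, but after the mirroring the cut sequence inherited from Lemma~\ref{lem:structural} is reshaped, and then the sorting and the insertion of $B^{*}$ must be compatible with the reshaped sequence. I expect the global cutting sequence to take the form: first a horizontal cut at height $\max_{i\in\Rit}h_i$ separates the tall region from everything above; then a vertical cut carves out $B^{*}$ on one side; the tall region below is then separated by single vertical cuts between adjacent bottom-left-flushed tall items; the non-tall region above inherits the cutting sequence of Lemma~\ref{lem:structural} after the post-mirroring relabeling. Showing that this global sequence is always realizable --- particularly handling the cases where a tall item shared a vertical strip with non-tall containers in the original Lemma~\ref{lem:structural} packing --- is where the bulk of the technical work will lie.
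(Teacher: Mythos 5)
Your step (a) is exactly the paper's Lemma~\ref{lem:structural-1} (swap the two sides of each horizontal guillotine cut whenever the top side contains a tall item), so that part is sound. The genuine gap is in steps (b)--(c). After you shift \emph{every} non-tall container up by $\opt/2$, the horizontal line at height $\max_{i\in\Rit}h_{i}$ is in general not a guillotine cut: a non-tall container living in a column not occupied by the tallest tall items (say a vertical container originally spanning $[0.3\opt,0.75\opt]$ beside a short tall container) straddles that line after the uniform shift, so there is no single horizontal cut separating ``the tall region'' from ``everything above,'' and the global cutting sequence you sketch (one horizontal cut, one vertical cut for $B^{*}$, vertical cuts between adjacent tall items, inherited sequence on top) does not exist in general. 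The paper avoids this by \emph{not} shifting uniformly: it finds, via an averaging argument over the $O_{\eps}(1)$ containers along the line $y=\opt/2$, a segment $\ell^{*}$ of width at least $\eps_1\W$ at a height $h^{*}\ge\opt/2$ (Lemma~\ref{lem_l^{*}}), lifts only the containers lying entirely above $h^{*}$, and then clears the band $[\opt/2,h^{*}]$ componentwise by the ``type $1$''/``type $2$'' container moves so that vertical cuts through the edges of the tall containers become available. Your existence argument for $B^{*}$ has a related weakness: $\eps_1$ is fixed by the lemma (it is tied to the number of containers), and ``only $O_{\eps}(1)$ non-tall containers above the band'' does not yield a free vertical strip, since constantly many containers can still project onto all of $[0,\W]$; nor can you rely on the clearance above a single tall container, which may be arbitrarily narrow. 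The gap/top-edge averaging along $y=\opt/2$ is really needed.

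The sorting step is also not established. ``Lift a tall container, translate it horizontally, and lower it back'' is a continuous-motion heuristic; what must be proved is that the \emph{final} configuration --- sorted tall items together with the non-tall containers sitting above and beside them --- is feasible and jointly guillotine separable. In the paper this is precisely where the work lies: one reorders whole vertical pieces $R_{1},\dots,R_{t'}$ (and later $R''_{1},\dots,R''_{t''}$) cut through the edges of the tall containers, which is only possible after the type-$1$/type-$2$ shifts have emptied the band $[\opt/2,h^{*}]$, and the item-level sorting needs the neighbor-swap argument plus the observation that the new vertical cut at $x=x_{1}$ (right edge of the shortest tall item higher than $h^{*}$) satisfies $x_{1}\le x_{0}$. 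You explicitly defer all of this as ``the bulk of the technical work,'' but that deferred part is the actual content of the lemma (its fifth bullet), and the one concrete claim you do make about the global cutting sequence is false; so the proposal, as written, does not prove the statement.
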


We now prove Lemma~\ref{lem:structural-3/2} in the following subsection.

\subsection{Proof of Structural Lemma 2}
\label{subsec:structural_2}

We start with the packing due to Lemma~\ref{lem:structural} and
transform it step by step.
To obtain our packing, we first argue that we can ensure that all tall items are placed on the bottom of the strip, 
i.e., their bottom edges touch the bottom edge of the strip. Here we use that the initial packing is guillotine separable.
Then we place the box $B^{*}$ as follows. Suppose that there are initially 
$C$ containers that cross the horizontal line with $y=\opt/2$. Note that $C=O_{\eps}(1)$ and $C\geq 1$ since at least one container is required to pack given non-zero number of items.
Then, by an averaging argument we can show that there is a line segment $l^{*}$ of length at least $\Omega(\frac{W}{C})$ which is the top edge of one of the containers $B$ in the packing at some height $h^{*}\geq \opt/2$. We push all the containers which completely lie above the line $y=h^{*}$ vertically upward by $\opt/2$ and this creates enough space to pack $B^{*}$ on top of $B$. After that, we take advantage of the gained extra space in order to
ensure that the tall items are bottom-left-flushed.

Now we describe the proof formally. First, we define some constants. 
Let $g(\delta,\eps)=O_{\eps}(1)$ denote an upper bound on the number of containers in the packing obtained using Lemma~\ref{lem:structural}, depending on $\eps$ and $\delta$. 
Let $\eps_1=\frac{1}{3g(\delta,\eps)}$, $\eps_2=\frac{\eps}{4\left|\B_{hor}\right|}$,  $\eps_3=\frac{\eps_1}{4\left|\B_{ver}\right|}$, $\eps_4=\mu$, $\eps_5=\frac{\eps_1\delta}{6}$, $\eps_6=\frac{\eps\delta}{6}$.% (see Appendix~\ref{section_constants}).

%Note that this packing exists independently of the running time of our algorithm.
 Our first goal is to make sure
that the tall items are all placed on the bottom of the strip $[0,\W]\times[0,\infty)$.
For this, we observe the following: suppose that in the guillotine
cutting sequence a horizontal cut is placed. This cut separates the
current rectangular piece $R$ into two smaller pieces $R_{1}$ and
$R_{2}$. Suppose that $R_{1}$ lies on top of $R_{2}$. Then only
one of the two pieces $R_{1},R_{2}$ can contain a tall item. Also,
we obtain an alternative guillotine separable packing if we swap $R_{1}$
and $R_{2}$---together with the items contained in them---within
$R$. We perform this swap if $R_{1}$ contains a tall item. We apply
this operation to each horizontal cut in the guillotine cutting sequence.
As a result, we obtain a new packing in which all tall items are placed
on the bottom of the strip (but possibly not yet bottom-left-flushed) as shown in Fig \ref{fig_mirror_1}.
\begin{lemma}
\label{lem:structural-1}There exists a set $\B$ of $O_{\eps}(1)$
pairwise non-overlapping and guillotine separable boxes that are all
placed inside $[0,\W]\times[0,(1+16\eps)\opt)$ and a partition
$\R=\bigcup_{B\in\B}\R_{B}$ such that for each $B\in\B$ the items
in $\R_{B}$ can be placed nicely into $B$.
% (possibly except for some items in  $I_{hor}$ and $I_{small}$ which are packed on top and may not have a nice packing). 
Also, for each box $B\in\B$ with $I_B \cap \Rit \ne \emptyset$
% containing an item from $\Rit$ 
we have that the bottom edge of $B$
intersects the line segment $[0,\W]\times\{0\}$. 
\end{lemma}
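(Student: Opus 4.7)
The plan is to take the packing of Lemma~\ref{lem:structural} and push every tall item to the bottom of the strip by a sequence of swaps along horizontal guillotine cuts, while preserving both the nice internal packing of every container and the guillotine separability of the whole arrangement. Fix a guillotine cutting sequence that separates the containers in $\B$ and the items inside them, and encode it as a rooted binary tree $T$: the root stores $[0,W]\times[0,(1+16\eps)\opt)$, and each internal node $v$ stores a rectangular piece $R_v$ together with a horizontal or vertical cut that splits $R_v$ into the two pieces stored at its children $v_1,v_2$, with $v_1$ denoting the upper child whenever the cut is horizontal.

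The key claim is that at every internal node $v$ whose cut is horizontal, at most one of its two children contains an item of $\Rit$. Any piece containing a tall item must have height strictly greater than $\opt/2$, since each tall item does. One then has to verify that in the packing of Lemma~\ref{lem:structural} the sub-arrangement carrying the tall items still fits in a vertical slab of height at most $\opt$; the $16\eps\opt$ of slack added in Section~\ref{subsec_proof_of_struc_lem} is used only to round heights of tall items, to apply the resource augmentation of Lemma~\ref{lem:hor-boxes} inside the horizontal-item containers, and to stack the extra small- and medium-item bands on top of the packing, none of which allows two tall items to be stacked so as to be separated by a single horizontal guillotine cut of the tall-item sub-arrangement.

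Given the claim, traverse $T$ top-down: at every horizontal-cut node $v$ whose upper child $v_1$ contains a tall item (by the claim, the lower child $v_2$ then contains none), swap the two subtrees of $v$. In the strip this is the rigid translation of every object in $R_{v_1}$ downwards by $h(R_{v_2})$ and every object in $R_{v_2}$ upwards by $h(R_{v_1})$. Rigid translations preserve the nice packing of each container in the sense of Definition~\ref{def:structured-boxes}, and reordering the two children of a cut node preserves guillotine separability; moreover the swap at $v$ is independent of any further swaps performed inside either subtree. After all such swaps, at every horizontal cut the tall items (if any) sit entirely in the bottom child. A straightforward downward induction along each root-to-leaf path of $T$ then shows that every tall item lies in the bottom-most subpiece at every level of $T$, so its bottom edge lies on $[0,W]\times\{0\}$. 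Since in a nice packing of items in $\Rit\cup\Rve$ the items are placed side by side flush with the bottom of their container, the bottom edge of every container $B$ with $\R_B\cap\Rit\ne\emptyset$ then coincides with the base of the strip, yielding the last condition of the lemma.

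The main obstacle is establishing the key claim rigorously: a careless reading of Lemma~\ref{lem:structural} leaves room for two tall items of height slightly above $\opt/2$ to coexist vertically inside the $(1+16\eps)\opt$-tall strip, which would defeat the swap. The remedy is to keep track, throughout the proof of Lemma~\ref{lem:structural}, of precisely where the extra $16\eps\opt$ of vertical space is spent, and to check that the tall-item sub-arrangement never exceeds $\opt$ in vertical extent. Once this invariant is in place, the recursive swap procedure described above yields the lemma; everything else is bookkeeping on the cutting tree $T$.
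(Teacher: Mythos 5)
Your overall route coincides with the paper's: take the packing of Lemma~\ref{lem:structural}, walk through its guillotine cutting sequence, and at every horizontal cut swap the upper and lower pieces whenever the upper one contains a tall item, then induct down the cutting tree to conclude that every tall item (and hence every box containing one) reaches the bottom edge of the strip; swapping the two pieces of a cut indeed preserves guillotine separability and the nice packings inside the containers, exactly as in Section~\ref{subsec:structural_2} and the appendix proof. (A small remark: once some tall item $i\in\R_B$ has $bottom(i)=0$ you immediately get $bottom(B)=0$, so the final appeal to items being ``flush with the bottom of their container'' is unnecessary.)

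The genuine gap is in how you propose to establish the key claim that a horizontal cut never has tall items on both sides. You reduce it to the invariant that the tall items stay inside a vertical slab of height at most $\opt$, and you assert that the $16\eps\opt$ of slack is spent only on rounding, resource augmentation in horizontal boxes, and the small/medium bands on top. That invariant is false for the packing of Lemma~\ref{lem:structural}: rounding the heights of tall/large/vertical items up to multiples of $\delta^{2}\opt$ pushes items upward by up to $\delta\opt$, and the surgery on each $\Lc$-compartment shifts every item of its vertical leg up by $\eps h(L)$, so after Lemma~\ref{lem:rearrange-2} the tall items occupy $[0,(1+2\eps)\opt]$; inside a slab of that height two items of height barely above $\opt/2$ can be vertically disjoint, so the check you plan to perform would fail and the crucial claim would remain unproven. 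What actually justifies the claim is heredity through the construction rather than a height bound on the final packing: in the height-$\opt$ arrangement underlying Lemma~\ref{lem:rearrange} any two tall items overlap in their $y$-projections, so no horizontal cut separates two of them, and each subsequent transformation (height rounding, the $\Lc$-compartment processing, the piece reordering of Lemma~\ref{lem:hor-boxes-1}) only translates items together with the cuts of their piece or permutes side-by-side pieces created by vertical cuts, so no new horizontal separation between two tall items is ever introduced. The paper states this claim tersely as well, but your proposed substitute invariant is one that a careful execution would discover to be false, at which point you would be forced into precisely this provenance argument.
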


\begin{figure}[tb]
\centering
%Fig_3.1
\hspace*{-0.8em}
\captionsetup[subfigure]{justification=centering}
\begin{subfigure}[b]{0.28\linewidth}
\centering
\resizebox{3cm}{3cm}{
\begin{tikzpicture}
\draw [ultra thick](0,0) rectangle (8,10);
%\draw [dashed] (0,5) -- (8,5);
%\draw [dashed,red] (0,6.6) -- (8,6.6);
\draw [line width=1 mm,blue] (0,6.6) -- (1.85,6.6);

\draw [ultra thick](0,0) rectangle (1.8,6.5);
\draw [ultra thick](1.85,0) rectangle (2.6,5.5);
\draw [ultra thick,fill=gray!20](2.65,0) rectangle (3.6,1.5);
\draw [ultra thick,fill=gray!20](2.65,1.5) rectangle (3.5,3.2);
\draw [ultra thick,fill=gray!20](3.65,0) rectangle (5.15,3.3);
\draw [ultra thick,fill=gray!20](4.1,3.4) rectangle (5.15,5.1);
\draw [ultra thick,fill=gray!20](4.1,5.1) rectangle (4.7,6);
\draw [ultra thick,fill=gray!20](4.7,5.1) rectangle (5.15,6.6);
\draw [ultra thick,fill=gray!20](2.65,3.3) rectangle (3.1,5.2);
\draw [ultra thick,fill=gray!20](3.1,3.3) rectangle (3.6,5.6);
\draw [ultra thick,fill=gray!20](3.65,3.3) rectangle (4.1,6.8);
\draw [ultra thick](5.2,0) rectangle (6.1,7.1);
\draw [ultra thick](7.15,0) rectangle (8,8);
\draw [ultra thick,fill=gray!20](6.1,0) rectangle (7.1,4);
\draw [ultra thick,fill=gray!20](6.1,4) rectangle (6.5,6);
\draw [ultra thick,fill=gray!20](6.5,4) rectangle (7.1,6.9);

\draw [ultra thick,fill=gray!20](0.1,8.2) rectangle (5,9);
\draw [ultra thick,fill=gray!20](0.1,9.05) rectangle (5.9,9.9);
\draw [ultra thick,fill=gray!20](6.1,8.3) rectangle (8,9.8);
\draw [ultra thick,fill=gray!20](0.1,6.75) rectangle (1.7,8.1);
\draw [ultra thick,fill=gray!20](1.85,5.65) rectangle (3.6,6.3);
\draw [ultra thick,fill=gray!20](1.85,6.3) rectangle (3.2,8.1);
\draw [ultra thick,fill=gray!20](3.2,6.3) rectangle (3.6,7.9);
\draw [ultra thick,fill=gray!20](5.25,7.1) rectangle (6.05,9);
\draw [ultra thick,fill=gray!20](6.15,6) rectangle (6.5,8.1);
\draw [ultra thick,fill=gray!20](6.55,6.9) rectangle (7.1,8.1);
\draw [ultra thick,fill=gray!20](3.65,6.85) rectangle (5.1,7.5);
\draw [ultra thick,fill=gray!20](3.8,7.55) rectangle (5,8.15);

%rectangles
\draw [ultra thick, fill=gray!60] (0,0) rectangle (0.85,6.45);
\draw [ultra thick, fill=gray!60] (0.9,0) rectangle (1.3,6.1);
\draw [ultra thick, fill=gray!60] (1.35,0) rectangle (1.75,5.9);
\draw [ultra thick, fill=gray!60] (1.85,0) rectangle (2.15,5.45);
\draw [ultra thick, fill=gray!60] (2.2,0) rectangle (2.55,5.1);
\draw [ultra thick, fill=gray!60] (5.2,0) rectangle (5.45,7.05);
\draw [ultra thick, fill=gray!60] (5.5,0) rectangle (6,6);
\draw [ultra thick, fill=gray!60] (7.15,0) rectangle (7.6,7.95);
\draw [ultra thick, fill=gray!60] (7.65,0) rectangle (7.95,7);

%nodes
\draw (-0.75,5) node {\Large $\boldsymbol{\frac{1}{2}\textbf{OPT}}$};
\draw (-0.75,10) node {\Large $\boldsymbol{\textbf{OPT}}$};
\draw (-0.75,6.5) node {\Large $\boldsymbol{h^{*}}$};
%\draw (-0.75,6.66) node {\Large \textbf{$\frac{2}{3}OPT$}};
%\draw (7.2,12.5) node {\Large \textbf{$S_1$}};
%\draw (4,17.75) node {\Large \textbf{$S_2$}};
\draw [dashed] (0,5) -- (8,5);
%%%
%\draw[ultra thick,  red] (0,13.7) -- (2,13.7) -- (2,12.9)  -- (3.7,12.9) --(3.7,12.6)-- (4,12.6) -- (4,12.7) -- (4.5,12.7) -- (4.5,14.6) -- (6.4,14.6) -- (6.4,15.2) -- (8,15.2)  ;
%rectangles

%\draw [ultra thick, fill=gray!60] (,) rectangle (,); 
\end{tikzpicture}}
\caption{\hspace*{-1.5em}}
\label{fig_3/2_approx_a}
\end{subfigure}
%Fig_3.2
\hspace*{1.9em}
\begin{subfigure}[b]{0.3\linewidth}
\resizebox{3cm}{4.5cm}{
\begin{tikzpicture}
\draw [ultra thick](0,0) rectangle (8,15);
%\draw [dashed] (0,5) -- (8,5);
%\draw [dashed] (0,10) -- (8,10);
%\draw [dashed,red] (0,6.6) -- (8,6.6);
%\draw [red] (0,11.6) -- (8,11.6);
%

\draw [ultra thick](0,0) rectangle (1.8,6.5);
\draw [ultra thick](1.85,0) rectangle (2.6,5.5);
\draw [ultra thick,fill=gray!20](2.65,0) rectangle (3.6,1.5);
\draw [ultra thick,fill=gray!20](2.65,1.5) rectangle (3.5,3.2);
\draw [ultra thick,fill=gray!20](3.65,0) rectangle (5.15,3.3);
\draw [ultra thick,fill=gray!20](4.1,3.4) rectangle (5.15,5.1);
\draw [ultra thick,fill=gray!20](4.1,5.1) rectangle (4.7,6);
\draw [ultra thick,fill=gray!20](4.7,5.1) rectangle (5.15,6.6);
\draw [ultra thick,fill=gray!20](2.65,3.3) rectangle (3.1,5.2);
\draw [ultra thick,fill=gray!20](3.1,3.3) rectangle (3.6,5.6);
\draw [ultra thick,fill=gray!20](3.65,3.3) rectangle (4.1,6.8);
\draw [ultra thick](5.2,0) rectangle (6.1,7.1);
\draw [ultra thick](7.15,0) rectangle (8,8);
\draw [ultra thick,fill=gray!20](6.1,0) rectangle (7.1,4);
\draw [ultra thick,fill=gray!20](6.1,4) rectangle (6.5,6);
\draw [ultra thick,fill=gray!20](6.5,4) rectangle (7.1,6.9);

\draw [ultra thick,fill=gray!20](0.1,13.2) rectangle (5,14);
\draw [ultra thick,fill=gray!20](0.1,14.05) rectangle (5.9,14.9);
\draw [ultra thick,fill=gray!20](6.1,13.3) rectangle (8,14.8);
\draw [ultra thick,fill=gray!20](0.1,11.65) rectangle (1.7,13.1);
\draw [ultra thick,fill=gray!20](1.85,5.65) rectangle (3.6,6.3);
\draw [ultra thick,fill=gray!20](1.85,6.3) rectangle (3.2,8.1);
\draw [ultra thick,fill=gray!20](3.2,6.3) rectangle (3.6,7.9);
\draw [ultra thick,fill=gray!20](5.25,12.1) rectangle (6.05,14);
\draw [ultra thick,fill=gray!20](6.15,6) rectangle (6.5,8.1);
\draw [ultra thick,fill=gray!20](6.55,11.9) rectangle (7.1,13.1);
\draw [ultra thick,fill=gray!20](3.65,11.85) rectangle (5.1,12.5);
\draw [ultra thick,fill=gray!20](3.8,12.55) rectangle (5,13.15);

%rectangles
\draw [ultra thick, fill=gray!60] (0,0) rectangle (0.85,6.45);
\draw [ultra thick, fill=gray!60] (0.9,0) rectangle (1.3,6.1);
\draw [ultra thick, fill=gray!60] (1.35,0) rectangle (1.75,5.9);
\draw [ultra thick, fill=gray!60] (,) rectangle (,);
\draw [ultra thick, fill=gray!60] (,) rectangle (,);
\draw [ultra thick, fill=gray!60] (1.85,0) rectangle (2.15,5.45);
\draw [ultra thick, fill=gray!60] (2.2,0) rectangle (2.55,5.1);
\draw [ultra thick, fill=gray!60] (5.2,0) rectangle (5.45,7.05);
\draw [ultra thick, fill=gray!60] (5.5,0) rectangle (6,6);
\draw [ultra thick, fill=gray!60] (7.15,0) rectangle (7.6,7.95);
\draw [ultra thick, fill=gray!60] (7.65,0) rectangle (7.95,7);

%S_1 and S_2
\filldraw [ultra thick,color=red!60,pattern=north west lines,pattern color =green] (0.2,6.5) rectangle (1.6,11.5) ;
%\filldraw [ultra thick,color=red!60,pattern=north east lines,pattern color =blue] (0,15) rectangle (8,16) ;

%nodes
\draw (-0.75,5) node {\Large $\boldsymbol{\frac{1}{2}\textbf{OPT}}$};
\draw (-0.75,10) node {\Large $\boldsymbol{\textbf{OPT}}$};
\draw (-0.75,6.5) node {\Large $\boldsymbol{h^{*}}$};
\draw (-0.75,15) node {\Large $\boldsymbol{\frac{3}{2}\textbf{OPT}}$};
%\draw (-1.1,11.5) node {\Large \textbf{$h+\frac{1}{2}OPT$}};
%\draw (4,15.5) node {\Large \textbf{$S_2$}};
\draw (0.9,9) node {\Large $\boldsymbol{B^{*}}$};

%\draw (7.2,12.5) node {\Large \textbf{$S_1$}};
%\draw (4,17.75) node {\Large \textbf{$S_2$}};

%%%
%\draw[ultra thick,  red] (0,13.7) -- (2,13.7) -- (2,12.9)  -- (3.7,12.9) --(3.7,12.6)-- (4,12.6) -- (4,12.7) -- (4.5,12.7) -- (4.5,14.6) -- (6.4,14.6) -- (6.4,15.2) -- (8,15.2)  ;
%rectangles
\draw [dashed] (0,5) -- (8,5);
\draw [dashed] (0,10) -- (8,10);
\draw [dashed,red] (0,6.5) -- (8,6.5);
\draw [red] (0,11.5) -- (8,11.5);
%\draw [ultra thick, fill=gray!60] (6.05,0) rectangle (6.7,6.8); 
\end{tikzpicture}}
\caption{\hspace*{1.7em}}
\label{fig_3/2_approx_b}
\end{subfigure}
%Fig_3.3
\begin{subfigure}[b]{0.3\linewidth}
\resizebox{3cm}{4.5cm}{
\begin{tikzpicture}
 \draw [ultra thick](0,0) rectangle (8,15);
%\draw [dashed] (0,5) -- (8,5);
%\draw [dashed] (0,10) -- (8,10);
%\draw [dashed,red] (0,6.55) -- (8,6.55);
%\draw [red] (0,11.6) -- (8,11.6);
%

\draw [ultra thick](0,0) rectangle (1.8,6.5);
\draw [ultra thick](1.85,0) rectangle (2.6,5.5);
\draw [ultra thick,fill=gray!20](2.65,0) rectangle (3.6,1.5);
\draw [ultra thick,fill=gray!20](2.65,1.5) rectangle (3.5,3.2);
\draw [ultra thick,fill=gray!20](3.65,0) rectangle (5.15,3.3);
\draw [ultra thick,fill=gray!20](4.1,3.4) rectangle (5.15,5.1);
\draw [ultra thick,fill=Cerulean](4.1,6.65) rectangle (4.7,7.55);
\draw [ultra thick,fill=Cerulean](4.7,6.65) rectangle (5.15,8.15);
\draw [ultra thick,fill=gray!20](2.65,3.3) rectangle (3.1,5.2);
\draw [ultra thick,fill=gray!20](3.1,3.3) rectangle (3.6,5.6);
%\draw [ultra thick,fill=gray!20](3.65,6.55) rectangle (4.1,10.05);
\draw [ultra thick,fill=gray!20](3.65,3.3) rectangle (4.1,6.8);
\draw [ultra thick](5.2,0) rectangle (6.1,7.1);
\draw [ultra thick](7.15,0) rectangle (8,8);
\draw [ultra thick,fill=gray!20](6.1,0) rectangle (7.1,4);
\draw [ultra thick,fill=gray!20](6.1,4) rectangle (6.5,6);
%\draw [ultra thick,fill=gray!20](6.5,6.55) rectangle (7.1,9.45);
\draw [ultra thick,fill=gray!20](6.5,4) rectangle (7.1,6.9);
\draw [ultra thick,fill=gray!20](0.1,13.2) rectangle (5,14);
\draw [ultra thick,fill=gray!20](0.1,14.05) rectangle (5.9,14.9);
\draw [ultra thick,fill=gray!20](6.1,13.3) rectangle (8,14.8);
\draw [ultra thick,fill=gray!20](0.1,11.65) rectangle (1.7,13.1);
\draw [ultra thick,fill=Cerulean](1.85,7.2) rectangle (3.6,7.85);
\draw [ultra thick,fill=Cerulean](1.85,7.85) rectangle (3.2,9.65);
\draw [ultra thick,fill=Cerulean](3.2,7.85) rectangle (3.6,9.45);
\draw [ultra thick,fill=gray!20](5.25,12.1) rectangle (6.05,14);
\draw [ultra thick,fill=Cerulean](6.15,7.55) rectangle (6.5,9.65);
\draw [ultra thick,fill=gray!20](6.55,11.9) rectangle (7.1,13.1);
\draw [ultra thick,fill=gray!20](3.65,11.85) rectangle (5.1,12.5);
\draw [ultra thick,fill=gray!20](3.8,12.55) rectangle (5,13.15);

%rectangles
\draw [ultra thick, fill=gray!60] (0,0) rectangle (0.85,6.45);
\draw [ultra thick, fill=gray!60] (0.9,0) rectangle (1.3,6.1);
\draw [ultra thick, fill=gray!60] (1.35,0) rectangle (1.75,5.9);
\draw [ultra thick, fill=gray!60] (,) rectangle (,);
\draw [ultra thick, fill=gray!60] (,) rectangle (,);
\draw [ultra thick, fill=gray!60] (1.85,0) rectangle (2.15,5.45);
\draw [ultra thick, fill=gray!60] (2.2,0) rectangle (2.55,5.1);
\draw [ultra thick, fill=gray!60] (5.2,0) rectangle (5.45,7.05);
\draw [ultra thick, fill=gray!60] (5.5,0) rectangle (6,6);
\draw [ultra thick, fill=gray!60] (7.15,0) rectangle (7.6,7.95);
\draw [ultra thick, fill=gray!60] (7.65,0) rectangle (7.95,7);

%S_1 and S_2
\filldraw [ultra thick,color=red!60,pattern=north west lines,pattern color =green] (0.2,6.5) rectangle (1.6,11.5) ;
%\filldraw [ultra thick,color=red!60,pattern=north east lines,pattern color =blue] (0,15) rectangle (8,16) ;

%nodes
\draw (-0.75,5) node {\Large $\boldsymbol{\frac{1}{2}\textbf{OPT}}$};
\draw (-0.75,10) node {\Large $\boldsymbol{\textbf{OPT}}$};
\draw (-0.75,6.5) node {\Large $\boldsymbol{h^{*}}$};
\draw (-0.75,15) node {\Large $\boldsymbol{\frac{3}{2}\textbf{OPT}}$};
%\draw (-1.1,11.5) node {\Large \textbf{$h+\frac{1}{2}OPT$}};
%\draw (4,15.5) node {\Large \textbf{$S_2$}};
\draw (0.9,9) node {\Large $\boldsymbol{B^{*}}$};
%\draw (4,15.5) node {\Large \textbf{$S_2$}};

%\draw (7.2,12.5) node {\Large \textbf{$S_1$}};
%\draw (4,17.75) node {\Large \textbf{$S_2$}};

%%%
%\draw[ultra thick,  red] (0,13.7) -- (2,13.7) -- (2,12.9)  -- (3.7,12.9) --(3.7,12.6)-- (4,12.6) -- (4,12.7) -- (4.5,12.7) -- (4.5,14.6) -- (6.4,14.6) -- (6.4,15.2) -- (8,15.2)  ;
%rectangles
\draw [dashed] (0,5) -- (8,5);
\draw [dashed] (0,10) -- (8,10);
\draw [dashed,red] (0,6.5) -- (8,6.5);
\draw [red] (0,11.5) -- (8,11.5);
%\draw [ultra thick, fill=gray!60] (6.05,0) rectangle (6.7,6.8);
\end{tikzpicture}}
\caption{\hspace*{1.4em}}
\label{fig_3/2_approx_c}
\end{subfigure}
\newline
%Fig_3.4
\begin{subfigure}[b]{0.3\linewidth}
\resizebox{3cm}{4.5cm}{
\begin{tikzpicture}
 \draw [ultra thick](0,0) rectangle (8,15);
%\draw [dashed] (0,5) -- (8,5);
%\draw [dashed] (0,10) -- (8,10);
%\draw [dashed,red] (0,6.55) -- (8,6.55);
%\draw [red] (0,11.6) -- (8,11.6);
%

\draw [ultra thick](0,0) rectangle (1.8,6.5);
\draw [ultra thick](1.85,0) rectangle (2.6,5.5);
\draw [ultra thick,fill=gray!20](2.65,0) rectangle (3.6,1.5);
\draw [ultra thick,fill=gray!20](2.65,1.5) rectangle (3.5,3.2);
\draw [ultra thick,fill=gray!20](3.65,0) rectangle (5.15,3.3);
\draw [ultra thick,fill=gray!20](4.1,3.4) rectangle (5.15,5.1);
\draw [ultra thick,fill=Cerulean](4.1,6.65) rectangle (4.7,7.55);
\draw [ultra thick,fill=Cerulean](4.7,6.65) rectangle (5.15,8.15);
\draw [ultra thick,fill=gray!20](2.65,3.3) rectangle (3.1,5.2);
\draw [ultra thick,fill=gray!20](3.1,3.3) rectangle (3.6,5.6);
\draw [ultra thick,fill=Dandelion](3.65,6.55) rectangle (4.1,10.05);
\draw [ultra thick](5.2,0) rectangle (6.1,7.1);
\draw [ultra thick](7.15,0) rectangle (8,8);
\draw [ultra thick,fill=gray!20](6.1,0) rectangle (7.1,4);
\draw [ultra thick,fill=gray!20](6.1,4) rectangle (6.5,6);
\draw [ultra thick,fill=Dandelion](6.5,6.55) rectangle (7.1,9.45);

\draw [ultra thick,fill=gray!20](0.1,13.2) rectangle (5,14);
\draw [ultra thick,fill=gray!20](0.1,14.05) rectangle (5.9,14.9);
\draw [ultra thick,fill=gray!20](6.1,13.3) rectangle (8,14.8);
\draw [ultra thick,fill=gray!20](0.1,11.65) rectangle (1.7,13.1);
\draw [ultra thick,fill=Cerulean](1.85,7.2) rectangle (3.6,7.85);
\draw [ultra thick,fill=Cerulean](1.85,7.85) rectangle (3.2,9.65);
\draw [ultra thick,fill=Cerulean](3.2,7.85) rectangle (3.6,9.45);
\draw [ultra thick,fill=gray!20](5.25,12.1) rectangle (6.05,14);
\draw [ultra thick,fill=Cerulean](6.15,7.55) rectangle (6.5,9.65);
\draw [ultra thick,fill=gray!20](6.55,11.9) rectangle (7.1,13.1);
\draw [ultra thick,fill=gray!20](3.65,11.85) rectangle (5.1,12.5);
\draw [ultra thick,fill=gray!20](3.8,12.55) rectangle (5,13.15);

%rectangles
\draw [ultra thick, fill=gray!60] (0,0) rectangle (0.85,6.45);
\draw [ultra thick, fill=gray!60] (0.9,0) rectangle (1.3,6.1);
\draw [ultra thick, fill=gray!60] (1.35,0) rectangle (1.75,5.9);
\draw [ultra thick, fill=gray!60] (,) rectangle (,);
\draw [ultra thick, fill=gray!60] (,) rectangle (,);
\draw [ultra thick, fill=gray!60] (1.85,0) rectangle (2.15,5.45);
\draw [ultra thick, fill=gray!60] (2.2,0) rectangle (2.55,5.1);
\draw [ultra thick, fill=gray!60] (5.2,0) rectangle (5.45,7.05);
\draw [ultra thick, fill=gray!60] (5.5,0) rectangle (6,6);
\draw [ultra thick, fill=gray!60] (7.15,0) rectangle (7.6,7.95);
\draw [ultra thick, fill=gray!60] (7.65,0) rectangle (7.95,7);

%S_1 and S_2
\filldraw [ultra thick,color=red!60,pattern=north west lines,pattern color =green] (0.2,6.5) rectangle (1.6,11.5) ;
%\filldraw [ultra thick,color=red!60,pattern=north east lines,pattern color =blue] (0,15) rectangle (8,16) ;

%nodes
\draw (-0.75,5) node {\Large $\boldsymbol{\frac{1}{2}\textbf{OPT}}$};
\draw (-0.75,10) node {\Large $\boldsymbol{\textbf{OPT}}$};
\draw (-0.75,6.5) node {\Large $\boldsymbol{h^{*}}$};
\draw (-0.75,15) node {\Large $\boldsymbol{\frac{3}{2}\textbf{OPT}}$};
%\draw (-1.1,11.5) node {\Large \textbf{$h+\frac{1}{2}OPT$}};
%\draw (4,15.5) node {\Large \textbf{$S_2$}};
\draw (0.9,9) node {\Large $\boldsymbol{B^{*}}$};
\draw (2,-0.75) node {\Large $\boldsymbol{}$};
%\draw (7.2,12.5) node {\Large \textbf{$S_1$}};
%\draw (4,17.75) node {\Large \textbf{$S_2$}};

%%%
%\draw[ultra thick,  red] (0,13.7) -- (2,13.7) -- (2,12.9)  -- (3.7,12.9) --(3.7,12.6)-- (4,12.6) -- (4,12.7) -- (4.5,12.7) -- (4.5,14.6) -- (6.4,14.6) -- (6.4,15.2) -- (8,15.2)  ;
%rectangles
\draw [dashed] (0,5) -- (8,5);
\draw [dashed] (0,10) -- (8,10);
\draw [dashed,red] (0,6.5) -- (8,6.5);
\draw [red] (0,11.5) -- (8,11.5);
%\draw [ultra thick, fill=gray!60] (6.05,0) rectangle (6.7,6.8);
\end{tikzpicture}}
\vspace{-0.5em}
\caption{\hspace*{1.7em}}
\label{fig_3/2_approx_d}
\end{subfigure}
%Fig_3.5
\begin{subfigure}[b]{0.3\linewidth}
\resizebox{3cm}{4.5cm}{
\begin{tikzpicture}
 \draw [ultra thick](0,0) rectangle (8,15);
%\draw [dashed] (0,5) -- (8,5);
%\draw [dashed] (0,10) -- (8,10);
%\draw [dashed,red] (0,6.55) -- (8,6.55);
%\draw [red] (0,11.6) -- (8,11.6);
%\draw [dashed,blue] (1.85,-0.5) -- (1.85,11.55);
%

%\draw [ultra thick](0,0) rectangle (1.8,6.5);
%\draw [ultra thick](1.85,0) rectangle (2.6,5.5);
\draw [ultra thick](1.85,0) rectangle (3.65,6.5);
\draw [ultra thick](3.7,0) rectangle (4.45,5.5);
\draw [ultra thick](0.9,0) rectangle (1.8,7.1);
\draw [ultra thick](0,0) rectangle (0.85,8);

\draw [ultra thick,fill=gray!20](4.45,0) rectangle (5.4,1.5);
\draw [ultra thick,fill=gray!20](4.45,1.5) rectangle (5.3,3.2);
\draw [ultra thick,fill=gray!20](5.45,0) rectangle (6.95,3.3);
\draw [ultra thick,fill=gray!20](5.9,3.4) rectangle (6.95,5.1);
\draw [ultra thick,fill=Cerulean](5.9,6.65) rectangle (6.5,7.55);
\draw [ultra thick,fill=Cerulean](6.5,6.65) rectangle (6.95,8.15);
\draw [ultra thick,fill=gray!20](4.45,3.3) rectangle (4.9,5.2);
\draw [ultra thick,fill=gray!20](4.9,3.3) rectangle (5.4,5.6);
\draw [ultra thick,fill=Dandelion](5.45,6.55) rectangle (5.9,10.05);
%\draw [ultra thick](5.2,0) rectangle (6.1,7.1);
%\draw [ultra thick](7.15,0) rectangle (8,8);

\draw [ultra thick,fill=gray!20](7,0) rectangle (8,4);
\draw [ultra thick,fill=gray!20](7,4) rectangle (7.4,6);
\draw [ultra thick,fill=Dandelion](7.4,6.55) rectangle (8,9.45);

\draw [ultra thick,fill=gray!20](0.1,13.2) rectangle (5,14);
\draw [ultra thick,fill=gray!20](0.1,14.05) rectangle (5.9,14.9);
\draw [ultra thick,fill=gray!20](6.1,13.3) rectangle (8,14.8);
\draw [ultra thick,fill=gray!20](0.1,11.65) rectangle (1.7,13.1);
\draw [ultra thick,fill=Cerulean](3.65,7.2) rectangle (5.4,7.85);
\draw [ultra thick,fill=Cerulean](3.65,7.85) rectangle (5,9.65);
\draw [ultra thick,fill=Cerulean](5,7.85) rectangle (5.4,9.45);
\draw [ultra thick,fill=gray!20](5.25,12.1) rectangle (6.05,14);
\draw [ultra thick,fill=Cerulean](7.05,7.55) rectangle (7.4,9.65);
\draw [ultra thick,fill=gray!20](6.55,11.9) rectangle (7.1,13.1);
\draw [ultra thick,fill=gray!20](3.65,11.85) rectangle (5.1,12.5);
\draw [ultra thick,fill=gray!20](3.8,12.55) rectangle (5,13.15);

%rectangles
%\draw [ultra thick, fill=gray!60] (0,0) rectangle (0.85,6.45);
%\draw [ultra thick, fill=gray!60] (0.9,0) rectangle (1.3,6.1);
%\draw [ultra thick, fill=gray!60] (1.35,0) rectangle (1.75,5.9);
%\draw [ultra thick, fill=gray!60] (,) rectangle (,);
%\draw [ultra thick, fill=gray!60] (,) rectangle (,);
%\draw [ultra thick, fill=gray!60] (1.85,0) rectangle (2.15,5.45);
%\draw [ultra thick, fill=gray!60] (2.2,0) rectangle (2.55,5.1);
%\draw [ultra thick, fill=gray!60] (5.2,0) rectangle (5.45,7.05);
%\draw [ultra thick, fill=gray!60] (5.5,0) rectangle (6,6);
%\draw [ultra thick, fill=gray!60] (7.15,0) rectangle (7.6,7.95);
%\draw [ultra thick, fill=gray!60] (0,0) rectangle (0.45,7.95);
%\draw [ultra thick, fill=gray!60] (0.5,0) rectangle (0.75,7.05);
%\draw [ultra thick, fill=gray!60] (0.8,0) rectangle (1.1,7);
%\draw [ultra thick, fill=gray!60] (1.15,0) rectangle (2,6.45);
%\draw [ultra thick, fill=gray!60] (2.05,0) rectangle (2.45,6.1);
%\draw [ultra thick, fill=gray!60] (2.5,0) rectangle (3,6);
%\draw [ultra thick, fill=gray!60] (3.05,0) rectangle (3.45,5.9);
%\draw [ultra thick, fill=gray!60] (3.5,0) rectangle (3.8,5.45);
%\draw [ultra thick, fill=gray!60] (3.85,0) rectangle (4.2,5.1);
\draw [ultra thick, fill=gray!60] (1.9,0) rectangle (2.75,6.45);
\draw [ultra thick, fill=gray!60] (2.8,0) rectangle (3.2,6.1);
\draw [ultra thick, fill=gray!60] (3.25,0) rectangle (3.65,5.9);
\draw [ultra thick, fill=gray!60] (,) rectangle (,);
\draw [ultra thick, fill=gray!60] (,) rectangle (,);
\draw [ultra thick, fill=gray!60] (3.75,0) rectangle (4.05,5.45);
\draw [ultra thick, fill=gray!60] (4.1,0) rectangle (4.45,5.1);
\draw [ultra thick, fill=gray!60] (0.95,0) rectangle (1.2,7.05);
\draw [ultra thick, fill=gray!60] (1.25,0) rectangle (1.75,6);
\draw [ultra thick, fill=gray!60] (0,0) rectangle (0.45,7.95);
\draw [ultra thick, fill=gray!60] (0.5,0) rectangle (0.8,7);

%S_1 and S_2
\filldraw [ultra thick,color=red!60,pattern=north west lines,pattern color =green] (2,6.5) rectangle (3.4,11.5) ;
%\filldraw [ultra thick,color=red!60,pattern=north east lines,pattern color =blue] (0,15) rectangle (8,16) ;

%nodes
\draw (-0.75,5) node {\Large $\boldsymbol{\frac{1}{2}\textbf{OPT}}$};
\draw (-0.75,10) node {\Large $\boldsymbol{\textbf{OPT}}$};
\draw (-0.75,6.5) node {\Large $\boldsymbol{h^{*}}$};
\draw (-0.75,15) node {\Large $\boldsymbol{\frac{3}{2}\textbf{OPT}}$};
%\draw (-1.1,11.5) node {\Large \textbf{$h+\frac{1}{2}OPT$}};
%\draw (4,15.5) node {\Large \textbf{$S_2$}};
\draw (2.7,9) node {\Large $\boldsymbol{B^{*}}$};
\draw (1.85,-0.75) node {\fontsize{18}{24}\selectfont $\boldsymbol{x_0}$};
%\draw (7.2,12.5) node {\Large \textbf{$S_1$}};
%\draw (4,17.75) node {\Large \textbf{$S_2$}};

%%%
%\draw[ultra thick,  red] (0,13.7) -- (2,13.7) -- (2,12.9)  -- (3.7,12.9) --(3.7,12.6)-- (4,12.6) -- (4,12.7) -- (4.5,12.7) -- (4.5,14.6) -- (6.4,14.6) -- (6.4,15.2) -- (8,15.2)  ;
%rectangles
\draw [dashed] (0,5) -- (8,5);
\draw [dashed] (0,10) -- (8,10);
\draw [dashed,red] (0,6.5) -- (8,6.5);
\draw [red] (0,11.5) -- (8,11.5);
\draw [dashed,blue] (1.85,-0.5) -- (1.85,11.5);
%\draw [ultra thick, fill=gray!60] (6.05,0) rectangle (6.7,6.8);
\end{tikzpicture}}
\vspace{-0.6em}
\caption{\hspace*{1.4em}}
\label{fig_3/2_approx_e}
\end{subfigure}
%Fig_3.6
\begin{subfigure}[b]{0.3\linewidth}
\resizebox{3cm}{4.5cm}{
\begin{tikzpicture}
 \draw [ultra thick](0,0) rectangle (8,15);
%\draw [dashed] (0,5) -- (8,5);
%\draw [dashed] (0,10) -- (8,10);
%\draw [dashed,red] (0,6.55) -- (8,6.55);
%\draw [red] (0,11.6) -- (8,11.6);
%\draw [dashed,blue] (1.15,-0.5) -- (1.15,11.55);
%

%\draw [ultra thick](0,0) rectangle (1.8,6.5);
%\draw [ultra thick](1.85,0) rectangle (2.6,5.5);
\draw [ultra thick,fill=gray!20](4.45,0) rectangle (5.4,1.5);
\draw [ultra thick,fill=gray!20](4.45,1.5) rectangle (5.3,3.2);
\draw [ultra thick,fill=gray!20](5.45,0) rectangle (6.95,3.3);
\draw [ultra thick,fill=gray!20](5.9,3.4) rectangle (6.95,5.1);
\draw [ultra thick,fill=Cerulean](5.9,6.65) rectangle (6.5,7.55);
\draw [ultra thick,fill=Cerulean](6.5,6.65) rectangle (6.95,8.15);
\draw [ultra thick,fill=gray!20](4.45,3.3) rectangle (4.9,5.2);
\draw [ultra thick,fill=gray!20](4.9,3.3) rectangle (5.4,5.6);
\draw [ultra thick,fill=Dandelion](5.45,6.55) rectangle (5.9,10.05);
%\draw [ultra thick](5.2,0) rectangle (6.1,7.1);
%\draw [ultra thick](7.15,0) rectangle (8,8);

\draw [ultra thick,fill=gray!20](7,0) rectangle (8,4);
\draw [ultra thick,fill=gray!20](7,4) rectangle (7.4,6);
\draw [ultra thick,fill=Dandelion](7.4,6.55) rectangle (8,9.45);

\draw [ultra thick,fill=gray!20](0.1,13.2) rectangle (5,14);
\draw [ultra thick,fill=gray!20](0.1,14.05) rectangle (5.9,14.9);
\draw [ultra thick,fill=gray!20](6.1,13.3) rectangle (8,14.8);
\draw [ultra thick,fill=gray!20](0.1,11.65) rectangle (1.7,13.1);
\draw [ultra thick,fill=Cerulean](3.65,7.2) rectangle (5.4,7.85);
\draw [ultra thick,fill=Cerulean](3.65,7.85) rectangle (5,9.65);
\draw [ultra thick,fill=Cerulean](5,7.85) rectangle (5.4,9.45);
\draw [ultra thick,fill=gray!20](5.25,12.1) rectangle (6.05,14);
\draw [ultra thick,fill=Cerulean](7.05,7.55) rectangle (7.4,9.65);
\draw [ultra thick,fill=gray!20](6.55,11.9) rectangle (7.1,13.1);
\draw [ultra thick,fill=gray!20](3.65,11.85) rectangle (5.1,12.5);
\draw [ultra thick,fill=gray!20](3.8,12.55) rectangle (5,13.15);

%rectangles
%\draw [ultra thick, fill=gray!60] (0,0) rectangle (0.85,6.45);
%\draw [ultra thick, fill=gray!60] (0.9,0) rectangle (1.3,6.1);
%\draw [ultra thick, fill=gray!60] (1.35,0) rectangle (1.75,5.9);
%\draw [ultra thick, fill=gray!60] (,) rectangle (,);
%\draw [ultra thick, fill=gray!60] (,) rectangle (,);
%\draw [ultra thick, fill=gray!60] (1.85,0) rectangle (2.15,5.45);
%\draw [ultra thick, fill=gray!60] (2.2,0) rectangle (2.55,5.1);
%\draw [ultra thick, fill=gray!60] (5.2,0) rectangle (5.45,7.05);
%\draw [ultra thick, fill=gray!60] (5.5,0) rectangle (6,6);
%\draw [ultra thick, fill=gray!60] (7.15,0) rectangle (7.6,7.95);
\draw [ultra thick, fill=gray!60] (0,0) rectangle (0.45,7.95);
\draw [ultra thick, fill=gray!60] (0.5,0) rectangle (0.75,7.05);
\draw [ultra thick, fill=gray!60] (0.8,0) rectangle (1.1,7);
\draw [ultra thick, fill=gray!60] (1.15,0) rectangle (2,6.45);
\draw [ultra thick, fill=gray!60] (2.05,0) rectangle (2.45,6.1);
\draw [ultra thick, fill=gray!60] (2.5,0) rectangle (3,6);
\draw [ultra thick, fill=gray!60] (3.05,0) rectangle (3.45,5.9);
\draw [ultra thick, fill=gray!60] (3.5,0) rectangle (3.8,5.45);
\draw [ultra thick, fill=gray!60] (3.85,0) rectangle (4.2,5.1);

%S_1 and S_2
\filldraw [ultra thick,color=red!60,pattern=north west lines,pattern color =green] (2,6.5) rectangle (3.4,11.5) ;
%\filldraw [ultra thick,color=red!60,pattern=north east lines,pattern color =blue] (0,15) rectangle (8,16) ;

%nodes
\draw (-0.75,5) node {\Large $\boldsymbol{\frac{1}{2}\textbf{OPT}}$};
\draw (-0.75,10) node {\Large $\boldsymbol{\textbf{OPT}}$};
\draw (-0.75,6.5) node {\Large $\boldsymbol{h^{*}}$};
\draw (-0.75,15) node {\Large $\boldsymbol{\frac{3}{2}\textbf{OPT}}$};
%\draw (-1.1,11.5) node {\Large \textbf{$h+\frac{1}{2}OPT$}};
%\draw (4,15.5) node {\Large \textbf{$S_2$}};
\draw (2.7,9) node {\Large $\boldsymbol{B^{*}}$};
\draw (1.15,-0.75) node {\fontsize{18}{24} $\boldsymbol{x_1}$};
%\draw (7.2,12.5) node {\Large \textbf{$S_1$}};
%\draw (4,17.75) node {\Large \textbf{$S_2$}};

%%%
%\draw[ultra thick,  red] (0,13.7) -- (2,13.7) -- (2,12.9)  -- (3.7,12.9) --(3.7,12.6)-- (4,12.6) -- (4,12.7) -- (4.5,12.7) -- (4.5,14.6) -- (6.4,14.6) -- (6.4,15.2) -- (8,15.2)  ;
%rectangles
\draw [dashed] (0,5) -- (8,5);
\draw [dashed] (0,10) -- (8,10);
\draw [dashed,red] (0,6.5) -- (8,6.5);
\draw [red] (0,11.5) -- (8,11.5);
\draw [dashed,blue] (1.15,-0.5) -- (1.15,11.5);
%\draw [ultra thick, fill=gray!60] (6.05,0) rectangle (6.7,6.8);
\end{tikzpicture}}
\vspace{-0.6em}
\caption{\hspace*{1.6em}}
\label{fig_3/2_approx_f}
\end{subfigure}
\caption{(a) A guillotine separable packing with items nicely packed in containers. The gray colored rectangles are the tall items and the light-gray rectangles are containers with items nicely packed inside. The blue line segment indicates ${l^{*}}$ at height $h^{\star}$.\:(b) Items completely packed in $[h^{*},\opt]$ are shifted by $\frac{1}{2}\opt$ vertically upward. The thick red line indicates $y=h^{*}+\frac{1}{2}\opt$ which separates the items shifted up from the items below.  The dashed red line indicates the height $h^{*}$ and $B^{*}$ is packed in the strip of sufficient width and lowest height $h^{*}$. \:(c) The containers of type $1$ (colored blue) are moved accordingly so they do not intersect $y=h^{\star}$. \:(d) The containers of type $2$ (colored yellow) are moved accordingly so they do not intersect $y=h^{\star}$. \:(e) The containers in $\B_{tall}$ are \textit{bottom-left-flushed} while other non-tall containers are moved accordingly to the right. The blue vertical dashed line $x=x_0$ separates containers in $\B_{tall}^{+}$ to its left hand side from other containers to the right.  \:(f) Final packing where tall items are \textit{bottom-left-flushed} and the blue vertical dashed line $x=x_1$ separates items $i\in I_{tall}$ with $h_i>h^{*}$ to the left from other items and containers to the right.}
\label{fig_3/2_approx}
\end{figure}
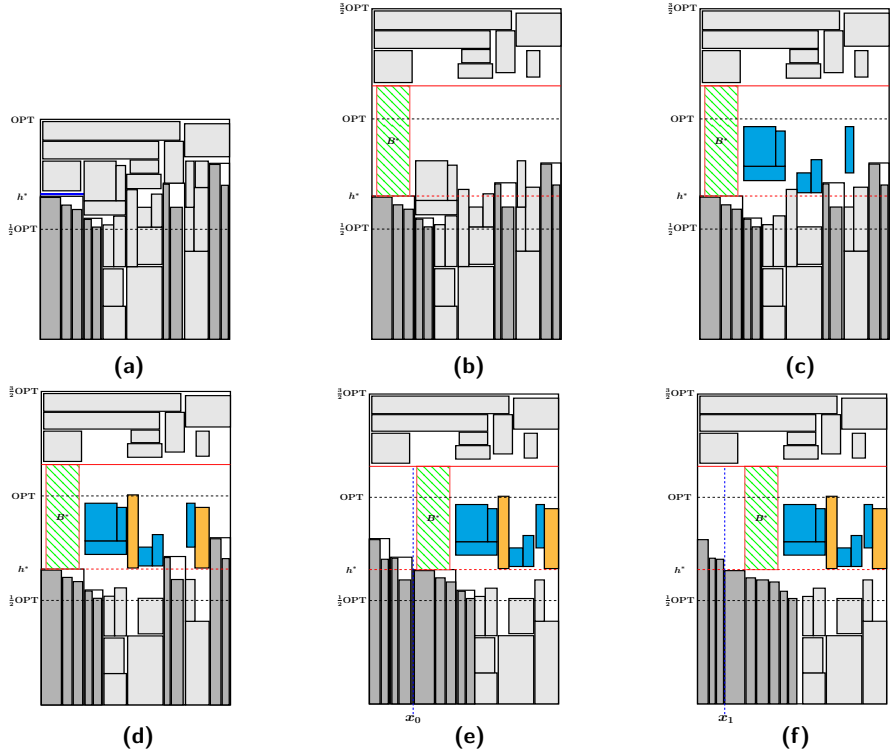

Let $\B$ be the set of containers due to Lemma~\ref{lem:structural-1}.
We want to move some of them up in order to make space for the additional
box $B^{*}$. To this end, we identify a horizontal line segment $\ell^{*}$
in the following lemma. 
\begin{lemma}
\label{lem_l^{*}}
There is a horizontal line segment $\ell^{*}$ of width at least $\eps_1\W$
that does not intersect any container in $\B$, and such that the $y$-coordinate
of $\ell^{*}$ is at least $\opt/2$.
\end{lemma}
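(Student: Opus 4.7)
The plan is to exhibit $\ell^*$ as the top edge of a suitably chosen container whose interior is crossed by the horizontal line $y=\opt/2$, and then to verify that this edge is unobstructed by using only the pairwise non-overlap of $\B$. Recall that by Lemma~\ref{lem:structural-1} the number of containers is at most $g(\delta,\eps)$, and by definition $\eps_1 = 1/(3g(\delta,\eps))$; so the width bound $\eps_1 W$ is exactly what an averaging argument over at most $g(\delta,\eps)$ containers will yield.

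First, I would isolate the set $\B_{mid} := \{B\in\B : B \cap ([0,W]\times\{\opt/2\}) \neq \emptyset\}$ and split into two cases. If $\B_{mid} = \emptyset$, then $\ell^* := [0,W]\times\{\opt/2\}$ already works: it has width $W \geq \eps_1 W$, sits at height $\opt/2$, and meets no container. Otherwise set $C := |\B_{mid}|$, so $1 \le C \le g(\delta,\eps) = 1/(3\eps_1)$.

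Next comes the averaging step. Since the containers in $\B$ are open and pairwise non-overlapping, the $x$-projections of the members of $\B_{mid}$ (all of which are crossed by the single horizontal line $y=\opt/2$) are pairwise disjoint subintervals of $[0,W]$. Consequently $\sum_{B \in \B_{mid}} w(B) \leq W$, and averaging produces some $B^{\circ} \in \B_{mid}$ with $w(B^{\circ}) \geq W/C \geq W/g(\delta,\eps) = 3\eps_1 W$. Let $h^*$ be the $y$-coordinate of the top edge of $B^{\circ}$; since the line $y=\opt/2$ meets the interior of $B^{\circ}$, we have $h^* > \opt/2$. I take $\ell^*$ to be this top edge; it has length $w(B^{\circ}) \geq \eps_1 W$ and its $y$-coordinate is at least $\opt/2$.

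The only remaining verification is that $\ell^*$ avoids the interior of every container. Pick any $B' \in \B \setminus \{B^{\circ}\}$. Non-overlap forces the interior of $B'$ to be disjoint from $B^{\circ}$, so if the $x$-projection of $B'$ meets that of $B^{\circ}$, the $y$-projection of $B'$ must lie either entirely below $\bottomc(B^{\circ})$ or entirely at or above $h^*$; and since $B'$ is open, the boundary line $y = h^*$ is disjoint from its interior in the latter case as well. Hence no container meets $\ell^*$, completing the argument. I do not anticipate any technical obstacle: the only geometric fact used is that the top edge of an open rectangle lies outside its own interior, and the constant $\eps_1$ was chosen precisely to absorb the upper bound on the number of containers.
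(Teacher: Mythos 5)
Your closing verification (openness of the boxes plus pairwise disjointness implies that the top edge of $B^{\circ}$ meets no other container) is sound, but the averaging step that produces $B^{\circ}$ is inverted, and this is a genuine gap. From the pairwise disjointness of the $x$-projections of the containers in $\B_{mid}$ you correctly obtain $\sum_{B\in\B_{mid}}w(B)\le \W$, but an \emph{upper} bound on this sum only guarantees a container of width \emph{at most} $\W/C$; it cannot yield one of width at least $\W/C$. That conclusion would require the containers crossing the line $y=\opt/2$ to cover the whole line, which nothing in Lemma~\ref{lem:structural-1} guarantees: the boxes are merely pairwise non-overlapping, not a partition of the strip. For instance, a single tall container of width $1$ (one tall item) may be the only container crossing $y=\opt/2$, with all other containers lying strictly below that line; then your Case~1 does not apply, and your Case~2 only offers a segment of width $1\ll\eps_1\W$, so the argument as written fails even though the lemma clearly holds there (a free sub-segment of the line itself works). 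In particular your stronger bound $w(B^{\circ})\ge 3\eps_1\W$ is unobtainable in general.

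The paper avoids this by averaging jointly over \emph{both} kinds of pieces into which the line $y=\opt/2$ is cut: the traces of the (at most $g(\delta,\eps)$) containers intersecting it and the (at most $g(\delta,\eps)+1$) maximal free segments between them, i.e.\ at most $2g(\delta,\eps)+1\le 3g(\delta,\eps)$ pieces of total length $\W$. Hence some piece has width at least $\W/(3g(\delta,\eps))=\eps_1\W$; if it is a free segment it serves directly as $\ell^{*}$ at height exactly $\opt/2$, and if it is the trace of a container one takes (a portion of) that container's top edge, whose $y$-coordinate is above $\opt/2$, and argues non-intersection exactly as you did. To repair your proof, include the free segments on $y=\opt/2$ as candidates in the non-empty case and average over containers and gaps together; the constant $\eps_1=\frac{1}{3g(\delta,\eps)}$ is calibrated precisely for this count of $2g(\delta,\eps)+1$ pieces (using $g(\delta,\eps)\ge 1$).
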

\begin{proof}
Consider the containers in $\B$ that intersect with the horizontal line segment 
$\ell:=[0,W]\times\{\opt/2\}$ and let $p_1,...p_k$ be the maximally long line segments on $\ell$ that do not intersect any container.
%  It might also be that we have disjoint horizontal line segments $p_1,...p_k$ at height $\opt/2$ such that no containers in $\B$ intersect with them. 
 Since the line segments $\{p_1,...p_k\}$ are between containers in $\B$, we have that $k\leq |\B|+1$. Therefore by an averaging argument we can find a horizontal
line segment $\ell^{*}$ of width at least $\frac{W}{2(g(\delta,\eps))+1}\geq \frac{W}{3g(\delta,\eps)}\geq\eps_1 W$ that 
 either  contains
the top edge of one of these containers such that $\ell^{*}$ does
not intersect any other container in $\B$ or $\ell^{*}$ is one of the line segments in the set $\{p_1,...,p_k\}$. Hence, the $y$-coordinate of $\ell^{*}$ is at least $\opt/2$.
\end{proof}
Let $h^{*}$ be the $y$-coordinate of $\ell^{*}$. We take all containers
in $\B$ that lie ``above $h^{*}$'', i.e., that lie inside $[0,\W]\times[h^{*},\infty)$.
We translate them up by $\opt/2$. We define a container $B^{*}$ which has height $\opt/2$ and width $\eps_1 W$
to be packed such that $\ell^{*}$
is the bottom edge of $B^{*}$ (see Figure \ref{fig_3/2_approx}(\subref{fig_3/2_approx_b})). We then make the following claim about the resulting packing of $\B\cup\{B^{*}\}$ (we call this packing $P_1$). 
%is feasible, guillotine separable and has height $(3/2+O(\eps))\opt$ (see Lemma~\ref{lem:P_1}). 
\begin{lemma}
\label{lem:P_1}
 The packing $P_1$ is feasible, guillotine separable and has height $(3/2+O(\eps))\opt$. 
\end{lemma}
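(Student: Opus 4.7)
The claim has three components: feasibility of the packing, the height bound, and guillotine separability. The first two are routine; the third is the main obstacle.

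\emph{Feasibility.} I need to verify no two containers overlap in $P_1$. Non-shifted containers keep their original positions and hence remain pairwise disjoint; the shifted containers all translate by the same vector $(0,\opt/2)$, so they also remain pairwise disjoint. For a non-shifted container $C$ and a shifted container $C'$, non-overlap in $P$ was either horizontal (preserved under a vertical shift) or vertical with $\mathrm{top}(C)\le\mathrm{bot}(C')$; since $\mathrm{bot}(C')\ge h^*$, the shift only widens the gap. For $B^*=[a',a'+\eps_1 W]\times[h^*,h^*+\opt/2]$, the defining property of $\ell^*$ (no container of $\B$ crosses $y=h^*$ within the horizontal range of $\ell^*$) implies that every container whose horizontal extent meets $B^*$'s is either entirely below $h^*$ in $P$ (and hence below $B^*$) or entirely above $h^*$ in $P$ (and hence shifted to bottom at least $h^*+\opt/2$, above $B^*$).

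\emph{Height.} Every point of $P_1$ has $y$-coordinate at most $(1+16\eps)\opt+\opt/2=(3/2+O(\eps))\opt$, since every point of $P$ lies in $[0,(1+16\eps)\opt]$ and the shift adds at most $\opt/2$.

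\emph{Guillotine separability.} This is the main technical step. I build a cut tree for $P_1$ from the cut tree $T$ of $P$, whose existence is guaranteed by Lemma~\ref{lem:structural-1}. For each rectangular piece $R=[x_1,x_2]\times[y_1,y_2]$ of $T$, I associate a region $R'$ in $P_1$: $R'=R$ if $y_2\le h^*$; $R'=R+(0,\opt/2)$ if $y_1\ge h^*$; and $R'=[x_1,x_2]\times[y_1,y_2+\opt/2]$ when $R$ straddles $h^*$. An induction on the depth of $T$ shows each $R'$ is a rectangle and each cut of $T$ lifts to a valid guillotine cut of $R'$: vertical cuts remain at the same $x$-coordinate (no container in $R'$ crosses them, by the induction hypothesis); horizontal cuts at $y=y_0>h^*$ shift to $y_0+\opt/2$, while those with $y_0\le h^*$ stay. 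This proves $P_1\setminus\{B^*\}$ is guillotine separable.

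The remaining obstacle is absorbing $B^*$ into the cut sequence. By feasibility, $B^*$ lies entirely inside the empty strip $[a,b]\times[h^*,h^*+\opt/2]$ of $P_1\setminus\{B^*\}$, where $[a,b]\supseteq[a',a'+\eps_1 W]$ is the horizontal range of $\ell^*$. I refine the adapted cut tree so that $B^*$ becomes its own leaf: using the freedom to position $B^*$ horizontally along $\ell^*$ and adjusting the cut tree to route any cuts around the region occupied by $B^*$ in the empty strip when needed---together with the observation that additional cuts inside empty rectangles are always admissible---a few extra guillotine cuts flush with the edges of $B^*$ isolate it, yielding the desired cut sequence for $P_1$.
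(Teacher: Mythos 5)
Your feasibility and height arguments are correct, and your lifted cut tree for $P_1\setminus\{B^*\}$ is essentially the paper's argument in different packaging: the paper makes one horizontal cut at $y=h^*+\opt/2$, observes that the top piece is a uniform translate of part of the original (guillotine separable) packing, and then repairs the cutting sequence of the bottom piece. One small bookkeeping point in your lifting: when a straddling piece is cut so that a child lies entirely above $h^*$, your lifted children do not tile the lifted parent; you must explicitly cut off the empty band of height $\opt/2$ (admissible, since after the shift it contains no container), which your remark about cuts in empty rectangles can cover but should be stated as part of the induction rather than only at the end.

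The genuine gap is the final step, which is in fact the heart of the lemma. ``Adjusting the cut tree to route any cuts around the region occupied by $B^*$'' is not an available operation: guillotine cuts are straight, edge-to-edge cuts of their piece, and a lifted vertical cut that crosses the band $[h^*,h^*+\opt/2]$ belongs to a piece that also contains containers, so it can be neither bent nor simply deleted; moreover, the ``freedom to position $B^*$ along $\ell^*$'' can be zero, since $w(B^*)=\eps_1\W$ while Lemma~\ref{lem_l^{*}} only guarantees $|\ell^*|\ge\eps_1\W$. What you actually have to prove is that no cut of your (modified) sequence enters the interior of the region where $B^*$ is placed, and this is exactly what the paper's proof supplies: when $\ell^*$ is the top edge of a container $B$, every piece that straddles $y=h^*$ and whose $x$-range meets $(left(B),right(B))$ must contain $B$ entirely (cuts never cross containers), hence its vertical cuts avoid $(left(B),right(B))$; combined with deleting the horizontal cuts at height at least $h^*$ inside the bottom piece (possible because after the shift no container of the bottom piece lies entirely above $h^*$) and extending the intercepted vertical cuts, the piece that finally contains $B$ has only empty space in $[left(B),right(B)]\times[h^*,h^*+\opt/2]$ above it, into which $B^*$ is inserted and split off by cuts flush with its edges. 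Your proposal never invokes the container $B$ below $\ell^*$, so it cannot rule out a lifted vertical cut slicing through every possible position of $B^*$; and the degenerate case $h^*=\opt/2$, where $\ell^*$ is a gap segment not on any container's top edge, is not addressed by the ``empty strip'' observation alone and needs its own short argument. Without such reasoning, the claim that ``a few extra guillotine cuts isolate $B^*$'' does not follow.
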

\begin{proof}
Since $h^{*}>\opt/2$, observe that no containers are intersecting the line $[0,W]\times \{h^{*}+\opt/2\}$. This is because any containers which were lying above the line $[0,W]\times\{h^*\}$ before were pushed up by  $\opt/2$ and the height of such containers is at most $\opt/2$. Thus, the first guillotine cut is applied at $y=h^{*}+\opt/2$ so that we get two pieces $R$ and $R_{top}$. For the guillotine separability of the top piece $R_{top}$, we use the fact that the packing to begin with was guillotine separable and we have moved a subset of the items in the initial packing vertically upwards by the same height. For the bottom piece $R$, which has a subset of the initial packing, we have packed $B^{*}$ on the top edge (which is part of  the line $[0,W]\times\{h^*\}$) of another container (say $B$) whose width is more than the width of $B^{*}$. In the guillotine cutting sequence of this piece without the addition of $B^{*}$, consider the horizontal cuts at height at least $h^*$. Note that there is no container lying completely above the line $[0,W]\times\{h^*\}$ in $R$. Hence, we can remove such horizontal cuts and extend the vertical cuts that were intercepted by these horizontal cuts until they hit the topmost horizontal edge of $R$. Now, if we follow this new guillotine cutting sequence, we would finally have a rectangular region  with only  the container $B$. As there is no container in the region $[left(B),right(B)]\times [h^{*},h^{*}+\opt/2]$, we can pack $B^{*}$ in this region without violating the guillotine separability condition. Now, observe that the height of the piece $R_{top}$ is $(1+O(\eps))\opt-h^*$ and height of the piece $R$ is $h^*+\opt/2$. Hence the height of the packing $P_1$ is $(3/2+O(\eps)\opt$.
\end{proof}
%See Lemma~\ref{lem:P_1_1} in Appendix~\ref{sec:omitted_proofs2} for the proof.}
 
Our next goal is to rearrange the tall items and their containers such
that the tall items are bottom-left flushed. Let $\B_{tall}\subseteq\B$
denote the containers in $\B$ that contain at least one tall item. Consider
the line segment $\ell:=[0,\W]\times\{h^{*}\}$ and observe that it
might be intersected by containers in $\B_{tall}$. Let $\ell_{1},\ell_{2},...,\ell_{t}$
be the connected components of $\ell\setminus\bigcup_{B\in\B_{tall}}B$.
For each $j\in\{1,...,t\}$ we do the following. Consider the containers
in $\B\setminus\B_{tall}$ whose bottoms are contained in $\ell_{j}\times[\opt/2,h^{*}]$ (we call them \emph{type $1$ containers}).
We move them up by $h^{*}-\opt/2$ units. There is enough space for them
since the top edge of any of these containers lies below the line segment $[0,W]\times \{h^{*}+\opt/2\}$ after shifting.

Then we take all containers in $\B\setminus\B_{tall}$ that intersect $\ell_{j}$
and also the line segment $[0,\W]\times\{\opt/2\}$ (\emph{type $2$ containers}). We move them
up such that their respective bottom edges are contained in $\ell_{j}$.
Again there
is enough space for this since the containers have height at most $\opt/2$ and hence, their top edges cannot cross the line segment $[0,W]\times \{h^{*}+\opt/2\}$. Note that in this step we do not necessarily
move the affected containers uniformly. See Figure~\ref{fig_3/2_approx}(\subref{fig_3/2_approx_c}) and Figure ~\ref{fig_3/2_approx}(\subref{fig_3/2_approx_d}) for a sketch. Note that due to the way $\ell^*$ is defined, no type 1 or type 2 container after being shifted overlaps with the region occupied by $B^*$.

One can show that the resulting packing is still guillotine separable.
In particular, there is such a sequence that starts as follows: the
first cut of this sequence is a horizontal cut with $y$-coordinate
$h^{*}+\opt/2$. For the  resulting bottom piece $R$, there are
vertical cuts that cut through the vertical edges of the containers in
$\B_{tall}$ whose height is strictly greater than $h^{*}$, denote these containers by
$\B_{tall}^{+}$. Let $R_{1},...,R_{t'}$ denote the resulting partition
of $R$. We can rearrange our packing by reordering
the pieces $R_{1},...,R_{t'}$. We reorder them such that on the left
we place the pieces containing one container from $\B_{tall}^{+}$ each,
sorted non-increasingly by their heights. Then we place
the remaining pieces from $\{R_{1},...,R_{t'}\}$ (which hence, do
not contain any containers in $\B_{tall}^{+}$), denote their union by $R'$. Let the left end of $R'$ be $x=x_0$.
We can assume that the guillotine cutting sequence places a vertical
cut that separates $R'$ from the other pieces in $\{R_{1},...,R_{t'}\}$ at $x=x_0$. From Lemma~\ref{lem_l^{*}}, we know that there is a container $B$ (or possibly the case when $h^{*}=\opt/2$ and we have a line segment $\ell'$ of width at least $\eps_1 W$ on top of which we can pack $B^{*}$) whose top is at height $h^{*}$, has width at least $\ell^{*}$ which now lies to the right of $x_0$ in $R'$. Thus, the region $[left(B),left(B)+\eps_1 W]\times[h^{*},h^{*}+\opt/2]$ is empty and can be used to place $B^{*}$.

We change now the placement of the containers within $R'$. Due to our
rearrangements, no container inside $R'$ intersects the line segment $[0,\W]\times\{h^{*}\}$,
so we can assume that $R'$ is cut by the horizontal cut $[0,\W]\times\{h^{*}\}$,
let $R''$ be the resulting bottom piece and $R'''$ be the piece above. We first show why $R'''$ is guillotine separable. First, we separate $B^{*}$ using vertical guillotine cuts at its left and right edges. Then we prove that the shifting operation for type $2$ and type $1$ containers does not violate guillotine separability of the packing for any region defined by  some horizontal segment $l_j$ for $j\in[t]$. %(As defined in Section ~\ref{subsec:structural_lem_proof}).
 %Feasiblity for shifting of Type $1$ and Type $2$ containers has already been proved in Section \ref{subsec:structural_lem_proof}. 
Consider any type $2$ container $B'$. Its top edge was initially lying above $y=h^{*}$ and its bottom below $\opt/2$. Hence, before shifting this container no item could have been packed such that it was in the region $[0,W]\times [h^{*},h^{*}+\opt/2]$ and was intersecting the vertically extended line segments from the left and right edges of $B'$ because any item packed in $[0,W]\times [h^{*},\infty)$ initially was shifted upward by $\opt/2$. Hence, after shifting $B'$ such that its bottom touches $y=h^{*}$, after considering the cut $y=h^{*}$ in $l_j$, extend its left and right edges vertically upward to separate $B'$ using guillotine cuts. For the type $1$ containers, after the aforementioned cuts observe that all such containers have been shifted by an equal amount vertically upward and using the fact that they were guillotine separable initially, we claim that they are guillotine separable afterward. This is proved by considering the initial guillotine cuts that were separating such items and shifting the horizontal cuts upward by $h^{*}-\opt/2$ (equal to the distance the type $1$ containers were shifted upward by).

 To show that $R''$ is guillotine separable, observe that due to our rearrangements
there are no containers that are completely contained in $R''\cap\left([0,\W]\times[\opt/2,h^{*}]\right)$.
Therefore, we can assume that the next cuts for $R''$ are vertical
cuts that contain all vertical edges of the boxes in $\B_{tall}$
that are contained in $R''$. Let $R''_{1},...,R''_{t''}$ denote
the resulting pieces. Like above, we change our packing such that
we reorder the pieces in $R''_{1},...,R''_{t''}$ non-increasingly
by the height of the respective box in $\B_{tall}$ contained in them,
and at the very right we place the pieces from $R''_{1},...,R''_{t''}$
that do not contain any container from $\B_{tall}$ (see Figure~\ref{fig_3/2_approx}(\subref{fig_3/2_approx_e}))

Finally, we sort the tall items inside the area $\bigcup_{B\in\B_{tall}}B$
non-increasingly by height so that they are bottom-left-flushed, and
we remove the containers $\B_{tall}$ from $\B$ (see Figure \ref{fig_3/2_approx}(\subref{fig_3/2_approx_f})).
 We now prove that the tall items can be sorted inside the area $\bigcup_{B\in\B_{tall}}B$
non-increasingly by height without violating guillotine separability and feasibility. Note that the area $\bigcup_{B\in\B_{tall}}B$ can possibly contain some vertical items.
 Now, we reorder the tall items within $R'$ such that they are sorted in non-increasing order of their heights. We do the same for all the tall items on the left of $R'$. There may be tall items (or vertical items) on the left hand side of $R'$ such that for any such item, its height is less than the the tallest tall item in $R'$. Note that such tall items have to have a height of at most $h^{*}$. Such items can be repeatedly swapped with their neighboring tall item till they are in the correct position  according to the \textit{bottom-left-flushed} packing of the tall items, while maintaining guillotine separability. Such a swap operation between consecutive tall items ensures that all of the tall items and possibly some vertical items which were initially packed in tall containers remain inside the area $\bigcup_{B\in\B_{tall}}B$. We ensure that the vertical items which were packed to the left of $R'$ get swapped so that they are packed on the right of all the tall items in a container. Now, to prove that guillotine separability of the packing is maintained after all such swapping operations, that is, after all tall items are sorted according to their heights in a non-increasing order consider the $x$-coordinate (say $x_1$) of the right edge of the shortest tall item which has height strictly greater than $h^{*}$. Observe that there were no tall containers of height strictly greater than $h^{*}$ beyond $x=x_0$, which implies $x_1\leq x_0$ and hence, now, for the guillotine cutting sequence, we can have a vertical guillotine cut at $x=x_1$ instead of at $x=x_0$, the rest being the same as mentioned before.  
This yields the packing
claimed by Lemma~\ref{lem:structural-3/2}.

%\subsection{Algorithm}

\subsection{Algorithm for polynomial time $\boldsymbol{(\frac{3}{2}+\eps)}$-approximation}
\label{subsec_alg_poly}
 First we guess a value $\opt\textsuperscript{$\prime$}$ such that $\opt\leq \opt\textsuperscript{$\prime$}\leq (1+\varepsilon)\opt$ in $n^{O_\eps(1)}$ time (see Appendix~\ref{sec:omitted_proofs2} for details).
%, e.g., using that $\max_i h_i \le \opt \le n\cdot \max_i h_i$}. 
 %This assumption can be removed as follows.   
In order to keep the notation light we denote $\opt\textsuperscript{$\prime$}$ by $\opt$. 
% Therefore all the approximation factors would be scaled by a factor of $(1+\varepsilon)$ in order to \aw{obtain} the true value of $\opt$.
We want to compute a packing of height at most $(\frac{3}{2}+O(\eps))\opt$
using Lemma~\ref{lem:structural-3/2}. 

Intuitively, we first place the tall items in a bottom-left-flushed way. Then we guess approximately the sizes of the boxes, place them in the free area, and place the items
inside them via guessing the relatively large items, solving an instance of the generalized assignment problem (GAP), using NFDH for the small items, and invoking again Lemma~\ref{lem:pack-med} for the medium items (see Appendix~\ref{subsec_GAP} for the definition of GAP). This is similar as in, e.g., \cite{GalvezGIHKW21,Jansen2004}.

Formally, first we place all items
in $\Rit$ inside $[0,\W]\times[0,(3/2+\eps)\opt)$ such that they
are bottom-left-flushed. Then, we guess approximately the sizes of
the containers in $\B$. Note that %in contrast to the algorithm in Section~\ref{subsec:algo-PPTAS}
in polynomial time we cannot guess the sizes of the containers exactly. Let $B\in\B$. Depending
on the items packed inside $B$, we guess different quantities for
$B$. 
\begin{itemize}
\item If there is only one single large item $i\in\R$ packed inside $B$ then
we guess $i$.
\item If $B$ contains only items from $\Rho$ then we guess the widest
item packed inside $B$. This defines our guessed width of $B$. Also,
we guess all items packed inside $B$ whose height is at least $\eps_2\opt$
(at most $O(1/\eps_2)$ many), denote them by $\R_{B}'$.
We guess the total height of the remaining items $\R_{B}\setminus\R'_{B}$
approximately by guessing the quantity $\hat{\height}(B):=\left\lfloor \frac{\height(\R_{B}\setminus\R'_{B})}{\eps_2\opt}\right\rfloor \eps_2\opt$.
Our guessed height for $B$ is then $\sum_{i\in\R'_{B}}\height(i)+\hat{\height}(B)$.
\item Similarly, if $B$ contains only items from $\Rve$ then we guess
the highest item packed inside $B$, which defines our guessed height
of $B$. Also, we guess all items packed inside $B$ whose width is
at least $\eps_3\W$ (at most $O(1/\eps_3)$ many),
denote them by $\R_{B}'$. We guess the total width of the remaining
items $\R_{B}\setminus\R'_{B}$ approximately by guessing the quantity
$\hat{\width}(B):=\left\lfloor \frac{\width(\R_{B}\setminus\R'_{B})}{\eps_3\opt}\right\rfloor \eps_3\opt$
and our guessed width of $B$ is then $\sum_{i\in\R'_{B}}\width(i)+\hat{\width}(B)$.
\item If $B$ contains only small items, then our guessed heights and widths
of $B$ are $\left\lfloor \frac{\height(B)}{\eps_4\opt}\right\rfloor \eps_4\opt$
and $\left\lfloor \frac{w(B)}{\eps_4\W}\right\rfloor \eps_4\W$,
respectively.
\end{itemize}

Note that here $\eps_2=\frac{\eps}{4\left|\B_{hor}\right|}$, $\eps_3=\frac{\eps_1}{4\left|\B_{ver}\right|}$ and $\eps_4=\mu$ are chosen so that the unpacked horizontal items, unpacked vertical items and unpacked small items due to container rounding can be packed in  containers $B_{hor}$ (defined below), $B^{*}$ and $B_{small}$, respectively (see Lemma~\ref{lem_item_placement_hor_vert_1} and Lemma~\ref{lem_item_placement_small_1}).

We have at most $O_{\eps}(1)$ containers and for each container $B\in\B$ we guess the type of container $B$ and its respective width and height (depending on the type) in  $n^{O_{\eps}(1)}$ time.

Additionally, we guess three containers $B_{med}$ of height $2\eps \opt$, $B_{hor}$ of height $\eps \opt$, and $B_{small}$
of height $27\eps\opt$ and width $\W$ each that we will use to
place all medium items, and to compensate errors due to inaccuracies
of our guesses for the sizes of the containers for horizontal and small
items, respectively. Let $\B'$ denote the guessed containers (including $B_{med}$,
$B_{hor}$, and $B_{small}$). Since $|\B'|=O_{\eps}(1)$ and
the containers in $\B'$ are not larger than the containers in $\B$, we can
guess a placement for the containers $\B'$ such that together with $\Rit$
they are guillotine separable. We place the containers $B_{med}$, $B_{hor}$,
and $B_{small}$ on top of the packing of rest of the containers in $\B'$, and $\Rit$.
\begin{lemma}\label{lem_guess_containers_poly}
In time $n^{O_{\eps}(1)}$ we can compute a placement for the
containers in $\B'$ such that together with the items $\Rit$, they are
guillotine separable.
\end{lemma}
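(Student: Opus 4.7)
The approach is to turn the existential guarantee of Lemma~\ref{lem:structural-3/2} into an algorithmic one via exhaustive enumeration over a combinatorial space of size $n^{O_{\eps}(1)}$. Observe that each container $B\in\B'$ has width and height no larger than those of the corresponding container in $\B$ from the structural lemma, because our guesses for widths and heights are obtained by rounding down. Consequently, the placement of $\B$ guaranteed by Lemma~\ref{lem:structural-3/2} immediately gives a feasible and guillotine-separable placement of $\B'$: simply align each $B\in\B'$ with the bottom-left corner of its counterpart in $\B$, which never introduces overlaps and preserves the original guillotine cutting sequence. So a valid placement exists; the task is purely to find one in polynomial time.

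First, I would place the items $\Rit$ deterministically in bottom-left-flushed order (sort by height non-increasingly, in time $O(n\log n)$). Next, since $|\B'|=O_{\eps}(1)$, I would enumerate every binary cutting tree on at most $|\B'|+1$ leaves (one leaf per container in $\B'$, plus one leaf representing the staircase region of $\Rit$) together with a horizontal/vertical label for each internal node; the number of such combinatorial trees is $O_{\eps}(1)$. For each tree, the cut coordinates lie in a candidate set of size $n^{O_{\eps}(1)}$: horizontal cut positions are drawn from the set of $y$-coordinates induced by the tall-item heights and the guessed container heights, and vertical cut positions are drawn from the set of $x$-coordinates induced by the tall-item widths and the guessed container widths. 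Enumerating all combinations yields $n^{O_{\eps}(1)}$ candidate placements, and each can be checked for feasibility (disjointness, fit within $[0,W]\times[0,(3/2+O(\eps))\opt)$, compatibility with $\Rit$) in polynomial time. By the preceding paragraph, at least one candidate succeeds. Finally, the three extra containers $B_{med}$, $B_{hor}$, $B_{small}$, each of width $W$, are stacked on top of the resulting packing, which trivially preserves guillotine separability and adds at most $(2+1+27)\eps\cdot\opt = 30\eps\cdot\opt$ to the total height.

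The main subtlety is that the region occupied by $\Rit$ is staircase-shaped rather than rectangular, and some container (in particular $B^*$) may sit inside a step of the staircase rather than to the right of the whole staircase. To handle this, the enumerated cutting trees are allowed to have cuts that slice horizontally or vertically through the staircase boundary, so the staircase ``leaf'' may be further subdivided during the cutting sequence. Once the placement of the $O_{\eps}(1)$ containers is fixed, separating the individual tall items within whichever pieces contain them is routine: because the tall items are sorted by height and placed flush to the left and the bottom, a sequence of vertical cuts between consecutive items (each followed by a horizontal trim cut at the item's top edge) suffices, and these cuts do not interfere with the cuts used for the containers. This completes the proof.
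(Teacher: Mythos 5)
Your proposal is correct in substance and follows essentially the same strategy as the paper: place $\Rit$ bottom-left-flushed, observe that the rounded containers fit wherever their counterparts from Lemma~\ref{lem:structural-3/2} were placed, and then recover a valid placement by brute-force enumeration over a candidate set of $n^{O_{\eps}(1)}$ coordinates followed by a feasibility check. The main difference is in how each candidate is verified. The paper does not enumerate cutting trees at all: it enumerates container positions directly, with the candidate $y$-coordinates spelled out as linear combinations of the guessed container heights, possibly offset by $h^{*}$ or $h^{*}+\opt/2$, and the candidate $x$-coordinates as prefix sums of tall-item widths plus combinations of guessed container widths (possibly offset by $\eps_1\W$), and then simply runs the generic $O(n^3)$ guillotine-separability checker of Lemma~\ref{lem_guillo1} on the whole arrangement (containers together with the individual tall items). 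This sidesteps the two loose ends in your version: (i) your bound of $|\B'|+1$ leaves is an undercount, since isolating each container as an exact leaf generally forces additional leaves for the empty and tall-item-only regions (any $O_{\eps}(1)$ bound suffices, but as stated the restricted enumeration could miss the valid tree); and (ii) your candidate coordinate sets are described only as ``induced by'' the tall-item and container dimensions, whereas the offsets $h^{*}+\opt/2$ and $\eps_1\W$ must explicitly be available --- this is salvageable because $\opt/2$ and $\eps_1\W$ are exactly the height and width of $B^{*}\in\B'$ and $h^{*}$ is a tall-item height, but it should be said. Using the black-box checker also relieves you of having to argue that the staircase of tall items can be refined inside the leaves, which in your write-up is asserted somewhat informally.
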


Next, we place the vertical items. Recall that for each container $B\in\B$
containing items from $\Rve$ we guessed the items packed inside $B$
whose width is at least $\eps_3\W$. For each such container $B$
we pack these items into the container $B'\in\B'$ that corresponds to $B$.
With a similar technique as used for the generalized assignment problem
(GAP)~\cite{GalvezGIHKW21}, we place all but items with width at most $\eps_3 W$ for each container in $\Rve$. Further using the PTAS for this variant of GAP (see Lemma~\ref{gap_approx}), we can ensure that  items of total area at most $3\eps_5\cdot \opt\cdot W$ are not packed. Hence, items of total width at most $(3\eps_5/\delta)  W$ remain unpacked as each such item has height at least $\delta\opt$.
 We pack these remaining items into
$B^{*}$, using that each of them has a height of at most $\opt/2$
and that their total width is at most $|\B'|\cdot 2\eps_3\W+(3\eps_5/\delta)W\le\eps_1\W=\width(B^{*})$ (see Lemma~\ref{lem_item_placement_hor_vert_1} and Appendix \ref{section_constants}). In other words, we fail to pack some of the vertical items since we guessed the widths of the containers only approximately
and since our polynomial time approximation algorithm for GAP might not find the optimal packing.
% 
% are unable to be packed due to rounding of container widths so that they belong to a set of polynomial size and also because due to error in the polynomial time approximation algorithm for GAP.
We use a similar procedure for the items in $\Rho$ where instead
of $B^{*}$ we use $B_{hor}$ in order to place the unassigned items.
\begin{lemma}{\label{lem_item_placement_hor_vert}}
In time $n^{O_{\eps}(1)}$ we can compute a placement for all
items in $\Rve\cup\Rho$ in $B^{*}$, $B_{hor}$, and their corresponding
boxes in $\B'$. 
\end{lemma}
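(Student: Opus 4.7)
\textbf{Proof proposal for Lemma~\ref{lem_item_placement_hor_vert}.}
The plan is to pack the vertical and horizontal items essentially independently into their respective guessed containers by reducing to a variant of the generalized assignment problem (GAP), and then absorb the slack caused by approximation errors into $B^{*}$ and $B_{hor}$ respectively. I describe the argument for $\Rve$; the argument for $\Rho$ is symmetric (swapping roles of width/height and using $B_{hor}$ in place of $B^{*}$).

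First, for each container $B'\in\B'$ that corresponds to a container $B\in\B$ containing vertical items, I place the already-guessed wide items (those with width at least $\eps_3 W$) in $B'$ side by side; by construction their total width is at most $w(B')$. Let $\Rve'\subseteq\Rve$ denote the remaining vertical items, each of width at most $\eps_3 W$. I formulate the assignment of $\Rve'$ to the containers $\{B'\}$ as an instance of GAP in which each item $i\in\Rve'$ can be placed in a container $B'$ only if $h_i\le h(B')$, and the size of $i$ in $B'$ equals its width $w_i$; the capacity of $B'$ is its remaining width (after the wide items). I then apply the PTAS for this variant of GAP (Lemma~\ref{gap_approx}) to obtain, in time $n^{O_\eps(1)}$, an assignment of almost all of $\Rve'$ such that the unassigned items have total area at most $3\eps_5\cdot\opt\cdot W$.

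Since every item in $\Rve$ has height at least $\delta\opt$, the total width of the unassigned vertical items is at most $3\eps_5/\delta\cdot W$. Additionally, for every container $B'$ we rounded its guessed width down, introducing an extra slack of at most $2\eps_3 W$ per container, so the items not packed by GAP into their intended container have total width at most $|\B'|\cdot 2\eps_3 W+(3\eps_5/\delta)W$. By the choice of the constants $\eps_1,\eps_3,\eps_5$ relative to $|\B'|$ and $\delta$ (see Appendix~\ref{section_constants}), this sum is at most $\eps_1 W=w(B^{*})$. Since each vertical item has height at most $\opt/2=h(B^{*})$, I can stack these leftover items side by side inside $B^{*}$, yielding a nice packing of vertical items there. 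Finally, within each $B'$, the items assigned to it are simply placed side by side to obtain a nice packing.

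The main obstacle is the accounting: showing that the combination of (i) the total slack from rounding each container's guessed width and (ii) the total width of items that GAP fails to pack is bounded by $\eps_1 W$ (resp.\ by the capacity of $B_{hor}$ for horizontal items). This reduces to choosing $\eps_2,\eps_3,\eps_5,\eps_6$ sufficiently small compared to $\eps,\eps_1,\delta,\mu$ and to the number of containers $|\B'|=O_\eps(1)$, which is possible since $|\B'|$ depends only on $\eps$ and $\delta$. The running time is dominated by the PTAS for GAP, which is $n^{O_\eps(1)}$, completing the proof.
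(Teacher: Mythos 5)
Your proposal is correct and follows essentially the same route as the paper: reduce the assignment of the remaining (non-guessed) vertical and horizontal items to a GAP instance with one knapsack per container, apply the GAP PTAS (Lemma~\ref{gap_approx}), and absorb the leftover width $|\B'|\cdot 2\eps_3 W+(3\eps_5/\delta)W\le\eps_1 W$ into $B^{*}$ for the vertical items, with the symmetric accounting into $B_{hor}$ for the horizontal items, using the constant choices of Appendix~\ref{section_constants}. No substantive differences from the paper's proof.
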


For the medium items we invoke again Lemma~\ref{lem:pack-med} and we place 
$B_{med}$ on top of the containers in $\B$ which
increases the height of the packing only by $2\eps\opt$. 
% ensure that their area is at most $\eps \opt\cdot W$ by appropriately choosing $\delta,\mu$ so that Lemma~\ref{class_1} is applicable. We pack all (medium) items in $\Rme$ into one single
% container $B_{med}$ of height $2\eps\opt$ and width $\W$ using Steinberg's
% algorithm~\cite{steinberg1997strip} (see Appendix~\ref{subsec:steinberg}). We pack We place the medium items into $B_{med}$, using Lemma~\ref{lem:pack-med}.

Finally, we use  NFDH again to pack the small items 
into their corresponding containers in $\B'$, which we denote by $\B'_{small}$, and $B_{small}$. We need $B_{small}$ due to inaccuracies of
NFDH and of our guesses of the container sizes. %This
%routine is very similar as the algorithm due to Lemma~\ref{lem:pack-small}
\begin{lemma}{\label{lem_item_placement_small}}
\label{lem:pack-small-1}In time $n^{O(1)}$ we can compute a placement
for all items in $\Rsm$ in $\B'_{small}$ and $B_{small}$. 
\end{lemma}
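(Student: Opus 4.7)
The plan is to pack the small items using a combination of NFDH inside each guessed container $B' \in \B'_{small}$ and a catch-all application of NFDH inside $B_{small}$ for the leftover items. The guiding principle is that NFDH runs in polynomial time, produces guillotine separable packings (in fact $2$-stage packings), and is near-optimal in area efficiency when the items are much smaller than the container. Specifically, when items satisfy $w_i \le \eps w(B)$ and $h_i \le \eps h(B)$, NFDH packs a subset of total area at least $(1-O(\eps)) h(B) w(B)$ into $B$, which is exactly the regime guaranteed by the definition of a nice packing of small items.

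First, I would sort $\Rsm$ by non-increasing height and then, container by container, greedily apply NFDH to fill each $B' \in \B'_{small}$ with as many items as possible (skipping items that don't fit). Let $\Rsm' \subseteq \Rsm$ denote the items that remain unpacked after this phase. To bound the area of $\Rsm'$, I would compare against the structural packing from Lemma~\ref{lem:structural-3/2}: each $B \in \B_{small}$ in that packing contains small items of total area at most $h(B) w(B)$, and the corresponding guessed $B' \in \B'_{small}$ satisfies $h(B') \ge h(B) - \eps_4 \opt$ and $w(B') \ge w(B) - \eps_4 W$. Combining the NFDH efficiency bound with the rounding loss from guessing (which costs at most $O_{\eps}(1) \cdot \eps_4 \cdot \opt \cdot W$ in total, aggregated over all $O_{\eps}(1)$ containers) and the standard NFDH waste (at most an $O(\mu)$ fraction of each container's area), the total area of $\Rsm'$ is at most $O(\eps)\,\opt \cdot W$, with the constant controlled by our choice of $\eps_4 = \mu$ being sufficiently small compared to $1/|\B'_{small}|$.

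Second, I would pack $\Rsm'$ inside $B_{small}$, which has dimensions $W \times 27\eps\opt$. Applying NFDH again to $\Rsm'$ (sorted by non-increasing height) inside a strip of width $W$, and using once more that every item in $\Rsm'$ has $w_i \le \mu W \le \eps W$ and $h_i \le \mu \opt \le \eps \cdot 27\eps\opt / \eps$, the height used by NFDH is at most the total area of $\Rsm'$ divided by $(1-O(\eps))W$ plus one extra shelf of height $\mu \opt$, which by the area bound above fits comfortably within $27\eps\opt$ with room to spare. Hence all small items are placed.

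The main obstacle in the proof will be the bookkeeping that shows the area of $\Rsm'$ really is dominated by $O(\eps)\,\opt \cdot W$, since the guessing errors, the NFDH shelf waste, and the $\eps_4$-rounding of container sides all accumulate over the $O_{\eps}(1)$ containers in $\B'_{small}$; the constants $\eps_4 = \mu$ and the $27\eps$ height budget for $B_{small}$ are specifically tuned so that these three sources of loss can each be absorbed. Guillotine separability is automatic because NFDH inside any rectangle produces a $2$-stage packing (horizontal shelves cut first, then vertical cuts inside each shelf), and the container $B_{small}$ sits on top of all other containers so it is separated from them by a single top-level horizontal cut.
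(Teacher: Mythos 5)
The skeleton of your plan (NFDH inside the guessed small containers, leftovers into $B_{small}$, guillotine separability from the shelf structure of NFDH) matches the paper, but you omit the step the paper's proof actually hinges on: a global, provably near-optimal \emph{assignment} of small items to containers. The paper reduces this assignment to a GAP instance (one bin per container of capacity equal to its area, each item's size and profit equal to its area, an item eligible for a bin only if $w_i\le\eps\cdot w(B')$ and $h_i\le\eps\cdot h(B')$), solves it with the polynomial-time $(1-3\eps)$-approximation of Lemma~\ref{gap_approx}, and only then runs NFDH inside each container; the three loss terms (GAP inefficiency, the $\eps_4=\mu$ rounding of container sides, and NFDH waste) are each bounded by $O(\eps)\opt\cdot\W$ and together fit into the $27\eps\opt\times\W$ box. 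Your greedy replaces this by filling containers one at a time and ``skipping items that don't fit,'' and you justify the bound on $a(\Rsm')$ by a per-container comparison with the structural packing. That comparison is not valid for such a greedy: NFDH run on $B'_1$ may consume items that the structural solution assigns to other containers, and since the niceness constraint is \emph{per container}, items that were meant for $B'_1$ need not be eligible for the containers that still have room when their turn comes. Nothing in your argument rules out a leftover area of order $\opt\cdot\W$ rather than $O(\eps)\opt\cdot\W$, and $B_{small}$ has height only $27\eps\opt$, so the claimed conclusion does not follow as written.

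A GAP-free version of your argument could be rescued, but it needs two ingredients you never state: (i) an eligibility claim, namely that every small item ($w_i\le\mu\W$, $h_i\le\mu\opt$) satisfies the niceness condition in essentially every container of $\B'_{small}$, which holds because by the construction in Lemma~\ref{lem_small_rearrange} each retained small container has width at least $(\mu/\eps)\W$ and height at least $(\mu/\eps)\opt$ (modulo borderline items created by rounding the sides down by $\eps_4=\mu$, which must be handled separately); and (ii) a \emph{global} area argument replacing your per-container one: if the greedy fails to pack everything, then by Lemma~\ref{nfdh_alg2} every container was filled to at least a $(1-2\eps)$-fraction of its area, so the unpacked area is at most $a(\Rsm)-(1-2\eps)\sum_{B'}a(B')$, which is $O(\eps)\opt\cdot\W$ after accounting for the rounding loss. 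Without (i) and (ii) the key bound is unsupported, which is a genuine gap; with them your route becomes a legitimately more elementary alternative to the paper's GAP-based assignment.
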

\begin{theorem}{\label{thm_{main_poly}}}
There is a $(3/2+\eps)$-approximation algorithm for the guillotine
strip packing problem with a running time of $n^{O_{\eps}(1)}$.
\end{theorem}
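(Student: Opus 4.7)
The plan is to prove Theorem~\ref{thm_{main_poly}} by combining Lemma~\ref{lem:structural-3/2} with the algorithmic building blocks from Section~\ref{subsec_alg_poly}. First I would guess a value $\opt'$ with $\opt\le\opt'\le(1+\eps)\opt$; this costs $n^{O(1)}$ time (for example, by testing the $O(n\H)$ candidate values, or by binary search over the polynomially many candidate heights, as detailed in Appendix~\ref{sec:omitted_proofs2}), and inflates the final approximation factor by at most $1+\eps$. From here on I work with $\opt'$, treating it as $\opt$.

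Second, I would invoke Lemma~\ref{lem:structural-3/2} to assert the existence of a target packing of height $(3/2+O(\eps))\opt$ in which $\Rit$ is bottom-left-flushed, the remaining items are nicely packed inside a family $\B$ of $O_\eps(1)$ containers (one of which is the empty reserve container $B^{\ast}$ of height $\opt/2$ and width $\eps_1 W$), and the tall items together with $\B$ are guillotine separable. The algorithm places $\Rit$ bottom-left-flushed (no guessing needed since this configuration is uniquely determined by the heights) and then enumerates, as in Lemma~\ref{lem_guess_containers_poly}, over all $n^{O_\eps(1)}$ possible container guesses: the type of each container, its approximately rounded width and height, the $O_\eps(1)$ ``big'' items inside each horizontal or vertical container, and a guillotine-compatible placement of the resulting $\B'\cup\Rit$, also including the three auxiliary containers $B_{med}$, $B_{hor}$, $B_{small}$ stacked on top. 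Since $|\B'|=O_\eps(1)$ and the guessed sizes do not exceed those of $\B$, at least one such guess is consistent with the structural packing.

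Third, for each successful guess I would actually pack the items using the component lemmas. The medium items go into $B_{med}$ of height $2\eps\opt$ via Lemma~\ref{lem:pack-med}. The horizontal and vertical items are assigned via the GAP-based routine of Lemma~\ref{lem_item_placement_hor_vert}: the slack from the approximate guesses together with the PTAS error for GAP leaves an $O(\eps)$-fraction of total width's worth of vertical items unplaced, and these fit inside $B^{\ast}$ precisely because $w(B^\ast)=\eps_1 W$ was sized to absorb exactly this amount (and symmetrically for horizontal items into $B_{hor}$). Finally, the small items are packed via NFDH inside their containers plus $B_{small}$ by Lemma~\ref{lem_item_placement_small}. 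Each of these subroutines runs in $n^{O_\eps(1)}$ time, so multiplied by the $n^{O_\eps(1)}$ guesses the whole algorithm runs in $n^{O_\eps(1)}$ time.

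The main obstacle is verifying that the final packing is guillotine separable and has height $(3/2+\eps)\opt$. Guillotine separability is inherited step by step: the placement of $\B'\cup\Rit$ is chosen to be guillotine separable (Lemma~\ref{lem_guess_containers_poly}), each nice internal packing is guillotine separable by definition, and the three auxiliary containers $B_{med}, B_{hor}, B_{small}$ sit on top of everything else and are separated from the rest and from each other by trivial horizontal cuts. For the height bound, the structural packing contributes $(3/2+O(\eps))\opt$, while $B_{med}$, $B_{hor}$, and $B_{small}$ add heights $2\eps\opt$, $\eps\opt$, and $27\eps\opt$, respectively; the total overhead is $O(\eps)\opt$. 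Rescaling $\eps$ by a suitable constant then yields the desired $(3/2+\eps)$-approximation, completing the proof.
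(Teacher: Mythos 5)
Your proposal is correct and follows essentially the same route as the paper: guess $\opt$ up to a $(1+\eps)$ factor, invoke Lemma~\ref{lem:structural-3/2}, place $\Rit$ bottom-left-flushed, enumerate the $n^{O_\eps(1)}$ container guesses of Lemma~\ref{lem_guess_containers_poly}, and pack via Lemmas~\ref{lem:pack-med}, \ref{lem_item_placement_hor_vert}, and \ref{lem_item_placement_small}, absorbing the $O(\eps)\opt$ overhead by rescaling $\eps$ (the paper uses $\eps/79$). One small caveat: enumerating the $O(n\H)$ candidate values for $\opt$ is only pseudo-polynomial, so you must use the constant-factor-approximation-based guessing (Steinberg's $2$-approximation plus a geometric grid of $O(\log_{1+\eps}2)$ candidates, as in Appendix~\ref{sec:omitted_proofs2}), which your alternative phrasing already covers.
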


\section{Conclusion and Open problems}
We were able to show essentially tight approximation algorithms for \tsg~in both the polynomial and the pseudo-polynomial settings. This was possible due to the structure of the respective optimal packings since they are guillotine separable. However, it is unclear how to obtain such a structured packing in the general case of \ts, and the question remains to close the gap between the best approximation guarantee of $({5}/{3}+\eps)$ and the lower bound of ${3}/{2}$.
 Another interesting open problem related to guillotine cuts is to find out whether there exists a PTAS for the 2D guillotine geometric knapsack (2GGK) problem.

	\bibliographystyle{plain}
	\bibliography{bibliography}
	
	\appendix

\newpage
\section{PPTAS omitted details}
\label{sec_omitted_details}

\subsection{ A pseudo-polynomial time approximation scheme}
\label{subsec:algo-PPTAS} 
%\label{sec:algo_new_1}
We describe the rest of the details of the PPTAS in this subsection.
In \Cref{subsec:pptasmed} we describe the packing of medium items.
Then in \Cref{subsec:pptasskew} we describe the packing of horizontal, tall, and vertical items.
Thereafter, in \Cref{subsec:pptassmall} we describe the packing of small items.
Finally, in \Cref{subsec:pptasdepen} we remove the dependency on $\H$ in the runtime.

\subsubsection{Packing of medium items}
\label{subsec:pptasmed}
\begin{lemma}[Restatement of Lemma~\ref{lem:pack-med}]
\label{lem:pack-med_1}In time $n^{O(1)}$ we can find a nice placement
of all items in $\Rme$ inside one container $B_{med}$ of height $2\eps\opt$
and width $\W$.
\end{lemma}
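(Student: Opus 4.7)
The plan is to apply Steinberg's theorem (Theorem~\ref{steinberg_alg}) as a black box to the medium items. By Lemma~\ref{class_1}, our choice of $\mu, \delta$ ensures that the total area of $\Rme$ is at most $\eps \cdot \opt \cdot W$. Also, by the item classification, every medium item has height at most $\delta \opt$, and since Lemma~\ref{class_1} also guarantees $\delta \le \eps$, we get $h_{\max}(\Rme) \le \delta\opt \le \eps\opt$. Trivially, $w_{\max}(\Rme) \le W$.

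Now I would set the target container to $B_{med}$ of width $W$ and height $H = 2\eps\opt$ and check Steinberg's sufficient condition
\[
2A + (2w_{\max} - W)_+ \cdot (2h_{\max} - H)_+ \;\le\; W H.
\]
The second summand vanishes because $2h_{\max} \le 2\eps\opt = H$ forces $(2h_{\max}-H)_+ = 0$, so the condition reduces to $2A \le W H$, i.e.\ $2\eps\opt \cdot W \le W \cdot 2\eps\opt$, which holds. Steinberg's algorithm then produces the required packing of $\Rme$ into $B_{med}$ in polynomial time.

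It remains to argue that this placement is \emph{nice} in the sense of Definition~\ref{def:structured-boxes}. The structural part is immediate since $\R_{B_{med}} \subseteq \Rme$; the only potential concern is guillotine separability of the items within $B_{med}$, which Steinberg's algorithm does not guarantee directly. The main obstacle, and the only nontrivial step, is to resolve this. I would handle it by replacing Steinberg's inside $B_{med}$ with a shelf-based Next-Fit-Decreasing-Height packing: since every medium item has height at most $\delta\opt \le \eps\opt$, NFDH yields an inherently guillotine packing (horizontal cuts between shelves, vertical cuts within each shelf) of total height at most $2A/W + h_{\max} \le 2\eps\opt + \delta\opt$. The extra $\delta\opt$ slack is negligible compared to the $O(\eps)\opt$ additive error already allowed by Lemma~\ref{lem:structural}, so one can simply enlarge $B_{med}$ by $\delta\opt$ (or equivalently absorb it into the $(1+16\eps)\opt$ budget), and the rest of the proof of Lemma~\ref{lem:structural} goes through unchanged.
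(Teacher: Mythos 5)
Your first two paragraphs are exactly the paper's proof: it likewise invokes Lemma~\ref{class_1} to get $a(\Rme)\le\eps\opt\cdot\W$, observes that every medium item has height at most $\delta\opt\le\eps\opt$, and applies Theorem~\ref{steinberg_alg} with $w=\W$, $h=2\eps\opt$, where $2h_{\max}\le h$ makes the correction term vanish. Where you diverge is the guillotine question. The concern is reasonable given Definition~\ref{def:structured-boxes}, but the paper's answer is different from yours: it relies on the fact, stated explicitly at the beginning of Appendix~\ref{sec:omitted_proofs2} with a citation to \cite{khan2021guillotine}, that Steinberg's algorithm itself produces a guillotine separable packing, so no replacement is needed and the box of height exactly $2\eps\opt$ suffices. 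Your NFDH fallback is a legitimate alternative (the shelf packing is trivially $2$-stage guillotine, and Lemma~\ref{nfdh_alg1} gives the height bound), but note what it actually proves: the height is $2a(\Rme)/\W+h_{\max}\le 2\eps\opt+\delta\opt$, and since Lemma~\ref{class_1} only guarantees $\delta<\eps$ (with no quantitative gap), this can be close to $3\eps\opt$, not the $2\eps\opt$ of the statement. That weakening is harmless at the $O(\eps)\opt$ level, but it is not purely a matter of silently ``absorbing slack'': $B_{med}$ with height exactly $2\eps\opt$ is also guessed and placed by the algorithms in Section~\ref{subsec:algo-PPTAS} and Section~\ref{subsec_alg_poly}, so the changed constant would have to be propagated there and into the final height accounting. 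In short: same core argument as the paper; the NFDH detour buys you an elementary, self-contained guillotine guarantee at the price of a slightly larger box, whereas the paper gets the stated constant by using the (cited) guillotine separability of Steinberg packings.
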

\begin{proof}
By choosing the function $f$  in  Lemma~\ref{class_1} appropriately (see Appendix~\ref{section_constants} for details on how to choose $f$), we ensure that the total area of medium items is at most $\eps\opt\cdot W$. Now, since the height of the medium items is at most $\delta\opt\leq \eps\opt$, we use Steinberg's algorithm (see Lemma~\ref{steinberg_alg}) to pack them in a container $B_{med}$ of height $2\eps\opt$ and width $W$, i.e., according to Lemma~\ref{steinberg_alg}, we choose $w=W$ and $h=2\eps\opt$ and observe that $2h_{\max}\leq 2\eps\opt= h$. 
\end{proof}

\subsubsection{Packing of tall, vertical, and horizontal items}
\label{subsec:pptasskew}
\begin{lemma}
\label{lem:DP}There is an algorithm with a running time of $(nh_{\max})^{|\B_{hor}|}$
that computes a packing of the items in $\Rho$ into the containers $\B_{hor}$.
Similarly, there is an algorithm with a running time of $(nW)^{|\B_{tall+ver}|}$
that computes a packing of the items in $\Rit\cup\Rve$ into the containers
$\B_{tall+ver}$.
\end{lemma}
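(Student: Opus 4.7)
The plan is to use two standard pseudo-polynomial dynamic programs, one per case. For the horizontal case, enumerate $\B_{hor}=\{B_1,\ldots,B_k\}$ with $k=|\B_{hor}|$ and the items $\Rho=\{r_1,\ldots,r_m\}$ in arbitrary order. Because items inside a horizontal container are stacked on top of each other in a nice packing (Definition~\ref{def:structured-boxes}), feasibility within $B_j$ reduces to the separable conditions that every item $r$ assigned to $B_j$ has $w_r\le w(B_j)$ and that the total height of items assigned to $B_j$ does not exceed $h(B_j)$. This lets me build a DP whose state is a tuple $(i;h_1,\ldots,h_k)$, where $i$ indexes the next item to process and $h_j\in\mathbb{Z}_{\ge 0}$ records the accumulated height of items already assigned to~$B_j$.

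From a state $(i;h_1,\ldots,h_k)$ I branch over each container $B_j$ for which $w_{r_i}\le w(B_j)$ and $h_j+h_{r_i}\le h(B_j)$, moving to $(i+1;h_1,\ldots,h_j+h_{r_i},\ldots,h_k)$; the DP accepts if some reachable terminal state $(m+1;\cdot)$ exists (equivalently, every item has been placed). Each $h_j$ is a nonnegative integer bounded by $h(B_j)\le(1+O(\eps))\opt\le O(nh_{\max})$, so the state space has size at most $n\cdot(O(nh_{\max}))^k$, and each transition costs $O(k)$; since $k=O_{\eps}(1)$, the total running time is $(nh_{\max})^{O_{\eps}(1)}$, which matches the claimed $(nh_{\max})^{|\B_{hor}|}$ up to constants absorbed into $|\B_{hor}|=O_{\eps}(1)$. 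Recovering the actual assignment by backtracking, we then produce the packing inside each $B_j$ by stacking its assigned items vertically in any order; the resulting placement is guillotine separable by horizontal cuts between consecutive items, and it stays guillotine separable together with the other containers since by Lemma~\ref{lem:structural} the containers themselves are guillotine separable.

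The argument for $\B_{tall+ver}$ is completely symmetric: inside each such container the items of $\Rit\cup\Rve$ are placed side by side, so feasibility decouples as $h_{r}\le h(B_j)$ for each assigned item and a total-width bound $\sum_r w_r\le w(B_j)$. I use the dual DP with state $(i;w_1,\ldots,w_{k'})$ tracking the accumulated widths, where $k'=|\B_{tall+ver}|$. Fill levels are now bounded by $w(B_j)\le W$, giving $O(n)\cdot W^{O(k')}$ states and total runtime $(nW)^{|\B_{tall+ver}|}$. Items assigned to the same container are placed side by side and separated by vertical guillotine cuts.

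There is no real obstacle here: the existence of a valid assignment is already guaranteed by the structural Lemma~\ref{lem:structural}, and the DP simply searches exhaustively over assignments, using that niceness inside a container is an assignment-level property and not a geometric one. The only item to verify carefully is that guillotine separability is preserved globally — which is immediate, because the outer guillotine sequence separating the containers in $\B$ can be extended inside each container by the trivial stack-of-horizontal-cuts (respectively stack-of-vertical-cuts) sequence produced by the DP reconstruction.
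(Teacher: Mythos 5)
Your proposal is correct and takes essentially the same route as the paper: the paper simply phrases the assignment step as an instance of the Maximum Generalized Assignment Problem with one bin per container (knapsack capacity $h(B_j)$ for horizontal containers, $w(B_j)$ for tall/vertical ones) and invokes the exact pseudo-polynomial DP of Lemma~\ref{gap_alg_1}, which is precisely the capacity-vector dynamic program you write out directly, with the same observation that niceness inside a container reduces to a one-dimensional capacity constraint and that stacking (resp.\ side-by-side placement) preserves guillotine separability.
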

\begin{proof}

 %First we assume that our algorithm is given as an input a value $\opt\textsuperscript{$\prime$}$ such that $\opt\leq \opt\textsuperscript{$\prime$}\leq (1+\varepsilon)\opt$. This assumption can be removed as follows. We compute a $2$-approximation $APX$ by Steinberg's algorithm (which is a guillotine separable packing) and then run our algorithm for all the constantly many values $\opt\textsuperscript{$\prime$}=(1+\varepsilon)^j\frac{APX}{2(1+\varepsilon)}$ which fit in the range $[\frac{APX}{2(1+\varepsilon)},APX(1+\varepsilon)]$. One of these values will satisfy the claim. In order to keep the notation light we denote $\opt\textsuperscript{$\prime$}$ by $\opt$. Therefore all the approximation factors would be scaled by a factor of $(1+\varepsilon)$ in order to consider the true value of $\opt$.

 We need to pack the  items nicely in the containers for which we convert the instance to an instance of the Maximum Generalized Assignment problem with one bin per container and the size of $j$-th container $B_j$ is $a(B_j)=w(B_j)\times h(B_j)$. 
We build an instance of GAP as follows. There is one item $R$ per rectangle $R \in I$, with profit $a(R)$. For each
horizontal container $B_j$, we create a knapsack $j$ of size $S_j := h(B_j )$. Furthermore, we define
the size $s(R, j)$ of a horizontal rectangle $R$ w.r.t. knapsack $j$ as $h(R)$ if $h(R) \leq h(B_j )$ and $w(R) \leq w(B_j )$.
Otherwise $s(R, j) = \infty$ (meaning that $R$ does not fit in $B_j$). The profit  for rectangle $R$, $p_R=a(R)$ throughout all the bins. The construction for vertical
containers is symmetric.% For each area container $B_j$ we create a knapsack $j$ of size $S_j = w(B_j )$
%and define the size $s(R, j)$ of rectangle $R$ w.r.t. knapsack $j$ as $w(R)$ if $h(R) \leq \varepsilon^{\prime}
%h(C_j )$ and
%$w(R) \leq \varepsilon^{\prime}
%w(C_j )$, setting $b(R, j) = \infty$ otherwise (meaning that the rectangle is not small
%enough with respect to the dimensions of the container).
  ~We now use the exact algorithm for GAP mentioned in Lemma~\ref{gap_alg_1} and finish the packing of all the vertical items in the respective containers in $(nW)^{|\B_{tall+ver}|}$ time and pack the horizontal items in the respective containers in $(nh_{\max})^{|\B_{hor}|}$ time. Note that the dependence on $h_{\max}$ in the running time of packing the horizontal items in their respective containers can be significantly reduced via Lemma~\ref{lem:discretize}.
\end{proof}

\subsubsection{Packing of small items}
\label{subsec:pptassmall}
Finally, we need to pack the small items. Let $\B_{small}\subseteq\B$
denote the boxes in $\B$ that contain small items in the packing
due to Lemma~\ref{lem:structural}. Note that for each small item
$i\in\Rsm$ only some of the boxes in $\B_{small}$ are allowed since
we require that $w_{i}\le\eps\cdot w(B)$ and $h_{i}\le\eps\cdot h(B)$
if $i$ is nicely packed inside a box $B$. 

On a high level, we pack the small items in two steps, similarly as
in \cite{nadiradze2016approximating, GGIK16, JansenR19}.
 First, we assign at least $I'_{small}\subset I_{small}$ items
 to the containers $\B_{small}$ such that for each box $B\in\B_{small}$,
each item $i\in\Rsm$ assigned to $B$ satisfies that $w_{i}\le\eps\cdot w(B)$
and $h_{i}\le\eps\cdot h(B)$ and the total area of the items assigned
to $B$ is at most $\height(B)\cdot\width(B)$, and the items not assigned have area at most $\eps \opt\cdot W$. Then, for each box
$B\in\B_{small}$ we try to pack its assigned items greedily using
NFDH~\cite{coffman1980performance, Galvez0AJ0R20} (see Lemma~\ref{nfdh_alg1} and Lemma~\ref{nfdh_alg2}). For the items in $\Rsm$ which remain unassigned (by our
first step or by NFDH), we will show that their total area
is at most $3\eps\opt\cdot\W$. Therefore, we can pack them
using NFDH into an additional container $B_{small}$ of height $9\eps\opt$
and width $\W$. 
\begin{lemma}
\label{lem:pack-small}There is an algorithm with a running time of
$(n|\B_{small}|)^{O(1)}$ that packs all items in $\Rsm$ into $\B_{small}$
and an additional container $B_{small}$ of height $9\eps\opt$ and width
$\W$. 
\end{lemma}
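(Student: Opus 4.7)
The plan is to pack the small items in two stages: first, assign almost all of them to the containers in $\B_{small}$ via a fractional LP and then round, and second, pack each container greedily by NFDH and collect whatever is left over into the extra container $B_{small}$ placed on top of the packing.

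First I would set up the following assignment LP: a variable $x_{i,B}\in[0,1]$ for every pair $(i,B)$ with $i\in\Rsm$, $B\in\B_{small}$, $w_i\le\eps\, w(B)$, and $h_i\le\eps\, h(B)$, with constraints $\sum_{B}x_{i,B}\le 1$ for every item and $\sum_{i}a(i)\, x_{i,B}\le w(B)h(B)$ for every container, maximizing $\sum_{i,B}a(i)\,x_{i,B}$. By Lemma~\ref{lem:structural} there exists a feasible nice packing in which every small item lies in some compatible container, so this LP admits a solution of value $a(\Rsm)$. I would compute an optimal basic feasible solution in polynomial time and then round by dropping each fractionally assigned variable. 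A standard basic-solution argument shows that at most $|\B_{small}|$ variables are fractional (one per capacity constraint), so at most $|\B_{small}|$ items become unassigned. Since each small item has area at most $\mu^{2}\cdot\opt\cdot W$ and we chose $\mu$ sufficiently small compared to $\eps/|\B_{small}|$ in Lemma~\ref{class_1}, the total area lost in this rounding step is at most $\eps\cdot\opt\cdot W$.

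Next, inside each $B\in\B_{small}$ I would run NFDH on the items assigned to $B$. Because every such item satisfies $w_i\le\eps\, w(B)$ and $h_i\le\eps\, h(B)$, the standard NFDH guarantee implies that the items it fails to place have total area at most $2\eps\cdot w(B)h(B)$. Summing over the $O_{\eps}(1)$ containers of $\B_{small}$, whose total area is bounded by $W\cdot(1+O(\eps))\opt$, the total area left over in this step is $O(\eps\cdot\opt\cdot W)$. Adding the items unassigned by the LP rounding, the overall area of small items that still need a home is at most $3\eps\cdot\opt\cdot W$.

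Finally, every such leftover item has $w_i\le\mu W\le\eps W$ and $h_i\le\mu\opt\le\eps\opt$, so I would pack them by NFDH into $B_{small}$ of width $W$ and height $9\eps\opt$. The NFDH height bound $2(\text{total area})/W+h_{\max}$ gives $6\eps\opt+\mu\opt\le 7\eps\opt<9\eps\opt$, which fits. Every step (LP solving, basic rounding, NFDH) is polynomial, giving the claimed $(n|\B_{small}|)^{O(1)}$ running time. The main subtle point will be the fractional-to-integral gap in the first stage: the $|\B_{small}|$ fractionally assigned items must collectively have area at most $\eps\cdot\opt\cdot W$, which is precisely what forces the quantitative relationship between $\mu$, $\eps$, and $|\B_{small}|$ enforced through Lemma~\ref{class_1}; everything else is standard.
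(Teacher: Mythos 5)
Your proof is correct, but your assignment step takes a genuinely different route from the paper. The paper reduces the assignment of small items to the containers to an instance of GAP (one knapsack per container, sizes and profits equal to item areas, size $\infty$ for incompatible pairs) and solves it \emph{exactly} with the pseudo-polynomial dynamic program of Lemma~\ref{gap_alg_1}; this loses nothing at the assignment stage, at the price of a running time depending on the container areas, which is acceptable in the PPTAS setting. You instead solve the fractional assignment LP and round a basic optimal solution: since the structural packing (Lemma~\ref{lem:structural}, resp.\ Lemma~\ref{lem_small_rearrange}) yields LP value $a(\Rsm)$, at an optimal vertex every item constraint is tight, so at most $|\B_{small}|$ items are fractionally split and dropping them costs area at most $|\B_{small}|\mu^{2}\opt\W\le\eps\opt\W$ by the choice of $\mu$ via Lemma~\ref{class_1}. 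This buys a strictly polynomial assignment step (arguably matching the stated $(n|\B_{small}|)^{O(1)}$ bound more literally than the paper's own DP), at the cost of an extra $\eps\opt\W$ of unassigned area, which your budget absorbs. From there both arguments coincide: NFDH inside each container wastes at most $2\eps\,w(B)h(B)$ by Lemma~\ref{nfdh_alg2}, and the leftovers of total area roughly $3\eps\opt\W$ (strictly, $3\eps\opt\W$ plus $O(\eps^{2}\opt\W)$ from the containers' total area $(1+14\eps)\opt\W$ -- a slack your $9\eps\opt$ height budget easily covers) go into $B_{small}$ via the NFDH bound $2A/\W+h_{\max}$ of Lemma~\ref{nfdh_alg1}. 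One cosmetic point: the correct counting is ``at most $|\B_{small}|$ items are split'' rather than ``at most $|\B_{small}|$ variables are fractional'' (the latter can be up to twice that), but your conclusion about dropped items is exactly what the vertex argument gives.
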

\begin{proof}
By Lemma \ref{lem_small_rearrange} we know that all items in $I_{small}$ can be nicely packed in $\B_{small}$. Hence, we convert this problem of nicely packing items in $I_{small}$ into containers in $\B_{small}$ to an instance of GAP  with one bin per container and the size of $j$-th container $B_j$ is $a(B_j)=w(B_j)\times h(B_j)$. 
We build an instance of GAP as follows. There is one item $R$ per rectangle $R \in I_{small}$, with profit $a(R)$. For each container $B_j$, we create a knapsack $j$ of size $S_j := a(B_j)$. Furthermore, we define
the size $s(R, j)$ of a horizontal rectangle $R$ w.r.t. knapsack $j$ as $a(R)$ \textit{iff} $h(R) \leq \eps h(B_j )$ and $w(R) \leq \eps w(B_j )$.
Otherwise $s(R, j) = \infty$ (meaning that $R$ is not small enough to nicely fit in $B_j$). The profit  for rectangle $R$, $p_R=a(R)$ throughout all the bins. Similarly as in Lemma \ref{lem:DP}, we now use the exact algorithm for GAP which runs in pseudo-polynomial time. Hence, the total area of nicely assigned small items to the containers in $\B_{small}$ is at least  the area of such items packed according to the packing in Lemma \ref{lem_small_rearrange}. Then we use NFDH to pack all items in their respective assigned containers according to GAP, in which case we might miss out on at most $3\eps$ fraction of the total area. All such items are packed in another box $B_{small}$ of height at most $9\eps \opt$ and width $W$ by NFDH.
\end{proof}

We pack $B_{small}$ on top of $B_{med}$ and obtain a packing of
height $(1+O(\eps))\opt$. This yields an algorithm with a running
time of $(n\H\W)^{O_{\eps}(1)}$. 

\subsubsection{Removal of dependency on $\H$}
\label{subsec:pptasdepen}
With a minor modification we
can remove the dependence on $\H$ and obtain a running time of $(n\W)^{O_{\eps}(1)}$,
see Lemma~\ref{lem:discretize} for details.

%Given a set of input items, we first preprocess them such that their heights are polynomially bounded (and recall that they are integers).
\begin{lemma}
\label{lem:discretize}By increasing the height of the optimal solution
by at most a factor $1+\eps$, we can assume that $\height_{i}\le {\lceil n/\eps \rceil}$
for each item $i\in I$ and that the corners of each item are placed
on integral coordinates in the optimal solution.
\end{lemma}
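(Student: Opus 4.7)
The plan is to argue the discretization in two steps: first rescale heights so that every $h_i$ is an integer of size at most $\lceil n/\eps \rceil$, then perform a second shift to force every item corner onto the integer grid, each step costing at most an additional $\eps\opt$ in packing height (so we can absorb both losses by redefining $\eps$).

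For the height-bounding step I would set $\tau := \lceil \eps\opt/n\rceil$ and round each $h_i$ up to $h_i' := \lceil h_i/\tau\rceil \cdot \tau$, so that $h_i'\in\tau\mathbb{Z}$ and $h_i'-h_i < \tau$. Starting from the optimal packing, I would then ``grow'' every item by $h_i'-h_i$ upward and eliminate the induced overlaps by a standard bottom-up shifting argument: process items in order of increasing bottom coordinate, and for each item $i$ define its shift to be the maximum shift of any item whose top touches or lies below $\bottomc(i)$ in the original packing, plus $\tau$. A straightforward induction on the depth of this ``rests-on'' relation—whose longest chain has length at most $n$—shows that every shift is at most $n\tau\le \eps\opt + \tau$, so the packing now has height at most $(1+\eps)\opt + O(\tau)$. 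Guillotine separability is preserved because the relative vertical order of items is unchanged: each horizontal cut in the original cutting sequence can simply be translated upward by the shift amount of the items just above it, and the vertical cuts are unaffected. Finally, rescaling the vertical axis by a factor $1/\tau$ turns every (rounded) height into a nonnegative integer of size at most $\lceil \opt/\tau\rceil \le \lceil n/\eps\rceil$, as required.

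For the integer-coordinate step, note that after rescaling the widths (already integer by the problem statement) and heights are both integers. I would walk through the guillotine cutting tree of the current packing from the root to the leaves and snap each cut to an integer coordinate lying within the empty ``slack'' region between the two sides of the cut. Since on either side of a cut the items have integer total width/height contributions touching the cut, such an integer cut coordinate always exists inside the slack, so no item is damaged and guillotine separability is trivially maintained. At the leaves of the cutting tree we are left with a single item inside a rectangular piece whose corners are already integral; placing the item with its bottom-left corner flush with the bottom-left corner of the piece produces integer coordinates for that item. Aggregating these choices yields a packing where every item corner is at an integer point, without any further height loss.

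The main technical care goes into the first step: we must verify that the bottom-up shift produces a valid packing (no two items overlap after shifting) and that the shift of each item is correctly bounded by its depth in the ``rests-on'' DAG rather than by the number of items overall. The key observation is that at the moment we shift item $i$, all items whose tops were at most $\bottomc(i)$ in the original packing have already been shifted to their final positions, so the condition ``$i$ lies directly on top of item $j$'' is locally testable and the induced shift is exactly the $\tau$-jump needed to re-establish non-overlap. Once this is in place, the two $\eps\opt$ losses combine into a single $(1+O(\eps))$ factor, which we relabel to $(1+\eps)$ by choosing the constant appropriately.
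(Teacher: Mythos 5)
Your height–rounding step is essentially the paper's own argument: the paper rescales each $h_i$ to $\lceil h_i n/(\eps \H)\rceil$ (where $\H=\max_i h_i$) and bounds the loss by the total rounding, exactly the ``grow each item and absorb $\sum_i(h_i'-h_i)$ in the height'' claim that you, unlike the paper, actually justify via the upward-shifting argument and the re-placement of horizontal cuts. Two concrete issues, though. First, your grid unit $\tau=\lceil \eps\opt/n\rceil$ depends on $\opt$, which is not known; the purpose of this lemma is to rescale the \emph{input} so that $\opt$ can then be guessed cheaply, so the scaling must be computable from the instance. Using $\H$ in place of $\opt$ (as the paper does), or a constant-factor estimate of $\opt$ from Steinberg's algorithm, fixes this with no other change, but as written the reduction is circular for the algorithmic application. (Minor slip: $n\tau\le \eps\opt+n$, not $\eps\opt+\tau$; this is harmless because when $n>\eps\opt$ we have $\opt<n/\eps$ and no rounding is needed at all.)

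The genuine gap is in the integral-coordinates step. The claim that the slack region around a guillotine cut ``always contains an integer coordinate'' is false: the gap between the rightmost left-side item and the leftmost right-side item can be, say, the interval $(3.2,3.9)$, which contains no integer, and the stated justification (``integer total width/height contributions touching the cut'') does not hold since item coordinates inside the sub-pieces are still arbitrary at that point. Moreover, moving only the cut line cannot by itself make the corners of items deep inside the two sub-pieces integral. The standard repair, close in spirit to your tree walk, is an induction over the cutting tree: any set of integer-sided items that packs guillotine-wise into an $a\times b$ piece packs guillotine-wise with integral coordinates into a $\lfloor a\rfloor\times\lfloor b\rfloor$ piece, because for a vertical cut at $x=c$ one has $\lfloor c\rfloor+\lfloor a-c\rfloor\le\lfloor a\rfloor$, so the two recursively integralized sub-boxes can be translated flush next to each other (base case: a single item with integer sides), and symmetrically for horizontal cuts. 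With that replacement, and $\tau$ made input-computable, your proof goes through and is in fact more detailed than the paper's, which asserts both the shifting bound and the integrality claim without proof.
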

\begin{proof}

Assume that $h_{\max}>n/\eps$ as otherwise the lemma statement holds trivially. Let the height of the optimum packing be OPT. For each item $i\in I$, we normalize its height $h_i$ to $h_{i}'={\left\lceil\frac{h_i n}{\varepsilon h_{\max}}\right\rceil}$. Let the height of the optimum packing for this new instance be $\opt'$ and call this optimum packing $P'$. 

Consider another instance where for each item $i\in I$, we normalize its height $h_i$ to $h_{i}''=\frac{h_i n}{\varepsilon h_{\max}}$. Note that this new instance can have heights for the rectangular items which are not necessarily integers. Let its optimum packing have height $\opt''$ and call this optimum packing $P$. Now, consider the packing $P$ and for each item $i\in I$ round its height $h_i$ to $h_i'$. Call this packing $P_1$. Then we claim that 
\[\opt'\leq h(P_1)\leq (1+\eps)\opt''\]
where $h(P_1)$ denotes the height of $P_1$.

The first inequality holds since heights of items in $P_1$ are at least as much as heights of items in the packing $P'$. For the second inequality,
 \[h(P_1)-\opt''\leq \sum_{i}(h_{i}'-h_{i}'')\leq n(1)\leq \eps(n/\eps)\leq \eps h_{\max}''\leq \eps \opt'' \]
The last inequality holds since $h_{\max}''$ is a trivial lower bound on the height of the optimal packing $\opt''$ for the instance with heights $h_i''$ for each item $i\in I$. 
Now, consider the packing $P'$ but with the heights of each item $i\in I$ to be their original height $h_i$. Call this packing $P'''$.
 \[h(P''')\leq  \opt' \left(\frac{\eps h_{\max}}{n}\right)\leq (1+\eps) \opt'' \left(\frac{\varepsilon h_{\max}}{n}\right)\leq (1+\eps) \opt \]
Thus, the required conditions of the lemma are satisfied for the respective heights of each item $i\in I$ normalized to $h_i'$.
\end{proof}

We run our pseudo-polynomial time approximation algorithm on the resulting instance for
which $\H\le n/\eps+1$ holds. Thus, we obtain a running time of
$(n\W)^{O_{\eps}(1)}$.

\begin{theorem}[Restatement of Theorem~\ref{thm:pptas}]
\label{thm:pptas_1}
There is a $(1+\eps)$-approximation algorithm for the guillotine
strip packing problem with a running time of $(n\W)^{O_{\eps}(1)}$.
\end{theorem}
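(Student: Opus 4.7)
The plan is to combine Lemma~\ref{lem:structural} with the packing routines of Lemmas~\ref{lem:pack-med_1}, \ref{lem:DP}, and \ref{lem:pack-small} via an enumeration over the structural skeleton. First I would apply Lemma~\ref{lem:discretize} to reduce to the case $h_{\max}\le \lceil n/\eps\rceil$ at a cost of a $(1+\eps)$ factor in the approximation ratio; this makes every candidate value of $\opt$ lie in $\{1,\dots,nh_{\max}\}\subseteq\{1,\dots,n^2/\eps\}$, so $\opt$ can be guessed by enumeration. Given the guess, we classify $I$ into $\Rit, \Rbg, \Rho, \Rve, \Rme, \Rsm$, and Lemma~\ref{class_1} provides $\delta,\mu=\Omega_\eps(1)$ for which the medium area is at most $\eps\opt\cdot W$ and $\mu$ is as small w.r.t.~$\delta$ as required by Lemma~\ref{lem:structural}.

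Next I would invoke the structural Lemma~\ref{lem:structural} to assert existence of a set $\B$ of $O_\eps(1)$ guillotine-separable containers packed inside $[0,W]\times[0,(1+16\eps)\opt)$, together with a partition $I=\bigcup_{B\in\B}I_B$ admitting a nice packing inside each $B$. Since $|\B|=O_\eps(1)$, all relevant choices can be enumerated in time $(nW)^{O_\eps(1)}$: the coordinates of each container (integral, in $[0,W]\times[0,(1+O(\eps))\opt)$), which of the five cases of Definition~\ref{def:structured-boxes} each container realizes, and, for single-item containers, which item is placed in it. Each guess yields independent subproblems for $\Rho$, $\Rit\cup\Rve$, $\Rme$, and $\Rsm$, each on a constant-size set of containers.

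For the skew items, we apply the exact pseudo-polynomial dynamic program of Lemma~\ref{lem:DP} which, since $|\B_{hor}|,|\B_{tall+ver}|=O_\eps(1)$, runs in $(nW)^{O_\eps(1)}$ time (using $h_{\max}\le n/\eps$ for the horizontal part). For the medium items we invoke Lemma~\ref{lem:pack-med_1} to nicely pack all of $\Rme$ in one additional container $B_{med}$ of height $2\eps\opt$ and width $W$; for the small items we invoke Lemma~\ref{lem:pack-small} to pack all of $\Rsm$ inside $\B_{small}$ together with an additional container $B_{small}$ of height $9\eps\opt$ and width $W$. We place $B_{med}$ and $B_{small}$ on top of the packing returned by the structural enumeration; guillotine separability is preserved since both are full-width boxes at the top. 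Among all enumerated guesses we return the one of minimum total height, and by Lemma~\ref{lem:structural} at least one guess corresponds to the structured packing it guarantees, so the height of the output is at most $(1+16\eps)\opt+2\eps\opt+9\eps\opt\le (1+O(\eps))\opt$; rescaling $\eps$ yields the claimed ratio.

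The main obstacle in converting the plan to a proof is bookkeeping across the three container-type packing subroutines and verifying that guillotine separability survives both the stacking of $B_{med},B_{small}$ on top and the approximate container-size guessing required to avoid a dependency on $h_{\max}$ beyond what Lemma~\ref{lem:discretize} permits; the actual algorithmic steps and their runtimes are already encapsulated in Lemmas~\ref{lem:DP}, \ref{lem:pack-med_1}, and \ref{lem:pack-small}, so no new technical lemma is needed and the total running time remains $(nW)^{O_\eps(1)}$.
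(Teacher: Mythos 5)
Your proposal is correct and follows essentially the same route as the paper: guess $\opt$ and the $O_\eps(1)$ containers of Lemma~\ref{lem:structural} with their types and placements, pack the horizontal and tall/vertical items via the exact pseudo-polynomial GAP dynamic program of Lemma~\ref{lem:DP}, handle medium and small items via Lemmas~\ref{lem:pack-med_1} and~\ref{lem:pack-small} with the extra full-width boxes $B_{med}$ and $B_{small}$ stacked on top, and remove the $h_{\max}$ dependence via Lemma~\ref{lem:discretize} (which you apply up front rather than at the end, an immaterial reordering). The height accounting and the final rescaling of $\eps$ match the paper's argument.
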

\begin{proof}
Follows from Lemma \ref{lem:structural}, the aforementioned algorithm and by choosing the parameter to be $\eps/33$ to finally achieve a $(1+\eps)$-approximation.
\end{proof}

\subsection{ Structural Lemma 1 omitted proofs}
\label{sec:algo_new}

\begin{lemma}[Restatement of Lemma \ref{lem:rearrange-2}]
\label{lem:rearrange-3}There exists a partition of $[0,\W]\times[0,(1+2\eps)\opt]$
into a set $\B_2$ of $O_{\eps}(1)$ boxes such that 
\begin{itemize}
\item the boxes in $\B_2$ are pairwise non-overlapping and admit a guillotine
cutting sequence,
\item the items in $\Rhard$ can be packed into $\B_2$ such that they are
guillotine separable and each box $B\in\B_2$ either 
\begin{itemize}
\item contains only items from $\Rit\cup\Rbg\cup\Rve$, or
\item contains only items from $\Rho$.
\end{itemize}
\item Any item $i\in \Rit\cup\Rbg\cup\Rve$ has height $h_i'=k_i \delta^2 \opt$ for some $k_i\in \mathbb{Z}$ such that $0<k_i\leq 1/\delta^2+1$.
\end{itemize}
\end{lemma}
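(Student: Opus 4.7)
The plan is to begin with the packing from Lemma~\ref{lem:rearrange}, which places $\Rhard$ into a set $\B_1$ of $O_\eps(1)$ boxes together with $O_\eps(1)$ $\Lc$-compartments admitting a pseudo-guillotine cutting sequence. First, I would round up the height of each item $i\in\Rit\cup\Rbg\cup\Rve$ to $h'_i:=\lceil h_i/(\delta^2\opt)\rceil\delta^2\opt$; since every such item has height $>\delta\opt$, at most $1/\delta$ of them can be stacked vertically inside any single container of the original packing, so the total height increase along any vertical "chain" of items is at most $(1/\delta)\cdot\delta^2\opt=\delta\opt\le\eps\opt$. The bound $h'_i\le k_i\delta^2\opt$ with $k_i\le1/\delta^2+1$ is immediate since tall items have $h_i\le\opt$. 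A standard vertical shifting argument (pushing each item up by a multiple of $\delta^2\opt$ to accommodate the rounding of the items beneath it) then shows that the rounded instance $\Rhard$ still admits a packing into the same set of boxes and $\Lc$-compartments, now living inside $[0,W]\times[0,(1+\eps)\opt]$ while preserving guillotine separability.

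Next, I would process each $\Lc$-compartment $L\in\L$ separately. By Definition~\ref{def:Lc} the horizontal arm of $L$ can contain only horizontal items while its vertical arm can contain only tall, large, or vertical items. Using the ray-shooting decomposition from~\cite{khan2021guillotine} illustrated in Figure~\ref{fig_rayshoot}, and spending one additional $\eps\opt$ of vertical space, I replace $L$ by a collection of $O_\eps(1)$ boxes $\Vc\cup\Hc$ that are pairwise non-overlapping and guillotine separable, such that all vertical/tall/large items formerly in $L$ go into the boxes in $\Vc$ and all horizontal items go into the boxes in $\Hc$. Applying this to every $L\in\L$ and taking $\B_2$ to be the union of $\B_1$ with all the box collections $\Vc\cup\Hc$ produces $O_\eps(1)$ boxes in total, each containing either only items of $\Rit\cup\Rbg\cup\Rve$ or only items of $\Rho$, and the total height stays within $(1+2\eps)\opt$.

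The remaining point is to upgrade the \emph{pseudo}-guillotine cutting sequence for $\B_1\cup\L$ into a genuine guillotine cutting sequence for $\B_2$. At each step where the original sequence performs a pseudo-cut that carves an $\Lc$-compartment $L$ out of a rectangular piece $R$, I instead perform the actual guillotine cuts for $\Vc\cup\Hc$ inside the rectangle that $L$ now occupies (after the ray-shooting decomposition, the boxes in $\Vc\cup\Hc$ fit inside a rectangular bounding region, so this is possible); since the pseudo-cut already left $R\setminus L$ a rectangle, the remainder of the sequence is unaffected. Splicing these local guillotine sequences into the global one yields the required guillotine cutting sequence for $\B_2$.

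The main obstacle is ensuring that the two sources of height blow-up — the height rounding and the $\Lc$-compartment conversion — can both be absorbed while \emph{simultaneously} preserving guillotine separability at the global level. This is delicate because when we shift items up to accommodate rounding, we must not destroy the boundaries along which subsequent guillotine cuts (either in the surrounding boxes or inside the converted $\Lc$-compartments) will be made. I would address this by performing the rounding and shifting locally within each element of $\B_1\cup\L$ before the ray-shooting step, so that each container's bounding box grows by at most $\eps\opt$ in height uniformly, and the containing guillotine/pseudo-guillotine cuts can be translated along with it.
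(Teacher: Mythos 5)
Your overall route is the same as the paper's: round the heights of the items in $\Rit\cup\Rbg\cup\Rve$ to multiples of $\delta^{2}\opt$ (costing at most $\delta\opt\le\eps\opt$ since at most $1/\delta$ such items are stacked on top of each other), and convert each $\Lc$-compartment, after shifting its vertical-arm items upward, into $O_{\eps}(1)$ boxes containing only tall/large/vertical or only horizontal items. However, two points that carry the real weight of the argument are missing or wrong in your write-up. First, you import the $\Lc$-to-boxes conversion as a black box (``the ray-shooting decomposition from \cite{khan2021guillotine}''), but this conversion is precisely what has to be proved for this lemma: one needs the bend-counting argument (after shifting the vertical-arm items up by $\eps h(L)$, every vertical run of the boundary curve has length at least $\eps h(L)$, so the curve has only $O(1/\eps)$ bends and hence yields $O_{\eps}(1)$ boxes), and one must deal with the case where a vertical projection from a bend would slice through a vertical item --- the paper resolves this by snapping such a projection to the nearest first-stage guillotine cuts of the vertical arm, which exist because the original packing is guillotine separable. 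Without these ingredients, your claim that $L$ splits into $O_{\eps}(1)$ item-respecting boxes is unsupported.

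Second, your argument for upgrading the pseudo-guillotine sequence to a genuine one is incorrect as stated. An $\Lc$-compartment is not a rectangle, and the bounding rectangle of the boxes $\Vc\cup\Hc$ overlaps $R\setminus L$, which contains other boxes and items; so you cannot ``perform the actual guillotine cuts for $\Vc\cup\Hc$ inside the rectangle that $L$ now occupies.'' Guillotine cuts must be edge-to-edge cuts of the current rectangular piece $R$, so the correct argument is the staircase one implicit in Figure~\ref{fig_rayshoot}: alternately make a full-width horizontal cut along the top of the bottommost remaining $\Hc$-box and a full-height vertical cut along the right edge of the leftmost remaining $\Vc$-box; by the construction of the boundary curve and of the boxes, these cuts intersect no item, no box of the other type, and stay clear of $R\setminus L$, and after $O(1/\eps)$ such cuts the leftover piece is exactly the rectangle $R\setminus L$, on which the remainder of the (pseudo-)guillotine sequence proceeds unchanged. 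With that replacement, and with the bend-count and projection-snapping details supplied, your proposal coincides with the paper's proof.
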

\begin{proof}
The proof consists of two steps. %The first step is to process the individual $\Lc$ compartments to obtain $O_{\eps}(1)$ $\Hc$-compartments and $\Vc$ compartments. Intuitively we do this by shifting the vertical leg of the $\Lc$ by $O(\eps)$ amount vertically and allowing division into  a constant number of new $\Vc$ compartments. %
%In the second step, we use resource augmentation to obtain packing of items in $O_{\eps}(1)$ $\Hc$-compartments. 
%In the second step, we are left with large items already packed nicely each in a single $\Bc$-compartment. 
The first step is to process the individual $\Lc$-compartments (from Lemma~\ref{lem:rearrange}) to obtain $O_{\eps}(1)$ $\Vc$-compartments. Intuitively we do this by shifting the vertical leg of the $\Lc$ by $O(\eps) \opt$ amount vertically and allowing division into  a constant number of new $\Vc$-compartments. %
In the second step, we obtain packing of items in $O_{\eps}(1)$ $\Hc$-compartments. Here, $\Hc$-compartments are boxes with height at most $\delta\frac{\text{OPT}}{2}$ and $\Vc$-compartments are boxes with width at most $\delta\frac{W}{2}$. Hence, a $\Hc$-compartment cannot contain items in $\Rit\cup\Rve$ and a $\Vc$-compartment cannot contain items in $\Rho$.

We now explain the process in detail by considering an $\Lc$-compartment $L$ defined by a simple rectilinear polygon with six edges $e_{0},e_{1},\dots,e_{5}$ as given in Definition \ref{def:Lc}. Let the vertical and horizontal legs be $L_V$ and $L_H$, respectively. W.l.o.g we can assume that the horizontal leg lies on the bottom right of the vertical (other cases can be handled analogously). Let $p_1$ be the bottommost point where $L_V$ and $L_H$ intersect. Let $e_5$ (resp. $e_4$) be the shorter vertical (resp. horizontal) edge of $L_V$ (resp. $L_H$). Let the height and width of $L$  be $h(L),\ w(L)$, respectively. 

We first shift each item in $L_V$ vertically by an $\eps h(L)$ amount, see Figure \ref{fig_rayshoot}. We then create the boundary curve with $\frac{2}{\eps}$ bends. We begin at $p_1$, i.e., the leftmost bottommost point of  $L$. Using a ray shooting argument, continue drawing in the horizontal direction till the ray hits an horizontal item. Then we bend the ray and move it upwards till it hits one of the vertical items. Recursively repeating this process till the ray hits one of the bounding edges $e_5$ or $ e_4$. Let this point be $p_2$. The trace of the ray defines the boundary curve $C_{H,V}$ between $L_H$ and $L_V$, starting at $p_1$ and ending at $p_2$. 
Now we argue that the number of bends in the given boundary curve $C_{H,V}$ is $\frac{4}{\eps}$. It is clear that the number of bends in $C_{H,V}$ will be twice as much as the number of distinct vertical paths in the curve $C_{H,V}$. Since each vertical item is shifted vertically by an $\eps h(L)$ height initially, each such vertical path will be at least $\eps h(L)$ in length. This establishes that the number of such distinct vertical paths can be at most $\frac{1}{\eps}$ which bounds the number of bends by at most $2\frac{1}{\eps}+2 \leq \frac{4}{\eps}$ bends. 

We then further create the vertical compartments by first extending the projections from bends of boundary curve $C_{H,V}$ in vertical direction. If the vertical projection does not intersect with any of the vertical items, this can be considered as a guillotine cut and used to separate the items. If any such vertical projection intersects any item in $L_V$, we cannot divide the vertical leg using this line. Instead, we exploit the fact that vertical leg has first stage of guillotine cuts separating them since the $\Lc$ considered is pseudo-guillotine separable. Now, we instead consider the two nearest consecutive guillotine cuts, one on the immediate left and other on the immediate right of the vertical projection. The arrangements in previous steps and pseudo-guillotine separability ensure that any two consecutive guillotine cuts in the vertical leg are at the same level and form a box. This divides the vertical leg into one additional vertical box.%\arnr{How are we getting 3 boxes with two cuts and how are these boxes guillotine separable. Needs to explained elaborately.}
Thus at the end of this procedure, we will have at most $\frac{24}{\eps}$.  

Similarly we extend the horizontal projections from bends of boundary curve $C_{H,V}$ dividing the horizontal leg $L_H$ into $O_{\eps}(1)$ $\Hc$-compartments.

We do the procedure for all the $\Lc$-compartments which  are at most $4\delta$ many. The total height added considering $\eps h(L)$ vertical shifts corresponding to each $L$ which is a $\Lc$-compartment, can be at most $\eps\opt$ since the cumulative height of $\Lc$-compartments on the top of each other can at most be $\opt$. This gives us $O_\eps(1)$ new $\Hc$ and $\Vc$-compartments. Since the items in  $ \Rit\cup\Rbg\cup\Rve$ have height at least $\delta \text{OPT}$, there can be at most $1/\delta$ items  in  $ \Rit\cup\Rbg\cup\Rve$ stacked on top of each other. Hence, rounding the heights of these items to multiples of $\delta^2 \opt$ can only increase the height of the packing by at most $\delta^2 \opt$ per item. Since there are at most $1/\delta$ items in $ \Rit\cup\Rbg\cup\Rve$ stacked on top of each other, the overall increase in the height of the packing can be at most $\delta^2\opt\cdot(1/\delta)\leq \delta\opt\leq \eps\opt$. Hence, the total  increase in the height of the packing is at most $2\eps\opt$.
\end{proof}

\begin{lemma}[Restatement of Lemma \ref{lem_small_rearrange}]
\label{lem_small_rearrange_1}
There exists a set of $O_{\eps}(1)$ boxes $\B_{small}$, all
contained in $[0,\W]\times[0,(1+14\eps)\opt]$, such that the
boxes in $\B_{hard}\cup\B_{small}$ are non-overlapping and guillotine
separable and the items in $\Rsm$ can be placed nicely into the boxes
$\B_{small}$ .
%(possibly except for two boxes $B_{small}$ and $B_{hor}$ which we pack on top to account for unpacked small and horizontal items, respectively which may not have a nice packing).
\end{lemma}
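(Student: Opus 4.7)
The strategy is to reuse the guillotine cutting sequence that already separates the $O_\eps(1)$ containers of $\B_{hard}$ to carve the new containers for the small items out of the empty space of the current packing, and to absorb a small amount of leftover area in one extra strip on top.

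First, I would extend a guillotine cutting sequence for $\B_{hard}$ all the way down to its leaves. Because $|\B_{hard}| = O_\eps(1)$, this cutting tree has $O_\eps(1)$ leaves, and hence produces a set $\mathcal{E}$ of $O_\eps(1)$ empty rectangular pieces. I would then classify each $E \in \mathcal{E}$ as \emph{usable} if $w(E) \ge (\mu/\eps) W$ and $h(E) \ge (\mu/\eps)\opt$, and \emph{discarded} otherwise; let $\B'_{small}$ be the usable pieces. A small item $i$ has $w_i \le \mu W$ and $h_i \le \mu\opt$, so if placed in a usable box $B$ it automatically satisfies $w_i \le \eps w(B)$ and $h_i \le \eps h(B)$, i.e.\ the size conditions for a nice packing in Definition~\ref{def:structured-boxes}. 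A discarded piece has at least one side shorter than the corresponding threshold, so its area is at most $(\mu/\eps) W \cdot (1+O(\eps))\opt$, and the total area of all discarded pieces is thus at most $|\mathcal{E}| \cdot (\mu/\eps)\cdot W\opt \le \eps W\opt$, provided $\mu$ is chosen sufficiently small compared to $\eps$ and the constant bounding $|\mathcal{E}|$. This is exactly what the hypothesis ``$\mu$ sufficiently small compared to $\delta$'' enforced via the function $f$ in Lemma~\ref{class_1} gives.

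Second, I would pack the items of $\Rsm$ greedily into $\B'_{small}$ with NFDH (Lemma~\ref{nfdh_alg1}). Since each small item has width at most $\eps w(B)$ and height at most $\eps h(B)$ in its target box $B$, the standard NFDH area estimate says the packed items fill at least a $(1-O(\eps))$ fraction of each $B \in \B'_{small}$. Combining the discarded-region area with the NFDH waste, the unpacked small items have total area at most $3\eps W\opt$, each with width at most $\mu W \le \eps W$ and height at most $\mu\opt \le \eps\opt$. These leftovers therefore fit, again by NFDH, into a single additional box $B_{small}$ of width $W$ and height $9\eps\opt$ which I would place on top of the current packing; set $\B_{small} := \B'_{small} \cup \{B_{small}\}$, so $|\B_{small}| = O_\eps(1)$.

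Guillotine separability of $\B_{hard} \cup \B_{small}$ is immediate: $\B_{hard} \cup \B'_{small}$ are exactly the leaves of the cutting tree from the first step, and $B_{small}$ is separated from everything below by a single horizontal cut placed at the top of the earlier packing. Adding the $9\eps\opt$ contributed by $B_{small}$ to the $(1+5\eps)\opt$ height already in use after the processing of the horizontal boxes via Lemma~\ref{lem:hor-boxes} gives height at most $(1+14\eps)\opt$. The main subtlety is the quantitative choice of $\mu$: $f$ in Lemma~\ref{class_1} must be selected so that $\mu$ is not only small enough for the medium-area bound but also small enough that $|\mathcal{E}|\cdot\mu/\eps \le \eps$, which is where the sentence ``$\mu$ sufficiently small compared to $\delta$'' in the hypothesis of Lemma~\ref{lem:structural} earns its keep.
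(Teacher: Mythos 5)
There is a real gap: you never establish that the usable empty space actually has enough \emph{area} to hold the small items. Your accounting only bounds two loss terms (the thin discarded pieces and the NFDH waste inside each usable box), but the dominant danger is different: the containers in $\B_{hard}$ can contain a constant fraction of dead space inside them. A container in which tall/vertical items of varying heights are placed side by side (as produced by Lemma~\ref{lem:hor-boxes-1}) is only guaranteed to contain its items nicely, not to be nearly full, so the area of $\B_{hard}$ can exceed the area of the hard items by $\Omega(\opt\cdot\W)$. In the optimal packing the small items may live exactly in that space, so the empty leaves of your cutting tree can be too small by a constant fraction of $\opt\cdot\W$; then far more than $3\eps\,\opt\cdot\W$ of small items remain unpacked, and no box of height $9\eps\opt$ on top can absorb them. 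Saying "NFDH fills each usable box up to a $(1-O(\eps))$ fraction" does not help unless you also prove that the total capacity of the usable boxes is at least $a(\Rsm)-O(\eps)\opt\cdot\W$.

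This is precisely what the paper's proof does first and what your proposal omits: inside each vertical (and analogously horizontal) container it re-sorts the items by height, groups them into height classes of width $\delta^{2}\opt$, and replaces the container by the minimal bounding boxes of the groups, so that every resulting container is at least a $(1-\delta)$ fraction full. Only after this tightening does the free area outside the containers (organized there as a non-uniform grid rather than your cutting-tree leaves, but that difference is cosmetic) provably suffice for all but $O(\eps)\opt\cdot\W$ of the small items. A secondary, fixable point: your claim that the cutting tree for $\B_{hard}$ has only $O_{\eps}(1)$ empty leaves needs a short WLOG argument (e.g., that trimming cuts around a group of containers can be limited to a constant number per branching cut); as stated it is asserted, not proved.
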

\begin{proof}
Consider a vertical container $B$ in which the items are nicely packed. Now we rearrange items such that items are placed side by side in a non-increasing order of their heights from left to right with no gap between the adjacent items and the items touch the bottom edge of $B$. Now group the items in $B$ based on their heights. For all $i\in[(1-{\delta})/\delta^2]$, group $i$ contains the items with heights in the range $({\delta}{\opt}+(i-1)\cdot\opt\cdot\delta^2, {\delta}{\opt}+i\cdot\opt\cdot\delta^2]$. Now for each group $i$, consider the minimal rectangular region containing all the items in group $i$ and make it a container. Repeat a similar process for the horizontal containers too. Now the ratio of the area of the rectangles in a container to the area of the container is at least $1-\delta$. 

Now, we form a non-uniform grid by extending the edges of the containers until it hits a container or the edges of the half-strip $[0,W]\times[0,\infty)$. Note that the empty grid cells are guillotine separable as the containers when considered as pseudo-items are guillotine separable and the guillotine cuts coincide with one of the edges of the containers. 
 Now we can choose $\mu,\delta$ appropriately such that the total area of empty grid cells with height less than $\eps'\opt$ or width less than $\eps'\W$ is $\delta^2\cdot\opt\cdot \W$ where $\eps'=\mu/\eps$.  Note that the total number of containers which have height more than $\eps'\opt$ and width more than $\eps'\W$ is $O_{\eps}(1)$.
%[$(1-\delta^2)\opt\cdot \W-$area of compartments] which is greater than [$(1-\delta^2)\opt\cdot \W-\frac{1}{1-\delta}$(area of skewed items)] which is greater than $(1-\delta)[a(I_{small})-2\delta\opt\cdot W]$.
 Now we can pack small rectangles using NFDH in the small containers. Then the total area of the small items not packed nicely is at most $3\eps\opt\cdot W$. Hence, using NFDH we can pack these remaining small rectangles of area in another container $B_{small}$ of height at most $9\eps \opt$ and width $W$. Thus we have at most $O_{\eps}(1)$ small containers which have a nice packing. Thus, to calculate the extra height on top of OPT,
\begin{enumerate}
 \item We have a box $B_{small}$ of height $9\eps\opt$ and width $W$,
\item Additional containers in $\B_{hor}$ due to resource augmentation, which can be packed in a box of height $3\eps\opt$ and width $W$,
\item An increase of $2\eps\opt$ to account for rounding the items in $\Rit\cup\Rbg\cup\Rve$ and shifting items in each $\Lc$-compartment vertically upward according to Lemma~\ref{lem:rearrange-3}.
\end{enumerate}
\end{proof}

\section{$\boldsymbol{(\frac{3}{2}+\eps)}$-approximation omitted proofs}
\label{sec:omitted_proofs2}

For the polynomial $(3/2+\eps)$-approximation algorithm, we show first how to guess  value $\opt\textsuperscript{$\prime$}$ such that $\opt\leq \opt\textsuperscript{$\prime$}\leq (1+\varepsilon)\opt$ in polynomial time. We do this  by computing a $2$-approximation APX by Steinberg's algorithm~\cite{steinberg1997strip} (which is a guillotine separable packing~\cite{khan2021guillotine}) and then run our algorithm for all values of $\opt\textsuperscript{$\prime$}=(1+\varepsilon)^j\frac{\text{APX}}{2}$ which fit in the range $[\frac{\text{APX}}{2},\text{APX}(1+\varepsilon)]$, i.e., for $j\in\mathbb{Z}$ such that $1\leq j\leq 1+{\lfloor \log_{(1+\eps)}2  \rfloor}$. One of these values will satisfy the claim.

\begin{lemma}[Restatement of Lemma \ref{lem:structural-1}]
\label{lem:structural-1_app}There exists a set $\B$ of $O_{\eps}(1)$
pairwise non-overlapping and guillotine separable boxes that are all
placed inside $[0,\W]\times[0,(1+16\eps)\opt)$ and a partition
$\R=\bigcup_{B\in\B}\R_{B}$ such that for each $B\in\B$ the items
in $\R_{B}$ can be placed nicely into $B$. Also, for each box $B\in\B$
containing an item from $\Rit$ we have that the bottom edge of $B$
intersects the line segment $[0,\W]\times\{0\}$. 
\end{lemma}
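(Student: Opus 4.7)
The plan is to start with the packing guaranteed by Lemma~\ref{lem:structural}, which already gives us a set $\B$ of $O_\eps(1)$ pairwise non-overlapping and guillotine separable boxes inside $[0,\W]\times[0,(1+16\eps)\opt)$ such that the items of $\R$ are partitioned nicely among them. What remains is to transform this packing so that every box containing a tall item has its bottom edge on the line $y=0$, without destroying guillotine separability and without changing the sizes or relative arrangement within each box.

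The main tool will be a simple swap operation along the guillotine cutting sequence. Fix any guillotine cutting sequence for $\B$. I would walk through this sequence and, whenever a horizontal cut splits a current rectangular piece $R$ into a top piece $R_\mathrm{top}$ and a bottom piece $R_\mathrm{bot}$, perform the following: if $R_\mathrm{top}$ contains a box from $\B$ that contains a tall item, swap the two pieces (together with all boxes and items inside them) within the region originally occupied by $R$. The key observation that makes this well-defined is that a tall item has height strictly greater than $\opt/2$, while the entire strip has height at most $(1+16\eps)\opt < 2\opt$; hence any current piece $R$ in the cutting sequence has height at most $(1+16\eps)\opt$, and the two pieces obtained from a horizontal cut cannot both have height exceeding $\opt/2$. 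Consequently at most one of $R_\mathrm{top}, R_\mathrm{bot}$ can accommodate a tall item, and the swap rule is unambiguous.

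After processing every horizontal cut in the sequence in this way, I claim that every box $B\in\B$ containing a tall item satisfies $\bottomc(B)=0$. Indeed, if $B$ contains a tall item, then at every ancestor horizontal cut in the cutting-sequence tree the piece containing $B$ must have been placed on the bottom by our rule; tracing from the root this forces the bottom of $B$ to coincide with the bottom of the strip. The resulting packing is still guillotine separable because we only permuted the two children produced by each horizontal cut; the sequence of cuts (with the same $x$-coordinates and with the $y$-coordinates reflected inside the affected pieces) still separates all boxes. Since we neither moved items within boxes, nor changed the set $\B$ itself, nor altered the boxes' dimensions, the partition $\R=\bigcup_{B\in\B}\R_B$ and the nice packings inside each box are preserved, and the overall packing continues to fit in $[0,\W]\times[0,(1+16\eps)\opt)$.

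The only subtlety that requires a little care is to argue that the swap operation at a horizontal cut can always be extended to a full guillotine cutting sequence on the rearranged packing; but this is immediate, since within each piece $R$ we reuse the same sub-sequences of cuts---only their absolute coordinates are translated. No hard step is expected here; the proof is essentially a bubble-up argument along the guillotine tree, made possible by the $\opt/2$ height threshold defining tall items.
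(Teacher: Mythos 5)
Your overall strategy coincides with the paper's: start from the packing of Lemma~\ref{lem:structural}, walk through its guillotine cutting sequence, and at every horizontal cut swap the two resulting pieces so that the piece containing a box with a tall item ends up at the bottom, then argue that this bubbles every such box down to $y=0$ while swaps of siblings preserve guillotine separability. However, the one justification you give for the crucial well-definedness of the swap rule is arithmetically wrong: from the strip height being $(1+16\eps)\opt<2\opt$ you conclude that the two pieces produced by a horizontal cut cannot both have height exceeding $\opt/2$. That conclusion would require the piece being cut to have height at most $\opt$, whereas the root piece (and possibly pieces near the root) has height $(1+16\eps)\opt>\opt$, so both children can perfectly well be taller than $\opt/2$. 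If both children contained tall-item boxes, your rule would necessarily leave one of them on top and the bubble-down invariant would fail. The fact you actually need---and which the paper states as ``only one of the two pieces can contain a tall item''---is structural rather than a height count: in the height-$\opt$ optimal packing every tall item has its top strictly above and its bottom strictly below the line $y=\opt/2$, so no horizontal cut can have tall items strictly on both sides, and this is inherited (up to the $O(\eps)\opt$ vertical shifts) by the packing of Lemma~\ref{lem:structural}. You need to invoke or prove that property; your numeric argument does not substitute for it.

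A second, smaller gap is at the end: from ``$B$ lies in the bottom child at every ancestor horizontal cut'' you conclude that the bottom edge of $B$ is at $y=0$, but this only forces the \emph{leaf piece} containing $B$ to have its bottom edge at $y=0$; the box $B$ itself may float above the bottom of that leaf piece if the sequence stops cutting once $B$ is separated from the other boxes. The paper sidesteps this by arguing by contradiction on a refined cutting sequence in which the piece around a tall item is cut down exactly to the item, so that a horizontal cut through the item's bottom edge exists and the swap rule applies to it; alternatively one can slide $B$ down inside its leaf piece. This part is easy to repair, but as written the final step does not follow.
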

\begin{proof}
The proof follows from the proof of Lemma~\ref{lem:structural} and the algorithm described at the start of Section~\ref{subsec:structural_2} for shifting the tall items using guillotine cuts until their bottom edges intersect the bottom of the half-strip. First of all, for a packing, in a stage $k$ (be it vertical or horizontal) in a guillotine cutting sequence  if a box $B$ is subdivided into $j$ boxes $B_1,B_2,...,B_j$ by guillotine cuts, then swapping any of these boxes does not affect the guillotine separability of the packing. Now, for the sake of contradiction assume that at the end of a valid guillotine cutting sequence, after applying the algorithm mentioned before for swapping tall items, a tall item $i$ is such that $bottom(i)>0$. Then, since $i$ has been separated from all other items by guillotine cuts at the end of the cutting sequence, assume that the box $B$ which contains it (on account of the guillotine cuts) at the end is exactly the size of $i$ itself. Now, to have such a box $B$ containing $i$, at some stage in the guillotine cutting sequence we would have had a horizontal cut which intersects the bottom edge of $i$. But then $i$ would have been swapped across that cut with the resulting guillotine box below $B$ in that stage (by virtue of the algorithm), which is a contradiction. Hence, the lemma statements follows.
\end{proof}

\begin{lemma}[Restatement of Lemma \ref{lem_guess_containers_poly}]
\label{lem_guess_containers_poly_1}
In time $n^{O_{\eps}(1)}$ we can compute a placement for the
containers in $\B'$ such that together with the items $\Rit$, they are
guillotine separable.
\end{lemma}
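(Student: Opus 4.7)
The plan is to enumerate over polynomially many candidate placements and verify guillotine separability for each. I rely on two facts: first, the set $\B'$ has cardinality $O_{\eps}(1)$, and second, by Lemma~\ref{lem:structural-3/2} there exists a feasible placement where each container of $\B'$ fits inside the corresponding (at least as large) container of $\B$. Hence any placement that respects the relative structure of that hypothetical packing will do.

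First I would place the items of $\Rit$ bottom-left-flushed; since we know all items of $\Rit$ and their heights, this placement is determined uniquely in polynomial time. Let $X$ denote the set of $x$-coordinates consisting of $0$, $W$, the right edges of the tall items in this bottom-left-flushed arrangement, and all multiples of $\eps_5 W$ in $[0,W]$; note $|X|=n^{O(1)}$. Let $Y$ denote the set of $y$-coordinates consisting of $0$, $\opt/2$, $(3/2+O(\eps))\opt$, the top edges of tall items, and all multiples of $\eps_6 \opt$; again $|Y|=n^{O(1)}$. For each container $B\in\B'$ I would guess its bottom-left corner from $X\times Y$. Since $|\B'|=O_{\eps}(1)$, this yields only $n^{O_{\eps}(1)}$ candidate placements in total.

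For each candidate placement I would (i)~check that the containers in $\B'$ together with the items of $\Rit$ are pairwise non-overlapping and fit inside $[0,W]\times[0,(3/2+O(\eps))\opt)$, and (ii)~check guillotine separability of this collection of $n+O_{\eps}(1)$ rectangles. Both checks run in polynomial time; guillotine separability can be decided, e.g., by the standard recursive algorithm that tries each axis-parallel end-to-end cut.

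The main obstacle is arguing that \emph{some} point of the enumerated grid corresponds to a valid placement. For this I would take the guaranteed packing from Lemma~\ref{lem:structural-3/2} and show that each container of $\B'$, sitting inside a strictly larger container of $\B$ in that packing, can be shifted to a grid point of $X\times Y$ while preserving guillotine separability. The slack is exactly the difference between the (larger) $\B$ container's dimensions and the (smaller, guessed) $\B'$ container's dimensions, which is at least $\eps_5 W$ horizontally and at least $\eps_6\opt$ vertically by our choice of the guessing granularities in Section~\ref{subsec_alg_poly}; the special container $B^{*}$, together with $B_{med}$, $B_{hor}$, $B_{small}$, which are placed on top of everything else, can be handled analogously using the empty strip of width $\eps_1 W$ and the free space above the packing. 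Since shifting containers within their slack along the guillotine tree preserves the cutting sequence, the shifted placement lies in $X\times Y$ and is guillotine separable, and one such configuration will be found by the enumeration.
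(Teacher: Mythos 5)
Your enumeration-plus-verification framework matches the paper's high-level plan (place $\Rit$ bottom-left-flushed, enumerate polynomially many candidate placements for the $O_{\eps}(1)$ containers, and check guillotine separability of each candidate with the recursive cut-checking algorithm). However, the step where you argue that some enumerated placement is feasible has a genuine gap. You snap each container to a uniform grid of granularity $\eps_5\W\times\eps_6\opt$ and justify this by claiming that every guessed container in $\B'$ is smaller than its counterpart in $\B$ by at least $\eps_5\W$ horizontally and $\eps_6\opt$ vertically. That is not true of the guesses made in Section~\ref{subsec_alg_poly}: a container holding a single large item is guessed with exactly the item's dimensions; a horizontal container's width is guessed as the width of its widest item, which may coincide with the container's width; a vertical container's height is guessed as the height of its tallest item; and small containers are rounded only to multiples of $\eps_4=\mu$, which is far below $\eps_5$. (The constants $\eps_5,\eps_6$ are in fact the GAP error parameters, not placement slacks.) Moreover, even where some dimension slack exists, it does not license snapping positions to a uniform grid: in the structured packing containers may be flush against each other and against tall items, so moving one container by up to a grid step can create overlaps or destroy the guillotine structure, and absorbing the displaced items elsewhere would blow the error budget of $B^{*}$, which is already fully consumed by the container-width rounding and the GAP loss ($2\eps_3|\B_{ver}|\W+(3\eps_5/\delta)\W\le\eps_1\W$).

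The paper sidesteps this entirely by enumerating a \emph{structured} coordinate set rather than a uniform grid: since there are only $O_{\eps}(1)$ containers and their guessed dimensions come from a polynomial-size set $S$, every container corner in the target packing can be written as an $O_{\eps}(1)$-term sum of heights (resp.\ widths) from $S$, possibly offset by $h^{*}$ or $h^{*}+\opt/2$ (resp.\ by a prefix sum of tall-item widths and possibly $\eps_1\W$). This gives $n^{O_{\eps}(1)}$ candidate coordinates for which an exactly feasible, guillotine-separable placement is guaranteed to appear, with no shifting argument needed. To repair your proof you would either have to adopt such a combinatorial coordinate set, or prove a separate lemma that containers can be shrunk and re-packed with genuine positional slack while re-accounting for the items lost to shrinking --- neither of which your sketch provides.
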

\begin{proof}
We guess the structure as guaranteed by Lemma \ref{lem:structural-3/2}. We first sort all the tall items according to non-increasing order of their heights and place them in that order starting from the left end of the strip $[0,W]\times [0,\infty)$ such that their bottom edge touches the line segment $[0,W]\times \{0\}$. Let the tall items in this order be $I_{t_1}, I_{t_2},..., I_{t_k}$. Then as mentioned in Section \ref{subsec_alg_poly}, we have that sizes of the containers belong to a set (let's say $S$) that can be computed in $n^{O_{\eps}(1)}$ time. The height $h^{*}$ is one among the heights of the tall items and we have at most $O_{\eps}(1)$ containers in $\B'$. Hence, the position  of the bottom of a container can be either of the following:
\begin{enumerate}
\item A linear combination of heights from the set $S$.
\item Sum of $h^{*}$ with a linear combination of heights from the set $S$.
\item Sum of $h^{*}+\frac{1}{2}\text{OPT}$ with a linear combination of heights from the set $S$.
\end{enumerate}
Hence, the possible positions for the bottoms of the containers can be at most $n^{O_{\eps}(1)}$ many and which can be computed in $n^{O_{\eps}(1)}$ time. Similarly, the positions for the left end of the containers can be either of the following:
 \begin{enumerate}
\item A linear combination of widths from the set $S$.
\item Sum of $\sum_{i=1}^{i=j}w(I_{t_i})$ with a linear combination of widths from the set $S$ for some $j\leq t$.
\item Sum of $\sum_{i=1}^{i=j}w(I_{t_i})$ with a linear combination of widths from the set $S$ for some $j\leq t$ and $\eps_1 W$.
\end{enumerate}
Hence, the possible positions for the left end of the containers can be at most $n^{O_{\eps}(1)}$ many and which can be computed in $n^{O_{\eps}(1)}$ time. The three containers $B_{med}$, $B_{hor}$, and $B_{small}$
of height $O(\eps\opt)$ and width $\W$ are placed on top of the packing. The guillotine separability of this placement of containers in $\B'$ along with the tall items is checked with the help of the algorithm in Lemma \ref{lem_guillo1} in polynomial time.
\end{proof}

\begin{lemma}[Restatement of Lemma \ref{lem_item_placement_hor_vert}]
\label{lem_item_placement_hor_vert_1}
In time $n^{O_{\eps}(1)}$ we can compute a placement for all
items in $\Rve\cup\Rho$ in $B^{*}$, $B_{hor}$, and their corresponding
boxes in $\B'$. 
\end{lemma}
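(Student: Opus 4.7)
The plan is to handle the vertical items in $\Rve$ and horizontal items in $\Rho$ by two symmetric subroutines, each based on a PTAS for (a variant of) the Generalized Assignment Problem (GAP), and then to argue that the small amount of unpacked ``slack'' fits into the spare containers $B^{*}$ and $B_{hor}$ respectively.

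For the vertical items, I would first recall that for each container $B\in\B'$ that is designated for $\Rve$, we have already guessed the items of width $\ge \eps_3 W$ that belong to $B$, of which there are only $O(1/\eps_3)$ per container; place these at the left edge of $B$, side by side, in decreasing order of height, which preserves guillotine separability inside $B$ (they are tall/vertical and stacked next to each other). The remaining vertical items have width at most $\eps_3 W$, and their target container for each $B\in\B'$ has a residual width strip of width at least $\hat{w}(B) - \sum_{i\in \R'_B}w(i) \ge \hat{w}(B) - \sum_{i\in \R'_B}w(i)$ (by how $\hat{w}(B)$ was defined). I would then set up a GAP instance where the items are the remaining vertical items, each with profit equal to its area, and each container $B\in\B'$ is a knapsack whose ``capacity'' encodes the residual width, where item $i$ has size $w_i$ in $B$ provided $h_i\le h(B)$ and infinity otherwise. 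Applying the PTAS for this variant of GAP (Lemma~\ref{gap_approx}), we obtain an assignment in which the total unpacked area is at most $3\eps_5\cdot \opt\cdot W$. Since each vertical item has height at least $\delta\opt$, the total width of unpacked vertical items is at most $3\eps_5/\delta\cdot W$. Within each container the assigned small-width vertical items can be stacked side by side next to $\R'_B$ greedily (they all have height $\le h(B)$), yielding a nice packing of $B$.

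It remains to absorb into $B^{*}$ both the GAP leftovers and the ``rounding slack'' that arises because our guessed widths are only approximate: for each of the $|\B_{ver}|$ vertical containers we may have underestimated the true width by at most $2\eps_3 W$, and this accumulates to at most $|\B_{ver}|\cdot 2\eps_3 W$. Plugging in $\eps_3=\eps_1/(4|\B_{ver}|)$ and $\eps_5=\eps_1\delta/6$ from the constants fixed in Section~\ref{subsec:structural_2} gives a total unpacked width of at most $|\B_{ver}|\cdot 2\eps_3 W + (3\eps_5/\delta) W \le \eps_1 W/2 + \eps_1 W/2 = \eps_1 W = w(B^{*})$. Every such leftover item has height at most $\opt/2 = h(B^{*})$, so I place them side by side in $B^{*}$ sorted in non-increasing order of height, which is a nice (and in particular guillotine separable) packing.

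For the horizontal items in $\Rho$ I would apply the entirely symmetric procedure (items stacked on top of each other, knapsacks measured in height rather than width), with the roles of $\eps_3,\eps_5,B^{*}$ replaced by $\eps_2,\eps_6,B_{hor}$. The analogous bound $|\B_{hor}|\cdot 2\eps_2\opt + (3\eps_6/\delta)\opt \le \eps\opt = h(B_{hor})$ follows from the choices $\eps_2=\eps/(4|\B_{hor}|)$ and $\eps_6=\eps\delta/6$, and the total width of each leftover horizontal item is at most $W=w(B_{hor})$. The running time is dominated by the PTAS for GAP, which is $n^{O_\eps(1)}$, plus the $n^{O(1/\eps_2)}+n^{O(1/\eps_3)}$ enumeration of the large-width/height items already absorbed in the ``guess'' of $\B'$. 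The main obstacle I anticipate is making sure the arithmetic of the constants works out exactly so that the leftovers fit in $B^{*}$ and $B_{hor}$ without re-introducing any interaction with the tall items---this is purely bookkeeping, and is the reason the constants $\eps_1,\eps_2,\eps_3,\eps_5,\eps_6$ were fixed upfront.
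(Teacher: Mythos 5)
Your proposal is correct and follows essentially the same route as the paper's proof: reduce the assignment of the remaining vertical/horizontal items to a GAP instance with area as profit, apply the PTAS of Lemma~\ref{gap_approx}, and bound the leftovers from container rounding ($2\eps_3|\B_{ver}|W$, resp.\ $2\eps_2|\B_{hor}|\opt$) plus GAP inefficiency ($(3\eps_5/\delta)W$, resp.\ $(3\eps_6/\delta)\opt$) so that they fit into $B^{*}$ and $B_{hor}$, exactly as in the paper with the same choice of constants. No gaps to report.
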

\begin{proof}
We reduce the given instance to an instance of GAP exactly as is done in Lemma \ref{lem:DP}. The only difference here is that for the vertical items, due to approximation in the widths of the containers as done in Section \ref{subsec_alg_poly}, per container we are not able to pack at most $\eps_3 W$ width of items which are of height at most $\opt/2$. The combined width of such items which are unable to the packed in the containers is at most $2\eps_3\cdot \left|\B_{ver}\right| W$. For each vertical container, we guess all items packed inside $B$ whose width is at least $\eps_3 W$
(at most $O(1/\eps_3)$ many). We do a symmetric procedure for the horizontal containers. 
 Since we have at most $O_{\eps}(1)$ containers in total, we then make use of the PTAS for this variant of GAP (see Lemma \ref{gap_approx}) which ensures that for the vertical items, we are unable to pack at most $3\eps_5$ fraction of the total area. This means that the total width of such items is at most $(3\eps_5/\delta) W$. Hence, for the appropriate choice of $\eps_3,\eps_5$, after computing the packing of vertical items we might have unpacked vertical items with width at most 
\[2\eps_3\cdot \left|\B_{ver}\right|W+(3\eps_5/\delta) W\leq \left(\frac{2\eps_1|\B_{ver}|}{4|\B_{ver}|}\right)W+\left(\frac{3\eps_1\delta}{6\delta}\right)W\leq \eps_1 W\] 
which is due to inefficiency of the algorithm for GAP and due to the approximation of container widths (see Appendix~\ref{section_constants}). All of such items can now be packed in $B^{*}$. Similarly for the horizontal items, the total area of the items which are unable to be packed because of the inefficiency of GAP is at most $3\eps_6 \opt\cdot W$. Also due to container rounding for the horizontal items, we are unable to pack items with height at most $2\eps_2\cdot \left|\B_{hor}\right|\opt$. These items have height at most $\eps\opt$ by choosing $\eps_2,\eps_6=O_{\eps}(1)$ appropriately (see Appendix~\ref{section_constants}). We pack all such items simply by stacking them on top of each other in $B_{hor}$ which adds an additional height of $\eps\opt$ apart from the $3\eps\opt$ height from resource augmentation and width $W$.
\end{proof}

\begin{lemma}[Restatement of Lemma \ref{lem_item_placement_small}]
\label{lem_item_placement_small_1}
\label{lem:pack-small-1}In time $n^{O(1)}$ we can compute a placement
for all items in $\Rsm$ in $\B'_{small}$ and $B_{small}$. 
\end{lemma}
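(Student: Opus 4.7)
The plan is to mirror the strategy of Lemma~\ref{lem:pack-small} from the PPTAS, but adapted to the polynomial-time regime where the container sizes in $\B'_{small}$ are only approximations (rounded down by $\eps_4\opt$ in height and $\eps_4 W$ in width) of the sizes of the corresponding ``true'' containers in $\B_{small}$ from the structural packing of Lemma~\ref{lem:structural-3/2}. Recall from the proof of Lemma~\ref{lem_small_rearrange} that in the structural packing, the items of $\Rsm$ split into a subset $\Rsm'$ of total area at most $O(\eps)\opt\cdot W$, which is placed directly in the topmost box $B_{small}$, and the remainder, which is packed nicely in the small containers. Since each rounded container $B'\in\B'_{small}$ is smaller than its structural counterpart $B\in\B_{small}$ by at most $\eps_4\opt\cdot w(B)+\eps_4 W\cdot h(B)$ in area, and since $|\B'_{small}|=O_\eps(1)$, the total area lost to container rounding is at most $O_\eps(\eps_4)\opt\cdot W$, which by the choice $\eps_4=\mu$ (sufficiently small relative to the number of containers) is $O(\eps)\opt\cdot W$.

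Next, I would reduce the assignment task to a variant of GAP exactly as in the proofs of Lemmas~\ref{lem:DP} and~\ref{lem_item_placement_hor_vert_1}: each small item $i\in\Rsm$ becomes an item of size $a(i)=w_i\cdot h_i$ and profit $a(i)$; each container $B'\in\B'_{small}$ becomes a knapsack of capacity $w(B')\cdot h(B')$; and the size of $i$ with respect to $B'$ is $a(i)$ if $w_i\le\eps w(B')$ and $h_i\le\eps h(B')$, and $\infty$ otherwise. The structural packing certifies the existence of a GAP assignment packing all items in $\Rsm\setminus\Rsm'$, so invoking the PTAS for GAP (Lemma~\ref{gap_approx}) yields in $n^{O(1)}$ time an assignment in which the items not assigned have total area at most $O(\eps)\opt\cdot W$ more than in the optimal GAP solution.

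Within each container $B'\in\B'_{small}$ that received an assignment, I would then apply NFDH. By the standard NFDH guarantee on items whose width and height are at most an $\eps$-fraction of the container (see Lemma~\ref{nfdh_alg1}/\ref{nfdh_alg2}), the items NFDH fails to place have total area at most $O(\eps)\cdot w(B')h(B')$. Summing over the $O_\eps(1)$ containers, the total NFDH loss is again at most $O(\eps)\opt\cdot W$.

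Collecting the three sources of unplaced items---the original $\Rsm'$, the GAP-unassigned items due to rounding and the PTAS error, and the NFDH-rejected items inside the small containers---the aggregate area is at most $O(\eps)\opt\cdot W$. I would then pack all of them into $B_{small}$ (of height $27\eps\opt$ and width $W$) by a single application of NFDH: every unplaced item has width at most $\mu W\le\eps W$ and height at most $\mu\opt\le\eps\opt$, so by the NFDH area bound (Lemma~\ref{nfdh_alg2}) a box of height $O(\eps)\opt$ suffices, and the constants can be chosen so that $27\eps\opt$ is enough. The main technical obstacle, and the reason for the careful interplay of $\mu,\delta,\eps_4$ in Appendix~\ref{section_constants}, is balancing these three independent sources of slack so that their sum is bounded by the budget $O(\eps)\opt\cdot W$ of $B_{small}$ while keeping the guillotine structure intact; the latter comes for free because $B_{small}$ sits on top of the entire packing (above $B_{med}$ and $B_{hor}$) and the items inside each $B'$ and inside $B_{small}$ are packed by NFDH, which always produces a guillotine-separable layout.
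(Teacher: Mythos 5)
Your proposal is correct and follows essentially the same route as the paper's proof: reduce the assignment of $\Rsm$ to the area-based GAP instance, apply the GAP PTAS, pack inside each container with NFDH, and absorb the three loss sources (container rounding, GAP inefficiency, NFDH waste) into the extra box $B_{small}$ via one more NFDH pass on top of the packing. The only divergence is in the bookkeeping of the rounding loss: you multiply the per-container loss $\eps_4(\opt\cdot w(B)+W\cdot h(B))$ by the $O_{\eps}(1)$ number of containers and rely on $\eps_4=\mu\le\delta\eps/g(\delta,\eps)^2$, while the paper bounds the ratio of lost area to container area by $4\eps$ using the minimum small-container dimensions $(\mu/\eps)W\times(\mu/\eps)\opt$; with the paper's choice of constants both yield the same $O(\eps)\opt\cdot W$ bound.
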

\begin{proof}
The proof follows in the same vein as the one for Lemma \ref{lem_small_rearrange}, only that now we have to additionally account for inaccuracies in the container sizes. There are at most $\eps^2/\mu^2$ containers with width at least $(\mu/\eps) W$ and height at least $(\mu/\eps) \opt$ and one container of height at most $O(\eps \opt)$ and width $W$ on top of the packing according to Lemma \ref{lem_small_rearrange}. By the container rounding mentioned in the algorithm in Section \ref{subsec_alg_poly}, for each such container $B_i$ we are unable to pack at most $(\eps_4 +\mu)(h(B)\cdot W+w(B)\cdot \opt)$ area of small items. Thus, the ratio of this area of area of $B$ is given by $(\eps_4+\mu)W/w(B)+2(\eps_4+\mu)\opt/h(B)$ and this quantity is maximized when we consider the minimum values of $h(B)$ and $w(B)$ which are $h(B)=(\mu/\eps) \opt$ and $w(B)=(\mu/\eps) W$ and hence, we get that this ratio of area of small items which we are unable to pack due to container rounding is at most $4\eps$ for $\eps_4= \mu$. All such items can be packed on top of the packing in another container $B_{extra}$ of height $12\eps \opt$ and width $W$ using NFDH (two other containers each of height $6\eps\opt$ and width $W$ are used to account for inaccuracies due to NFDH packing and some unassigned area as in the PPTAS for small items).% There might be some more area of small items which intersects with the box $B^{*}$. This area can be as high as $(\eps_1+2\mu)\mu\opt\cdot W$ which gets adjusted along with the other small items on top of the packing in another box of height at most $\eps\opt$ and width $W$ using Steinberg's algorithm.

Now, that we shown the existence of such  a packing, for the algorithmic part we follow the same procedure of converting to an instance of GAP and assigning items according to the polynomial time approximation algorithm for GAP (Lemma \ref{gap_approx}). We might lose out on at most $2\eps\opt\cdot W$ area due to inefficiency in the algorithm. Such small items can be packed in a container of height $6\eps\opt$ and width $W$ using NFDH along with the items we are unable to pack due to container rounding. Note that there are inaccuracies due to packing using NFDH nicely as well which accounts for at most $2\eps\opt\cdot W$ area of small items not being packed. Such items can be packed using NFDH again using another container of height $6\eps \opt$ and width $W$ on top of the packing. Thus, we just need one container of height at most $27\eps\opt$ and width $W$.
\end{proof}

\begin{theorem}[Restatement of Theorem \ref{thm_{main_poly}}]
{\label{thm_{main_poly_1}}}
There is a $(3/2+\eps)$-approximation algorithm for the guillotine
strip packing problem with a running time of $n^{O_{\eps}(1)}$.
\end{theorem}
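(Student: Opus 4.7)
The plan is to assemble the algorithm promised by the theorem from the ingredients developed in Sections \ref{subsec:structural_2} and \ref{subsec_alg_poly}, and then bound both its running time and its output height. First, I would guess $\opt'$ with $\opt \leq \opt' \leq (1+\eps)\opt$ as described just before Lemma~\ref{lem_guess_containers_poly}: apply Steinberg's algorithm~\cite{steinberg1997strip} to obtain a guillotine-separable $2$-approximation $\text{APX}$, and enumerate the $O(\log(1/\eps))$ candidates $(1+\eps)^j \cdot \text{APX}/2$ in the interval $[\text{APX}/2,\,(1+\eps)\text{APX}]$. Running the remaining steps for each candidate and returning the best packing loses only a $(1+\eps)$ factor and contributes only polylogarithmic overhead. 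With $\opt'$ in hand, we classify the items using Lemma~\ref{class_1} with the function $f$ chosen as in Appendix~\ref{section_constants}, which fixes $\dlta,\meu\in\Omega_{\eps}(1)$ so that the total area of medium items is at most $\eps(\opt\cdot W)$ and also so that $\mu$ is sufficiently small relative to $\delta$ for Structural Lemma~2.

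Next, appeal to Lemma~\ref{lem:structural-3/2} to guarantee the existence of a guillotine-separable packing of height at most $(3/2+O(\eps))\opt'$ in which the tall items are bottom-left-flushed, all other items lie nicely inside a set $\B$ of $O_{\eps}(1)$ containers, and an empty box $B^*$ of height $\opt'/2$ and width $\eps_1 W$ is present. Execute the algorithm of Section~\ref{subsec_alg_poly}: place the tall items in $\Rit$ bottom-left-flushed, then use Lemma~\ref{lem_guess_containers_poly} to guess in $n^{O_{\eps}(1)}$ time a placement of the rounded containers $\B'$ (including $B^*$, $B_{med}$, $B_{hor}$, $B_{small}$) that is guillotine separable together with the tall items. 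Then pack the vertical and horizontal items into their containers (with overflow into $B^*$ and $B_{hor}$) via Lemma~\ref{lem_item_placement_hor_vert}; pack the medium items inside $B_{med}$ via Lemma~\ref{lem:pack-med}; and pack the small items into $\B'_{small}\cup\{B_{small}\}$ via Lemma~\ref{lem_item_placement_small}.

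For correctness, observe that each of the above lemmas outputs a guillotine-separable arrangement of its items inside its assigned containers, so stitching them together preserves global guillotine separability (since the containers themselves were placed in a globally guillotine-separable fashion by Lemma~\ref{lem_guess_containers_poly}). For the height bound, the structured packing from Lemma~\ref{lem:structural-3/2} has height $(3/2+O(\eps))\opt'$, the auxiliary boxes $B_{med}$, $B_{hor}$, $B_{small}$ add heights $2\eps\opt'$, $(3\eps+\eps)\opt'$, and $27\eps\opt'$ respectively (from resource augmentation and overflow of items unpackable due to container rounding and GAP-approximation losses, as quantified in the proofs of Lemmas~\ref{lem_item_placement_hor_vert} and \ref{lem_item_placement_small}), giving a final height of $(3/2 + c\eps)\opt'$ for an absolute constant $c$. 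Since $\opt' \leq (1+\eps)\opt$, rescaling $\eps \mapsto \eps/(2c+3)$ at the start yields the desired $(3/2+\eps)\opt$ bound.

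For the running time, each step is polynomial in $n$ with exponent $O_{\eps}(1)$: the guess of $\opt'$ is polylogarithmic, the guess of container positions is $n^{O_{\eps}(1)}$ (as the coordinates come from a set of size $n^{O_{\eps}(1)}$ derivable from the tall-item heights/widths, $h^*$, and $O_{\eps}(1)$ container dimensions), guillotine separability of the guessed layout is verified in polynomial time via the algorithm of Lemma~\ref{lem_guillo1}, and the GAP-based packing subroutines together with NFDH all run in $n^{O_{\eps}(1)}$ time. The only subtle point — and the main obstacle to watch — is making sure the slack created by $B^*$, $B_{hor}$, and $B_{small}$ is large enough to absorb simultaneously the inaccuracies from container-size rounding and from the PTAS for GAP, which is why the constants $\eps_1,\dots,\eps_6$ were chosen carefully in Appendix~\ref{section_constants}; this must be checked once more when combining the overflow bounds from all three packing lemmas.
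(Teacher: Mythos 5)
Your proposal is correct and follows essentially the same route as the paper: the paper's proof of this theorem is precisely the combination of Structural Lemma~\ref{lem:structural-3/2}, the algorithm of Section~\ref{subsec_alg_poly} (guessing $\opt$ to within a $1+\eps$ factor via Steinberg's algorithm, placing tall items bottom-left-flushed, guessing the containers, and invoking Lemmas~\ref{lem_guess_containers_poly}, \ref{lem_item_placement_hor_vert}, \ref{lem:pack-med}, and \ref{lem_item_placement_small}), followed by rescaling the parameter (the paper uses $\eps/79$ where you use a generic $\eps/(2c+3)$). Your expanded accounting of the height overheads and the final rescaling matches the paper's intent, so there is nothing to add.
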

\begin{proof}
Follows from Lemma \ref{lem:structural-3/2} and the algorithm from Section \ref{subsec_alg_poly} and guessing $\opt$ to within a $1+\eps$ factor  as mentioned at the beginning of Section~\ref{subsec_alg_poly}. Here we take the parameter to be $\eps/79$ to finally achieve a $(1+\eps)$-approximation.
\end{proof}

%\begin{lemma}
%\label{class_2} \cite{nadiradze2016approximating} Let $\eps>0$
%and $f(.)$ be any positive increasing function such that $f(x)<x$
%for all $x\in(0,1]$. Then we can efficiently find $\dlta,\meu\in\Omega_{\eps}(1)$,
%with $\eps\ge f(\eps)\ge\dlta\ge f(\dlta)\ge\meu$ so that the total
%area of medium rectangles is at most $\eps(\opt\cdot\W)$. 
%\end{lemma}
%For the PPTAS, we consider again the function $f(x)=\frac{x\eps}{g(x,\eps)^2}$ (see Appendix \ref{section_constants}) so as to pack the medium items in a box of height $2\eps\opt$ and width $W$ using Steinberg's algorithm.

%	\input{Sections/tools}

\section{Tools}{\label{tools}}
\subsection{Maximum Generalized Assignment Problem}
\label{subsec_GAP}
In this section we show that there is a pseudo-polynomial time algorithm for the Maximum Generalized Assignment Problem (GAP), if
the number of bins is constant. In GAP, we are given a set of $k$ bins with capacity constraints and a set of
$n$ items that have a possibly different size and profit for each bin and the goal is to pack a maximum-profit
subset of items into the bins. Let us assume that if item $i$ is packed in bin $j$, then it requires size $s_{ij} \in \mathbb{Z}$ and
profit $p_{ij} \in \mathbb{Z}$.

Let $C_j$ be the capacity of bin $j$ for $j\in[k]$. Let $p(\textup{OPT})$ be the cost of the optimal assignment.
\begin{lemma}[{\cite{GalvezGIHKW21}}]
\label{gap_alg_1}
There is a $O(n\prod\limits_{j=1}^{k}C_j)$ time algorithm for the maximum generalized assignment problem with $k$ bins and returns a solution with maximum profit $p(\textup{OPT})$.
\end{lemma}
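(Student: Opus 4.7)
The plan is to establish Lemma~\ref{gap_alg_1} by a straightforward multi-dimensional dynamic program that generalizes the standard pseudo-polynomial DP for $0/1$ knapsack (which corresponds to the case $k=1$). Since $p(\textup{OPT})$ is required exactly (not approximated), and since the bound $O(n \prod_{j=1}^k C_j)$ already allows a table whose size is the product of the capacities, a direct DP over a state that records how much of each bin has been consumed will work.

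Concretely, I would define
\[
f(i, c_1, c_2, \ldots, c_k) := \text{max profit achievable by assigning a subset of items } \{1,\ldots,i\} \text{ using at most } c_j \text{ capacity in bin } j,
\]
for $i \in \{0,1,\ldots,n\}$ and $c_j \in \{0,1,\ldots,C_j\}$. The base case is $f(0, c_1, \ldots, c_k) = 0$ for all capacity tuples. For $i \geq 1$ item~$i$ can either be left unpacked, or placed into exactly one bin~$j \in [k]$ (provided $s_{ij} \leq c_j$), which yields the recurrence
\[
f(i, c_1, \ldots, c_k) = \max\!\left\{\, f(i-1, c_1, \ldots, c_k),\ \max_{\substack{j \in [k] \\ s_{ij} \leq c_j}} f(i-1, c_1, \ldots, c_j - s_{ij}, \ldots, c_k) + p_{ij} \,\right\}.
\]
The desired quantity is then $f(n, C_1, \ldots, C_k) = p(\textup{OPT})$, and the corresponding assignment can be recovered by standard backtracking. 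Correctness follows by induction on $i$: any optimal assignment of items $\{1,\ldots,i\}$ either excludes item~$i$ (reducing to a subproblem on $\{1,\ldots,i-1\}$ with the same capacities) or places item~$i$ in some bin~$j$ with $s_{ij} \leq c_j$ (reducing to a subproblem with capacity $c_j$ decreased by $s_{ij}$), and the recurrence enumerates exactly these $k+1$ cases.

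For the running time: the table has $(n+1) \prod_{j=1}^k (C_j + 1) = O\!\left(n \prod_{j=1}^k C_j\right)$ entries, and each entry is computed from $k+1 = O(1)$ other entries (since $k$ is treated as a constant in the problem setup inherited from the paper's use), giving total time $O\!\left(n \prod_{j=1}^k C_j\right)$ as claimed. No step here is subtle — the main thing to be careful about is to state explicitly that the DP enumerates "leave item~$i$ out" together with the $k$ possible placements, so that feasibility constraints ($s_{ij} \leq c_j$) are automatically enforced by only considering transitions whose predecessor state has nonnegative residual capacity.
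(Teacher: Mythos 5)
Your proposal is correct and matches the paper's proof essentially verbatim: the same DP table indexed by the item prefix and the residual capacities of the $k$ bins, the same "skip item $i$ or place it in one bin $j$" recurrence, and the same counting of table entries for the running time. The only (harmless) difference is that you enforce feasibility by restricting to $j$ with $s_{ij}\le c_j$, whereas the paper handles out-of-range capacities by a boundary convention.
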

\begin{proof}
For each $i\in[n]$ and $c_j\in[C_j]$ and $j\in[k]$, let $S_{i,c_1,...,c_k}$ denote a subset of the set of items $\{1,2,...,i\}$ packed into the bins such that the profit is maximized and the capacity of bin $j$ is at most $c_j$. Let $P[i,c_1,c_2,...,c_k]$ denote the profit of $S_{i,c_1,...,c_k}$. Clearly $P[i,c_1,c_2,...,c_k]$ is known for all $c_j\in[C_j]$ for $j\in[k]$. Moreover we define $P[i,c_1,c_2,...,c_k]=0$ if $c_j<0$ for any $j\in[k]$. We can compute the value of $P[i,c_1,c_2,...,c_k]$ by a dynamic program that exploits the following recurrence:
\begin{multline}
\mspace{150mu} P[i,c_1,c_2,...,c_k]=\max \{P[i-1,c_1,c_2,...,c_k], \\ \mspace{290mu}\max\limits_{j}\{p_{ij}+P[i-1,c_1,...,c_j-s_{ij},...,c_k]\}\}  \notag\\
\end{multline}
This dynamic program clearly runs in $O(n\prod\limits_{j=1}^{k}C_j)$ corresponding to the entries in the DP table.
\end{proof}

We also have the following algorithm for GAP running in time $n^{O_{\eps}(1)}$ which ensures a profit of at least $(1-O(\eps))p(\opt)$ if the number of bins is constant.

\begin{lemma}[\cite{GalvezGIHKW21}]
\label{gap_approx}
There is a $O((\frac{1+\varepsilon}{\varepsilon})^k n^{k/\eps^2 +k+1})$ time algorithm for the maximum generalized assignment problem with $k$ bins which returns a solution with profit at least $(1-3\eps)p(\textup{OPT})$ for any fixed $\varepsilon>0$.
\end{lemma}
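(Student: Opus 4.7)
\textbf{The plan} is to prove this by a guess-and-complete strategy standard for GAP with a constant number of bins. First, I would guess the per-bin profit contribution of the optimum by trying, for each bin $j$, a value $\hat p_j$ that is a power of $(1+\eps)$ between $0$ and the total profit; normalising by the largest single-item profit, the number of such $k$-tuples is $O\!\left(\left(\tfrac{1+\eps}{\eps}\right)^{k}\right)$, which accounts for the first factor in the running time. Rounding down to $\hat p_j$ loses at most an $\eps$ fraction of the optimal profit.

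Next, for each bin $j$ I would classify items as \emph{large for $j$} if $p_{ij} \ge \eps^{2}\hat p_j$ and \emph{small} otherwise. In any optimal packing, each bin contains at most $1/\eps^{2}$ large items, so the total number of large assignments across all $k$ bins is at most $k/\eps^{2}$. Enumerating all such assignments (which items are large and to which bin they go) takes at most $n^{k/\eps^{2}}\cdot k^{k/\eps^{2}}$ time and yields the main exponential factor $n^{k/\eps^{2}}$. For each guess, I subtract the sizes used by the large items from the capacity of each bin to obtain residual capacities, discarding any guess that violates a capacity constraint.

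Given the residual capacities, the remaining problem is to pack small items into the bins. Here I would enumerate the placement exhaustively in a constrained way: for each bin, order its candidate small items by the efficiency ratio $p_{ij}/s_{ij}$ and consider greedy prefixes, guessing the ``break point'' item per bin (an additional $n^{k+1}$ choices), then filling each bin with the corresponding prefix of small items. A standard exchange argument shows that, per bin, the greedy prefix packing loses at most one item compared to the optimal integral packing of small items into the residual capacity; since every small item has profit at most $\eps^{2}\hat p_j$, the loss is at most $\eps^{2}\hat p_j$ per bin, totalling at most $\eps \cdot p(\opt)$ across all $k$ bins (for sensible choices of the threshold). Combining the three sources of error—profit rounding, large/small thresholding discarded tails, and the integrality loss from greedy—the returned solution has profit at least $(1-3\eps)\,p(\opt)$.

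The main obstacle is step three: one must justify that the greedy/LP-style completion of small items is truly near-optimal for each fixed large-item guess, and that the combined error from the three rounding phases composes additively to $3\eps$. This relies on the fact that small items have profit bounded by $\eps^{2}\hat p_j$, so losing at most one per bin is negligible, and on choosing the large/small threshold to balance the number of guesses against the per-item loss.
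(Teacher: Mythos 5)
First, note that the paper does not actually prove this lemma: it is imported verbatim from \cite{GalvezGIHKW21} (only the exact pseudo-polynomial DP of Lemma~\ref{gap_alg_1} is proved here), so the comparison is with the known proof in that reference. Your first two phases match that proof: guessing the per-bin profit contributions approximately, splitting each bin's items into ``large'' (profit at least $\eps^{2}\hat p_j$, hence at most $1/\eps^{2}$ per bin) and ``small'', and enumerating the large assignments in $n^{k/\eps^{2}}$ time is exactly the intended structure, and the error accounting for those phases is fine.

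The genuine gap is your third phase. Completing the packing of the \emph{small} items by independent per-bin greedy prefixes in efficiency order (with a guessed break point per bin) does not address the assignment constraint: a small item may be placed in at most one bin, and its size and profit are bin-dependent. The classic ``greedy loses at most one item'' exchange argument is a single-knapsack fact; with several bins the greedy prefix for one bin can consume items that the optimum places in another bin where they have much larger profit, and guessing one break point per bin does not determine (nor search over) the partition of the remaining small items among bins, so the claimed per-bin loss of one small item has no justification. The standard way to close this step (and the one used in \cite{GalvezGIHKW21}) is to solve the LP relaxation of the residual GAP on the small items and take a basic feasible solution: by the usual forest-structure argument only $O(k)$ items are fractionally assigned (essentially one per bin), and discarding them costs at most $\eps^{2}\hat p_j$ per bin, i.e.\ $O(\eps)\,p(\opt)$ in total. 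A secondary, more cosmetic issue is your count of profit guesses: powers of $(1+\eps)$ over the full profit range give $\Theta(\log_{1+\eps} n)$ choices per bin, not $\frac{1+\eps}{\eps}$; the stated bound corresponds to guessing each bin's profit as an integer multiple of $\eps$ times a reference value (about $1/\eps+1$ choices per bin), with the reference itself guessed among $n$ possibilities, which is where the extra factor $n$ in $n^{k/\eps^{2}+k+1}$ comes from.
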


\subsection{Next Fit Decreasing Height}
\label{subsec_nfdh}
One of the most recurring tools used as a subroutine in countless results on geometric problems is the Next Fit Decreasing Height (NFDH) algorithm which was originally analyzed in \cite{coffman1980performance} in the context of strip packing. We will use two standard results related to NFDH for our requirements. We will provide the proofs of the both these results for sake of completeness.

Suppose we have a set of rectangles $I^{\prime}$.  NFDH computes in polynomial time a packing (without rotations) of  $I^{\prime}$ as follows. It sorts the items $i\in I^{\prime}$ in non-increasing order of their heights $h_i$ ( corresponding widths $w_i$) and considers items in that order $i_1,...,i_n$ (let's call this list $L$). Let width of the strip be $W$ and define $A(L)=\sum_i h_i\cdot w_i$. Then the algorithm works in rounds $j\geq 1$. At the beginning of round $j$, it is given an index $n(j)$ and a horizontal segment $L(j)$(level $j$) going from the left to the right of $C$. Initially $n(1)=1$ and $L(1)$ is the bottom side of $C$ which is the first level. In round $j$, the algorithm packs a maximal set of items $i_{n(j)},...,i_{n(j+1)-1}$, with the bottom side touching $L(j)$ one next to the other from left to right. The segment $L(j+1)$ is the horizontal segment containing the top of $i_{n(j)}$ and extending from the left to the right of $C$. The space between two consecutive levels will be called a block. The algorithm continues in this manner till all items in $L$ are packed. Hence, we have a sequence of blocks $B_1,...,B_k$ where the index increases from the bottom to the top of the packing and $B_k$ is the last block in the packing of rectangles in $L$. Let $A_i$ denote the total area of rectangles in block $B_i$ and let $H_i$ denote the height of block $B_i$. By way of our algorithm we have that $H_1\geq H_2\geq ...\geq H_k$. We state the result regarding the height of this packing through the following lemma. 

\begin{lemma}[\cite{coffman1980performance}]
\label{nfdh_alg1}
For a list $L$ of rectangles ordered by nonincreasing height,
\[\text{\textnormal{NFDH}}(L)\leq 2A(L)/W+H_1\leq 3\textup{OPT},\]
where \textup{OPT} denotes the height of the  optimal packing.
\end{lemma}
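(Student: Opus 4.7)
\medskip

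\noindent\textbf{Proof proposal for Lemma~\ref{nfdh_alg1}.}
The plan is to establish the left-hand inequality by a pairwise charging argument between consecutive blocks of the NFDH packing, and then derive the right-hand inequality from two trivial lower bounds on $\opt$.

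First I would analyze the relationship between two consecutive blocks $B_i$ and $B_{i+1}$. Let $r$ be the first rectangle placed in $B_{i+1}$. Since $r$ was not placed in $B_i$, the total width of items packed in $B_i$ must exceed $W-w_r$ (otherwise $r$ would have fit). Moreover, because $L$ is sorted by non-increasing height, every item in $B_i$ has height at least $h_r$, and by definition of the block heights we have $H_{i+1}=h_r$. Consequently the area occupied in $B_i$ satisfies $A_i > (W-w_r)\cdot H_{i+1}$, while trivially $A_{i+1}\ge w_r\cdot H_{i+1}$ (already the single item $r$ contributes this much). Adding these yields the key inequality
\[
A_i + A_{i+1} \;>\; W\cdot H_{i+1}\qquad\text{for every } 1\le i\le k-1.
\]

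Next I would sum this inequality over $i=1,\dots,k-1$. On the left we obtain $A_1+2A_2+\cdots+2A_{k-1}+A_k\le 2A(L)$, so
\[
W\cdot\sum_{i=2}^{k} H_i \;\le\; 2A(L).
\]
Dividing by $W$ and adding $H_1$ gives $\text{NFDH}(L)=\sum_{i=1}^{k} H_i \le 2A(L)/W + H_1$, which is the first desired bound.

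For the second inequality, I would invoke two standard lower bounds on the optimum. The total area forces $\opt\ge A(L)/W$, and since the tallest item must be packed somewhere, $\opt\ge H_1$. Combining these, $2A(L)/W + H_1 \le 2\opt + \opt = 3\opt$, completing the proof. I do not foresee any substantive obstacle here: the only subtlety is the careful bookkeeping of the claim $A_i > (W-w_r)H_{i+1}$, which relies on the non-increasing sorting of $L$ together with the greedy "cannot fit" criterion that triggers moving to the next block.
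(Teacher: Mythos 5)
Your proof is correct and follows essentially the same argument as the paper: the key inequality $A_i+A_{i+1}>W\cdot H_{i+1}$ obtained from the ``does not fit'' condition together with the non-increasing height order, summed over consecutive blocks, and then the two standard lower bounds $\opt\ge A(L)/W$ and $\opt\ge H_1$. No gaps to report.
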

\begin{proof}
For each $i$, let $x_i$ denote the width of the first rectangle in $B_i$ and $y_i$ be the total width of the rectangles in $B_i$. For each $i<k$, the first rectangle in $B_{i+1}$ does not fit in $B_i$. Therefore $y_i+x_{i+1}>W$, $1\leq i<k$. Since each rectangle in $B_i$ has height at least $H_{i+1}$, and the first rectangle in $B_{i+1}$ has height $H_{i+1}$, $A_i+A_{i+1}\geq H_{i+1}(y_i+x_{i+1})> H_{i+1}W$. Therefore,
\begin{align*}
\text{NFDH}(L)&=\sum\limits_{i=1}^{i=k} H_i\leq H_1+ \sum\limits_{i=1}^{i=k-1}A_i/W+\sum\limits_{i=2}^{i=k}A_i/W\\
           &\leq H_1+2A(L)/W \\
           &\leq 3\text{OPT}
\end{align*}
\end{proof}

Since $\text{OPT}\geq H_1$ and $A(L)/W\leq \text{OPT}$, NFDH is a polynomial time $3$-approximation. The second result is in the context of using NFDH to pack items inside a box. Suppose you  are given a box $C$ of size $w\times h$ and a set of items $I^{\prime}$ each one fitting in the box ( without rotations).  NFDH computes in polynomial time a packing ( without rotations) of $I^{\prime \prime }\subset I^{\prime}$ as mentioned before. But unlike the previous lemma the process halts at round $r$ when either all items are packed or $i_{n(r+1)}$ cannot be packed in the box. The following lemma describes the result regarding this packing.

\begin{lemma}[\cite{Galvez0AJ0R20}]
\label{nfdh_alg2}
Assume that for some parameter $\eps\in(0,1)$, for each $i\in I^{\prime}$ one has $w_i\leq \varepsilon w$ and $h_i\leq \varepsilon h$. Then \textup{NFDH} is able to pack in $C$ a subset $I^{\prime \prime }\subset I^{\prime}$ of area at least $a(I^{\prime \prime })\geq \min\{a(I^{\prime}),(1-2\varepsilon)w\cdot h\}$. In particular, if $a(I^{\prime})\leq (1-2\varepsilon)w\cdot h$, all items are packed.
\end{lemma}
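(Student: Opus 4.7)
The plan is to follow the standard NFDH analysis, splitting into two cases according to whether the packing terminates because all items were placed or because some item failed to fit. The first case is immediate: if every item in $I'$ was packed, then $a(I'') = a(I') \ge \min\{a(I'), (1-2\varepsilon)wh\}$ and we are done. So the interesting case is when NFDH halts at some round $r$ because the item $i_{n(r+1)}$ could not be opened into a new level $L(r+1)$.

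The first key observation will be that the halting condition forces $\sum_{j=1}^{r} H_j + h_{n(r+1)} > h$; combined with $H_1 = h_{n(1)} \le \varepsilon h$, this yields $\sum_{j=2}^{r} H_j + h_{n(r+1)} > (1-\varepsilon)h$. The second key observation will be a width bound for every block: for each $j \in \{1,\ldots,r\}$, the item $i_{n(j+1)}$ did not fit horizontally next to the items already in $B_j$ (for $j<r$ this is why round $j+1$ was opened; for $j=r$ this is why the halting item was the next one attempted). Since $w_{n(j+1)} \le \varepsilon w$, the total width $y_j$ of items in $B_j$ satisfies $y_j > (1-\varepsilon)w$.

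Next I would combine these with the sorted-height structure to lower-bound $A_j$. Because the list is sorted non-increasingly by height, every item in $B_j$ has height at least $h_{n(j+1)}$, which equals $H_{j+1}$ for $j<r$ and equals $h_{n(r+1)}$ for $j=r$. Hence $A_j \ge y_j \cdot H_{j+1} \ge (1-\varepsilon)w \cdot H_{j+1}$ for $j<r$ and $A_r \ge (1-\varepsilon)w \cdot h_{n(r+1)}$. Summing the telescoping expression gives
\[
 a(I'') \;=\; \sum_{j=1}^{r} A_j \;\ge\; (1-\varepsilon)w\Bigl(\sum_{j=2}^{r} H_j + h_{n(r+1)}\Bigr) \;>\; (1-\varepsilon)^2\, w\cdot h \;\ge\; (1-2\varepsilon)\,w\cdot h,
\]
which completes the proof.

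The main obstacle is correctly handling the last block $B_r$, since the uniform argument $A_j \ge (1-\varepsilon)w\cdot H_{j+1}$ only works when a subsequent block exists. The fix is to treat the halting item $i_{n(r+1)}$ as playing the role of the "first item of the next block'' for bounding both $y_r$ and a lower bound on heights in $B_r$; everything then telescopes cleanly and the halting height inequality ties the final estimate back to $(1-\varepsilon)h$.
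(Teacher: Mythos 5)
Your proof is correct and follows essentially the same route as the paper's: the vertical bound $\sum_{j=2}^{r}H_j+h_{n(r+1)}>(1-\eps)h$ from the halting condition, the per-shelf width bound $y_j>(1-\eps)w$, and a per-item height lower bound of $h_{n(j+1)}$ in block $B_j$, combined to give $a(I'')>(1-\eps)^2wh\ge(1-2\eps)wh$. The only cosmetic difference is that the paper bounds block $j$'s area via its last item's height $h_{n(j+1)-1}$ and then uses $h_{n(j+1)-1}\ge h_{n(j+1)}$, which is equivalent to your direct treatment of the halting item as the ``first item of the next block.''
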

\begin{proof}
The claim trivially holds if all items are packed. Thus suppose that is not the case. Observe that $\sum_{j=1}^{r+1}h(i_{n(j)})>h$, otherwise item $i_{n(r+1)}$ would fit in the next shelf above $i_{n(r)}$: hence, $\sum_{j=2}^{r+1}h(i_{n(j)})>h-h(i_{n(1)})\geq (1-\varepsilon)h$. Observe that the total width of items packed in each round $j$ is at least $w-\varepsilon w$, since $i_{n(j+1)}$, of width at least $\varepsilon w$ does not fit  to the right of $i_{n(j+1)-1}$. It follows that the total area of items packed in round $j$ is at least $(w-\varepsilon w)h_{n(j+1)-1}$, and thus
\[a(I^{\prime \prime })\geq \sum_{j=1}^{r}(w-\varepsilon w)h_{n(j+1)-1}\geq w\sum_{j=2}^{r+1}(1-\varepsilon)h_{n(j)}\geq (1-\varepsilon)^2w\cdot h\geq (1-2\varepsilon)w\cdot h\]
\end{proof}

\subsection{Steinberg's Algorithm}
\label{subsec:steinberg}
We will make use of Steinberg's algorithm \cite{steinberg1997strip} as a subroutine in our algorithms.

\begin{theorem}[Steinberg \cite{steinberg1997strip}]
\label{steinberg_alg}
We are given a set of rectangles $I^{\prime}$ and box $Q$ of size $w\times h$. Let $w_{\max}\leq w$ and $h_{\max}\leq h$ be the maximum width and the maximum height among the items in $I^{\prime}$, respectively. Also we denote $x_{+}:=\max\{x,0\}$. If,
\[2a(I^{\prime})\leq wh-(2w_{\max}-w)_{+}(2h_{\max}-h)_{+}\]
then $I^{\prime}$ can be packed into $Q$.
\end{theorem}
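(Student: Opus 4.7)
The plan is to prove Steinberg's theorem by strong induction on $|I'|$, using a careful case analysis based on how close the largest rectangles are to the dimensions of the box $Q$. Throughout I would assume WLOG that $w \geq h$ (otherwise rotate the whole configuration by $90^\circ$ and swap the roles of width and height).

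The base case $|I'|\le 1$ is immediate, since a single item fits by the hypotheses $w_{\max}\le w$ and $h_{\max}\le h$. For the inductive step, I would split into cases according to whether the "slack terms" $(2w_{\max}-w)_+$ and $(2h_{\max}-h)_+$ vanish. If both vanish, i.e.\ $2w_{\max}\le w$ and $2h_{\max}\le h$, the hypothesis reduces to $2a(I')\le wh$; in this easy regime one can sort items by non-increasing height and pack them shelf-by-shelf (essentially an NFDH-style argument) inside $Q$, and the area bound together with $h_{\max}\le h/2$ suffices to fit everything. If exactly one of the slack terms is positive—say $2w_{\max}>w$ but $2h_{\max}\le h$—I would take the widest rectangle $r_1$ (of height $h_1$), place it bottom-left in $Q$, and pack $I'\setminus\{r_1\}$ into the box $Q_1$ of size $w\times(h-h_1)$ sitting above it. I need to verify that $I'\setminus\{r_1\}$ satisfies Steinberg's condition for $Q_1$: the area is reduced by $w_1 h_1$, while the slack term for $Q_1$ could change because removing $r_1$ may change $w_{\max}$ and because the new box height is $h-h_1$. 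A routine case analysis using $w_1\ge w/2$ and $h_1\le h/2$ shows the inequality is preserved.

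The interesting case is when both slack terms are positive, i.e.\ $2w_{\max}>w$ and $2h_{\max}>h$. Here I would identify the widest rectangle $r_1$ (width $w_1=w_{\max}$, height $h_1$) and the tallest rectangle $r_2$ (width $w_2$, height $h_2=h_{\max}$); these may or may not coincide. I would place $r_1$ in the bottom-left corner of $Q$ and $r_2$ in the top-left corner of $Q$ (if $r_1\ne r_2$; the coincident case is handled separately). Since $w_1,w_2 > w/2$ the two items are stacked vertically along the left wall and together leave a rectangular "L"-shaped region—specifically, a box $Q'$ of width $w-\max(w_1,w_2)$ and height $h$ on the right, plus possibly a thin strip between $r_1$ and $r_2$ on the left. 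I would then pack $I'\setminus\{r_1,r_2\}$ recursively into $Q'$, after verifying the Steinberg condition for $Q'$. The crucial algebraic step is to show that
\[
2\bigl(a(I')-w_1h_1-w_2h_2\bigr)\le w(Q')\,h(Q')-(2w_{\max}'-w(Q'))_+\,(2h_{\max}'-h(Q'))_+,
\]
where $w_{\max}',h_{\max}'$ refer to $I'\setminus\{r_1,r_2\}$. This inequality is what forces the bound $2a(I')\le wh-(2w_{\max}-w)(2h_{\max}-h)$ in the hypothesis—the subtracted product is exactly what "pays" for the two reserved corners.

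The main obstacle will be organizing the case analysis and, in the last case, performing the algebra that shows the recursive hypothesis still holds. I expect to need an auxiliary lemma that allows splitting the box along a horizontal or vertical cut and redistributing items between the two sides when a single recursive placement is insufficient, together with tight use of the inequalities $w_1\ge w_{\max}\ge w/2$, $h_2\ge h_{\max}\ge h/2$, and $a(I')\ge w_1h_1+w_2h_2-(w_1+w_2-w)(h_1+h_2-h)_+$. Once this algebraic engine is in place, the induction closes and the theorem follows.
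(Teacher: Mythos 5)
You are attempting to prove a result that the paper itself does not prove: Theorem~\ref{steinberg_alg} is quoted from Steinberg's 1997 paper and used as a black box, so there is no in-paper argument to compare against and your sketch has to stand on its own. It does not, and the first genuine gap is in what you call the easy regime, $2w_{\max}\le w$ and $2h_{\max}\le h$, where the hypothesis is just $2a(I')\le wh$. A shelf/NFDH argument does not suffice there; in fact this regime is the crux of Steinberg's theorem, not its easy part. Concretely, in a $1\times 1$ box take items of (width, height) $(0.02,0.5)$, $(0.49,0.401)$, $(0.5,0.4)$, $(0.02,0.151)$, $(0.5,0.15)$: every width and height is at most $1/2$ and the total area is about $0.485\le 1/2$, yet NFDH (sorted by non-increasing height) creates shelves of heights $0.5$, $0.4$ and $0.15$ and thus needs height $1.05>1$. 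The generic NFDH guarantee $2a/w+h_{\max}$ only gives $\frac{3}{2}h$ here, and no area bookkeeping rescues the shelf argument; Steinberg handles this case by cutting the box and distributing the items between the two sub-boxes via a greedy/prefix (intermediate-value) argument so that each side again satisfies the invariant --- exactly the machinery your sketch defers.

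The case with both slack terms positive has two further problems. First, the widest item $r_1$ and the tallest item $r_2$ need not be stackable along the left wall: $r_1=0.51w\times 0.5h$ and $r_2=0.1w\times 0.51h$ satisfy the hypothesis comfortably (here $2a=0.612\,wh\le wh-(0.02w)(0.02h)$), yet $h_1+h_2=1.01h>h$. Second, and more fundamentally, a single recursive call on the right-hand box $Q'$ of width $w-w_{\max}<w/2$ cannot accommodate any further item of width greater than $w/2$, and the instance may contain many such items (in any feasible packing they must all be stacked above one another); so the remaining items must be split between the region above/next to $r_1,r_2$ and $Q'$, and proving that such a split exists while preserving the Steinberg invariant on both parts is precisely the content of Steinberg's three procedures and their mirror images, not a routine auxiliary lemma. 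The one-slack case has the same issue in the end, since repeatedly peeling off the widest item reduces to the easy regime treated above. As it stands the induction does not close; repairing it essentially amounts to reproducing Steinberg's full case analysis, which is why the paper simply cites the theorem.
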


\subsection{Algorithm for checking guillotine separability}
We present an algorithm which checks whether a set of axis-aligned packed rectangles are guillotine separable.
\begin{lemma}
\label{lem_guillo1}
Given a set of packed rectangles $\Ip$ specified by their positions $(left(i), right(i))\times (bottom(i) ,top(i))$ for each $i\in \Ip$ with $|\Ip|=n$, we can  check in $O(n^3)$ time whether they are guillotine separable.
\end{lemma}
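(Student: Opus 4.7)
The plan is to design a recursive procedure $\textsc{Check}(S, R)$ that, given a subset $S \subseteq \Ip$ contained in a rectangular region $R$, decides whether $S$ admits a guillotine cutting sequence inside $R$. We invoke it initially on $\Ip$ together with the axis-aligned bounding box of $\Ip$. If $|S| \le 1$ we return true. Otherwise we enumerate the $O(|S|)$ candidate cut positions --- namely the $x$-coordinates of the left and right edges and the $y$-coordinates of the top and bottom edges of the rectangles in $S$ --- since any valid cut can be shifted to one of these coordinates without changing how it separates $S$. For each candidate we test in $O(|S|)$ time whether (i) it misses every rectangle in $S$ and (ii) it splits $S$ into two non-empty subsets. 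If some valid cut $c$ is found we partition $R$ into $R_1, R_2$ and $S$ into $S_1, S_2$ and recurse on both; if no valid non-trivial cut exists we return false.

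The critical correctness statement is the following exchange claim: if $S$ is guillotine separable in $R$ and $c$ is \emph{any} valid non-trivial end-to-end cut of $R$, then each $S_i$ is guillotine separable in the corresponding piece $R_i$. This justifies committing to an arbitrary valid cut rather than trying all possibilities. I would prove it by induction on $|S|$. Take any guillotine tree $T$ for $S$ with root cut $c'$. If $c' = c$ we are done. If $c' \parallel c$, then $c$ lies inside one of the two children pieces of $c'$; the inductive hypothesis applied to that piece partitions it further by $c$, and we reassemble the tree so that $c$ becomes the root with $c'$ appearing one level lower inside the appropriate child. If $c' \perp c$, then $c'$ still crosses each $R_i$ end-to-end and still misses every rectangle in $S_i$, so we apply the inductive hypothesis inside each of the two halves of $R$ cut by $c'$ to obtain trees for each non-empty set $S_{T,i}, S_{B,i}$ and then place $c'$ as the root of a tree for $S_i$ in $R_i$. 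The one subcase that needs attention is when $c$ fails to split $S_T$ (or $S_B$) non-trivially: all of $S_T$ then lies in a single $R_i$, and the corresponding subtree of $T$ can be \emph{stretched} to the slightly larger region by deleting the now-vacuous cuts lying outside the intersection, which is legal because the strip being added contains no rectangles of $S_T$.

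For the running time, each invocation on a subset of size $k$ performs $O(k)$ candidate-cut tests at cost $O(k)$ each, giving $O(k^2)$ per call. Summing over the binary recursion tree, the sum $\sum_v k_v^2$ is maximized when the recursion is maximally unbalanced (every cut peels off a single rectangle), in which case $\sum_{k=1}^{n} k^2 = O(n^3)$. Hence the total runtime is $O(n^3)$. The main technical hurdle in the whole argument is the exchange claim; once it is established, the algorithm and its complexity bound fall out immediately.
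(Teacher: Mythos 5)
Your proposal is correct and follows essentially the same route as the paper: a greedy recursive procedure that tries axis-parallel end-to-end cuts avoiding all rectangles, recurses on the resulting pieces, and charges $O(n^2)$ work per separation step (the paper normalizes coordinates to an integer grid and applies all feasible cuts of one orientation per stage, while you commit to a single valid cut at rectangle-edge coordinates, but both rest on the same fact that any valid cut is safe and both give $O(n^3)$). Your explicit statement and induction proof of the exchange claim supplies the correctness argument that the paper leaves implicit, which is a welcome addition rather than a different method.
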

\begin{proof}
Using standard shifting arguments, we can pack all the rectangles in a box of $[0, 2n-1]\times [0, 2n-1]$ where all the rectangles have integer coordinates for all of their four corners. To demonstrate this, we show how to do this in the $x$-direction, i.e., we consider the projections of the rectangles on the $x$-axis and consider both of the endpoints for each rectangle and then we assign them integer coordinates from $[0,2n-1]$ in the same order. We do this similarly for the $y$-coordinates. Now, we specify a recursive procedure where for a box $\mathcal{C}$ in which we have a subset of rectangles $I^{\prime \prime}\subseteq \Ip$ packed, we check all of the horizontal cuts and the vertical cuts at integral points which are feasible, incorporate such feasible cuts and recurse on the resulting smaller boxes. That is, we check all cuts which are line segments joining $(i,0)$ and $(i,2n-1)$ for $i\in[2n-1]$ and line segments joining $(0,j)$ and $(2n-1,j)$ for $j\in[2n-1]$ and incorporate either all feasible horizontal cuts or all feasible vertical cuts and recurse further (with alternating cuts). We only check cuts at integral coordinates since all rectangles are packed at locations which have integer coordinates by our preprocessing.

It is easy to see that in each level of the guillotine cutting sequence, at least $1$ rectangle is separated from a box or else we can declare that they are not guillotine separable. And at each level of the guillotine cutting sequence we spend at most $O(n^2)$ time overall checking all the possible feasible cuts for each respective box. Since previous argument implies we can have at most $O(n)$ levels for the guillotine cutting sequence, the algorithm runs in $O(n^3)$ time.
\end{proof}

\subsection{Resource Augmentation}\label{subsec_resource_aug}
	In this section we state without their respective proofs the necessary resource augmentation lemmas.
	\begin{lemma}
	\label{lemma1_old}
	(Resource Augmentation Packing Lemma \cite{GalvezGIHKW21}) Let $\R'$ be a collection of rectangles that can be packed into a box of size $a \times b$, and $\epsau > 0$ be a given constant. Here $a$ denotes the height of the box and $b$ denotes the width. Then there exists a container packing of $\R''\subseteq \R'$ inside a box of size $a\times(1+\epsau)b$ (resp.~$(1+\epsau)a\times b$) such that:
	\begin{enumerate}
	\item $\profit(\R'')\geq (1-O(\epsau))\profit(\R')$;
	\item the number of containers is $C_{ra}=O_{\epsau}(1)$ and their sizes belong to a set of cardinality $n^{O_{\epsau}(1)}$ that can be computed in polynomial time;
	\item the total area of the the containers is at most $\area(\R')+\epsau ab$;  
	\end{enumerate}
	\end{lemma}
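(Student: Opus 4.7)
The plan is to follow the standard container packing approach of Jansen–Zhang and Gálvez et al., using the extra width $\epsau b$ as slack that absorbs all the rounding losses. First, I would classify the items in $\R'$ by their width/height relative to the box $a\times b$ using two thresholds $\mu\ll\delta\ll\epsau$ chosen by a shifting argument (as in Lemma~\ref{class_1}) so that the total profit (and area) of \emph{medium} items can be discarded at a cost of at most $\epsau\cdot\profit(\R')$. The surviving items split into large, horizontal (wide and short), vertical (narrow and tall), and small.

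Next, I would fix an (unknown) optimal packing of $\R'$ into $a\times b$ and cut it with $O_{\epsau}(1)$ equally spaced horizontal and vertical lines to form a constant-size grid. Each large item is fully contained in $O(1)$ grid cells; I place a tight container around each such item. For horizontal items crossing a vertical grid line, I stack them into one horizontal container per grid row, losing at most an $\epsau$-fraction of their profit by a shifting argument inside the row; vertical items are handled symmetrically. For small items, I set aside a collection of small ``area containers'' in the empty grid regions and repack the small items there via NFDH; by Lemma~\ref{nfdh_alg2} the packed area is essentially the full small-area budget up to an $\epsau$ loss.

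The only piece that still needs room is the horizontal displacement introduced when I round the widths of horizontal containers (and of the large/vertical containers themselves) to a discrete set of $n^{O_{\epsau}(1)}$ values of the form $\lceil w/(\epsau^2 b)\rceil\epsau^2 b$. Each grid row contains $O_{\epsau}(1)$ containers, so the total horizontal rounding per row is at most $\epsau b$; by augmenting the width of the target box to $(1+\epsau)b$ we can absorb all these extra widths row by row while preserving the vertical partition, and the rounded sizes give property~(2). The count of containers is $O_{\epsau}(1)$ because it is bounded by the number of grid cells plus $O_{\epsau}(1)$ per cell. Property~(3) follows because each container's area is at most the area of its enclosed items plus a slack proportional to $\epsau$ times the cell area, summing to $\area(\R')+\epsau ab$.

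The main obstacle will be handling horizontal items that cross the vertical grid lines without losing profit: these items do not fit inside a single grid cell, so one has to argue (by a linear-grouping / shifting argument over the stacks in each row) that removing an $\epsau$-fraction of them by profit lets the remaining items be rearranged into a single container per row of width at most $b+O(\epsau b/\text{rows})$, which is where the width augmentation is spent. The symmetric statement for the $(1+\epsau)a\times b$ case is obtained by swapping the roles of rows and columns throughout the argument.
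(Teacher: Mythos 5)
You should first note that the paper does not prove Lemma~\ref{lemma1_old} at all: it is stated in the appendix explicitly ``without their respective proofs'' and imported as a black box from \cite{GalvezGIHKW21} (and its guillotine-preserving variant from \cite{khan2021guillotine}), so there is no in-paper argument to compare against; your proposal has to be judged against the known resource-augmentation proof in the cited literature. At that level, your outline has the right contours (classify with $\mu\ll\delta\ll\epsau$, discard medium items by a shifting argument, treat large/horizontal/vertical/small items separately, spend the extra $\epsau b$ of width on rounding losses, pack small items by NFDH as in Lemma~\ref{nfdh_alg2}), but it glosses over the step that carries essentially all the technical weight.

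The genuine gap is your treatment of the horizontal (and symmetrically vertical) items: ``stack them into one horizontal container per grid row, losing an $\epsau$-fraction by a shifting argument'' does not work as stated. A single container per row must have width equal to (roughly) the full row width while the items assigned to it may have widely varying widths and occupy only part of it, so the total container area can exceed $\area(\R')+\epsau ab$, violating property~(3); moreover stacking items of unrelated widths in one box is not by itself a valid container packing unless the widths are first grouped. The actual argument (Gálvez et al., building on the Jansen--Pr\"adel/Jansen--Solis-Oba resource-augmentation machinery) rounds the widths of horizontal items to $O_{\epsau}(1)$ values by linear grouping, sets up a fractional packing of the rounded items into horizontal strips determined by the structure of the given packing (not by a fixed, equally spaced grid), and then converts the fractional solution into an integral one with $O_{\epsau}(1)$ containers \emph{per strip}, using the $\epsau b$ width augmentation to absorb both the rounding and the fractional-to-integral conversion; the per-width grouping is also what makes the container sizes come from a polynomial-size set and what gives the area bound. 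Your sketch names the shifting/linear-grouping idea but applies it only to container widths after the fact, and it never addresses the fractional-to-integral step, which is where the constant number of containers and properties~(2)--(3) actually come from.
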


\begin{lemma}[\cite{khan2021guillotine}]
\label{lemma_res_aug_guillotine_separability}
If we have a guillotine separable packing of items $I$ in a rectangular box $B$, the container packing of Lemma~\ref{lemma1_old} is also a guillotine separable nice packing.
\end{lemma}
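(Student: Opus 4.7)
The plan is to revisit the construction underlying Lemma~\ref{lemma1_old} and verify, step by step, that when the input packing is guillotine separable every structural operation used in that construction either preserves a guillotine cutting sequence or can be realized by a refinement of one. The resource augmentation lemma is proved by (i) a shifting/rounding step that sacrifices items of total profit $O(\epsau)\cdot\profit(\R')$ and carves out a free vertical (resp.\ horizontal) strip of width $\epsau\,b$, (ii) classifying the surviving items into large, tall/wide (``skewed'') and small types according to thresholds depending on $\epsau$, and (iii) packing each class nicely into $O_{\epsau}(1)$ containers whose sides are aligned with the shifting cuts. I will trace guillotine separability through (i)--(iii).

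First I would argue that the shifting step preserves guillotine separability. Partition the box $B$ by vertical lines $x = j\cdot\epsau b$ for $j=0,1,\dots,1/\epsau$ and by the usual nested application of this partition at several scales; by an averaging argument one finds a ``cheap'' vertical strip $S$ such that after deleting the items that cross the two boundaries of $S$, the items inside $S$ can be moved into a reserved strip of width $\epsau b$ on the right. Because the original packing is guillotine separable, the first guillotine cut of the cutting sequence can always be chosen consistent with this vertical partition (any vertical guillotine cut can be added as the outermost cut if it does not intersect an item; if it does, we use the fact that items crossing it are precisely the ones we delete). Thus after shifting we still have a guillotine cutting sequence: the two vertical cuts at the boundaries of $S$ are the outermost cuts, and within each of the three resulting subpieces the items are a subset of the original guillotine separable ones, and therefore separable by the restricted cutting sequence. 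The same argument works horizontally for tall items.

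Next I would handle the container formation for each item class. After the shifting step, large items (both sides comparable to $a$ and $b$) each become their own container, which is trivially guillotine separable with them. For the skewed items (say the horizontal ones) one applies the shifting argument again vertically, producing $O_{\epsau}(1)$ horizontal strips inside which items are stacked by height; the horizontal cuts separating these strips refine the existing guillotine sequence (they can be inserted as the first horizontal cuts inside the appropriate subpiece), and inside each strip the items are stacked side by side, which is nice and guillotine separable. Finally, small items are aggregated into area containers: one draws the horizontal and vertical grid lines obtained by extending the edges of all containers of large/skewed items already created; the empty cells of sufficient size become the area containers for small items. Since every extension line coincides with a cut already present in the refined guillotine sequence (each such edge lies on the boundary of a container that was obtained as a guillotine piece), the entire arrangement $\B$ of containers still admits a guillotine cutting sequence, obtained by refining the sequence already constructed. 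The items inside each container are guillotine separable (either a single item, a stack, or packed by NFDH, which is trivially guillotine separable), so the packing is nice in the sense of Definition~\ref{def:structured-boxes}.

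The step I expect to be the main obstacle is the interaction between the area (small-item) containers and the surrounding container structure: one has to show that the rectangular cells carved out by extending container edges can all be simultaneously separated by guillotine cuts, not just pairwise, and that this can be done using cuts that refine, rather than cross, the cutting sequence inherited from the original packing. The cleanest way to handle this is by induction on the cutting sequence of the refined packing of large and skewed containers: at each level of the tree, the area containers lying entirely inside one subpiece inherit separability from the inductive hypothesis, while an area container straddling a cut does not exist because area-container boundaries were defined to be extensions of existing container edges. Putting these three observations together yields the statement of Lemma~\ref{lemma_res_aug_guillotine_separability}.
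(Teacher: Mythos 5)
You should first note that this paper never proves the statement at all: it appears in Appendix~\ref{subsec_resource_aug}, which explicitly says the resource-augmentation lemmas are ``stated without their respective proofs,'' and the lemma is imported wholesale from \cite{khan2021guillotine}. So your proposal cannot be matched against an in-paper argument; it has to be judged against the actual construction behind Lemma~\ref{lemma1_old} (from \cite{GalvezGIHKW21}) and its guillotine-preserving verification in the cited work.

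Judged that way, there is a genuine gap: you trace guillotine separability through a construction that is not the construction of Lemma~\ref{lemma1_old}. The resource augmentation packing lemma is not proved by locating one ``cheap'' vertical strip, deleting its crossers, and translating its contents into the reserved strip of width $\epsau b$; its core consists of rounding the widths of the wide items to $O_{\epsau}(1)$ distinct values and then globally regrouping the wide and tall items (linear-grouping / fractional-packing style) into newly created stacks and containers, with small items packed by NFDH into leftover cells afterwards. After that regrouping, the items and the container edges sit at positions that bear no relation to the cuts of the original packing, so the strategy of ``refining the inherited cutting sequence'' cannot be executed: the assertion your closing induction rests on --- that every container edge lies on a cut of a refined sequence coming from the original packing --- is precisely what fails for the rearranged wide/tall containers. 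What actually has to be shown (and is the real content of the lemma in \cite{khan2021guillotine}) is that each repacking step, and in particular the newly built arrangement of $O_{\epsau}(1)$ containers, is itself guillotine separable, e.g.\ because the construction can be organized into a constant number of stages of horizontal and vertical cuts; your sketch defers exactly this step. The pieces you do argue carefully (a line meeting no item can be taken as an outermost cut, subsets of guillotine-separable packings remain guillotine separable, stacks and NFDH shelf packings are guillotine separable) are correct but standard, while the part you single out as the main obstacle --- the grid cells for small items --- is the comparatively easy half, since those cells are by construction bounded by extensions of container edges once the containers themselves are known to be separable.
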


\section{Relationship between different constants}{\label{section_constants}}
Now let us define a function $g(\delta,\eps)$ to denote an upper bound on the number of containers in the packing obtained using Lemma~\ref{lem:structural}. Then $g(\delta,\eps) \ge\left|\B_{hor}\right|+\left|\B_{tall+ver}\right|+\left|\B_{large}\right|+\left|\B_{small}\right|$, i.e., the upper bound on the  total number of containers for items in $I_{hor}\cup I_{tall}\cup I_{vert}\cup I_{large}\cup I_{small}$. This function is used to get an upper bound on $\mu$ (to be chosen sufficiently small compared to $\delta$ as defined in the paragraph below) and to define the function $f$ from Lemma~\ref{class_1}.

For the pseudo-polynomial time algorithm, $\eps_{ra}=\eps$, $\left|\B_{hor}\right|\le \frac{96C_{ra}}{\eps\delta^2}+C_{ra}$
and $\left|\B_{tall+ver}\right|\leq \frac{96}{\eps\delta}(\frac{1}{\delta} 2^{(1/\delta^{2})})^{\frac{1}{\delta}}$, $\left|\B_{large}\right|\leq \frac{1}{\delta^2}$ and $\left|\B_{small}\right|\leq 4(\left|\B_{hor}\right|+\left|\B_{tall+ver}\right|+\left|\B_{large}\right|+1)^2$. Note that $C_{ra}$ is the number of containers we get from resource augmentation as in Lemma \ref{lemma1_old}. Let $g(\delta,\eps)=\left|\B_{hor}\right|+\left|\B_{tall+ver}\right|+\left|\B_{large}\right|+\left|\B_{small}\right|$. From the condition for small containers in Lemma \ref{lem_small_rearrange_1}, we get that $\mu\leq \frac{\delta \eps}{(g(\delta,\eps))^2}$. Since, $g(\delta,\eps)$ is a decreasing function in $\eps$, we choose the function $f$ from Lemma \ref{class_1} as $f(x)=\frac{x\eps}{g(x,\eps)^2}$.

For the polynomial time algorithm,
$\eps_1=\frac{1}{3g(\delta,\eps)}$ which is the constant associated with the length of line segment $l^{*}$. $\eps_2=\frac{\eps}{4\left|\B_{hor}\right|}$,  $\eps_3=\frac{\eps_1}{4\left|\B_{ver}\right|}$, $\eps_4=\mu$, $\eps_5=\frac{\eps_1\delta}{6}$, $\eps_6=\frac{\eps\delta}{6}$, $\mu\leq \frac{\delta \eps}{(g(\delta,\eps))^2}$. The function $g$ is as defined before.

\section{Hardness}
\label{section_hardness}

\begin{theorem}
\label{thm_pseudo_hardness}
There exists no exact pseudo-polynomial time algorithm for the $2$-dimensional guillotine strip packing problem unless $\mathsf{P=NP}$.
\end{theorem}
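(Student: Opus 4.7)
The plan is to establish that \tsg\ is \emph{strongly} $\mathsf{NP}$-hard, since by standard complexity arguments a strongly $\mathsf{NP}$-hard problem cannot admit an exact pseudo-polynomial time algorithm unless $\mathsf{P=NP}$. To this end, I would reduce from the $3$-\textsc{Partition} problem, which is $\mathsf{NP}$-hard even when the input integers are bounded by a polynomial in $n$.

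Given a $3$-\textsc{Partition} instance consisting of $3n$ positive integers $a_1,\dots,a_{3n}$, each in the open interval $(B/4, B/2)$, with $\sum_{i=1}^{3n} a_i = nB$, I would construct a \tsg\ instance with strip width $W:=B$ and $3n$ items where item $i$ has width $w_i:=a_i$ and height $h_i:=1$. The decision question is whether the minimum guillotine-separable packing height equals $n$. The forward direction is straightforward: if the $3$-\textsc{Partition} instance is a Yes-instance, then each triple summing to $B$ can be placed side-by-side in a row of height $1$, producing $n$ stacked rows; this packing is guillotine separable via $n-1$ horizontal cuts followed by vertical cuts inside each row, so the height is exactly $n$ (matching the trivial area lower bound $nB / W = n$).

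For the converse, suppose a guillotine-separable packing of height $n$ exists. Since it is in particular a feasible \ts\ packing, I would apply a ``push-down'' normalization: sort items in non-decreasing order of their bottom $y$-coordinates, and for each item in turn, slide it downward until it rests on the floor $y=0$ or on the top edge of some previously placed item. Since all items have height $1$, a simple induction shows every item now has an integer bottom $y$-coordinate in $\{0,1,\dots,n-1\}$; moreover, the packing remains non-overlapping and fits within height $n$. Partitioning items into $n$ groups $G_0,\dots,G_{n-1}$ by their bottom $y$-coordinate, items in each $G_k$ occupy the horizontal slab $[0,W]\times[k,k+1]$ with pairwise disjoint $x$-extents, so $\sum_{i\in G_k} a_i \le B$. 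By the exact area identity $\sum_k \sum_{i\in G_k} a_i = nB$, each group must sum to exactly $B$; by the constraint $a_i \in (B/4,B/2)$, each group contains exactly three elements. This is the desired $3$-\textsc{Partition}.

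Since the integers in the $3$-\textsc{Partition} instance are polynomially bounded, the widths (and the trivial target height $n$) in the constructed \tsg\ instance are polynomially bounded as well, so any exact algorithm for \tsg\ running in time polynomial in $n$ and the numerical input would give a polynomial-time algorithm for $3$-\textsc{Partition}, forcing $\mathsf{P=NP}$. The main subtlety is the converse direction: one must justify the push-down normalization carefully and verify the area/cardinality argument; I expect this to be the only non-routine step, but it proceeds cleanly because all items share the same integer height and the target strip height is an integer matching the area lower bound exactly.
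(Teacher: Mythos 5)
Your proof is correct, but it takes a genuinely different route from the paper. The paper simply observes that \tsg\ generalizes \textsc{Bin Packing} (unit-height items, strip width equal to the bin capacity, target height $k$), invokes the strong $\mathsf{NP}$-hardness of \textsc{Bin Packing}, and defers the equivalence of the reduction to the same argument as in its $(3/2-\eps)$-hardness proof via \textsc{Partition}, where area-tightness forces the structure of any height-$2$ packing. You instead reduce from $3$-\textsc{Partition} with an area-tight instance of target height $n$, and — crucially — you supply the converse direction explicitly via the push-down normalization: processing items by non-decreasing bottom coordinate, any blocking item must already have been processed (its original bottom would otherwise be at least the current item's, contradicting that its top lies at or below it), so by induction all bottoms become integers in $\{0,\dots,n-1\}$, the slabs give groups of total width at most $B$ each, and the exact area identity plus $a_i\in(B/4,B/2)$ forces exactly three items summing to $B$ per slab. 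What your route buys is self-containedness and rigor precisely where the paper is terse: the paper's ``follows in the same vein'' hides the fact that for \textsc{Bin Packing} (unlike \textsc{Partition}) the area need not be tight, so some normalization argument like your push-down is really what is needed to turn an arbitrary height-$k$ packing of unit-height items into $k$ bins; your argument also shows the converse holds for arbitrary (not necessarily guillotine) packings, which is all that is required since a guillotine packing is in particular a packing. What the paper's route buys is brevity, by reusing its already-established \textsc{Partition} machinery and the known strong $\mathsf{NP}$-hardness of \textsc{Bin Packing}. Both arguments correctly conclude that an exact pseudo-polynomial algorithm (polynomial in $n$ and the numeric data, here $W=B\le\mathrm{poly}(n)$) would decide a strongly $\mathsf{NP}$-hard problem in polynomial time.
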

\begin{proof}
\textsc{Bin packing} is a strongly $\mathsf{NP}$-Hard problem~\cite{garey1978strong} and $2$-dimensional guillotine strip packing is a generalization of the same. To see why, consider the reduction: Reduce an instance of \textsc{Bin packing} where given, items of sizes $i_1,...,i_n$ and the problem is to find whether it is possible to pack said items in $k$ bins ($k\in\mathbb{Z}^{+}$, $i_1,...i_n\in[0,1]$ and $i_1,...i_n\in \mathbb{Q}$) to an instance of GSP where width $W$ of the half-strip is $1$ and for each $j\in[n]$ we have a rectangle $r_j$ such that $h(r_j)=1$ and $w(r_j)=i_j$. Note that here the widths of the rectangles may not be integers but they can be appropriately scaled (along with the width of the half-strip) to ensure that. The objective in the GSP instance is to find whether there exists a guillotine separable packing of height at most $k$. The proof of the equivalence of this reduction follows in the same vein as the proof of Theorem~\ref{thm_hardness_approx}.
\end{proof}

\begin{theorem}
\label{thm_hardness_approx}
There exists no polynomial time algorithm for the $2$-dimensional guillotine strip packing problem with an approximation ratio $(\frac{3}{2}-\eps )$ for any $\eps >0$ unless $\mathsf{P=NP}$.
\end{theorem}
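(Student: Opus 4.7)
The plan is to give a standard gap-preserving reduction from the strongly $\mathsf{NP}$-hard \textsc{Partition} problem, observing that all packings produced in the YES case are naturally guillotine separable, so the lower bound transfers verbatim from \textsc{Strip Packing} to \tsg.

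First, I would start from an instance of \textsc{Partition}: positive integers $a_1,\dots,a_n$ with $\sum_{i=1}^n a_i = 2B$, where the question is whether there exists $S \subseteq \{1,\dots,n\}$ with $\sum_{i \in S} a_i = B$. From this I construct a \tsg{} instance with strip width $W = B$ and, for each $i$, a rectangle of width $a_i$ and height $1$. Note that all input numbers are polynomial in the input size of the \textsc{Partition} instance, so the reduction is polynomial.

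Next I argue the two directions of the gap. In the YES case, let $S$ be a valid partition; pack the items indexed by $S$ side by side with their bottom edges on $y=0$, and the items indexed by $\{1,\dots,n\}\setminus S$ side by side with their bottom edges on $y=1$. This packing has height exactly $2$ and is guillotine separable: a single horizontal cut at $y=1$ splits the strip into two rows, after which vertical cuts separate the items in each row. So $\opt \le 2$. In the NO case I claim $\opt \ge 3$. Suppose toward contradiction that some feasible packing has height at most $2$ (heights are integers, so height $<3$ means height $\le 2$). Every item has height $1$, so every item's bottom coordinate lies in $[0,1]$. Consider the horizontal line $\ell: y = 1$; items whose bottoms lie in $[0,1)$ sit strictly below $\ell$ together with items whose bottoms equal $1$ sit on or above $\ell$. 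Using total area $\sum a_i = 2B$ and strip width $B$, items with bottom in $[0,1)$ have total width $\le B$ and so do items with bottom in $[1,2]$, forcing both sums to equal $B$ exactly; this produces a valid partition, contradicting the NO assumption. Hence $\opt \ge 3$.

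Finally, suppose for contradiction there is a polynomial-time $(3/2-\eps)$-approximation algorithm $\mathcal{A}$ for \tsg. Run $\mathcal{A}$ on the constructed instance. In the YES case $\mathcal{A}$ returns a packing of height at most $(3/2-\eps)\cdot 2 = 3 - 2\eps < 3$. In the NO case $\mathcal{A}$ returns a packing of height at least $\opt \ge 3$. Thus the height returned by $\mathcal{A}$ decides the \textsc{Partition} instance in polynomial time, which is impossible unless $\mathsf{P} = \mathsf{NP}$. The main obstacle is simply being careful that the YES construction is guillotine separable and that the NO lower bound does not rely on the non-guillotine adversarial layout; both are immediate from the argument above, which is why the same $3/2$ lower bound known for \ts{} transfers unchanged to \tsg.
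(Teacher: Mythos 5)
Your proposal is correct and follows essentially the same route as the paper: the identical reduction from \textsc{Partition} with unit-height items of widths $a_i$ in a strip of width $T/2$, a two-row guillotine packing of height $2$ in the YES case, an argument that any packing of height at most $2$ would yield a valid partition (hence $\opt\ge 3$ in the NO case), and the standard gap argument ruling out a polynomial-time $(3/2-\eps)$-approximation unless $\mathsf{P}=\mathsf{NP}$. The only cosmetic difference is that your NO-case argument does not use guillotine structure at all (which is fine, since the guillotine optimum is at least the general one), and, like the paper, you lean on the informal ``packing heights are integral'' step, which can be made rigorous by pushing every item down until it rests on the strip bottom or on another item, so that all $y$-coordinates become sums of item heights and a packing of height below $3$ indeed has height at most $2$.
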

\begin{proof}
Consider the following reduction from the \textsc{Partition} problem. For an instance of the \textsc{Partition} problem $P$ where we are given positive integers $i_1,...,i_n$ such that $T=\sum_{j=1}^{j=n} i_j$ and where we have to check if we can partition the given numbers into two sets $S_1$ and $S_2$ such that $\sum_{i_j\in S_1}i_j=\sum_{i_k\in S_2}i_k=T/2$ , we construct the following instance $I$ of the $2$-dimensional guillotine strip packing problem: Rectangles $\mathcal{R}=\{R_1,...,R_n\}$ such that $h(R_k)=1$ for any $k\in[n]$ and $w(R_k)=i_k$ and we want to check if there exists a guillotine separable packing of the rectangles in $\mathcal{R}$ in a half-strip of width $T/2$ such that height of this packing is at most $2$. 

We now show that for the above reduction, If the answer to the \textsc{Partition} instance $P$ is Yes, there exists a guillotine separable packing of height exactly $2$ for instance $I$. And if the answer to the instance $P$ is No, the optimal guillotine separable packing has height at least $3$. Note that any optimal packing can have only an integral height as all rectangles have a height of exactly $1$. Now, if the answer to $P$ is Yes, we have $2$ sets $S_1$ and $S_2$ such that $S_1\cup S_2=\{i_1,...,i_n\}$ and $\sum_{i_j\in S_1}i_j=\sum_{i_k\in S_2}i_k=T/2$ where $T=\sum_{j=1}^{j=n} i_j$. Thus, we first pack all rectangles corresponding to numbers in $S_1$ from left to right at the bottom of the half-strip starting from $x=0$ and without leaving any gap. Since the height of each rectangle is $1$, we pack all rectangles similarly as before corresponding to numbers in $S_2$ from left to right on top of this packing. This results in a packing of height $2$. It is a $2$-stage guillotine separable packing because we first consider the horizontal cut $y=1$ and then we separate all the rectangles in the resulting $2$ boxes by way of vertical cuts. 

We show that if there exists a guillotine separable packing of rectangles in $\mathcal{R}$ of height at most $2$, then the answer to the instance $P$ would be Yes. Observe that any packing of the rectangles has to have a height of at least $2$ since the area of the rectangles in $\mathcal{R}$ is $T$ and the width of the half-strip is $T$. If we have a guillotine separable packing of height $2$, then by the area lower bound and the fact that all the rectangles have height $1$, we have a guillotine cut at $y=1$ and both the resulting boxes are completely filled by rectangles packed side by side without any space in between. Hence, we consider all items corresponding to rectangles packed in $1$ box as $S_1$ and the others as $S_2$. Hence, we have a positive \textsc{Partition} instance.  Taking the contrapositive of this statement proves our first claim of equivalence of the reduction.

If we have a polynomial time algorithm $A$ for the $2$-dimensional guillotine strip packing problem with an approximation ratio $({3}/{2}-\eps )$ for $\eps >0$, then 
\begin{enumerate}
\item If the instance $I$ is a Yes instance, we have a guillotine separable packing of height $2$ and by applying the algorithm, we get a guillotine separable packing of height at most $2({3}/{2}-\eps )<3$. And since only integral height packings are possible, we get a packing of height $2$.
\item If the instance $I$ is a No instance, from our reduction we have a guillotine packing of height at least $3$.
\end{enumerate}
Consider the following polynomial time algorithm for the \textsc{Partition} problem. For an instance $P$ of \textsc{Partition}, we reduce the problem to an instance $I$ of $2$-dimensional guillotine strip packing as described. Then we apply the approximation algorithm $A$ on this instance. If we get a packing of height $2$, then by our previous claim for the reduction, $P$ is a Yes instance. Else if we get a packing of height at least $3$ we have a No instance.

This proves the theorem.
\end{proof}

\end{document}